\RequirePackage{etoolbox}
\csdef{input@path}{%
 {sty/}
 {img/}
}%
\csgdef{bibdir}{bib/}

\documentclass[12pt]{book}
\pdfoutput=1

\usepackage{miseEnPage}
\usepackage{couverture}

\usepackage{mathtools}
\usepackage{subcaption}

\usepackage{amssymb,amsmath,amsthm,amsfonts}
\usepackage{a4wide}
\usepackage{caption}
\usepackage{comment}
\usepackage{color}
\usepackage{enumitem} 
\usepackage[T1]{fontenc}
\usepackage{graphicx}
\usepackage{geometry}
\usepackage{minitoc}
\usepackage[numbers]{natbib}
\bibliographystyle{plain}
\usepackage{palatino}
\usepackage{placeins}
\usepackage{pgf,pgfarrows,pgfnodes,pgfautomata,pgfheaps,pgfshade}
\usepackage{subfloat}

\usepackage{comment}

\usepackage{tabularx}
\usepackage{float}
\usepackage{wrapfig}
\floatstyle{boxed}
\usepackage{tikz}
\usetikzlibrary{shapes,backgrounds}
\usetikzlibrary{patterns}
\usetikzlibrary{decorations.pathreplacing}
\usetikzlibrary{arrows,decorations.pathmorphing,backgrounds,fit,positioning,shapes.symbols,chains}

\usepackage[resetlabels]{multibib}
\newcites{Me}{Bibliographie personnelle}
\bibliographystyleMe{plain}

\newcommand{\poly}{\mathrm{poly}}
\newcommand{\CK}{\mathrm{C}}

\theoremstyle{plain}

\newtheorem{proposition}{Proposition}

\newtheorem{theorem}{Théorème}
 \makeatletter
\newcommand{\neutralize}[1]{\expandafter\let\csname c@#1\endcsname\count@}
\makeatother

\newenvironment{theorembis}[1]
  {%
   \neutralize{theorem}\phantomsection
   \begin{theorem}}
  {\end{theorem}} 

\newtheorem{corollary}{Corollaire}

\newtheorem{lemma}{Lemme}

\newtheorem{claim}{Fait}

\newtheorem*{theorem*}{Théorème}

\theoremstyle{remark}
\newtheorem{remark}{Remarque} 
\newtheorem{definition}{Définition}
\newtheorem{example}{Exemple} 

\newtheorem*{example*}{Exemple} 

\newtheorem{question}{Question} 

\newenvironment{examplebis}[1]
  {%
   \neutralize{example}\phantomsection
   \begin{example}}
 {\end{example}}

\newenvironment{examplerev}[1]
  {%
   \neutralize{example}\phantomsection
   \begin{example}}
 {\end{example}}

\newenvironment{examplebis2}[1]
  {%
   \neutralize{example}\phantomsection
   \begin{example}}
 {\end{example}}

  \newtheoremstyle{dotless}{}{}{}{}{\bfseries}{}{ }{}

  \theoremstyle{dotless}

\newtheorem*{problem*}{Problème ouvert}

\usepackage[normalem]{ulem}

\frenchbsetup{StandardLists=true}
\selectcolormodel{RGB}
\geometry{%
  a4paper, body={150mm,220mm},
  left=25mm,top=30mm,
  headheight=7mm,
  headsep=14mm,
  marginparsep=4mm, marginparwidth=27mm
}
\setcounter{minitocdepth}{1}


\pagestyle{fancy}
\makeatletter
\DeclareRobustCommand{\format@sec@number}[2]{{\normalfont\upshape#1}#2}
\renewcommand{\chaptermark}[1]{%
  \markboth{\format@sec@number{\ifnum\c@secnumdepth>\m@ne\@chapapp\ \thechapter. \fi}{#1}}{}}
\renewcommand{\sectionmark}[1]{%
  \markright{\format@sec@number{\ifnum\c@secnumdepth>\z@\thesection. \fi}{#1}}}
\makeatother

\fancyhf{}
\fancyhead[RE]{\itshape\nouppercase{\leftmark}}
\fancyhead[LO]{\itshape\nouppercase{\rightmark}}
\fancyhead[LE,RO]{\thepage}

\parskip 0.4cm
\begin{document}
\pagenumbering{roman}
\NoAutoSpacing

\frontmatter

\title{\protect\parbox{\textwidth}{\protect\centering Étude de la complexité algorithmique et du caractère sofique des shifts en dimension deux}}
\author{Julien Destombes}

\directorA{Andrei Romashchenko}

\date{\rule{0mm}{0cm} 17 décembre 2021}

\thispagestyle{empty}

\university{UNIVERSITE DE MONTPELLIER}
\univlogo{logo_UM.png}
\univwallpaper{wallpaper.png}
\doctoral{I2S (Information, Structures et Syst\`emes)}
\researchunit{ESCAPE}
\specialisation{Informatique}
\jury{Mathieu Sablik}{Professeur}{Université Paul Sabatier, Toulouse}{Rapporteur}
\jury{Pascal Vanier}{Professeur}{Université de Caen Normandie}{Rapporteur}
\jury{Christophe Fiorio}{Professeur}{Université de Montpellier}{Examinateur}
\jury{Enrico Formenti}{Professeur}{Université Côte d'Azur, Nice}{Examinateur}
\jury{Pierre Guillon}{Chargé de recherche}{Institut de Mathématiques de Marseille}{Examinateur}
\jury{Svetlana Puzynina}{Associate professor}{St. Petersburg State University, Russie}{Examinatrice}
\jury{Gwenaël Richomme}{Professeur}{Université Paul-Valéry et LIRMM, Montpellier}{Examinateur}
\jury{Andrei Romashchenko}{Chargé de recherche}{LIRMM, Montpellier}{Directeur de thèse}

\maketitle

\chapter*{Résumé}
\addcontentsline{toc}{chapter}{Résumé}

Nous étudions les propriétés des shifts multidimensionnels. Plus précisément, nous nous intéressons aux conditions nécessaires ou suffisantes pour qu'un shift soit sofique ; autrement dit, nous étudions la frontière qui sépare les shifts sofiques des shifts effectifs et non sofiques. Nous montrons que plusieurs versions de la complexité algorithmique (ou complexité de Kolmogorov) sont utiles pour formuler de telles conditions.

Ainsi, nous proposons dans un premier temps des conditions nécessaires pour qu'un shift multidimensionnel soit sofique, formulées en termes de complexité de Kolmogorov à ressources bornées. En utilisant cette technique nous construisons un exemple de shift effectif et non sofique sur $\mathbb{Z}^2$ avec une très faible complexité combinatoire : le nombre de motifs globalement admissibles de taille $N \times N$ ne croît que de manière polynomiale en $N$. Nous développons ensuite d'autres outils permettant de prouver qu'un shift n'est pas sofique, également basés sur la complexité de Kolmogorov. Nous montrons en outre que la technique développée par S.~Kass et K.~Madden dans~\cite{kass-madden} correspond à un cas particulier de ceux-ci.

Dans un second temps, nous définissons une large classe de shifts multidimensionnels sofiques, à savoir les shifts possédant un très faible nombre de cases noires sur un océan de cases blanches. Plus précisément, pour tout $\epsilon < 1$ calculable, le shift sur l'alphabet $\{ \square, \blacksquare \}$ dont, pour chaque configuration, les carrés de taille $N \times N$ ne contiennent pas plus de $N^\epsilon$ cases noires, est sofique. De plus, les sous-shifts effectifs de ce shift sont également sofiques. La preuve de ce résultat est principalement basée sur la construction d'un shift à point-fixe, en utilisant quelques autres ingrédients : un modèle de calcul \textit{ad hoc} permettant des calculs massivement parallèles, basé sur les automates cellulaires non déterministes, ainsi qu'un résultat sur les flots parcourant un type de graphes spécifiques. 

\tableofcontents

\mainmatter

\listoffigures

\chapter{Introduction}\label{ch:introduction}

\section{Les shifts}

\subsection{Les espaces des shifts : origines et contexte moderne}\label{ss:1-dimension}

La dynamique symbolique apparaît originellement en mathématiques comme une branche de la théorie des systèmes dynamiques qui étudie les systèmes dynamiques réguliers ou topologiques en discrétisant l'espace étudié. Depuis la fin des années 1930, la dynamique symbolique devient un domaine indépendant de recherche, voir \cite{symbolic-dynamics-1938,symbolic-dynamics-1940}. Un \emph{système dynamique} classique est un espace (ou un ensemble d'états) $\cal S$ muni d'une fonction $F$ agissant sur cet espace ; cette fonction représente la `` règle d'évolution'', i.e., la dépendance au temps d'une configuration dans cet espace. La notion centrale de la théorie des systèmes dynamiques est celle de trajectoire --- une séquence de configurations obtenues en itérant la règle d'évolution, 
\[
x, F(x), F(F(x)),\ldots, F^{(n)}(x),\ldots
\]

En dynamique symbolique l'espace des états est réduit à un ensemble fini (un \emph{alphabet}). Les trajectoires sont représentées par une séquence infinie (ou bi-infinie) de lettres de cet alphabet, et la ``règle d'évolution'' est l'opérateur du shift (ou ``décalage'' en français) agissant sur ces séquences.
La dynamique symbolique se focalise sur les \emph{espaces de shift} --- Un ensemble de séquences bi-infinies de lettres (sur un alphabet fini) qui est invariant pour cet opérateur du shift.

Plus précisément, un \emph{espace de shift} (aussi appelé un \emph{shift}) sur un alphabet $\Sigma$ est un ensemble non-vide de séquences bi-infinies sur l'alphabet $\Sigma$, invariant par translation et clos dans la topologie standard de l'espace de Cantor (Cette topologie est générée par les ensembles de base suivants : pour un motif fini fixé, nous considérons toutes les configurations le contenant).  

Tout shift peut être défini en terme de \emph{motifs interdits de taille finie} : nous fixons un ensemble de mots (finis) $\cal F$ et disons qu'une configuration (une séquence bi-infinie) appartient au shift $S_{\cal F}$ si et seulement s'il ne contient aucun motif de ~$\cal F$. Formellement, nous avons l'équivalence suivante : 

\begin{proposition}\label{p:motif-interdits}
Soit $\Sigma$ un alphabet (un ensemble fini de lettres). 

(a) Pour tout ensemble de motifs finis $\cal F$ sur $\Sigma$, l'ensemble $S_{\cal F}$ de tous les  $x\in {\Sigma}^\mathbb{Z}$ qui ne contiennent aucun motif de $\cal F$ est un espace de shift.

(b) Pour tout espace de shift $S$ il existe un ensemble de motifs finis $\cal F$ tel qu'une configuration $x$ appartient à $S$ si et seulement si $x$ ne contient aucun motif de $\cal F$.
\end{proposition}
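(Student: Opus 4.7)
Pour prouver (a), le plan est de vérifier les deux propriétés topologiques requises par la définition, en supposant $S_{\cal F}$ non-vide (cas qu'il faudrait mentionner séparément comme préalable). L'invariance par translation est immédiate : l'ensemble des facteurs finis d'une configuration coïncide avec celui de chacun de ses translatés, donc $x$ évite tous les motifs de $\cal F$ si et seulement si chaque décalé de $x$ les évite aussi. La fermeture se démontre en montrant que le complémentaire de $S_{\cal F}$ est ouvert : si $x \notin S_{\cal F}$, alors $x$ contient un motif $w \in \cal F$ à une certaine position $[i,j]$, et le cylindre des configurations coïncidant avec $x$ sur cette fenêtre est un voisinage ouvert de $x$ entièrement inclus dans $\Sigma^{\mathbb{Z}} \setminus S_{\cal F}$.

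Pour (b), la construction naturelle consiste à poser $\cal F$ égal à l'ensemble de tous les mots finis qui n'apparaissent dans aucune configuration de $S$. L'inclusion $S \subseteq S_{\cal F}$ découle alors directement de cette définition. Pour l'inclusion réciproque, je prendrais $x \notin S$ : puisque $S$ est fermé, il existe un entier $n$ tel que le cylindre $[x_{-n},\ldots,x_n]$ est disjoint de $S$, autrement dit aucun $y \in S$ ne présente le motif $w := x_{-n}\ldots x_n$ exactement à la position centrale $[-n,n]$.

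L'étape véritablement délicate, et principal point de vigilance de la preuve, est alors l'invocation de l'invariance par translation de $S$ : si $w$ apparaissait en une position quelconque $[k,k+2n]$ dans un certain $y \in S$, le translaté adéquat de $y$ appartiendrait encore à $S$ et présenterait $w$ en position $[-n,n]$, ce qui contredirait la conclusion précédente. Ainsi $w$ est globalement absent de toutes les configurations de $S$, donc $w \in \cal F$, d'où $x \notin S_{\cal F}$. C'est précisément ici que l'invariance par shift (et non la seule fermeture de $S$) est indispensable : sans elle, la fermeture fournirait seulement une famille de motifs « interdits à une position donnée », ce qui ne correspond pas à la notion d'ensemble de motifs interdits utilisée dans la définition de $S_{\cal F}$.
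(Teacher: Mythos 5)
Votre preuve est correcte et suit essentiellement la même démarche que celle du manuscrit : pour (a) on vérifie l'invariance par translation et la fermeture via un cylindre autour d'une occurrence d'un motif interdit, et pour (b) on exploite que le complémentaire de $S$ est ouvert (réunion de cylindres) puis l'invariance par translation pour conclure que le motif central est absent de tout $S$. La seule différence, purement cosmétique, est que vous prenez pour $\cal F$ l'ensemble maximal des mots n'apparaissant dans aucune configuration de $S$, là où le manuscrit prend les bases des cylindres recouvrant le complémentaire ; votre remarque sur le cas $S_{\cal F}=\emptyset$ est d'ailleurs pertinente.
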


\begin{proof}
Soit un shift $S$ sur l'alphabet $\Sigma$ : $S$ est un ensemble non-vide de séquences bi-infinies sur l'alphabet $\Sigma$, invariant par translation et clos dans la topologie standard de l'espace de Cantor. Un cylindre de $\Sigma$ est un sous-ensemble de $\Sigma^\mathbb{Z}$ où un nombre fini de positions sont associées à des lettres de $\Sigma$ ; le motif obtenu est appelé la base du cylindre. Le reste des positions de $\mathbb{N}$ sont libres. L'ensemble des cylindres de $\Sigma$ forme une base de $\Sigma^\mathbb{N}$ pour la topologie standard. Comme $S$ est clos dans cette topologie, $\bar{S}=\Sigma^\mathbb{N} - S$ est un ouvert. Par conséquent $\bar{S}$ est une union (potentiellement infinie) de cylindres de $\Sigma$. 
$S$ est invariant par translation, donc $\bar{S}$ aussi. Ainsi, si un cylindre $C$ est un sous-ensemble de $\bar{S}$, les cylindres dont la base est une translation de $C$ sont également des sous-ensembles de $\bar{S}$. L'ensemble des bases des cylindres dont l'union est $\bar{S}$ est exactement l'ensemble des motifs interdits de $S$ pour la définition alternative d'un shift.

Réciproquement, si $S$ est défini par un ensemble de motifs interdits, nous considérons les cylindres dont les bases sont ces motifs interdits et leurs translations. L'union de ces cylindres est un ouvert (puisque les cylindres forment une base de $\Sigma^\mathbb{N}$), invariant par translation. Le complémentaire de cette union, égal à $S$, est un ensemble clos, également invariant par translation. 
\end{proof}

Nous gardons à l'esprit que des ensembles différents de motifs interdits peuvent induire le même shift.

Nous allons maintenant considérer plusieurs exemples de shifts. Nous définissons un premier shift $S_1$, sur l'alphabet $\{ \blacksquare, \square \}$, par un seul motif interdit : $\blacksquare \blacksquare$. Une partie d'une configuration de $S_1$ peut ainsi être $\square \square \blacksquare \square \blacksquare \square \square \square \square \blacksquare \square \square$ ou $\square \square \square \square \square \square \square \square \square \blacksquare \square \square$, mais pas $\square \square \blacksquare \square \blacksquare \blacksquare \square \square \square \square \square \square$.

Un deuxième shift $S_2$, toujours sur l'alphabet binaire, a pour motifs interdits les paires de cases noires séparées par un nombre impair de cases blanches (nous avons donc comme motifs interdits $\blacksquare \square \blacksquare$, $\blacksquare \square \square \square \blacksquare$, $\blacksquare \square \square \square \square \square \blacksquare$, $\ldots$). Une partie d'une configuration de $S_2$ peut ainsi être $\square \blacksquare \blacksquare \square \square \blacksquare \square \square \square \square \blacksquare \blacksquare$, mais pas $\square \blacksquare \square \blacksquare \square \square \square \blacksquare \blacksquare \blacksquare \square \square$.

Enfin, un troisième exemple est le shift miroir, sur l'alphabet $\{ \blacksquare, \square, {\color{red} \blacksquare} \}$. Les motifs interdits sont d'une part les suites de taille $n>1$ où apparaissent deux carrés rouges, et d'autres part, pour $n>1$, les suites de taille $2n+1$, pour lesquelles le $(n+1)$-ème carré est rouge et où les $n$ premiers carrés ne sont pas symétriques, par rapport au carré rouge, avec les $n$ derniers. Par exemple, $\square \blacksquare \blacksquare \blacksquare \square \blacksquare \square \square \square \blacksquare \blacksquare \blacksquare$ ou $\blacksquare \blacksquare \square \square \blacksquare \square \square {\color{red} \blacksquare} \square \square \blacksquare \square$ sont des parties de configurations du shift miroir, mais pas $\square {\color{red} \blacksquare} \blacksquare \blacksquare \blacksquare \square \square \blacksquare \square \blacksquare {\color{red} \blacksquare} \blacksquare$ ou $\blacksquare \blacksquare \square \square \blacksquare \square \square {\color{red} \blacksquare} \blacksquare \square \blacksquare \square$.

Les propriétés d'un shift dépendent fortement de l'ensemble des motifs interdits le définissant. Les trois grandes classes de shifts suivantes jouent un rôle important en dynamique symbolique et en théorie de la calculabilité :
\begin{itemize}
\item \emph{les shifts de type fini}, qui peuvent être définis par un ensemble fini de motifs finis interdits ;
\item \emph{les shifts sofiques}, shifts (introduits dans \cite{weiss}), qui peuvent être défini par un ensemble de motifs interdits qui est un langage régulier ;
\item \emph{les shifts effectifs} (ou shifts \emph{effectivement clos}), qui peuvent être définis par un ensemble calculable de motifs finis interdits. Observons que nous pouvons demander à ce que l'ensemble des motifs finis interdits soit \emph{décidable} ou \emph{récursivement énumérable}. Bien que d'un point de vue formel ces définitions soient différentes, elles définissent la même classe de shifts. Ainsi, si un shift est défini par un ensemble récursivement énumérable de motifs finis interdits, alors le même shift peut également être défini par un ensemble décidable de motifs finis interdits.
\end{itemize}

La classe des shifts effectifs est souvent comprise comme la classe des shifts ``constructibles explicitement'', et est donc d'un intérêt important au-delà du champ de la théorie de la calculabilité. Ces trois classes sont distinctes :
\[
[\text{les shifts de type fini}] \subsetneqq [\text{les shifts sofiques}] \subsetneqq  [\text{les shifts effectifs}].
\]

Clairement, le shift $S_1$ défini ci-dessus est un shift de type fini (il est définit ci-dessus par un seul motif interdit, et donc par un nombre fini de motifs interdits).

Le shift $S_2$ est un shift sofique, puisque l'ensemble des motifs interdits le définissant ci-dessus est un langage régulier. Nous pouvons en outre montrer qu'il n'est pas de type fini. Supposons par l'absurde qu'il le soit, et considérons que $S_2$ soit défini par un ensemble fini de motifs interdits. Notons $l$ la largeur maximale de ces motifs. Considérons alors deux configurations de $\{\square,\blacksquare \}^{\mathbb{Z}}$, ne contenant chacune que deux cases noires, séparées par un nombre de cases blanches égal à respectivement $l$ et $(l+1)$. Les deux configurations possèdent les mêmes motifs de tailles égales ou inférieures à $l$, et donc soit au moins un motif interdit apparaît dans les deux configurations, soit aucun motif interdit n'apparaît dans aucune des deux configurations. Or l'une d'elles appartient à $S_2$, et l'autre non ; ceci est une contradiction.

Les shifts sofiques peuvent également être définis comme la projection coordonnée par coordonnée des configurations d'un shift de type fini :

\begin{definition}\label{d:sofique}
Un shift ${\cal S}$ sur un alphabet $\Sigma$ est un \emph{shift sofique} s'il existe un shift de type fini ${\cal S}'$ sur un alphabet $\Sigma'$ et une projection :

$\pi : \Sigma' \to \Sigma$, telle que ${\cal S}$ consiste en la projection coordonnée par coordonnée 
\[
(\ldots \pi( y_{-1}) \pi (y_0 ) \pi(y_1) \pi(y_2)\ldots)
\]
de toutes les configurations $(\ldots  y_{-1} y_0  y_1 y_2\ldots)$ de ${\cal S}'$.
\end{definition}

Ces deux définitions sont équivalentes :

\begin{proposition}[{voir par exemple \cite[Théorème~3.2.1]{sft}}]
En dimension 1, les shifts pouvant être définis par un ensemble de motifs interdits qui est un langage régulier, et les shifts définis dans la Définition~\ref{d:sofique}, sont une et même classe de shifts, la classe des shifts sofiques. 
\end{proposition}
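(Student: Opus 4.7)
Mon plan est de démontrer les deux inclusions séparément, en passant par des automates finis dans les deux sens.

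\textbf{Des motifs interdits réguliers à la projection d'un shift de type fini.} Étant donné un shift $S$ défini par un langage régulier $\mathcal{F}$ de motifs interdits, je commencerais par noter que le langage $L \subseteq \Sigma^*$ des mots ne contenant aucun facteur appartenant à $\mathcal{F}$ est lui-même régulier, comme complémentaire dans $\Sigma^*$ du langage régulier $\Sigma^* \mathcal{F} \Sigma^*$. Je fixerais alors un automate fini $\mathcal{A} = (Q, \Sigma, \delta)$ reconnaissant $L$, après avoir émondé les états qui ne figurent sur aucun chemin bi-infini dont les étiquettes forment un facteur d'une configuration de $S$. Je définirais ensuite un shift $S'$ sur l'alphabet des transitions de $\mathcal{A}$, avec pour unique contrainte locale (de portée deux) que deux transitions consécutives partagent leurs extrémités. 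Il s'agit d'un shift de type fini, dont les configurations correspondent exactement aux chemins bi-infinis dans $\mathcal{A}$. La projection envoyant chaque transition sur son étiquette fournit alors, par la Définition~\ref{d:sofique}, la présentation de $S$ comme facteur d'un shift de type fini.

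\textbf{De la projection d'un shift de type fini aux motifs interdits réguliers.} Réciproquement, soit $S'$ un shift de type fini sur $\Sigma'$ défini par un ensemble fini de motifs interdits de longueur au plus $n$, et soit $\pi : \Sigma' \to \Sigma$ la projection coordonnée par coordonnée. Je construirais un automate de type fenêtre glissante dont les états sont les mots autorisés de $(\Sigma')^{n-1}$, et dont la transition $bw \xrightarrow{a} wa$ est autorisée lorsque $bwa$ ne contient aucun motif interdit. Les chemins bi-infinis dans cet automate décrivent exactement les configurations de $S'$. En remplaçant chaque étiquette $a$ par $\pi(a)$, j'obtiens un automate sur $\Sigma$ reconnaissant précisément le langage $L_S$ des facteurs finis apparaissant dans les configurations de $S$ ; ce langage est donc régulier, et son complémentaire dans $\Sigma^*$ fournit un ensemble régulier de motifs interdits définissant $S$.

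\textbf{Difficulté principale.} Le point le plus subtil est l'étape de relèvement dans le premier sens : il faut vérifier que toute configuration $x \in S$ se remonte effectivement en un chemin bi-infini dans $\mathcal{A}$. L'argument procédera par compacité --- chaque fenêtre finie autour de l'origine se relève en un chemin fini dans $\mathcal{A}$, et la compacité de $Q^{\mathbb{Z}}$ permet d'extraire une limite bi-infinie --- mais repose crucialement sur l'émondage préalable de l'automate, faute de quoi un choix local d'état pourrait conduire à une impasse et empêcher le prolongement du chemin. Une fois ce point réglé, l'inclusion réciproque (toute projection d'une configuration de $S'$ appartient bien à $S$) est immédiate, puisque par construction les étiquettes des chemins bi-infinis de $\mathcal{A}$ sont exactement les configurations dont tous les facteurs finis sont dans $L$, donc exactement celles qui évitent tous les motifs de $\mathcal{F}$.
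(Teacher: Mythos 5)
Le manuscrit ne démontre pas cette proposition : il renvoie simplement à \cite[Théorème~3.2.1]{sft}. Votre preuve reconstruit précisément l'argument standard de cette référence (présentation d'un shift sofique par un graphe étiqueté / automate fini, dans les deux sens), et elle est pour l'essentiel correcte, y compris le point délicat du relèvement par compacité dans le premier sens, qui repose bien sur l'émondage de l'automate et sur le fait que $L$ est factoriel (tout facteur du label d'un chemin fini entre états accessibles et co-accessibles reste dans $L$). Un seul point mérite plus de soin dans le second sens : si les états de votre automate à fenêtre glissante sont les mots \emph{localement} admissibles de longueur $n-1$, les labels des chemins finis donnent a priori les images par $\pi$ des mots localement admissibles du shift de type fini, ce qui peut strictement contenir $L_S$ (un mot localement admissible n'est pas toujours prolongeable en une configuration). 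Pour obtenir « précisément $L_S$ » il faut d'abord restreindre l'automate à sa partie essentielle (les états situés sur un chemin bi-infini), comme vous le faites dans l'autre sens ; sinon, la conclusion reste sauvable, mais par un argument supplémentaire de compacité montrant que toute configuration dont tous les facteurs sont images de mots localement admissibles se relève dans le shift de type fini (c'est ici que l'on utilise que, pour un shift de type fini, l'admissibilité locale de tous les facteurs entraîne l'appartenance au shift). Avec cette précision, votre démonstration est complète et coïncide avec celle de la littérature citée.
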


Nous pouvons illustrer cette définition alternative en explicitant, pour le shift sofique $S_2$, un shift de type fini $S_2'$ et une projection $\pi$ tels que $\pi(S_2')=S_2$ . Le shift $S_2'$ est défini sur l'alphabet $\{ \square,{\color{black!60}\blacksquare },\blacksquare \}$, avec pour motifs interdits l'ensemble $\{ \square \square, {\color{black!60}\blacksquare } {\color{black!60}\blacksquare }, \blacksquare {\color{black!60}\blacksquare },\square \blacksquare\}$ (une suite de cases blanches et grises alterne entre cases blanches et grises, commence par une case blanche et finit par une case grise). $S_2'$ est bien un shift de type fini. Par exemple, la suite $\blacksquare \blacksquare \square {\color{black!60}\blacksquare } \square {\color{black!60}\blacksquare } \square {\color{black!60}\blacksquare } \blacksquare \square {\color{black!60}\blacksquare } \blacksquare $ peut appartenir à une configuration de $S_2'$, mais pas la suite $\blacksquare \blacksquare \square {\color{black!60}\blacksquare } \square \blacksquare \blacksquare \square {\color{black!60}\blacksquare } \square {\color{black!60}\blacksquare } \blacksquare$. Soit la projection $\pi$ projetant la case blanche et la case grise de $S_2'$ sur la case blanche de $S_2$, et la case noire de $S_2'$ sur la case noire de $S_2$. Nous pouvons facilement vérifier que $\pi(S_2') = S_2$. 

\label{p:successors-1d}
Il existe une caractérisation simple des shifts sofiques. Nous disons que deux mots $w_1$ et $w_2$ sont \emph{équivalents} dans un shift $\cal S$, s'ils ont le même \emph{ensemble de successeurs}, i.e., si exactement les mêmes configurations infinies vers la droite apparaissent dans $\cal S$ immédiatement à droite de $w_1$ et à droite de $w_2$. Un shift est sofique si et seulement si les motifs finis du shift sont divisés en un nombre fini de classes d'équivalence (voir \cite[Théorème~3.2.10]{sft}). Intuitivement, ce critère garantit que quand nous lisons une configuration de la gauche vers la droite et vérifions que celle-ci appartient au shift, nous n'avons besoin de nous souvenir que d'une information finie.

Pour notre exemple $S_2$, il y a donc trois classes d'équivalence : 
\begin{itemize}
	\item celle des mots finissant par une case noire : l'ensemble de leurs successeurs est l'ensemble des configurations ;
	\item celle des mots finissant par un nombre \emph{pair} de cases blanches : l'ensemble de leurs successeurs sont les mots commençant par un nombre \emph{pair} de cases blanches ;
	\item celle des mots finissant par un nombre \emph{impair} de cases blanches : l'ensemble de leurs successeurs sont les mots commençant par un nombre \emph{impair} de cases blanches.
\end{itemize}
	
Revenons maintenant au shift miroir. C'est un shift effectif, puisque l'ensemble des motifs interdits le définissant ci-dessus est clairement énumérable. Il n'est par ailleurs pas sofique (et donc a fortiori pas de type fini). Nous pouvons le démontrer en utilisant la caractérisation des shifts sofiques basée sur les classes des ensembles de successeurs décrite ci-dessus, en considérant, pour tout entier $n$, les $2^n$ mots de taille $n+1$ composés d'une suite de $n$ carrés, noirs ou blancs, suivie d'un carré rouge. Pour un $n$ donné, chacun de ces $2^n$ mots a un ensemble de successeurs différents : comme $n$ est non borné, le nombre des classes d'équivalence des ensembles de successeurs est infini, et le shift miroir n'est donc pas sofique.

Les shifts de type fini, et même les shifts sofiques, sont des classes de shifts assez restrictives avec des propriétés très particulières. Sans surprise, beaucoup d'exemples importants de shifts effectifs ne sont pas sofiques. Prouver qu'un shift n'est pas sofique se fait habituellement avec une version du ``pumping lemma'' de la théorie des automates. Le pumping lemma est par exemple présenté dans la section 4.1 et 7.2 de \cite{hopcroft2001introduction} ; il est par exemple utilisé dans \cite{pumping-lemma} pour montrer que le shift des nombre premiers n'est pas sofique, ou dans \cite[Exemple~3.1.7]{sft}.

\subsection{Shifts à deux dimensions}\label{ss:2-dimension}

Le formalisme des shifts peut être étendu de manière naturelle aux grilles de $\mathbb{Z}^d$ pour $d>1$. La Proposition~\ref{p:motif-interdits} s'étend ainsi aux dimensions supérieures. Un shift de $\mathbb{Z}^d$ (sur un alphabet fini $\Sigma$) est défini comme un ensemble de configurations de dimensions $d$,  i.e., $f \ :\ \mathbb{Z}^d\to \Sigma$ qui sont (i)~invariantes par translations
(pour des translations dans toutes les directions) et (ii)~clos dans la topologie de Cantor. Comme pour le cas en dimension 1, les shifts peuvent être définis en terme de motifs finis interdits. 

Les définitions des \emph{shifts effectifs} (l'ensemble des motifs interdits est énumérable) et des \emph{shifts de type fini} (l'ensemble des motifs finis interdits est fini) s'appliquent directement aux espaces de shift à plusieurs dimensions, sans ajustement nécessaires. Comme en dimension 1, on peut définir un shift effectif par un ensemble décidable de motifs interdits (et pas seulement énumérable). Les \emph{shifts sofiques} sur $\mathbb{Z}^d$ sont définis par la Définition~\ref{d:sofique} ci-dessus (comme la projection coordonnée par coordonnée des shifts de type fini). Nous renvoyons le lecteur à \cite{hochman2009dynamics} pour une discussion exhaustive sur les relations entre shifts de type fini, shifts sofiques, et shifts effectifs sur $\mathbb{Z}^d$.

\begin{example}
Un premier exemple de shift de type fini est le shift sur l'alphabet $\{ \square, \blacksquare \}$, où apparaissent des rectangles noirs sur un fond blanc, voir Fig~\ref{f:shift-rectangle}. Le shift contient également la configuration entièrement blanche et la configuration entièrement noire, les configurations avec un quart de plan noir, et les configurations avec un demi-plan noir (ces configurations ne peuvent être évitées, par compacité de l'espace des shifts). Ce shift peut se définir par un ensemble fini de motifs interdits ${\cal M}$ (voir figure ci-dessous).

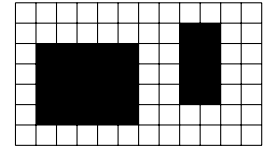
\begin{figure}
\center
\begin{tikzpicture}[scale=0.27]
\node at (-4,1) { ${\cal M}=$ \{ };
\node [scale = 2] at (-2,1) { \{ };

\fill[black] (0,0) rectangle (1,1);
\fill[black] (1,1) rectangle (2,2);
\draw (0,1) rectangle (1,2);
\draw (1,2) -- (2,2);\draw (0,0) -- (0,1);
\node at (3,0) { , };

\fill[black] (4,0) rectangle (5,1);
\fill[black] (5,1) rectangle (6,2);
\draw (5,0) rectangle (6,1);
\draw (4,0) -- (5,0);\draw (6,1) -- (6,2);
\node at (7,0) { , };

\fill[black] (8,1) rectangle (9,2);
\fill[black] (9,0) rectangle (10,1);
\draw (8,0) rectangle (9,1);
\draw (9,0) -- (10,0);\draw (8,1) -- (8,2);
\node at (11,0) { , };

\fill[black] (12,1) rectangle (13,2);
\fill[black] (13,0) rectangle (14,1);
\draw (13,1) rectangle (14,2);
\draw (12,2) -- (13,2);\draw (14,0) -- (14,1);

\node [scale = 2] at (15,1) { \} };
\node at (16,0) { . };

\node at (25,1) { Une configuration : };

\draw (34,0) grid (46,7);
\fill[black] (35,1) rectangle (40,5);
\fill[black] (42,2) rectangle (44,6);
\end{tikzpicture}
\caption{Le shift de type fini dont les configurations sont des rectangles noirs sur une mer de cases blanches.}\label{f:shift-rectangle}
\end{figure}

\end{example}

\paragraph{Un formalisme particulier pour les shifts de type fini : les tuiles de Wang.}
Les tuiles de Wang furent initialement introduites dans \cite{wang} pour étudier des fragments de la logique du premier ordre.
 
\begin{definition}
Un jeu de tuiles de Wang $\tau$ est constitué d'un ensemble de carrés, appelés tuiles, dont les couleurs des bords appartiennent à un ensemble de couleurs fini (les tuiles ne peuvent pas être tournées). Un pavage valide de $\tau$ est une configuration de $\tau^{\mathbb{Z}^2}$ où, pour chaque paire de tuiles voisines, les couleurs de leur bord commun correspondent.
\end{definition}

\begin{example}
Un jeu de tuiles apériodique (i.e., dont toutes les configurations sont apériodiques), minimal en nombre de tuiles et en nombre de couleurs (voir Fig~\ref{f:wang} et \cite{minimal-aperiodic-tile-set})~:
\begin{figure}[h]
	\centering
	\begin{subfigure}{.3\textwidth}
		\centering
		\includegraphics[scale = 0.3]{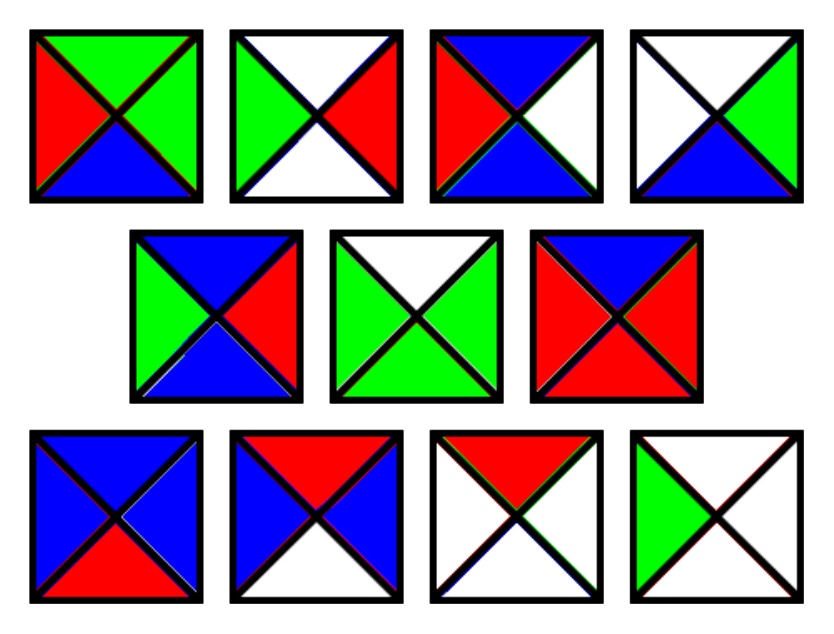}
		\caption{Le jeu de tuiles.}
	\end{subfigure}
	\begin{subfigure}{.3\textwidth}
		\centering
		\includegraphics[scale = 0.3]{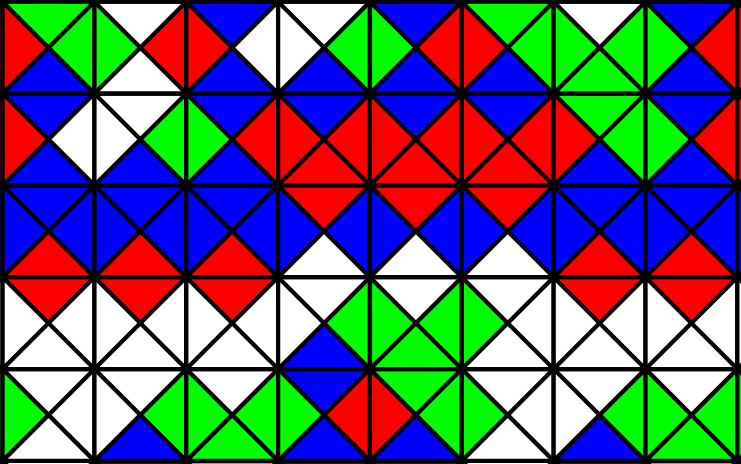}
		\caption{Une configuration.}
	\end{subfigure}
	\caption{Un jeu de tuiles apériodique de 11 tuiles.}\label{f:wang}
\end{figure}
\end{example}

Clairement, un jeu de tuiles de Wang est un shift de type fini (les lettres de l'alphabet sont les tuiles, et les motifs interdits, de taille $1\times 2$ ou $2 \times 1$, sont les paires de tuiles voisines dont les couleurs ne correspondent pas). La proposition suivante montre que les jeux de tuiles sont en réalité équivalents aux shifts de type fini.

\begin{proposition}\label{p:shift-tuiles}
Pour tout shift de type fini $S$, il existe un jeu de tuiles $\tau$ tel que l'ensemble des configurations de $S$ est isomorphe à l'ensemble des pavages de $\tau$.
\end{proposition}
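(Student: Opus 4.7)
L'approche consiste à procéder en deux étapes : d'abord ramener $S$, par un codage par blocs, à un shift de type fini équivalent dont toutes les contraintes sont entre plus proches voisins ; ensuite coder ce shift « local » sous forme de tuiles de Wang, en transportant l'information d'adjacence sur les couleurs des arêtes.

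Pour la première étape, soit $\mathcal{F}$ un ensemble fini définissant $S$, et soit $n$ tel que tout motif de $\mathcal{F}$ s'inscrive dans un bloc de taille $n\times n$ (quitte à remplacer chaque motif par la liste finie des blocs $n\times n$ le contenant, on peut supposer tous les motifs interdits carrés de côté $n$). J'introduirais l'alphabet $\Sigma'=\Sigma^{[0,n-1]^2}$ et l'application $\varphi : \Sigma^{\mathbb{Z}^2}\to(\Sigma')^{\mathbb{Z}^2}$ qui envoie $x$ sur la configuration dont la valeur en $(i,j)$ est le bloc $x_{[i,i+n-1]\times[j,j+n-1]}$. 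Cette application est continue, injective et commute avec le shift ; son image $S'$ est un shift de type fini dont toutes les contraintes deviennent effectivement locales : aucune cellule ne contient un motif interdit de $\mathcal{F}$, et deux cellules adjacentes représentent des blocs coïncidant sur leur chevauchement.

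Pour la seconde étape, j'associerais à chaque paire $(a,b)\in\Sigma'\times\Sigma'$ horizontalement compatible une couleur $h_{a,b}$ et à chaque paire verticalement compatible une couleur $v_{a,b}$. Le jeu $\tau$ serait alors formé des tuiles indexées par un quintuplet $(a,c,b,e,d)$, où $a$ est la lettre centrale et $c,b,e,d$ sont des voisins compatibles avec $a$ respectivement à gauche, droite, bas et haut ; cette tuile porterait les couleurs $h_{c,a}$, $h_{a,b}$, $v_{e,a}$, $v_{a,d}$ sur ses arêtes correspondantes. Par construction, la règle d'appariement des couleurs des tuiles de Wang équivaut exactement aux contraintes locales de $S'$. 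L'application $\psi$ qui à $y\in S'$ associe le pavage dont la tuile en $(i,j)$ est le quintuplet $(y_{i,j}, y_{i-1,j}, y_{i+1,j}, y_{i,j-1}, y_{i,j+1})$ est alors une bijection continue commutant avec le shift, et la composée $\psi\circ\varphi$ fournit l'isomorphisme cherché entre $S$ et l'ensemble des pavages de $\tau$.

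Le point le plus délicat sera la première étape : il faudra vérifier soigneusement que l'image par le codage par blocs d'un shift de type fini reste bien un shift de type fini, et cette fois avec uniquement des contraintes entre plus proches voisins (la reconstruction de $x$ à partir de $\varphi(x)$ demande en particulier de contrôler la compatibilité des chevauchements). La seconde étape, elle, relève d'un codage essentiellement combinatoire et ne présente pas de difficulté conceptuelle.
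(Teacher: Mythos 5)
Votre preuve est correcte et suit essentiellement la même démarche que celle du manuscrit : faire porter aux couleurs des arêtes des blocs (ou des paires de blocs) qui se chevauchent, de sorte que la règle d'appariement des tuiles de Wang force la cohérence des recouvrements et interdise les motifs de $\mathcal{F}$. La seule différence est d'organisation : le manuscrit construit directement les tuiles comme des blocs $(c+1)\times(c+1)$ avec pour couleurs les blocs $c\times c$ admissibles, là où vous factorisez l'argument en un recodage par blocs vers un shift de type fini à contraintes de plus proches voisins, suivi du codage générique par quintuplets — les deux reviennent au même, à condition bien sûr que votre relation de « compatibilité » intègre aussi l'admissibilité des blocs eux-mêmes.
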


Un isomorphisme entre deux shifts est une bijection, continue dans le sens topologique, et qui commute avec l'opération du shift. 

\begin{proof}
Soit un shift de type fini $S$ sur un alphabet $A$. Nous considérons un ensemble fini ${\cal I}$ de motifs interdits définissant $S$. Soit $l$ la largeur maximale des motifs interdits, $h$ leur hauteur maximale, et $c$ le maximum entre $l$ et $h$. Nous pouvons définir un autre ensemble fini ${\cal I}'$ de motifs interdits définissant $S$, où tous les motifs interdits sont de taille $c \times c$. Nous pouvons alors également définir $S$ par l'ensemble fini ${\cal A} = A^{c^2} - {\cal I}'$ des motifs autorisés de taille $c \times c$. 

Nous définissons alors un jeu de tuiles $\tau$ dont l'ensemble des couleurs des bords est l'ensemble ${\cal A}$. Intuitivement, une tuile située à la position $p=(x,y)$ sera chargée de ``vérifier'' que dans le carré $C$ de taille $(c+1) \times (c+1)$ de $S$ dont le coin en bas à gauche est à la position $p$, le sous-carré de taille $c\times c$ situé en bas à gauche de $C$ est compatible avec les sous-carrés de $C$ de même taille situé en haut à gauche et en bas à droite.

Ainsi, pour chaque motif carré $C$ de taille $(c+1) \times (c+1)$ de $S$, nous ajoutons une tuile $t$ au jeu de tuiles $\tau$, où les couleurs de gauche et du bas de $t$ correspondent au carré de taille $c \times c$ situé en bas à gauche de $C$, et les couleurs du haut et de droite de correspondent aux carrés de taille $c \times c$ situés respectivement en haut à gauche et en bas à droite de $C$. Nous définissons également une fonction bijective $\phi$ avec $\phi(C)=t$.

Nous pouvons alors définir $\pi$ de l'ensemble des configurations de $S$ vers l'ensemble des pavages de $\tau$ de la manière suivante : pour chaque position $p=(x,y)$ d'une configuration ${\cal C}$ de $S$, nous considérons le carré $C$ de ${\cal C}$ dont le coin en bas à gauche est à la position $p$, et associons à cette position $p$ la tuile $\phi(C)$ (réciproquement, à un pavage ${\cal P}$ de $\tau$ la fonction $\pi^{-1}$ associe à chaque tuile $t$ à une position $(x,y)$ dans ${\cal P}$ la lettre en bas à gauche du carré $\phi^{-1}(t)$). Nous pouvons alors vérifier que $\pi$ est bien un isomorphisme des configurations de $S$ vers les pavages de $\tau$ ($\pi$ commute bien avec l'opération du shift).
\end{proof}

\paragraph{Des shifts sofiques qui ne sont pas de type fini :}
\begin{example}
Un exemple de shift sofique qui n'est pas de type fini est le shift $S$ sur l'alphabet $\{ \square, \blacksquare \}$, dont les configurations sont des carrés noirs sur un fond blanc (plus de nouveau la configurations entièrement blanche, celle entièrement noire, les quart de plan noirs et les demi-plans noirs), voir Fig~\ref{f:shift-carrés}. 

Pour prouver que le shift $S$ est sofique, nous définissons un shift $S'$ de type fini et une projection $\pi$, tels que $\pi(S')=S$. Intuitivement, la définition de $S$ repose sur le fait qu'un carré de taille $n \times n$ (avec $n\geq 3$) est constitué d'un carré de taille $(n-2) \times (n-2)$ entouré par une bordure de largeur 1. Si $(n-2)>3$, nous pouvons répéter l'opération, et ainsi de suite jusqu'à obtenir un carré central de taille $2 \times 2$ ou constitué d'une seule case. Remarquons que si l'on applique le même procédé à partir d'un rectangle qui n'est pas un carré, nous obtenons au centre de celui-ci un rectangle de taille $2 \times h$ ou $l \times 2$, avec $l>2$ et $h>2$.     

Pour forcer cette structure, l'alphabet de $S'$ est formé de 4 lettres pour les 4 coins des bordures ($\ulcorner, \urcorner, \llcorner$ et $\lrcorner$), de 2 lettres pour les bords horizontaux et verticaux ($-, |$) et de la lettre $\blacksquare$ pour le carré central constitué d'une seule case ; toutes ces lettres sont envoyées par $\pi$ sur la lettre $\blacksquare$ de l'alphabet de $S$. Enfin, l'alphabet de $S'$ contient la lettre $\square$, avec $\pi(\square )=\square$. Les motifs interdits de $S'$ sont de tailles au plus $4 \times 4$, et garantissent la structure des bordures imbriquées les unes dans les autres, avec au centre un des deux carrés possibles. En effet, nous pouvons montrer qu'un ensemble de cases connexes qui ne sont pas blanches est soit un carré, soit un demi-plan soit un quart de plan, en partant de la bordure de l'ensemble de case et en considérant les couches successives ; des motifs interdits de taille $4 \times 4$ sont suffisants pour interdire les autres configurations.

\begin{figure}
\center

\begin{tikzpicture}[scale=0.27]
\draw (0,0) grid (20,12);

\draw [ultra thick] (2.5,2.5) rectangle (8.5,8.5);
\draw [ultra thick] (3.5,3.5) rectangle (7.5,7.5);
\draw [ultra thick] (4.5,4.5) rectangle (6.5,6.5);
\fill [black] (5,5) rectangle (6,6);

\draw [ultra thick] (12.5,4.5) rectangle (17.5,9.5);
\draw [ultra thick] (13.5,5.5) rectangle (16.5,8.5);
\draw [ultra thick] (14.5,6.5) rectangle (15.5,7.5);

\fill [black] (13,1) rectangle (14,2);
\draw [ultra thick] (16.5,1.5) rectangle (17.5,2.5);

\node at (10,-2) { Une configuration de $S'$ };

\draw[->,ultra thick] (22.5,6) -- (27.5,6);
\node at (25,7) {$\pi$};

\draw (30,0) grid (50,12);
\fill [black] (32,2) rectangle (39,9);
\fill [black] (42,4) rectangle (48,10);
\fill [black] (43,1) rectangle (44,2);
\fill [black] (46,1) rectangle (48,3);
\node at (40,-2) { La configuration de $S$ corresondante };

\end{tikzpicture}
\caption{Le shift sofique dont les configurations sont des carrés noirs sur une mer de cases blanches.}\label{f:shift-carrés}
\end{figure}
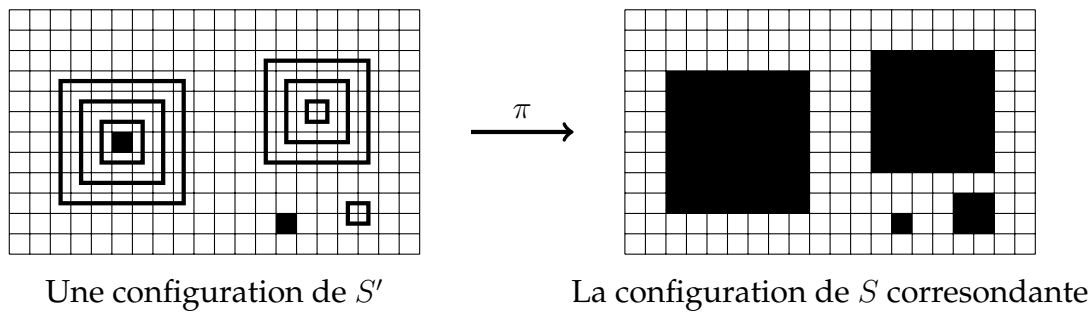

Montrons maintenant que $S$ n'est pas un shift de type fini, d'une manière similaire à celle utilisée pour le shift unidimensionnel $S_2$ de la sous-partie précédente. Nous supposons par l'absurde que $S$ est de type fini ; soit ${\cal M}$ un ensemble fini de motifs interdits définissant $S$. Soit $l$ et $h$ respectivement la largeur et la hauteur maximales des motifs interdits de ${\cal M}$, et $c$ le maximum entre $l$ et $h$ (tous les motifs interdits de $S$ sont donc inclus dans le carré de taille $c \times c$. Nous considérons alors deux configurations entièrement blanches, à l'exception pour l'une d'un carré de taille $c\times c$, et pour l'autre d'un rectangle de taille $(c+1)\times c$. Ces deux configurations possèdent les mêmes motifs inclus dans le carré de taille $c \times c$ ; par conséquent, les deux configurations appartiennent à $S$ ou aucune des deux n'y appartient, ce qui  est absurde.
\end{example}

En deux dimensions, les shifts sofiques sont une classe étonnamment large, et montrer qu'un shift est sofique peut nécessiter des arguments subtils. \label{p:ex-westrick} Nous pouvons citer en exemple le shift sofique présenté dans \cite{westrick2017seas}, qui est un sous-shift du shift de l'exemple précédent pour lequel les carrés noirs sont tous de tailles différentes (ou même dont les tailles appartiennent à un ensemble $\Pi_0^1$). Ce résultat peut paraître surprenant, dans la mesure ou deux carrés de même taille peuvent apparaître arbitrairement loin.

Un autre exemple de shifts sofiques sont ceux obtenus à partir de shifts effectifs unidimensionnels, en répétant chaque lettre sur une bande verticale, voir \cite{drs,aubrun-sablik}. 

\subsection{Montrer qu'un shift multidimensionnel n'est pas sofique}

Le sujet principal de ce manuscrit est d'étudier la limite de séparation, pour les shifts de dimensions deux, entre les \emph{shifts sofiques} et \emph{les shifts effectifs et non sofiques}. Comme nous l'avons mentionné ci-dessus, pour les shifts multidimensionnels la classe des shifts sofiques est très riche. Bien sûr, seuls les shifts effectifs peuvent être sofiques. Cependant, il peut être difficile de montrer qu'un shift n'est pas sofique. Nous commençons notre étude en présentant les techniques classiques qui peuvent être utilisées pour montrer qu'un shift \emph{n'est pas sofique}.

La manière usuelle de montrer qu'un shift n'est pas sofique est d'étudier le ``flux d'information'' ou la ``densité d'information'' associés à ce shift. Cette idée est similaire à la technique des ``ensembles de successeurs'' mentionnée ci-dessus à la page~\pageref{p:successors-1d}. Intuitivement, nous voulons mesurer l'information requise pour certifier qu'un motif $P$ de taille $N \times N$ est globalement admissible, ou certifier qu'un motif $F$ défini sur l'ensemble des cases en dehors de $P$ et le motif $P$ sont compatibles l'un avec l'autre.

\begin{example}
Précisons cette notion de flux d'information à travers deux exemples de shifts très basiques, sur l'alphabet $\{ \square, \blacksquare\}$. Tout d'abord, considérons le shift possédant au plus \emph{une} case noire. Dans ce cas, le flux d'information entre un motif $P$ de taille $N \times N$ et le motif $F$ correspondant, nécessaire pour certifier que les deux motifs sont compatibles, est juste un bit, indiquant si $P$ contient une case noire ou non.

Considérons maintenant le cas du shift où deux cases noires ne peuvent être voisines, et de nouveau un motif $P$ de taille $N\times N$ et le motif $F$ correspondant à $P$. Pour garantir que $P$ et $F$ sont compatibles, il faut et il suffit de connaître la position des cases noires situées sur la bordure de $P$, et le flux d'information est ici en $O(N)$ bits (pour chacune des $4(-1)N$ cases de la bordure nous pouvons indiquer par un bit si elle est noire ou blanche).

En outre, ces deux shifts contiennent la configuration entièrement blanche, et donc des motifs de taille arbitrairement grande avec une densité d'information nulle.

Il n'est pas difficile de voir que ces deux exemples sont des shifts sofiques (le second est même de type fini). Cela n'est pas surprenant, dans la mesure où la ``densité d'information'' et le ``flux d'information'' (du moins dans leur sens intuitif) encodés dans ces shifts sont faibles.
\end{example}

Intuitivement, les idées derrière les preuves classiques qu'un shift n'est pas sofique reposent sur l'observation suivante : pour les shifts de types finis ainsi que pour les shifts sofiques, la valeur du ``flux d'information'' traversant la bordure d'un motif $N\times N$, ou encore la quantité d'information encodée dans un motif de taille $N \times N$, ne peuvent être supérieurs à $O(N)$. Nous illustrons cette idée avec deux exemples bien connus : le shift miroir (en deux dimensions) et le shift de grande complexité.

\begin{figure}
\vspace{-15pt}
\begin{center}
  \begin{tikzpicture}[scale=0.27,x=1cm,baseline=2.125cm,ultra thick/.style= {line width=3.0pt},]
      \foreach \x in {0,...,15} 
    {
       \path[fill=red!41,draw=black] (\x,8) rectangle ++ (1,1);

    }

 \pgfmathsetseed{4}
    \foreach \x in {0,...,15} \foreach \y in {0,...,7}
    {
        \pgfmathparse{mod(int(random*23),2) ? "black!66" : "black!7"}
        \edef\colour{\pgfmathresult}
        \path[fill=\colour,draw=black] (\x,\y) rectangle ++ (1,1);
        \path[fill=\colour,draw=black] (\x,8+8-\y) rectangle ++ (1,1);
    }
    
    \draw [draw=blue,ultra thick]  (3,2) rectangle (7,6);
    \draw [draw=blue,ultra thick]  (3,11) rectangle (7,15);
       
\end{tikzpicture}
\caption[Une configuration avec une symétrie en miroir par rapport à une ligne rouge horizontale.]{Une configuration avec une symétrie en miroir par rapport à une ligne rouge horizontale.
Les carrés bleus délimitent deux motifs noir et blanc symétriques.} 
\label{f:mirror}
\end{center}
\vspace{-15pt}
\end{figure}
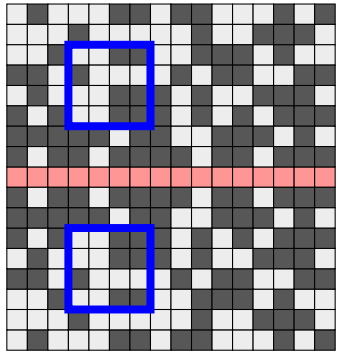

\begin{example}[le shift miroir]\label{ex:mirror}
Un des exemples standards de shift non-sofique est le shift des configurations symétriques en miroir sur $\mathbb{Z}^2$. Cet exemple illustre l'idée intuitive que dans un shift sofique la valeur du ``flux d'information'' traversant la bordure d'un motif doit être proportionnelle à la longueur du contour (et donc que si cette propriété n'est pas respectée le shift n'est pas sofique).

Soit $\Sigma = \{ \blacksquare, \square, {\color{red} \blacksquare} \}$ un alphabet de trois lettres ; les configurations du shift sont toutes les configurations constituées de cases noires et blanches (sans aucune case rouge), et les configurations avec une ligne de cases rouges infinie horizontale, et au-dessus et en-dessous d'elle deux demi-plans de cases noires et blanches symétriques, voir Fig.~\ref{f:mirror}. Nous appelons ce shift $S_{\text{mirror}}$.

Il est facile de voir que ce shift est effectif (les motifs interdits sont ceux où les cases rouges ne sont pas alignées, et ceux où les zones au-dessus et en-dessous de la ligne de cases rouges ne sont pas symétriques). Par ailleurs, le shift n'est pas sofique. L'explication intuitive de ce résultat est la suivante : considérons une paire de motifs noir et blanc symétriques de taille $N\times N$, au-dessus et en-dessous de la ligne rouge horizontale (voir les carrés bleus dans la Fig.~\ref{f:mirror}). Pour être sûr que la configuration appartienne au shift, nous devons ``comparer'' ces deux motifs l'un par rapport à l'autre.
Pour cela, nous devons transmettre l'information à propos d'un motif de taille $N^2$ à travers sa bordure (de taille $O(N)$). Cependant, dans un shift sofique, le ``flux d'information'' qui traverse un contour de taille $O(N)$ est limité par $O(N)$ bits, et cette contradiction implique que le shift ne soit pas sofique.
Pour un argument plus formel, voir par exemple \cite{mirror} et \cite{jeandel-hdr}, ou un exemple similaire dans \cite[Exemple~2.4]{kass-madden}.
Nous revisitons cet exemple dans la Sous-Partie~\ref{s:plain-epitomes} et l'utilisons comme une illustration simple de notre technique.
\end{example}

\begin{example}[Le shift de grande complexité]
\label{ex:complexity}
Cet exemple illustre l'idée intuitive que dans un shift sofique la densité d'information encodée dans un motif globalement admissible de taille $N\times N$ ne peut être plus grande que $O(N)$ (et donc que si cette propriété n'est pas respectée, le shift n'est pas sofique).

Soit $S$ l'ensemble de toutes les configurations binaires de $\mathbb{Z}^2$ où pour chaque motif $P$ de taille $N\times N$ sa complexité de Kolmogorov est quadratique, i.e. $C(P)=\Omega(N^2)$. Techniquement, cela signifie qu'aucun motif globalement admissible ne peut être produit par un programme de taille inférieure à $c \cdot N^2$, pour un certain facteur $c>0$ (voir la définition formelle de la complexité de Kolmogorov ci-dessous).

Donc, par construction, dans ce shift tous les motifs globalement admissibles de taille $N\times N$ contiennent $\Omega(N^2)$ bits d'information dans le sens de la complexité de Kolmogorov. Cela est beaucoup plus que $O(N)$, la quantité d'information qui peut être réalisée par un shift sofique ; par conséquent, ce shift n'est pas sofique. Formellement, cela découle de deux faits (prouvés dans \cite{dls}) :
\begin{itemize}
\item[(i)] Pour un certain $c<1$, le shift défini ci-dessus n'est pas vide.
\item[(ii)] Dans tout shift sofique non vide de $\mathbb{Z}^2$, il existe une configuration où la complexité de Kolmogorov de tous les motifs de taille $N\times N$ est bornée par $O(N)$.
\end{itemize}
Par ailleurs, le shift est clairement effectif : nous pouvons énumérer par un algorithme les différents motifs dont la complexité de Kolmogorov est plus petite que le seuil défini. Dans la Section~\ref{s:grande-complexité-bornée} nous étudions une construction similaire basée sur une version moins usuelle de la complexité de Kolmogorov. 
\end{example}

Nous pouvons remarquer que les shifts non sofiques des deux exemples précédents ont une entropie positive (le nombre de motifs globalement admissibles de taille $N \times N$ croît en $2^{\Omega(n^2)}$). Cela n'est pas surprenant : pour montrer que ces shifts ne sont pas sofiques, les preuves sont basées sur l'intuition qu'une trop grande quantité d'``information'' est présente dans un motif (soit dans le sens de Kolmogorov, soit dans le sens de flux d'information entre le motif et son extérieur). Ces types d'arguments peuvent être adaptés à des shifts pour lesquels le nombre de motifs globalement admissibles de taille $N\times N$ croît plus lentement que  $2^{\Omega(N^2)}$, mais néanmoins plus vite que $2^{O(N)}$ (voir par exemple \cite[Exemple~2.5]{kass-madden}).

Comme remarqué dans \cite{westrick2017seas}, ``\emph{tous les exemples connus par l'auteure de shifts effectifs mais non sofiques sont obtenus, dans un certain sens, en encodant trop d'information dans une petite zone.}''

Dans les exemples ci-dessus nous avons illustré l'intuition derrière les preuves standards qu'un shift multidimensionnel n'est pas sofique. Jusqu'à présent nous n'avons pas donné de preuves formelles (nous y reviendrons plus tard, dans le Chapitre~\ref{c:non-sofique}). Contrairement au cas des shifts unidimensionnels, les preuves en dimensions supérieures ne sont pas aussi faciles et directes. En effet, pour le cas des shifts unidimensionnels l'intuition de ``flux d'information'' peut être formalisée par le langage des ensembles d'extensions, voir page~\pageref{p:successors-1d} ; pour les shifts multidimensionnels, une telle approche échoue. Nous discutons plus en détail des limitations de la technique des ensembles d'extension page~\pageref{p:extenders-do-not-work}. Cependant, il y a plusieurs tentatives fructueuses pour convertir cette intuition en une méthode mathématique formelle. Par exemple, ce type d'argument a été formalisé comme des conditions assez générales permettant de montrer qu'un shift n'est pas sofique, dans \cite{pavlov} et \cite{kass-madden}. Les théorèmes de Kass et Madden (\cite[Theorem~3.2.10]{kass-madden}) et de Pavlov (\cite[Theorem~1.1]{pavlov}) ne s'appliquent que dans les cas des shifts en deux dimensions où le nombre de motifs globalement admissibles de taille $N \times N$ est plus grand que $2^{O(N)}$. 

Cependant, il n'y a pas de raison de penser que ce soit une condition nécessaire pour qu'un shift ne soit pas sofique (voir, par exemple, la discussion dans \cite[Section~1.2.2]{jeandel-hdr}). Il est instructif d'observer que les shifts non effectifs (et donc non sofiques) peuvent avoir une complexité combinatoire (i.e. un nombre de motifs de taille $N \times N$, voir la partie ci-après) très petite, voir \cite{kass-madden,ormes-pavlov}.

Nous avons vu dans cette première sous-partie que l'étude des shifts s'est d'abord faite en dimension un, et que dans ce cas ceux-ci sont assez bien compris ; en particulier, la classe des shifts sofiques y est bien caractérisée. Il y a cependant un fossé entre la dimension un et la dimension deux : la plupart des résultats valables en dimension un ne s'appliquent plus, et en particulier il est difficile de caractériser les shifts sofiques. Ce manuscrit a pour but d'étudier la frontière séparant les shifts effectifs et non-sofiques des shifts sofiques. Nous revisitons l'intuition de ``densité d'information'' et de ``flux d'information'' dans les shifts multidimensionnels. Nous étudions des outils mathématiques formalisant cette intuition et construisons de nouvelles méthodes formelles pour prouver qu'un shift est sofique ou non sofique.

Dans ce but, nous allons utiliser plusieurs notions de complexité qui peuvent être attachées à un shift, que nous présentons dans la sous-partie suivante.

\section{Différentes notions de complexité}

Dans nos preuves nous utilisons à de nombreuses reprises plusieurs mesures formelles correspondant intuitivement à la notion de ``quantité d'information''. Ces mesures capturent chacune dans un certain sens l'idée de ``complexité'' d'un shift. Pour rendre plus clair le contexte général, nous rappelons différentes approches existantes pour mesurer la ``complexité'' qui sont utilisées en dynamique symbolique.

\paragraph{1. La complexité logique des ensembles de motifs interdits.} La classification de tous les shifts en \emph{shifts de type fini}, \emph{shifts sofiques}, \emph{shifts effectifs}, et finalement \emph{shifts non effectifs} peut être vue comme une mesure de leur complexités. En effet, cette hiérarchie classifie à quel point l'ensemble de motifs interdits définissant un shift peut être complexe. Rappelons que : $[\text{les shifts de type fini}] \subsetneqq [\text{les shifts sofiques}] \subsetneqq  [\text{les shifts effectifs}]$.

\paragraph{2. La complexité combinatoire des motifs globalement admissibles.} La complexité en termes de motifs globalement admissibles, ou complexité combinatoire. Dans ce cas nous ne nous focalisons pas sur la complexité ``logique'' des ensembles des motifs admissibles, mais plutôt sur le nombre de motifs globalement admissibles de différentes tailles. Dans cette approche, la question typique est combien de motifs globalement admissibles de taille $N \times N$ existent pour un shift donné. Si le nombre de motifs globalement admissibles est sous-exponentiel, ou encore polynomial, ou même linéaire, alors la complexité combinatoire d'un shift est basse. Si le nombre de motifs globalement admissibles croît aussi rapidement que $2^{h N^2+ o(N^2)}$, alors la complexité combinatoire du shift est élevée. Le coefficient $h$ est appelé l'entropie du shift, et pour les shift de basse complexité l'entropie est nulle. 

Il existe des shifts pour lesquels le nombre de motifs globalement admissibles a une croissance intermédiaire, i.e., elle peut être en $2^{f(N)}$ pour une fonction telle que $N\ll f(N) \ll N^2$. Ils ont une entropie de nulle, mais peuvent néanmoins avoir une grande diversité de motifs globalement admissibles.

Le fait pour un shift d'avoir une faible complexité combinatoire est relié au fait d'être périodique ou apériodique. La conjecture de Nivat (voir \cite{nivat}) affirme ainsi que si une configuration d'un shift de type fini est de faible complexité combinatoire (i.e. il y apparaît moins de $M \cdot N$ rectangles différents de taille $M \times N$), alors cette configuration est périodique. Nous pouvons remarquer qu'ici la complexité combinatoire est basée sur des rectangles et non sur des carrés, comme ce sera le cas dans ce manuscrit.  

La conjecture de Nivat n'est toujours pas résolue. Néanmoins, des progrès significatifs ont été réalisés récemment. En particulier, il a été montré (voir \cite{moutot-kari} et \cite{moutot}) que si une configuration ${\cal C}$ en dimension deux a une faible complexité par bloc (il y a un nombre de motif de taille $M \times N$ inférieur à $M \cdot N$), alors il existe une configuration périodique ${\cal C}'$ qui appartient à la clôture topologique de l'orbite de ${\cal C}$ (l'orbite est l'ensemble de toutes les translations de ${\cal C}$). Il s'ensuit que si le nombre de motifs rectangulaires de taille $M \times N$ globalement admissibles dans un shift $S$ à deux dimensions est inférieur à $M \cdot N$, alors le shift $S$ contient au moins une configuration périodique. En d'autres termes, une version de la conjecture de Nivat pour les shifts est vraie.

\paragraph{3. La complexité dans le sens de la théorie de la calculabilité : que peut-on dire à propos des degrés de non-calculabilité des configurations d'un shift ?} Il y a eu une longue série de travaux sur les degrés de non-calculabilité pour les configurations des shifts en dimension un (voir \cite{Turing-degree-spectra,word-computational}). Une étude assez complète des degrés Turing pour les shifts de type fini multidimensionnels a été réalisée dans \cite{Pi01}. La recherche des degrés Turing des shifts multidimensionnels a été faite dans \cite{Turing-degree-spectra}. Nous n'entrerons pas dans les détails de ces études puisque cela n'est pas le sujet de la thèse. Néanmoins, cette approche a partiellement motivé l'approche basée sur la complexité de Kolmogorov, qui est le principal outil technique de ce travail.

\paragraph{4. Complexité algorithmique des motifs globalement admissibles.}
Une approche de la complexité assez nouvelle, étudiée dans ce manuscrit : la complexité en termes de description algorithmique (ou complexité de Kolmogorov). Dans un certain sens, cette approche combine les intuitions derrière les approches (2) et (3). Nous étudions la complexité de Kolmogorov d'un motif $P$ globalement admissible de taille $N\times N$. Intuitivement, la complexité de Kolmogorov de  $P$ est la taille du programme minimal qui peut produire $P$. Dans cette approche nous nous intéressons à la complexité de Kolmogorov minimale ou maximale des motifs globalement admissibles de taille $N\times N$. D'une certaine manière, cette approche peut sembler similaire à celle de la complexité combinatoire. Dans une direction la relation est évidente : si la complexité de Kolmogorov de tous les motifs globalement admissibles est basse, alors tous les motifs peuvent être générés par des programmes courts ; puisqu'il y a au plus $2^k$ programmes de taille $k$, nous pouvons conclure que la complexité combinatoire est également basse. Par exemple, si la complexité de Kolmogorov de tous les motifs globalement admissibles de taille $N \times N$ est bornée par $O(\log N)$, alors la complexité combinatoire du shift est bornée par $2^{O(\log N)} = {\rm poly}(N)$.

Cependant, l'approche basée sur la complexité de Kolmogorov semble être bien plus flexible : nous pouvons considérer la complexité de Kolmogorov simple, ou la complexité de Kolmogorov avec un oracle particulier, ou la complexité de Kolmogorov pour des algorithmes ayant des restrictions sur les ressources (en temps et en espace), et ainsi de suite.

Ces 4 ``axes'' de complexité sont connectés les uns aux autres. Dans ce travail nous étudions principalement l'approche basée sur les différentes versions de la complexité algorithmique, et sur les connections entre celles-ci et les mesures de complexité plus classiques (1) et (2).

\section{Principaux résultats}

Nous étudions la ligne de démarcation entre les shifts sofiques et les shifts effectifs mais non sofiques. Le principal défi est de trouver une caractérisation permettant de distinguer les shifts multidimensionnels sofiques ou non sofiques. Nous pensons que la complexité algorithmique (la complexité d'un shift mesurée en terme de complexité de Kolmogorov de ses motifs globalement admissibles) peut s'avérer utile dans cette tâche. Une intuition naïve suggère que nous pourrions avoir une caractérisation en deux parties :

\begin{itemize}
\item si un shift à deux dimensions est sofique, alors il est effectif et l'information ``essentielle'' dans chaque motif globalement admissible de taille $N\times N$ doit avoir une description algorithmique courte et efficace ;
\item si un shift à deux dimensions est effectif et que pour l'information ``essentielle'' dans chaque motif globalement admissible de taille $N\times N$ nous avons une description algorithmique suffisamment courte et efficace, alors ce shift est sofique.
\end{itemize}

Nous somme encore très loin d'avoir une caractérisation complète de ce genre, nous ne pouvons pas prouver un théorème ``si et seulement si'' de ce type. Cependant, nous avons fait quelques progrès en approchant par les deux directions la caractérisation voulue (dans la partie ``si'' et dans la partie ``seulement si'').

\textbf{Chapitre 2} : dans ce chapitre nous démontrons plusieurs résultats, qui intuitivement affirment que dans tout shift à deux dimensions sofique, tout motif globalement admissible de taille $N \times N$ doit être ``simple''. Les résultats de ce chapitre sont publiés dans~\citeMe{stacs} ; la version longue de l'article a été soumise à un journal et est disponible sur arXiv (\citeMe{stacs-submitted}). Les figures de ce chapitre sont reprises de ces articles.

Nous proposons plusieurs énoncés techniques de cette idée générale. En particulier, nous prouvons que si un shift est sofique, alors il contient une configuration pour laquelle tous les motifs carrés sont de faible complexité de Kolmogorov à ressources bornées (i.e., il existe un petit programme calculant un de ces motifs rapidement, voir la définition formelle page~\pageref{d:kolmogorov-bornée}). Nous utilisons alors cette propriété des shifts sofiques pour construire un shift effectif et non sofique (car ne respectant pas cette propriété), et ayant une complexité combinatoire très faible (seulement polynomiale). Techniquement, les motifs de taille $N\times N$ du shift construit auront à la fois une complexité de Kolmogorov simple très basse, en $O(\log N)$, et une complexité de Kolmogorov à ressources bornées élevée, en $\Omega(N^{2-\epsilon})$.

Le principal outil de ce chapitre est la complexité de Kolmogorov, et la principale nouveauté de notre approche est l'utilisation de la technique de la complexité de Kolmogorov à ressources bornées.

D'autres outils pour prouver qu'un shift n'est pas sofique sont développés dans le reste du chapitre.

\textbf{Chapitre 3} : dans ce chapitre, nous prouvons plusieurs résultats affirmant qu'un shift effectif dont tous les motifs globalement admissibles ont une très faible complexité descriptive est sofique. En particulier,
nous montrons que pour $\epsilon <1$, le shift $S$ ayant pour alphabet les lettres blanches et noires, et pour contrainte que le nombre de lettres noires dans un carré de taille $N \times N$ est inférieur à $N^{\epsilon}$ est sofique (sous réserve que $\epsilon$ soit calculable par en haut). De plus, tout sous-shift effectif de $S$ est également sofique. Nos résultats sont motivés par le théorème de L.B.~Westrick (voir \cite{westrick2017seas}), qui a prouvé que pour un ensemble $X\in \Pi_1^0$, le shift $S$ dont les configurations sont constituées de carrés noirs sur une mer de cases blanches, dont les tailles des carrés sont distinctes et appartiennent à $X$, est sofique. De même, tout sous-shift effectif de $S$ est sofique.

Le principal outil technique de ce chapitre est la technique des pavages à point-fixe (introduite et développée dans \cite{fixed-point}). Nous utilisons également la technique de flots dans des graphes (le principe ``flot maximal/coupe minimale'') et un modèle de calcul en parallèle relativement exotique basé sur les automates cellulaire non déterministes.

\section{Définitions basiques}

\subsection{Dynamique symbolique}

Nous avons déjà introduit ci-dessus les principales notions de la théorie des shifts. Pour le confort du lecteur, nous résumons dans cette sous-partie les principales définitions concernant les espaces des shifts.

Les shifts étudiés dans ce manuscrit étant de dimension deux, nous donnons les définitions formelles pour les shifts dans cette dimension.

Un \emph{alphabet} est un ensemble fini d'éléments, appelés \emph{lettres} ou \emph{couleurs}. 
Une \emph{configuration} sur un alphabet $A$ est un élément de $A^{\mathbb{Z}^2}$.
Un \emph{support} $F$ est une partie de la grille $\mathbb{Z}^2$, qui peut être finie ou infinie : $F \subset \mathbb{Z}^2$.
Un \emph{motif} sur un alphabet $A$ et de support $F$, est un élément de $A^F$.

Soit $(x,y)\in \mathbb{Z}^2$ des coordonnées sur la grille, et $\vec{u}$ la translation associée. Soit ${\cal C}$ une configuration, et ${\cal C}_{\vec{u}}$ la configuration ${\cal C}$ translatée selon $\vec{u}$.
Nous disons qu'un motif $P$, de support $F$, apparaît dans ${\cal C}$ à la position $(x,y)$ si la restriction de ${\cal C}_{\vec{u}}$ à $F$ est égale à $P$ : $\left. {\cal C}_{\vec{u}} \right| _F=P$. Plus généralement, nous disons qu'un motif $P$ de support $F$ \emph{apparaît} dans une configuration ${\cal C}$ s'il existe une translation $\vec{u}$ telle que $\left. {\cal C}_{\vec{u}} \right| _F=P$. 
De manière similaire, soient $P$ et $Q$ deux motifs de supports respectivement $F$ et $H$, $\vec{u}$ une translation et $Q_{\vec{u}}$ le motif de support $H_{\vec{u}}$ obtenu en translatant $Q$ selon $\vec{u}$. Alors si $\left. P \right| _{H_{\vec{u}}}=Q_{\vec{u}}$, nous disons que $Q$ \emph{apparaît} dans $P$. 

Un \emph{shift} $S$, sur un alphabet $A$, est défini par un ensemble fini ou infini ${\cal M}$, constitué de motifs dont les supports sont finis, appelé l'ensemble des motifs interdits. Le shift est l'ensemble des configurations sur l'alphabet $A$ pour lesquelles aucun motif interdit de ${\cal M}$ n'apparaît.

Si $S$ et $S'$ sont des shifts tels que $S' \subset S$, on dit que $S'$ est un sous-shift de $S$.

Soient $S$ un shift sur l'alphabet $A$, défini par l'ensemble des motifs interdits ${\cal M}$, et $P$ un motif sur l'alphabet $A$. Si aucun motif de ${\cal M}$ n'apparaît dans $P$, alors nous disons que $P$ est un motif localement admissible de $S$. De plus, si il existe une configuration de $S$ où $P$ apparaît, $P$ est un motif globalement admissible de $S$. 

\subsection{Automates cellulaires}

Pour une introduction aux automates cellulaires, voir par exemple \cite{delorme1998cellular}.

Nous nous intéresserons aux automates cellulaires en dimension un. Comme pour les shift, un automate est défini sur un \emph{alphabet} $A$, et une \emph{configuration} d'un automate est un élément de $A^{\mathbb{Z}}$. L'\emph{état} d'une case de $\mathbb{Z}$, ou \emph{cellule}, à un temps $t+1$ dépend de l'état de la cellule ainsi que de ses voisines de gauche et de droite (nous considérons des automates cellulaires avec un voisinage de taille 1). Ainsi, un tel automate cellulaire en une dimension sur un alphabet $A$ est défini
par une \emph{règle} :
\begin{itemize}
	\item soit par une fonction $f : A^3 \rightarrow A$, et dans ce cas l'automate est déterministe ;
	\item soit par une fonction $f : A^3 \rightarrow {\cal P}(A)$, et dans ce cas l'automate est non déterministe (pour au moins une entrée l'ensemble retourné par la fonction doit être de cardinalité au moins deux).
\end{itemize} 

La notion d'automate cellulaire déterministe est largement étudiée. L'idée d'un automate cellulaire non déterministe est moins commune. Néanmoins, les automates cellulaires non déterministes ont aussi été étudiés depuis différents points de vue et dans différents contextes. Nous pouvons mentionner plusieurs contributions à ce domaine de recherche.
Une étude des automates non déterministes utilisant une approche purement topologique peut être trouvée dans \cite{di2014nondeterministic} ; des propriétés topologiques de ceux-ci sont présentées dans \cite{di2020topological}. Dans \cite{ozhigov1999computations}, l'auteur étudie le compromis pour les automates cellulaires non déterministes, entre d'une part leur dimensions, et d'autre part le temps et l'espace nécessaires pour effectuer des calculs.
Dans \cite{kutrib2012non}, l'auteur s'intéresse à la puissance des automates non déterministes si le nombre de cellules ayant un comportement non déterministe est borné. Les automates cellulaires non déterministes sur un groupe avec un alphabet potentiellement infini peuvent être caractérisés avec la notion de continuité uniforme, voir \cite{furusawa2017uniform}. Pour les automates cellulaires, un ``Jardin d'Éden'' est une configuration n'ayant pas de prédécesseur : Richardson donne dans \cite{richardson1972tessellations} une condition suffisante pour qu'un automate cellulaire n'en possède pas. À l'inverse, dans \cite{yaku1976surjectivity} l'auteur montre que le problème de savoir s'il existe un Jardin d'Éden n'est pas décidable en général. 

Pour plus de références sur l'histoire des automates cellulaires déterministes et non déterministes, voir \cite{delorme1998cellular,di2020topological}.

\subsection{Complexité de Kolmogorov}

La notion de complexité algorithmique (ou de Kolmogorov) capture certains aspects de la ``quantité d'information'' d'un objet fini. De manière informelle, la complexité de Kolmogorov d'un objet est la longueur du plus petit programme qui peut retourner cet objet (éventuellement, avec certaines restrictions sur les ressources computationnelles disponibles). Bien entendu, pour être plus formel nous devons spécifier le langage de programmation utilisé (et si nécessaire comment sont mesurées les ressources computationnelles --- temps et mémoire utilisés par l'algorithme).

On peut retracer les origines de la complexité de Kolmogorov, ou complexité en termes de description algorithmique, dans les travaux de Solomonoff (\cite{solomonoff1960preliminary,solomonoff1964formalI,solomonoff1964formalII}), et de Chaitin (\cite{chaitin1969simplicity}). Déjà dans \cite{kolmogorov-three-approaches} Kolmogorov mentionnait que les ressources computationnelles pourraient être prises en compte dans la définition de la complexité algorithmique. De manière plus systématique, la complexité de Kolmogorov à ressources bornées a été introduite dans \cite{hartmanis1983generalized,sipser1983complexity,ko1986notion,allender1989some}.

Le livre \cite{shen2017kolmogorov} reprend les principaux sujets et résultats sur la complexité de Kolmogorov, d'un point de vue technique et philosophique. Une étude très approfondie de ce domaine peut être trouvée dans \cite{li2008introduction}.

Nous appelons $\{0,1\}^*$ l'ensemble des suites binaires de taille finie (i.e., une suite, finie, de 0 et de 1). La complexité de Kolmogorov est une fonction qui associe à chaque suite binaire un entier.

Soit $U:\{0,1\}^* \rightarrow \{0,1\}^*$ une fonction (partielle) calculable, appelée \emph{décompresseur}. La complexité d'une suite binaire $x \in \{0,1\}^*$  par rapport à $U$ est définie par 
 $$
 \CK_U(x) :=  \min \{  \ |p| \ : \ U(p) = x\ \}.
 $$
Si une suite binaire $p$ telle que $U(p) = x$ n'existe pas, nous définissons alors $\CK_U(x)=\infty$.
Ici $U$ doit être vue comme l'interpréteur d'un langage de programmation et $p$ comme le programme qui retourne $x$ ; la complexité de $x$ est alors la taille de l'un des plus petits programmes qui retournent $x$ (sur l'entrée vide).

Le problème évident avec cette définition de la complexité est sa dépendance par rapport à $U$. La théorie de la complexité de Kolmogorov est rendue possible par le théorème d'invariance :

\begin{theorem*}[Kolmogorov, \cite{kolmogorov-three-approaches}]
\label{t:kolmogorov-invariance}
Il existe une fonction calculable $U$ telle que pour toute autre fonction calculable $V$, il existe une constante $c$ telle que :
 $$
   \CK_U(x) \le \CK_V(x)+c
 $$
pour toute suite binaire $x$. 
\end{theorem*}

Cette fonction $U$ est appelée un \emph{décompresseur optimal}. Nous fixons un décompresseur optimal $U$, et dans la suite nous omettons l'indice $U$ dans $\CK_U(x)$.
La valeur $\CK(x)$ est appelée la complexité (simple) de Kolmogorov de $x$. Nous pouvons parler de la complexité de Kolmogorov conditionnelle de $x$ étant donné $y$, notée $\CK(x|y)$, si la fonction $U$ reçoit également $y$ en entrée.

De nombreuses propriétés de la complexité de Kolmogorov sont connues ; par exemple, la règle de la chaîne est respectée à un terme logarithmique près, comme montré dans \cite{zvonkin1970complexity} :
$$
	\CK(x,y) = \CK(y) + \CK(x|y) + O(\log \CK(x,y)).
$$

D'une façon similaire à la version simple, nous définissons la complexité de Kolmogorov en termes de programmes avec des ressources bornées (le temps et l'espace disponibles pour le calcul). Pour cela, nous devons préciser un modèle de calcul pour la fonction calculable $U$ ; nous choisissons par exemple le modèle classique des machines de Turing à une tête de lecture sur un ruban. Pour une telle machine de Turing $T$, nous définissons $\CK^{t}_{T}(x)$ comme la longueur de $p$, l'un des plus courts programmes retournant $x$ en au plus $t$ étapes. Techniquement, nous pouvons supposer que $T$ possède au moins trois symboles : 0,1 et ``Fin''. Initialement le ruban de $T$ contient les bits de $p$, suivis du symbole Fin. Quand $T$ entre dans son état d'arrêt, son ruban contient les bits de $x$ suivis du symbole Fin. 

Il existe une \emph{machine de décompression optimale} $T$ dans le sens suivant : pour toute machine de Turing $V$, il existe un polynôme tel que
\begin{equation*}
	\forall t, \CK_{T}^{\poly(t)}(x)\le \CK_{V}^{t}(x)+O(1).
\end{equation*}

Pour une machine de Turing à plusieurs rubans, un résultat légèrement plus fort peut être prouvé :
\begin{theorem*}[voir~\cite{li2008introduction} ; la preuve utilise la technique de simulation de~\cite{hennie-stearns}]
Il existe une machine de Turing \textup(à plusieurs ruban\textup) $T$ de \emph{décompression optimale} dans le sens suivant :
pour toute machine de Turing $V$ à plusieurs rubans, il existe une constante $c$ telle que :
$$
\forall t, \CK_{T}^{ct\log t}(x)\le \CK_{V}^{t}(x)+c
$$
pour toute suite binaire $x$.       
\end{theorem*}
\label{d:kolmogorov-bornée}
Nous fixons une telle machine $T$, et par la suite nous utilisons pour la version de la complexité de Kolmogorov à ressources bornées la notation $\CK^{t}(x)$ au lieu de $\CK_{T}^{t}(x)$.

Sans perte de généralité, nous pouvons supposer que $\CK(x)\le \CK^{t}(x)$ pour tout $x$ et pour tout $t$.

Nous fixons une énumération calculable des motifs finis (sur un alphabet fini) qui assigne une suite binaire (un \emph{code}) à chaque motif de dimension deux. Par la suite nous prendrons la liberté de parler de la complexité de Kolmogorov de motifs finis de dimension deux (en faisant référence à la complexité de Kolmogorov des \emph{codes} de ces motifs).

\chapter{Des conditions nécessaires pour qu'un shift soit sofique}
\label{c:non-sofique}

\section{Une grande complexité de Kolmogorov à ressources bornées est compatible avec une faible complexité par bloc}
\label{s:grande-complexité-bornée}

\subsection{Une condition nécessaire pour être sofique}
\label{s:condition-sofique}

Le théorème suivant a été prouvé implicitement dans \cite{dls}:
\begin{theorem}\label{th:dls}
Dans tout shift sofique non vide $S$ il existe une configuration $x$ telle que pour tout motif $P$ de taille $n\times n$  dans $x$, on a
\[
\CK^{T(n)}(P) = O(n)
\]
pour une fonction de seuil $T(n)=2^{O(n^2)}$. 
\end{theorem}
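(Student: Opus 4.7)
Le plan est d'abord de réaliser le shift sofique $S$ comme la projection coordonnée par coordonnée $\pi(S_\tau)$ d'un shift de type fini, ou de manière équivalente (via la Proposition~\ref{p:shift-tuiles}) comme l'ensemble des pavages valides d'un jeu de tuiles de Wang $\tau$. Puisque $S$ est non vide, $S_\tau$ l'est également, et il suffit de construire une configuration $y \in S_\tau$ dont tous les motifs carrés $P$ de taille $n \times n$ vérifient $\CK^{T(n)}(P) = O(n)$ avec $T(n) = 2^{O(n^2)}$ : la projection $x = \pi(y)$ héritera de cette borne à une constante additive près, car $\pi$ s'applique case par case.

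L'idée directrice est qu'un motif carré $P$ de taille $n \times n$ dans un tuilage de Wang est largement contraint par sa bordure, qui ne comporte que $O(n)$ cases (soit $O(n)$ bits d'information), tandis qu'une énumération exhaustive des $|\tau|^{n^2} = 2^{O(n^2)}$ assignations de tuiles à l'intérieur s'effectue en temps $2^{O(n^2)}$. Je construirais donc $y$ de sorte que chacun de ses motifs carrés $P$ soit la \emph{complétion canonique} de sa propre bordure, par exemple le plus petit (dans un ordre lexicographique fixé) tuilage localement valide de l'intérieur compatible avec cette bordure. Le décompresseur prend alors en entrée la bordure de $P$ (plus une description de taille constante du jeu $\tau$ et de la règle canonique), énumère tous les remplissages possibles en temps $2^{O(n^2)}$, teste leur validité locale, et renvoie le premier valide ; cela fournit $\CK^{T(n)}(P) = O(n)$.

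La construction effective d'une telle $y$ constitue l'étape délicate. Je procéderais par un argument de compacité : pour chaque $N$, considérer l'ensemble (fini) des tuilages valides de $[-N, N]^2$ dont toutes les sous-régions carrées sont auto-canoniques, et prouver sa non-vacuité ; une configuration limite $y$ s'obtient alors par le lemme de König dans l'espace compact $\tau^{\mathbb{Z}^2}$. L'obstacle principal est précisément cette non-vacuité --- une règle de canonicité naïvement imposée à toutes les échelles peut se révéler incompatible avec la validité globale du tuilage. Pour contourner cela, on pourrait assouplir la définition de la canonicité (par exemple en n'exigeant canonicité que conditionnellement à l'extensibilité à une région suffisamment grande), ou utiliser une construction itérative à la point fixe pour garantir la cohérence entre les échelles.
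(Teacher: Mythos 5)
Votre mise en place est la bonne (passage à un SFT/jeu de tuiles de Wang, bordure de taille $O(n)$, décompresseur qui énumère les $2^{O(n^2)}$ remplissages et renvoie une complétion canonique, projection $\pi$ case par case, compacité à la fin), mais le cœur de la preuve manque : l'existence d'une configuration dont les carrés sont déterminés par leur bordure est précisément ce qu'il faut démontrer, et vous le reconnaissez sans le résoudre. Exiger que \emph{tout} carré, à toute position et à toute échelle, soit la complétion lexicographiquement minimale de sa propre bordure impose des contraintes sur des fenêtres qui se chevauchent, et rien ne garantit la non-vacuité de cet ensemble ; vos deux pistes (canonicité conditionnée à l'extensibilité, construction ``à point fixe'') restent des intentions, la première ne supprimant pas le conflit entre fenêtres chevauchantes.

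La preuve du texte contourne exactement cet obstacle en ne demandant la canonicité que pour une famille \emph{hiérarchique} de carrés ``standards'' : on fixe un coloriage de la bordure d'un carré de taille $(2^k+1)\times(2^k+1)$ extensible en un coloriage localement admissible, puis on choisit le premier coloriage lexicographique des deux lignes centrales compatible avec cette bordure, ce qui découpe le carré en quatre quadrants de taille $(2^{k-1}+1)\times(2^{k-1}+1)$, et on recommence récursivement jusqu'aux cases isolées. Ainsi chaque carré standard est reconstructible à partir de sa seule bordure ($O(2^m)$ bits) en temps $2^{O(2^{2m})}$, et la cohérence entre échelles est automatique puisque les carrés standards sont emboîtés par construction, sans jamais imposer de condition aux fenêtres en position arbitraire. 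Un carré quelconque de taille $n\times n$ est alors recouvert par au plus quatre carrés standards de taille au plus $2n$ environ ; le décrire par les quatre bordures et ses coordonnées relatives donne $O(n)$ bits et un recalcul en temps $2^{O(n^2)}$, d'où $\CK^{T(n)}(P)=O(n)$, et la compacité sur les motifs $P_k$ fournit la configuration cherchée. C'est cette restriction à une grille dyadique de positions standards, plus l'argument de recouvrement, qui manque à votre proposition pour la rendre complète.
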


Dans \cite{dls} une version plus faible de ce théorème est proposée : il est seulement démontré que la complexité de Kolmogorov \emph{simple} des motifs de taille $n\times n$ est $O(n)$.
Cependant, cet argument venant de \cite{dls} implique une limite pour la complexité de Kolmogorov à  \emph{ressources bornées}. 
Dans un but d'autosuffisance, nous donnons dans ce qui suit une preuve de ce théorème.

\begin{proof}

Il est suffisant de prouver ce théorème pour les shifts de type fini (puisqu'une configuration d'un shift sofique est une projection coordonnée par coordonnée d'un shift de type fini).
Nous fixons un shift de type fini (où les supports de tous les motifs interdits sont des paires de lettres voisines) et montrons que pour chaque $k$ il existe un motif $P_k$ de taille $(2^{k}+1) \times (2^{k}+1)$ localement admissible possédant la propriété requise suivante :
tout motif carré de taille $n\times n$ à l'intérieur de $P_k$ a une complexité de Kolmogorov en $O(n)$.

Dans ce but, nous choisissons un coloriage arbitraire de la bordure d'un carré de taille $(2^{k}+1) \times (2^{k}+1)$ tel que ce coloriage puisse être étendu en au moins un coloriage localement admissible du carré entier, voir Fig.~\ref{f:recursuve-construction}~(a). 

Puisque l'ensemble de tous les coloriages de ce carré est fini, nous pouvons trouver algorithmiquement (par une recherche exhaustive naïve) le premier coloriage dans l'ordre lexicographique des lignes centrales horizontales et verticales du carré qui sont compatibles avec le coloriage fixé de la bordure (i.e., l'union du coloriage de la bordure et du coloriage des lignes centrales peut être étendue en un motif localement admissible du carré entier, voir Fig.~\ref{f:recursuve-construction}~(b). On peut observer que les lignes centrales coupent le carré en quatre carrés de taille  $(2^{k-1}+1) \times (2^{k-1}+1)$.

\begin{figure}[H]
\vspace{5pt}
\begin{center}
  \begin{tikzpicture}[scale=0.14,x=1cm,baseline=2.125cm]
 
 \begin{scope}[shift={(0,0)}]
    \foreach \x in {1,...,17} \foreach \y in {1,...,17}
    {
        \path[draw=black] (\x,\y) rectangle ++ (1,1);
    }

   \foreach \x in {1,...,17} \foreach \y in {1,17}
    {
        \path[fill=blue!60,draw=black] (\x,\y) rectangle ++ (1,1);
        \path[fill=blue!60,draw=black] (\y,\x) rectangle ++ (1,1);
    }
                      
\end{scope}

 \begin{scope}[shift={(27,0)}]
 
    \foreach \x in {1,...,17} \foreach \y in {1,...,17}
    {
        \path[draw=black] (\x,\y) rectangle ++ (1,1);
    }

   \foreach \x in {1,...,17} \foreach \y in {1,9,17}
    {
        \path[fill=blue!60,draw=black] (\x,\y) rectangle ++ (1,1);
        \path[fill=blue!60,draw=black] (\y,\x) rectangle ++ (1,1);
    }

\end{scope}
 \begin{scope}[shift={(54,0)}]
 
    \foreach \x in {1,...,17} \foreach \y in {1,...,17}
    {
        \path[draw=black] (\x,\y) rectangle ++ (1,1);
    }

   \foreach \x in {1,...,17} \foreach \y in {1,5,9,13,17}
    {
        \path[fill=blue!60,draw=black] (\x,\y) rectangle ++ (1,1);
        \path[fill=blue!60,draw=black] (\y,\x) rectangle ++ (1,1);
    }

  \end{scope}

  \node at (9.5,-2) {(a)};
  \node at (36.5,-2) {(b)};
  \node at (63.5,-2) {(c)};  
 
   \node at (23,9)  [thick, font=\fontsize{20}{20}\selectfont, thick]  {$\Rightarrow$ };
   \node at (50,9)  [thick, font=\fontsize{20}{20}\selectfont, thick]  {$\Rightarrow$ };
   \node at (77,9)  [thick, font=\fontsize{20}{20}\selectfont, thick]  {$\Rightarrow$ };
   \node at (83,9)  [thick, font=\fontsize{20}{20}\selectfont, thick]  {...}; 
              
\end{tikzpicture}

\caption{Trois étapes de la procédure récursive de coloriage pour un motif de taille  $(2^n+1)\times (2^n+1)$.} \label{f:recursuve-construction}
\end{center}
\end{figure}
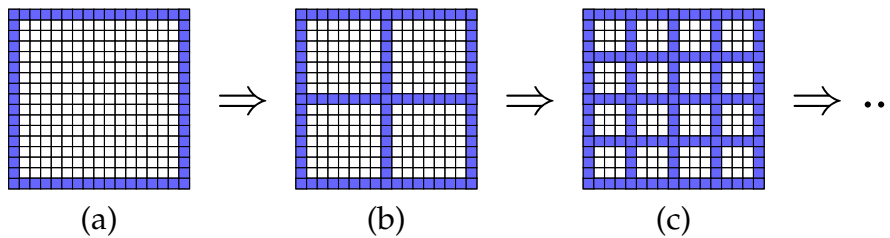

Nous répétons alors la même procédure récursivement : pour chaque carré de taille $(2^{k-1}+1) \times (2^{k-1}+1)$, nous trouvons le premier coloriage pour l'ordre lexicographique de leurs lignes centrales horizontale et verticale qui soit compatible avec la bordure de chacun de ces carrés
(voir Fig.~\ref{f:recursuve-construction}~(c)), et ainsi de suite.
À la dernière étape, nous terminons avec le premier coloriage valide pour l'ordre lexicographique des motifs isolés de taille $1\times 1$ (qui doivent être cohérents avec le coloriage de leur voisinage choisi en amont). Cela conclut la construction des $P_k$.

Dans la procédure expliquée ci-dessus, le carré initial de taille $(2^{k}+1) \times (2^{k}+1)$ est découpé en quatre carrés de taille $(2^{k-1}+1) \times (2^{k-1}+1)$, puis en seize carrés de taille $(2^{k-2}+1) \times (2^{k-2}+1)$, etc. Cette procédure récursive trouve le coloriage de chacun de ces carrés en positions ``standards'' étant donné (comme une sorte de contrainte de frontière) le coloriage de la bordure de ce carré. 

Ainsi, pour trouver les couleurs fixées d'un carré de taille $(2^{m}+1) \times (2^{m}+1)$ dans une position ``standard'', la procédure récursive a besoin de ne connaître que le coloriage de la bordure du carré (ce qui requiert $O(2^{m})$ lettres et donc $O(2^{m})$ bits d'information);
Il n'est pas difficile de voir que les appels récursifs s’exécutent en temps :

\[
 2^{O\big((2^{m}+1) \times (2^{m}+1)\big)}+   4\times 2^{O\big((2^{m-1}+1) \times (2^{m-1}+1)\big)}+ 16\times 2^{O\big((2^{m-2}+1) \times (2^{m-2}+1)\big)}+\ldots
=  2^{O(2^{2m})}
\]
(c'est un calcul récursif de profondeur $m$, avec une recherche exhaustive et quatre appels récursifs à chaque niveau de la hiérarchie).

Un carré quelconque de taille $n\times n$ (potentiellement dans une position non-standard) est recouvert par au plus quatre carrés ``standards'' (de taille au plus deux fois plus grande que $n$), voir Fig.~\ref{f:recursuve-construction-covering}.

\begin{figure}[H]
\begin{center}
  \begin{tikzpicture}[scale=0.20,x=1cm,baseline=2.125cm]

 \begin{scope}[shift={(27,0)}]
 
     \foreach \x in {1,...,17} \foreach \y in {1,...,17}
    {
        \path[draw=black] (\x,\y) rectangle ++ (1,1);
    }

   \foreach \x in {1,...,17} \foreach \y in {1,9,17}
    {
        \path[fill=blue!60,draw=black] (\x,\y) rectangle ++ (1,1);
        \path[fill=blue!60,draw=black] (\y,\x) rectangle ++ (1,1);
    }

    \draw [pattern=crosshatch, pattern color=red!90, thick, dashed]  (7,6) rectangle (14,13) ;
\end{scope}

\end{tikzpicture}

\caption[Un motif de taille $7\times 7$ recouvert par un quadruplé de motifs ``standards'' de taille $9\times 9$.]{Un motif de taille $7\times 7$ (hachuré en rouge) recouvert par un quadruplé de motifs ``standards'' de taille $9\times 9$.} \label{f:recursuve-construction-covering}
\end{center}
\end{figure}
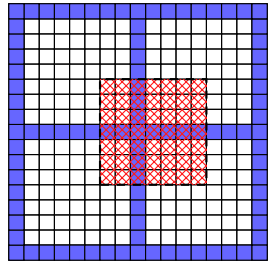

Par conséquent, pour identifier un motif de taille $n\times n$ à l'intérieur de $P_k$, il suffit de décrire le quadruplet de carrés standards recouvrant ce motif et, par ailleurs, la position (les coordonnées des coins) de ce motif par rapport aux carrés standards le recouvrant. Nous savons que chacun des carrés standards est, par construction, déterminé par sa bordure.

Partant, tout carré de taille $n\times n$ à l'intérieur de $P_k$ peut être spécifié par seulement $O(n)$ bits d'information;
de plus, il peut être recalculé à partir de sa description en temps $2^{O(n^2)}$ (en premier lieu, nous reconstruisons les quatre carrés standards à partir de leur bordure, et ensuite nous coupons le motif avec les coordonnées des coins fournies).

Pour conclure la preuve de ce théorème, il reste à observer que les motifs $P_k$ sont définis pour des $k$ arbitrairement grands, et l'argument de compacité nous donne une coloration valide de tout $\mathbb{Z}^2$ avec la propriété requise.
\end{proof}

\subsection{Existence d'un shift effectif non-sofique de faible complexité par bloc}
\label{s:effectif-non-sofique}

Le Théorème~\ref{th:dls} garantit qu'un shift n'est pas sofique si dans toute configuration nous pouvons trouver une séquence de motifs de tailles croissantes avec une complexité de Kolmogorov super-linéaire. Dans le théorème suivant nous prouvons l'existence de tels shifts effectifs.

\begin{theorem}\label{th:deep-shift}
Pour tout $\epsilon>0$ et pour toute fonction calculable $T(n)$ il existe un shift effectif sur  $\mathbb{Z}^2$ tel que pour tout $n$ et 
pour tout motif $P$ de taille $n\times n$ globalement admissible, on ait
 \begin{itemize}
  \item[(i)] $\CK(P) = O(\log n)$, et
  \item[(ii)] $\CK^{T(n)}(P) = \Omega(n^{2-\epsilon})$.
 \end{itemize}
\end{theorem}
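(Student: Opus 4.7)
Le plan est de construire une unique configuration calculable $c^\star\in\{0,1\}^{\mathbb{Z}^2}$ qui soit à la fois très structurée (complexité par bloc polynomiale) et localement profonde (chaque sous-motif $n\times n$ contient $\Omega(n^{2-\epsilon})$ bits d'information lents à produire en temps $T(n)$), puis de prendre pour $S$ la clôture de l'orbite de $c^\star$ sous l'action de $\mathbb{Z}^2$. Cette définition donne automatiquement un shift effectif : puisque $c^\star$ est calculable, l'ensemble de ses motifs finis est décidable et, par conséquent, l'ensemble des motifs interdits (ceux qui n'apparaissent nulle part dans $c^\star$) est récursivement énumérable.

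La première étape serait une diagonalisation standard. Pour chaque $n$, il y a moins de $2^{c\,n^{2-\epsilon}}$ motifs $n\times n$ qui peuvent être produits par un programme de longueur $< c\,n^{2-\epsilon}$ s'arrêtant en temps $T(n)$, alors qu'il existe $2^{n^2}$ motifs carrés possibles. On peut donc trouver algorithmiquement (par une recherche exhaustive, très coûteuse mais calculable) un motif $Q_n$ vérifiant $\CK^{T(n)}(Q_n)\geq c\,n^{2-\epsilon}$. Quitte à renforcer l'argument de comptage, on peut même exiger que \emph{tout} sous-carré de $Q_n$ de côté $m\geq m_0$ satisfasse la même propriété avec $m$ à la place de $n$ ; cela sera utile pour traiter les fenêtres arbitraires à l'étape suivante.

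L'étape critique est l'assemblage : il s'agit de caser les $Q_{N_k}$, pour une suite $N_1<N_2<\cdots$ croissant géométriquement, dans une même configuration $c^\star$ selon un schéma auto-similaire hiérarchique. L'agencement doit être uniformément calculable (pour que $c^\star$ le soit aussi), il doit garantir que toute fenêtre $n\times n$ contient ou détermine, à décalage près et en temps polynomial en $n$, une copie de $Q_{N_k}$ pour un $N_k$ de l'ordre de $n$, et enfin il doit être si rigide que le nombre total de motifs $n\times n$ distincts dans $c^\star$ est borné par $\poly(n)$. Le point (i) découle alors immédiatement de cette rigidité : chaque motif admissible est codé par son type parmi $\poly(n)$ possibilités, soit $O(\log n)$ bits. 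Le point (ii) suit par contradiction : toute description courte et rapide d'un motif admissible fournirait, en temps $\poly(n)\cdot T(n)$, une description courte et rapide d'un $Q_{N_k}$ ou d'un de ses sous-carrés profonds, en contradiction avec la diagonalisation.

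Le principal obstacle est précisément cette étape d'assemblage : il faut concilier une complexité par bloc polynomiale (qui impose une très forte rigidité structurelle) avec la présence partout de $\Omega(n^{2-\epsilon})$ bits de contenu ``lent''. La résolution passe par la notion de profondeur au sens de Bennett : les motifs $Q_n$ sont \emph{courts à décrire} (ils sont récupérés par la diagonalisation, dont le code est fixe) mais \emph{longs à calculer} en temps $T(n)$ ; l'auto-similarité sert alors uniquement à les répartir dans le plan de façon à ce que chaque fenêtre en voie un, sans faire exploser le nombre de motifs distincts.
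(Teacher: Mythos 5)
Votre stratégie générale coïncide avec celle du manuscrit (motifs « profonds mais calculables » obtenus par diagonalisation sous la borne de temps, assemblage hiérarchique calculable, clôture, et complexité simple logarithmique venant de la rigidité), mais toute la difficulté de la preuve réside précisément dans l'étape d'assemblage que vous laissez ouverte, et les indications que vous donnez (un seul motif profond $Q_{N_k}$ par échelle, échelles en croissance géométrique) ne peuvent pas fonctionner. Dans une configuration hiérarchique rigide de complexité par bloc polynomiale, un bloc de rang $k+1$ est déterminé, à ressources bornées, par les blocs de rang $k$ \emph{distincts} qu'il contient plus $O\big((N_{k+1}/N_k)^2\log N_k\big)$ bits d'agencement ; avec un seul type de bloc par niveau, sa complexité à ressources bornées ne dépasse donc celle des blocs de rang $k$ que d'un terme additif négligeable, et ne peut pas suivre la croissance exigée $N_k^{2-\epsilon}\to N_{k+1}^{2-\epsilon}$. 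De même, pour une taille de fenêtre $n$ intermédiaire entre deux échelles, une fenêtre ne contient que des copies des quelques mêmes blocs de rang $k$, d'où une complexité de l'ordre de $N_k^{2-\epsilon}+(n/N_k)^2$, très en dessous de $n^{2-\epsilon}$ : c'est exactement la raison pour laquelle la version affaiblie (Théorème~\ref{th:deep-shift-simple}) n'obtient (ii) que pour une infinité de $n$ et avec l'exposant $1{,}5$. La preuve du Théorème~\ref{th:deep-shift} résout ce point par un mécanisme que votre proposition ne contient pas : à chaque niveau on construit $\ell_i=n_{i+1}^2$ motifs standards, chaque motif de rang $i+1$ étant une \emph{permutation} des motifs de rang $i$, le n-uplet de permutations étant choisi de grande complexité $\CK^{t_i}$ (et trouvable algorithmiquement grâce à la borne de ressources, ce qui donne la calculabilité et donc (i)) ; le Lemme~\ref{l:1} montre alors que les complexités de toute famille de motifs standards distincts s'additionnent essentiellement, et c'est cette additivité, combinée à une croissance \emph{super}-géométrique des échelles ($n_{i+1}=N_i^c$) rendant négligeable le facteur de perte $N_{i-1}^2$, qui rend profonde \emph{toute} fenêtre de \emph{toute} taille tout en gardant $\poly(n)$ motifs. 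Votre invocation de la profondeur au sens de Bennett reformule l'objectif mais ne fournit pas ce mécanisme d'accumulation inter-échelles.

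Deux points secondaires mais réels : d'abord, l'effectivité n'est pas « automatique » — pour $c^\star$ calculable, l'ensemble des motifs apparaissant dans $c^\star$ est récursivement énumérable mais pas décidable en général, donc l'ensemble des motifs interdits (ceux qui n'apparaissent pas) est seulement co-énumérable ; il faut, comme dans le manuscrit, une caractérisation décidable explicite des motifs admissibles (« apparaître dans un bloc $2\times 2$ de motifs standards du rang déterminé par la taille »), ce qui dépend à nouveau des détails de l'assemblage. Ensuite, votre diagonalisation renforcée (tout sous-carré de côté $m\ge m_0$ est profond) exige un $m_0$ croissant avec $n$ pour que le comptage passe, et il faut régler les seuils de ressources ($t_i\gg t_i'\gg T$ plus les surcoûts polynomiaux de reconstruction) ; ce sont des vérifications de routine, mais le trou véritable est celui décrit ci-dessus.
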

Nous différons la preuve du Théorème~\ref{th:deep-shift} à la fin de ce chapitre. Dans ce qui suit, nous proposons une version légèrement plus faible de ce théorème, qui est cependant assez puissante pour nos applications principales : 
\begin{theorembis}{th:deep-shift}
\label{th:deep-shift-simple}
Pour toute fonction calculable $T(n)$ il existe un shift effectif, minimal et transitif (dans le sens de la Définition~\ref{d:trans})sur $\mathbb{Z}^2$ tel que 
 \begin{itemize}
  \item[(i)]  pour tout $n$ et pour tout motif $P$ de taille $n\times n$ globalement admissible, nous avons
    $\CK(P) = O(\log n),$ et 
    \item[(ii)] pour une infinité de $n$ et pour tout motif $P$ de taille $n\times n$ globalement admissible, nous avons
    $\CK^{T(n)}(P) = \Omega(n^{1.5}).$
 \end{itemize}
\end{theorembis}

\begin{definition}\label{d:trans}
Un shift est dit minimal s'il ne contient pas de sous-shift propre.

Un shift $S$ est dit transitif si pour tout ouvert non vide $U, V \subset X$, il existe un entier $n \in \mathbb{N}$ tel que $\sigma^n(U) \cap V$ n'est pas vide, où $\sigma$ est l'opérateur du shift.
\end{definition}

D'après le Théorème~\ref{th:dls} et le Théorème~\ref{th:deep-shift-simple} nous déduisons le corollaire suivant :
\begin{corollary}
Il existe un shift effectif mais non sofique sur $\mathbb{Z}^2$ avec une complexité par bloc polynomiale en $n$,
i.e., le nombre de motifs de taille $n\times n$ globalement admissibles est borné par un polynôme.
\end{corollary}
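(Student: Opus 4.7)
Le plan consiste à combiner directement le Théorème~\ref{th:dls} et le Théorème~\ref{th:deep-shift-simple}. Soit $T(n) = 2^{O(n^2)}$ la fonction calculable fournie par le Théorème~\ref{th:dls} ; je fixerai une fonction calculable $T_0$ majorant $T$ (par exemple $T_0 = T$, ou tout majorant calculable explicite). J'appliquerai ensuite le Théorème~\ref{th:deep-shift-simple} à cette fonction $T_0$ pour produire un shift effectif $S$ sur $\mathbb{Z}^2$ tel que, d'une part, tout motif globalement admissible de taille $n \times n$ vérifie $\CK(P) = O(\log n)$ et, d'autre part, pour une infinité de $n$, tout tel motif $P$ vérifie $\CK^{T_0(n)}(P) = \Omega(n^{1.5})$. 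Par construction ce shift est effectif.

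Pour établir que $S$ n'est pas sofique, je raisonnerai par l'absurde. Si $S$ était sofique, le Théorème~\ref{th:dls} fournirait une configuration $x \in S$ et une constante $K$ tels que tout sous-motif carré $P$ de taille $n \times n$ de $x$ satisfasse $\CK^{T(n)}(P) \le K n$. Comme $T_0 \ge T$, la décroissance de $\CK^t$ en $t$ (plus de temps autorise au moins autant de programmes) entraîne $\CK^{T_0(n)}(P) \le \CK^{T(n)}(P) \le K n$. Or ces sous-motifs sont en particulier globalement admissibles dans $S$, et la borne inférieure $\Omega(n^{1.5})$ s'applique à eux pour une infinité de $n$. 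L'inégalité $c' n^{1.5} \le K n$ étant impossible pour $n$ assez grand, cela fournit la contradiction cherchée.

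La complexité combinatoire polynomiale de $S$ découle immédiatement de la première propriété : comme $\CK(P) \le C \log n + O(1)$ pour tout motif globalement admissible de taille $n \times n$, le nombre de tels motifs est majoré par le nombre de programmes binaires de longueur au plus $C \log n + O(1)$, soit $O(n^{C})$. La difficulté essentielle est entièrement absorbée par le Théorème~\ref{th:deep-shift-simple} lui-même, dont la construction constitue le véritable apport technique ; le Corollaire n'en est qu'une déduction courte. Le seul point de vigilance sera de fixer la fonction $T_0$ \emph{avant} d'invoquer le Théorème~\ref{th:deep-shift-simple}, afin que les bornes $O(n)$ du Théorème~\ref{th:dls} et $\Omega(n^{1.5})$ de la construction soient mesurées sous la même contrainte de ressources et donc directement comparables.
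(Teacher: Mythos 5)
Votre démarche est essentiellement celle du manuscrit : appliquer le Théorème~\ref{th:deep-shift-simple} avec une fonction de seuil fixée à l'avance, tirer la complexité par bloc polynomiale de la propriété~(i) par comptage des programmes de longueur $O(\log n)$, et obtenir la non-soficité en confrontant la propriété~(ii) au Théorème~\ref{th:dls} via la monotonie de $\CK^t$ en $t$. La seule imprécision porte sur le choix de $T_0$ : le Théorème~\ref{th:dls} ne fournit pas \emph{une} fonction $T$ universelle --- la constante cachée dans $2^{O(n^2)}$ (comme celle du $O(n)$) dépend du shift sofique considéré, c'est-à-dire de l'alphabet du shift de type fini qui le couvrirait, lequel n'est pas connu au moment où vous construisez le shift ; poser ``$T_0 = T$'' est donc circulaire. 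Il faut choisir $T_0$ comme une fonction calculable explicite qui finit par dominer $2^{cn^2}$ pour \emph{toute} constante $c$ --- le manuscrit prend $T(n)=2^{n^3}$ --- ce que votre variante ``tout majorant calculable explicite'' réalise à condition d'être comprise en ce sens ; avec ce choix, votre argument coïncide avec la preuve du corollaire donnée dans le texte.
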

\begin{proof}
Nous considérons le shift du Théorème~\ref{th:deep-shift-simple} en supposant que la fonction de seuil $T(n)$ est bien supérieure à $2^{\Omega(n^2)}$ (par exemple, nous pouvons fixer $T(n)=2^{n^3}$, avec $2^{n^3}$ qui devient bien plus grand que $2^{cn^2}$ pour n'importe quel $c$.). D'une part, la propriété~(ii) du Théorème~\ref{th:deep-shift-simple} et le Théorème~\ref{th:dls} garantissent que le shift n'est pas sofique. D'autre part, la propriété~(i) du Théorème~\ref{th:deep-shift-simple} implique que le nombre de motifs de taille $n\times n$ n'est pas supérieur à $2^{O(\log n)}$.
\end{proof}
\begin{remark}\label{r:improved-upper-bound}
Notre preuve du Théorème~\ref{th:deep-shift-simple} implique une limite plus forte que celle de la propriété~(i). En fait, au lieu de la limite
$\CK(P) = O(\log n)$,
nous pouvons prouver que pour une (suffisamment grande) fonction calculable $\hat T(n)$, pour tout motif $P$ de taille $n\times n$ globalement admissible nous avons
\begin{equation}
\CK^{\hat T(n)}(P) =O( \log n). \nonumber 
\end{equation}
\end{remark}

\begin{remark}\label{r:paradoxical-shift}
Pour toute fonction $\hat T(n)$ et pour tout nombre $\lambda>0$, soit $S_{\lambda,\hat T}$ le shift suivant défini sur l'alphabet $\{0,1\}$:  une configuration $x$ appartient à $S_{\lambda,\hat T}$ si et seulement si tout motif $P$ de $x$ de taille $n\times n$ vérifie $\CK^{\hat T(n)}(P) \le \lambda \log n$. De manière évidente, pour un entier $\lambda$ et une fonction calculable $\hat T(n)$, ce shift est effectif.

La Remarque~\ref{r:improved-upper-bound} peut être reformulée de la manière suivante : il existe un entier $\hat \lambda$, une fonction calculable $\hat T(n)$ et une séquence de motifs $P_n$ de taille $n\times n$ ($n=1,2,\ldots$) globalement admissibles dans $S_{\lambda,\hat T}$, tels que $\CK^{2^{n^3}}(P_n) = \Omega (n^{1.5})$. Nous posons $S_{0} = S_{\lambda,\hat T}$.

 tels que pour le shift $S_{\hat \lambda,\hat T}$ correspondant il existe  ; par ailleurs, pour tout motif de taille $n\times n$ globalement admissible nous avons $\CK(P) \le \CK^{\hat T(n)}(P) =O( \log n)$.

Le shift construit dans le Théorème~\ref{th:deep-shift-simple} est un sous-shift propre de $S_{0}$. En effet, en plus de toutes les configurations du shift du Théorème~\ref{th:deep-shift-simple} (Ces configurations ne doivent avoir que des motifs de grande complexité de Kolmogorov), $S_{0}$ possède aussi des motifs avec une très petite complexité de Kolmogorov (par exemple, la configuration avec seulement des $0$ et la configuration avec seulement des $1$).
Dans la partie suivante nous utiliserons $S_{0}$ pour construire d'autres exemples de shifts effectifs qui ne sont pas sofiques. 
\end{remark}

\begin{proof}[Preuve du Théorème~\ref{th:deep-shift-simple}]
Dans cette preuve, nous construisons le shift requis de manière explicite. Fixons une séquence $(n_i)$ où  $n_0$ est un nombre entier suffisamment grand, et  
\begin{equation}
\label{eq:th2-c}
 n_{i+1}: =(n_0\cdot \ldots \cdot  n_i)^c\ \mbox{ for } i=0,1,2,\ldots,
\end{equation}
où $c\ge 3$ est une constante. 
Fixons $N_i := n_0 \cdot \ldots \cdot n_i$.
Dans ce qui suit nous construisons pour chaque $i$ une paire de motifs binaires \emph{standards} $Q_i^0$  et $Q^1_i$ de taille $N_i \times N_i$  tels que
  \begin{itemize}
  \item la complexité de Kolmogorov simple des motifs standards $\CK(Q_i^0)$ et $\CK(Q_i^1)$ ne soit pas plus grande que $O(\log N_i)$, et
  \item la complexité de Kolmogorov à ressources bornées $\CK^{T(N_i)}(Q_i^0)$ et $\CK^{T(N_i)}(Q_i^1)$ ne soit pas plus petite que $\Omega(N_i^{1.5})$.
  \end{itemize}
La construction est hiérarchique : à la fois $Q_i^0$ et $Q^1_i$ sont définis comme des matrices de taille  $n_i\times n_i$ composées de motifs  
$Q_{i-1}^0$  et $Q^1_{i-1}$; pour chaque $i$ les motifs $Q_i^0$  et $Q^1_i$ sont les inversions bits à bits l'un de l'autre.

Quand les motifs standards $Q_i^0$ et $Q^1_i$ sont construits pour chaque $i$, 
nous définissons le shift comme la clôture de ces motifs : nous disons qu'un motif fini est globalement admissible si et seulement s'il apparaît dans un motif $Q_i^j$, ou au moins dans un bloc de taille  $2\times 2$ composé de $Q_i^0$  et de $Q^1_i$  (pour un $i$ donné).
\begin{remark}\label{r:low-level}
Étant donné la structure hiérarchique des motifs standards, nous pouvons garantir que tout motif $P$ de taille $N_i\times N_i$ globalement admissible apparaît dans un bloc de taille $2\times 2$ composé de  $Q_i^0$  et de $Q^1_i$ 
(il n'est pas nécessaire d'essayer les motifs $Q_{s}^j$ pour $s>i$). 

Il est nécessaire de considérer les blocs de taille $2 \times 2$ pour prendre en compte les cas dégénérés. Dans ces configurations un motif de $P$ peut ne jamais être contenu dans un motif standards $Q_i^0$ ou $Q_i^1$ : pour tout $i \in \mathbb{N}$, $P$ est à cheval sur quatre blocs pouvant être $Q_i^0$ ou $Q_i^1$. 
\end{remark}   
Comme la construction des $Q_i^j$ est explicite, le shift obtenu est effectif. Les Propriétés~(i) et~(ii) du théorème seront des conséquences directes des propriétés des motifs standards.

Dans ce qui suit, nous expliquons la construction par induction de $Q_i^0$  et de $Q^1_i$. Soit $Q_0^0$  et $Q^0_1$ des carrés composés respectivement entièrement de $0$s et entièrement de $1$. De plus, pour chaque $i$ nous considérons la matrice binaire $R_i$ de taille $n_i\times n_i$, qui est la première pour l'ordre lexicographique telle que
\begin{equation}
\label{eq:incompressible}
  \CK^{t_i}(R_i) \ge n_i^2
\end{equation}
(la limite de ressource $t_i$ sera fixée par la suite, quand nous aurons identifié les propriétés que $t_i$ devra avoir).
Nous affirmons qu'une telle matrice existe. En effet, il existe une matrice binaire de taille $n_i\times n_i$ qui est incompressible au sens de la complexité de Kolmogorov simple. La complexité de Kolmogorov à ressources bornées ne peut être que plus grande que la complexité de Kolmogorov simple.
Par conséquent, il existe au moins une matrice binaire satisfaisant \eqref{eq:incompressible}.
Si $t_i$ est une fonction calculable de $i$, étant donné $i$ nous pouvons trouver $R_i$ algorithmiquement.

Maintenant nous substituons dans $R_i$ chaque zéro et chaque un respectivement par une copie de    $Q_{i-1}^0$  et par une copie de $Q^1_{i-1}$. Par exemple, 
\[
R_i=\left(
\begin{array}{cccccc}
0&0&0&0&1\\
0&1&0&0&1\\
1&1&1&1&0\\
0&1&1&0&0\\
0&1&0&1&0
\end{array}
\right) 
\ \Longrightarrow\ 
Q_i^0:=\left(
\begin{array}{c|c|c|c|c}
Q_{i-1}^0&Q_{i-1}^0&Q_{i-1}^0&Q_{i-1}^0&Q_{i-1}^1\\
\hline
Q_{i-1}^0&Q_{i-1}^1&Q_{i-1}^0&Q_{i-1}^0&Q_{i-1}^1\\
\hline
Q_{i-1}^1&Q_{i-1}^1&Q_{i-1}^1&Q_{i-1}^1&Q_{i-1}^0\\
\hline
Q_{i-1}^0&Q_{i-1}^1&Q_{i-1}^1&Q_{i-1}^0&Q_{i-1}^0\\
\hline
Q_{i-1}^0&Q_{i-1}^1&Q_{i-1}^0&Q_{i-1}^1&Q_{i-1}^0
\end{array}
\right) 
\]

La matrice obtenue (de taille $N_i\times N_i$) est appelée $Q_i^0$. La matrice $Q_i^1$ est définie comme l'inversion bit à bit de $Q_i^0$.

Pour tout $i$ assez grand, le motif incompressible $R_i$ contient des copies de tous les $2^4$ motifs binaires de taille $2\times 2$.
Par conséquent, nous pouvons garantir que tous les motifs standards $Q_i^j$ contiennent tous les motifs globalement admissibles de taille $N_{i-1}\times N_{i-1}$. 
Il s'ensuit que le shift construit dans le Théorème~\ref{th:deep-shift-simple} est transitif et même minimal.

\smallskip
\noindent
\emph{Fait 1}. En supposant que $t'_i\ll t_i$  (dans ce qui suit nous discuterons plus précisément du choix de $t_i'$ ) nous avons
\[
  \CK^{t'_i}(Q_i^0)  = \Omega(N_i^{1.5}) \mbox{ et }   \CK^{t'_i}(Q_i^1)  = \Omega(N_i^{1.5}). 
\]

\begin{proof}[Preuve du Fait~1:]
Étant donné $Q_i^j $ (pour $j=0,1$) nous pouvons recalculer la matrice $R_i$ (ce calcul peut être implémenté en temps polynomial). Ainsi, pour toute limite de temps $t$ 
\[
\CK^{t+ \poly(N_i)} (R_i) \le \CK^{t}(Q_i^j) + O(1).
\]

Donc, si $t_i > t_i' + \poly(N_i)$ on a
\[
 n_i^2 \le \CK^{t_i}(R_i) \le  \CK^{t_i'}(Q_i^j). 
\]
Il reste à observer que notre choix de paramètres dans l'Équation~\eqref{eq:th2-c} avec $c\ge 3$ implique
$ n_i^{1/2} \geq \left(n_0 \cdot \ldots \cdot n_{i-1} \right)^{3/2}$, d'où
\[
n_i^2 \geq \left(n_0 \cdot \ldots \cdot n_i\right)^{1.5} = (N_i)^{1.5}.
\]
Par conséquent, nous obtenons que $\CK^{t'_i}(Q_i^j)  \ge  (N_i)^{1.5} - O(1)$, et le fait est prouvé.
\end{proof}
\begin{remark}
En choisissant une constante $c$ plus large dans l'Équation~\eqref{eq:th2-c}, nous pouvons avoir une limite plus petite $  \CK^{t'_i}(Q_i^j) = \Omega(n^{2-\epsilon})$ pour tout $\epsilon>0$.
\end{remark}
\smallskip
\noindent
\emph{Fait 2}. Pour tout motif $P$ de taille $N_i\times N_i$ globalement admissible 
(et pas seulement pour les motifs standards, comme c'était le cas dans le Fait 1)
sa complexité de Kolmogorov à ressources bornées $\CK^{T(N_i)}(P) $  est $\Omega(n^{1.5})$ (en supposant que $T(N_i) \ll t_i'$).

\smallskip

\begin{figure}
\begin{center}

  \begin{tikzpicture}[scale=0.25,x=1cm,baseline=2.125cm]

  
    \draw [fill=black!30, thick, dashed]  (3,6) rectangle (11,14);
 
    \foreach \x in {0,...,15} \foreach \y in {0,...,15}
    {
        \path[draw=black] (\x,\y) rectangle ++ (1,1);
    }

      \draw [ultra thick]  (0,0) rectangle (8,8);
      \draw [ultra thick]  (0,8) rectangle (8,16);
      \draw [ultra thick]  (8,0) rectangle (16,8);
      \draw [ultra thick]  (8,8) rectangle (16,16);


 \begin{scope}[shift={(20,0)}]
  
    \draw [fill=black!30, thick, dashed]  (3,6) rectangle (11,14);

    \draw [fill=red!30]  (3,6) rectangle (8,8);
    \draw [pattern=crosshatch, pattern color=red!91]  (11,6) rectangle (16,8);

    \draw [fill=blue!30]  (3,8) rectangle (8,14);
    \draw [pattern=north east lines, pattern color=blue!90]  (11,0) rectangle (16,6);

    \draw [fill=orange!30]  (8,8) rectangle (11,14);
    \draw [pattern=north west lines, pattern color=orange!90]  (8,0) rectangle (11,6);

    \foreach \x in {0,...,15} \foreach \y in {0,...,15}
    {
        \path[draw=black] (\x,\y) rectangle ++ (1,1);
    }

      \draw [ultra thick]  (0,0) rectangle (8,8);
      \draw [ultra thick]  (0,8) rectangle (8,16);
      \draw [ultra thick]  (8,0) rectangle (16,8);
      \draw [ultra thick]  (8,8) rectangle (16,16);

   \draw[-latex, ultra thick,black,dashed](6.5,12.5)node{ }  to[out=0,in=90] (14.5,2.5);
   \draw[-latex, ultra thick,black,dashed](9.1,9.5)node{ }  to[out=-135,in=135] (9.1,1.5);
   \draw[-latex, ultra thick,black,dashed](4.5,7.0)node{ }  to[out=-90,in=-135] (12.5,6.5);

\end{scope}
  
   \node at (8.0,-2.0) {(a)};
   \node at (28.0,-2.0) {(b)};

\end{tikzpicture}

\caption[Un motif de taille $N_k\times N_k$, recouvert par un quadruplé de motifs standards de même taille, contient assez d'information pour reconstruire un motif standard.]{Un motif de taille $N_k\times N_k$ (dessiné en gris dans la fig.~(a)), recouvert par un quadruplé de motifs standards de même taille, contient assez d'information pour reconstruire un motif standard (fig.~(b)).} \label{f:non-standard-block}
\end{center}
\end{figure}
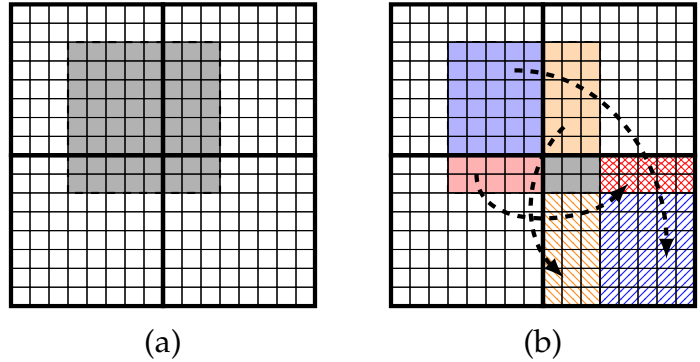

\begin{proof}[Preuve du Fait~2 :]
Si un motif $P$ de taille $N_i\times N_i$ est globalement admissible alors il est recouvert par un quadruplé de motifs standards de rang $i$, voir Remarque~\ref{r:low-level} ci-dessus.
Alors $P$ peut être divisé en quatre rectangles qui correspondent aux ``coins'' des motifs standards de rang $i$, voir Fig.~\ref{f:non-standard-block}~(a). 
Puisque les motifs standards $Q_i^0$ et $Q_i^1$ sont les inversions l'un de l'autre, ces quatre coins  (avec une inversion bit à bit si nécessaire) forment ensemble le motif standard en entier, comme montré dans la Fig.~\ref{f:non-standard-block}~(b).
Par conséquent, nous pouvons reconstruire $Q_i^j$ à partir de $P$ étant donné (a)~la position de $P$ par rapport à la grille des motifs standards (ce qui nécessite $O(\log N_i)$ bits) 
et (b)~les quatre bits identifiant les blocs standards recouvrant $P$ (nous devons savoir lesquels d'entre eux sont des copies de $Q_i^0$ et lesquels sont des copies de $Q_i^1$).

La reconstruction de $Q_i^j$ à partir de $P$ ne requiert que $\poly(N_i)$ étapes de calcul (auxquelles s'ajoute le temps nécessaire pour calculer $P$).
Alors le fait est la conséquence directe de la limite pour la complexité de Kolmogorov à ressource bornée des motifs standards $Q_i^0$ et $Q_i^1$.
\end{proof}

\smallskip
\noindent
\emph{Fait 3}. Pour tout motif de taille $k\times k$ dans $Q_i^0$ ou $Q_i^1$, sa complexité de Kolmogorov simple est au plus de $O(\log k)$.
\begin{proof}[Preuve du Fait~3 :]
Tout d'abord, nous pouvons observer que les motifs standards $Q_i^0$ et $Q_i^1$ peuvent être calculés étant donné $i$. Par conséquent, 
  $
  \CK(Q_i^0) = O(\log i)\mbox{ et } \CK(Q_i^1) = O(\log i).
  $

Tout motif de taille $k\times k$ globalement admissible peut être recouvert par au plus quatre motifs standards $Q_i^0$ ou $Q_i^1$ avec
\[
N_{i-1} <k \le N_{i},
\]
voir Remarque~\ref{r:low-level}. Ainsi, pour obtenir un motif $P$ de taille $k\times k$ globalement admissible nous devons construire un quadruplé de motifs standards de taille $N_{i}\times N_i$ et ensuite spécifier la position de $P$ par rapport à la grille des motifs standards. Cette description consiste en seulement 
 $
 O(\log N_i)
 $
bits, et nous pouvons conclure que  
 $
 \CK(P) = O(\log k).
 $
\end{proof}
Il reste à fixer la limite de temps $t_i$.
Étant donné une fonction de seuil calculable $T(N_i)$, nous choisissons un $t_i'\gg T(N_i)$ qui convient, et ensuite une fonction $t_i\gg t_i'$ qui convient. Nous nous écrivons pas d'expression explicite de $t_i$ et de $t_i'$ car la construction est flexible et nous pourrions avoir à changer leur expression dans un contexte similaire mais légèrement différent.
Le théorème découle de Fait~2 et de Fait~3.
\end{proof}

\label{s:no-go}

Il existe un shift effectif non vide sur $\mathbb{Z}^2$ pour lequel la complexité de Kolmogorov de tous les motifs de taille $n\times n$ est $\Omega(n^2)$ (voir \cite{dls} et \cite{rumyantsev-ushakov}). Une question naturelle se pose alors : 
peut-on améliorer le Théorème~\ref{th:deep-shift-simple} et renforcer la condition~(ii) en
$\CK^{T(n)}(P) = \Omega(n^{2})$ ? La réponse est négative : nous ne pouvons demander une complexité de Kolmogorov à ressources bornées en $\Omega(n^{2})$,
même pour une version plus faible de la propriété~(i) pour la complexité simple :
\begin{proposition}\label{p:non-quadratic}
Pour toute limite de ressources assez grande $T(n)$ calculable, il n'existe pas de shift sur $\mathbb{Z}^2$ tel que  
 \begin{itemize}
  \item[(i)]
  pour tout motif $P$ de taille $n\times n$ globalement admissible, nous avons $\CK(P) = o(n^2)$, et 
    \item[(ii)] 
   pour une infinité de $n$ et pour tout motif $P$ de taille $n\times n$ globalement admissible, nous avons $\CK^{T(n)}(P) = \Omega(n^{2}).$
 \end{itemize}
\end{proposition}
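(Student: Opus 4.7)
Je procéderais par l'absurde, en supposant qu'il existe un shift $S$ sur $\mathbb{Z}^2$ et une fonction calculable suffisamment grande $T(n)$ satisfaisant simultanément (i) et (ii). L'objectif serait d'exhiber, pour une infinité de $n$, un motif $P \in A_n$ (l'ensemble des motifs globalement admissibles de taille $n \times n$ dans $S$) tel que $\CK^{T(n)}(P) = o(n^2)$, en contradiction directe avec (ii).

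La première étape serait purement combinatoire : de la propriété (i), on déduit que $|A_n| \leq 2^{f(n)+1} = 2^{o(n^2)}$, où $f(n) = o(n^2)$ est une borne sur la complexité simple des motifs admissibles. En particulier, pour chaque $P \in A_n$, il existe un programme $p_P$ de longueur au plus $f(n) + O(1)$ produisant $P$, mais le temps d'exécution de $p_P$ peut a priori être très grand. La condition (ii) revient alors à exiger que ces programmes courts s'exécutent tous en temps strictement supérieur à $T(n)$, ce qui est une condition très restrictive.

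La seconde étape, le cœur de la démonstration, consisterait à exploiter une énumération à ressources bornées pour exhiber un motif admissible de complexité bornée petite. On considérerait l'ensemble $E_n := \{ P \in \Sigma^{n \times n} : \CK^{T(n)/2}(P) \leq f(n) + O(\log n)\}$, calculable en temps $\poly(T(n), 2^{f(n)})$, et on chercherait à montrer que $A_n \cap E_n$ est non vide pour une infinité de $n$. Pour cela, je construirais un shift effectif $\hat{S}$ approximant $S$ par le haut ($S \subseteq \hat{S}$), par exemple celui défini par les motifs interdits $\{P : \CK^{G(m)}(P) > g(m)\}$ pour des fonctions calculables $G, g$ judicieusement choisies de sorte que la propriété (i) de $S$ implique $S \subseteq \hat{S}$. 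Un argument de compacité dans $\hat{S}$, inspiré de la preuve du Théorème~\ref{th:dls}, permettrait d'exhiber une configuration de $\hat{S}$ dont les sous-motifs $n \times n$ sont dans $E_n$ pour suffisamment de $n$.

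Le principal obstacle technique serait de garantir que les motifs admissibles à faible complexité à ressources bornées trouvés dans $\hat{S}$ appartiennent bien à $A_n(S)$ et non seulement à $A_n(\hat{S})$. Puisque $S$ n'est pas supposé effectif, ce passage est délicat : il faudrait soit raffiner l'approximation $\hat{S}$ pour minimiser l'écart entre $A_n(\hat{S})$ et $A_n(S)$, soit utiliser directement la propriété d'invariance par translation et la borne $|A_n| = 2^{o(n^2)}$ pour extraire un motif ``simple'' au sens à ressources bornées par un argument de type dénombrement ou diagonal. L'expression ``$T(n)$ assez grand'' dans l'énoncé correspondrait à la marge nécessaire entre les fonctions $T(n), G(n), g(n)$ pour faire converger cette construction.
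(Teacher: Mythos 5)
Votre point de départ combinatoire est correct (la condition (i) donne, pour tout $\varepsilon>0$ et tout $k$ assez grand, au plus $2^{\varepsilon k^2}$ motifs admissibles de taille $k\times k$), mais la suite de votre argument contient une lacune réelle, que vous signalez d'ailleurs vous-même sans la résoudre : le détour par un sur-shift effectif $\hat S$ et un argument de compacité à la manière du Théorème~\ref{th:dls} ne fournit que des motifs globalement admissibles \emph{pour $\hat S$}, et rien ne garantit qu'ils soient admissibles pour $S$ ; or la condition (ii) ne porte que sur les motifs de $S$, et comme $S$ n'est pas supposé effectif, aucun ``raffinement'' de $\hat S$ ne permet de combler cet écart. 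S'y ajoute un problème de quantificateurs : pour contredire (ii) (``pour une infinité de $n$, \emph{tout} motif admissible de taille $n\times n$ a une complexité à ressources bornées $\ge c\,n^2$''), il faut produire un motif admissible de petite complexité bornée pour \emph{tout} $n$ assez grand, et non pour une infinité de $n$ seulement --- vos deux ensembles infinis de valeurs de $n$ pourraient être disjoints.

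La preuve du texte évite entièrement ce passage par l'effectivité, grâce à une idée que votre seconde piste (``argument de type dénombrement'') effleure sans la développer : un codage par blocs avec dictionnaire câblé en dur dans la description. On fixe $\varepsilon>0$ puis $k$ tel que $L_k\le 2^{\varepsilon k^2}$ ; tout motif admissible de taille $(Nk)\times(Nk)$ se décrit par la liste explicite de tous les motifs admissibles de taille $k\times k$ (coût $O(L_k\cdot k^2)$, constant en $n$, et sans aucune hypothèse de calculabilité puisque cette liste fait partie de la description) suivie du tableau $N\times N$ des indices de ses blocs (coût $N^2\cdot\varepsilon k^2$), le décodage étant polynomial ; le cas où $n$ n'est pas multiple de $k$ s'y ramène pour $O(\log k)$ bits supplémentaires. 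On obtient ainsi $\CK^{\poly(n)}(P)\le 2\varepsilon n^2$ pour \emph{tous} les motifs admissibles et tout $n$ assez grand, ce qui contredit (ii) dès que $T(n)$ dépasse ce polynôme. C'est cet argument direct, sans shift auxiliaire ni compacité, qui manque à votre proposition.
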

\begin{proof}
Supposons par l'absurde qu'un tel shift existe. Fixons un nombre réel $\varepsilon>0$.
La condition~(i) implique qu'il existe un entier $k=k(\varepsilon)$ tel que le nombre de motif de taille  $k\times k$ globalement admissible de ce shift est 
$
 L_k \le 2^{\varepsilon k^2}. 
$

Estimons la complexité de Kolmogorov d'un motif de taille $(Nk) \times (Nk)$ globalement admissible.
Pour tout $N$, tout motif $P$ de taille  $(Nk)\times (Nk)$ globalement admissible peut être spécifié par 
 \begin{itemize}
 \item la liste de tous les motifs de taille $k\times k$ globalement admissibles, ce qui requiert  $L_k \cdot k^2$ bits (avec  $O(L_k \cdot k^2$) bits nous pouvons même fournir une description auto-délimitée de la liste), et
 \item un tableau de taille $N\times N$ avec les indices des motifs de taille $k\times k$ qui constituent $P$ (ce qui nécessite $N^2 \cdot \log L_k$ bits).
 \end{itemize}
Nous pouvons combiner ces deux parties de la description en un programme. Clairement, $P$ peut être reconstruit à partir d'une telle description en temps polynomial. Donc, il s'ensuit que pour un certain polynôme $\poly_1(n)$
\begin{eqnarray}
\CK^{\poly_1(Nk)} (P) \le  O(L_k \cdot k^2) + N^2 \cdot \epsilon_1 k^2. \label{eq:Nk}
\end{eqnarray}

Nous pouvons alors estimer la complexité de Kolmogorov pour un motif $P$ de taille $n\times n$ globalement admissible dans le cas où $n$ n'est pas divisible par $k$. 
Soit $N$ un entier tel que 
\[
(N-1)k  < n \le N k.
\]
Le motif $P$ peut être inclus dans un motif plus grand $P'$ globalement admissible de taille  $ (Nk)\times  (Nk)$.  Pour décrire $P$, nous devons décrire $P'$ et ensuite spécifier la position de $P$ par rapport à la bordure de $P'$.
La distance entre $P$ et la bordure de $P'$ ne peut être supérieure à $k$. Par conséquent, pour obtenir une limite supérieure pour la complexité de Kolmogorov de $P$, nous n'avons besoin d'ajouter que $O(\log k)$ bits à \eqref{eq:Nk}.
Ainsi, pour un certain polynôme $\poly_2(n)$
\begin{eqnarray}
\CK^{\poly_2(n)} (P)& \le&
O(L_k \cdot k^2) + N^2 \cdot \epsilon_2 k^2 + O(\log k) \nonumber\\
&\le& \epsilon_2 (n+k)^2 + O(L_k \cdot k^2) \nonumber\\
&\le& \epsilon_2 n^2 + 2 \epsilon_2 nk+ \epsilon_2 k^2  + O(L_k \cdot k^2). \label{eq:n^2}
\end{eqnarray}
Pour un $k$ fixé et un $n$ suffisamment grand , le premier terme dans \eqref{eq:n^2} domine les termes restant, et nous obtenons  
\[
\CK^{\poly_2(n)} (P) \le   2\epsilon_2 n^2. 
\]
Rappelons que nous pouvons démontrer cette limite pour tout $\epsilon_2>0$ et pour tout $n$ suffisamment grand.
Cela contredit la condition~(ii) de la proposition.
\end{proof}

\subsection{Preuve du Théorème~\ref{th:deep-shift}}
\label{s:preuve-theoreme}

\begin{proof}

Les grandes lignes de la preuve du Théorème~\ref{th:deep-shift} sont similaires à celles de la preuve du Théorème~\ref{th:deep-shift-simple}.
Nous commençons par la construction de motifs ``standards'' de taille croissante de sorte que tous les motifs présents dans ces motifs ont une grande complexité de Kolmogorov à ressources bornées. Par la suite, nous définissons notre shift comme la clôture de ces motifs standards.

\underline{Étape 1: construction des motifs standards}.
Nous fixons les paramètres $n_i$, $i=0,1,2,\ldots$
\begin{equation}
\label{eq:def-n_0}
   n_0 \gg 1,\  n_{i+1} = (n_0\cdot \ldots \cdot n_i)^c
\end{equation}
(la constante $c$ sera spécifiée par la suite)
 et
\[
 N_ i :=  n_0\cdot \ldots \cdot n_i.
\]

Nous ne pouvons pas préciser $n_0$ car il dépend du choix du décompresseur optimal dans la définition de la complexité de Kolmogorov).
Dans ce qui suit, nous définissons par induction les  \emph{motifs standards} $Q_i^j$ de taille $N_i\times N_i$
 pour $i=0,1,2,\ldots$ La différence majeure entre cette preuve et la preuve du Théorème~\ref{th:deep-shift-simple} est que pour chaque $i$ nous définissons un ensemble $\ell_i = n_{i+1}^2$ de motifs standards de niveau $i$
\[
   Q_i^1, \ldots, Q_i^{\ell_i}
\]
(au lieu de seulement deux motifs standards comme dans la preuve précédente).
La construction est hiérarchique : chaque motif standard de rang $i$ est défini comme un tableau de taille $n_i\times n_i$ constitué de motifs standards de niveau $(i-1)$.
Lors de cette construction par induction, nous nous assurons que pour chaque $i$ la propriété suivante soit vérifiée :
 
  \smallskip
  
 \noindent
 \emph{Propriété principale}:   Pour tout $i>0$,  pour la liste de motifs standards $Q_i^{1}, \ldots, Q_i^{\ell_i}$, nous avons
\begin{equation}
\label{eq:main-p}
  \CK^{t_i} (Q_i^{1}, \ldots, Q_i^{\ell_i} ) \ge  \ell_i \log (\ell_{i-1}!)  = (n_{i+1})^2  \cdot \log \big((n_i^2)!\big)
\end{equation}
 (la fonction de seuil $t_i$ sera spécifiée plus tard).
 
 \smallskip

Notre construction est explicite, et les motifs standards $Q_i^{j}$ sont calculables (étant donnés $i$ et $j$) même si le temps requis pour le calcul peut être très important.
 
\begin{remark}
Nous allons donner ici l'intuition sur laquelle repose notre construction.
Nous construisons par induction les motifs standards vérifiant les deux propriétés suivantes.
D'une part,tous les motifs suffisamment grands apparaissant dans chaque motif standard de taille $N_i\times N_i$ doivent avoir une grande complexité de Kolmogorov à ressources bornées à cause de la ``structure locale'': chacun des motifs standards de rang $(i-1)$ (de taille $N_{i-1}\times N_{i-1}$) est complexe,
et ces motifs sont indépendants les uns des autres.
D'autre part, le motif standard de taille $N_i\times N_i$ dans son intégralité doit avoir une grande complexité de Kolmogorov à ressources bornées (avec même un seuil plus grand) à cause de l'information encodée dans sa ``structure globale'' (la manière d'arranger les motifs de rang $(i-1)$ dans un motif de rang $i$).
Pour mener à bien l'argument par induction, nous construisons pour chaque $i$ un grand nombre de motifs standards de taille $N_i\times N_i$, et ces motifs doivent être dans un  certain sens ``indépendants.''
 \end{remark}

\emph{Base de l'induction :}
Nous définissons les motifs $Q_0^{1}, \ldots, Q_0^{\ell_0} $ comme des matrices binaires de taille $n_0\times n_0$ arbitraires
(nous pouvons supposer que $2^{n_0^2}> \ell_0 = n_1^2$). 
\smallskip

\emph{Étape d'induction :} Étant donnés un ensemble de motifs standards $Q_i^{1}, \ldots, Q_i^{\ell_i} $ de taille $N_i\times N_i$, nous construisons les motifs standards de rang suivant, $Q_{i+1}^{1}, \ldots, Q_{i+1}^{\ell_i} $ de taille $N_{i+1}\times N_{i+1}$. Chaque nouveau motif $Q_{i+1}^{j}$
est défini par un tableau de taille $n_{i+1}\times n_{i+1}$ composé de motifs $Q_i^{1}, \ldots, Q_i^{\ell_i}$. 

Comme $\ell_i = n_{i+1}^2$, nous pouvons imposer que chaque motif $Q_{i}^{k}$ soit utilisé exactement une fois dans chaque $Q_{i+1}^{j}$. En d'autres termes,
chaque bloc standard $Q_{i+1}^{j}$ peut être vu comme une permutation 
 \[
   \pi_ j \ : \ \{1,\ldots, \ell_i\} \to \{1,\ldots, \ell_i\} 
 \]
sur l'ensemble des motifs standards du rang précédent.

Il y a $(\ell_i \,!)$ possibilités pour choisir chaque permutation $\pi_ j$ 
et $(\, \ell_i\,! \, )^{\ell_{i+1}}$  possibilités pour choisir toutes les permutations  $ \pi_ 1 ,\ldots,  \pi_ {\ell_{i+1}}$. 
Par un argument de comptage trivial il s'ensuit qu'il existe un n-uplet de permutations $ \pi_ 1 ,\ldots,  \pi_ {\ell_{i+1}}$ tel que
\[
  \CK( \pi_ 1 ,\ldots,  \pi_ {\ell_{i+1}}) \ge \ell_{i+1} \log (\ell_i!)
\]
Par conséquent, pour toute fonction de seuil calculable $h_i$, nous pouvons trouver algorithmiquement un n-uplet de permutations tel que
 \[
  \CK^{h_{i+1}}( \pi_ 1 ,\ldots,  \pi_ {\ell_{i+1}}) \ge \ell_{i+1} \log (\ell_i!).
 \]
Puisque chaque $\pi_j$ est déterminée de manière unique par les motifs correspondants $Q_{i+1}^j$, nous obtenons
\[
  \CK^{h_{i+1}-\mathrm{gap}_{i+1}}( Q_{i+1}^{1} ,\ldots,  Q_{i+1}^{\ell_{i+1}}) \ge \ell_{i+1} \log (\ell_i!),
\]
où $\mathrm{gap}_{i+1}$ est le temps nécessaire pour extraire les permutations $\pi_j$ à partir des motifs correspondants $Q_{i+1}^j$. 

Ainsi, si la fonction de seuil $h_{i+1}$ est bien plus grande que $t_i$, nous pouvons conclure que 
\[
  \CK^{t_i}( Q_{i+1}^{1} ,\ldots,  Q_{i+1}^{\ell_{i+1}}) \ge \ell_{i+1} \log (\ell_i!),
\]
 i.e., la \emph{Propriété Principale} reste vraie au niveau $i+1$.

\begin{lemma}\label{l:1}
Pour tout $i>0$ et chaque n-uplet de $k$ motifs standards (différents deux à deux) $Q_{i}^{{j_1}} ,\ldots,  Q_{i}^{j_k}$, nous avons
\[
 \CK^{t_i'}(Q_{i}^{{j_1}} ,\ldots,  Q_{i}^{j_k}) \ge \frac12 k \log (\ell_{i-1}!)
\]
(la fonction de seuil $t_i'$ sera spécifiée par la suite). 
\end{lemma}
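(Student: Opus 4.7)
Pour prouver ce lemme, nous procéderions par l'absurde en s'appuyant sur la \emph{Propriété principale}~\eqref{eq:main-p}. Supposons qu'il existe un $k$-uplet $Q_i^{j_1}, \ldots, Q_i^{j_k}$ de motifs standards (deux à deux distincts) de rang $i$ tel que
\[
\CK^{t_i'}(Q_i^{j_1}, \ldots, Q_i^{j_k}) < \tfrac{1}{2} k \log (\ell_{i-1}!).
\]
L'idée est de transformer cette hypothétique description courte en une description anormalement courte du $n$-uplet complet $Q_i^1, \ldots, Q_i^{\ell_i}$, ce qui contredirait la borne~\eqref{eq:main-p}.

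Nous articulerions cette description globale en trois parties : (a)~le programme court qui engendre le $k$-uplet considéré en temps $t_i'$, coûtant moins de $\tfrac{1}{2} k \log(\ell_{i-1}!)$ bits ; (b)~la liste ordonnée des indices $(j_1, \ldots, j_k) \subset \{1, \ldots, \ell_i\}^k$ accompagnée de $k$, codée en au plus $k \log \ell_i + O(\log \ell_i)$ bits ; et (c)~les $\ell_i - k$ motifs restants donnés explicitement, chacun étant codé par la permutation de $\{1,\ldots, \ell_{i-1}\}$ qui le définit, pour un coût total d'au plus $(\ell_i - k) \log(\ell_{i-1}!)$ bits. En sommant, la complexité à ressources bornées du $n$-uplet complet serait majorée par
\[
\ell_i \log(\ell_{i-1}!) - \tfrac{1}{2} k \log(\ell_{i-1}!) + O(k \log \ell_i),
\]
qui est strictement plus petit que $\ell_i \log(\ell_{i-1}!)$ dès que $\log(\ell_{i-1}!) \gg \log \ell_i$. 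Or, avec $\ell_{i-1} = n_i^2$ et $\ell_i = n_{i+1}^2$, nous avons $\log(\ell_{i-1}!) = \Theta(n_i^2 \log n_i)$ tandis que $\log \ell_i = 2\log n_{i+1} = O(\log N_i)$ d'après~\eqref{eq:def-n_0} ; la domination souhaitée est donc très largement assurée pour $n_0$ assez grand.

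Il resterait enfin à contrôler la contrainte de temps : la reconstruction du $n$-uplet complet à partir des parties (a), (b) et (c) consiste à exécuter (a) en temps $t_i'$, puis à insérer les motifs explicitement fournis aux bonnes positions, le tout en un temps polynomial en la taille $\ell_i \cdot N_i^2$ de la sortie. En ayant choisi la fonction de seuil $t_i$ de l'étape d'induction telle que $t_i \gg t_i' + \poly(\ell_i \cdot N_i^2)$, la description construite ci-dessus contredit bien~\eqref{eq:main-p}. L'obstacle principal n'est donc pas conceptuel mais arithmétique et paramétrique : il faut veiller à ce que la hiérarchie des seuils $t_{i-1} \ll t_i' \ll t_i$ et les dominations logarithmiques entre $\ell_{i-1}$ et $\ell_i$ soient réalisables simultanément et de manière cohérente à chaque rang de l'induction, compte tenu du choix spécifique~\eqref{eq:def-n_0} fait en amont.
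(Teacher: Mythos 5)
Votre démonstration suit pour l'essentiel la même démarche que celle du manuscrit : raisonnement par l'absurde, description du $\ell_i$-uplet complet en trois parties (programme court pour les $k$ motifs particuliers, liste des indices, codage des motifs restants par leurs permutations de $\{1,\ldots,\ell_{i-1}\}$), comptage des bits contre la \emph{Propriété principale}~\eqref{eq:main-p}, puis contrôle du temps de reconstruction. Le seul point inexact est votre budget temps : puisque les motifs restants de la partie (c) sont codés par des permutations, le décodeur doit d'abord reconstruire tous les motifs standards de rang $(i-1)$ pour convertir ces permutations en motifs effectifs (les décrire directement bit à bit coûterait $(\ell_i-k)N_i^2$ bits et ruinerait le comptage), et ce calcul n'est pas polynomial en $\ell_i \cdot N_i^2$ --- il fait intervenir les recherches exhaustives de la construction aux rangs inférieurs, avec des seuils $h_{i-1}\gg t_{i-1}$. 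La contrainte correcte n'est donc pas seulement $t_i \gg t_i' + \poly(\ell_i\cdot N_i^2)$ : il faut aussi que $t_i$ domine le temps (calculable mais potentiellement énorme) nécessaire pour engendrer tous les $Q_{i-1}^j$, hypothèse que la preuve du manuscrit formule explicitement. Comme $t_i$ reste libre d'être choisi aussi grand que nécessaire dans la définition inductive des seuils, cette correction ne modifie en rien la structure de votre argument.
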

\begin{remark}
Le facteur $\frac12$ dans ce lemme peut être changé par n'importe quelle constante inférieure à $1$.
\end{remark}
\begin{proof}[Preuve du lemme :] Supposons par l'absurde que pour certains motifs $Q_{i}^{{j_1}} ,\ldots,  Q_{i}^{j_k}$ nous avons
\[
 \CK^{t_i'}(Q_{i}^{{j_1}} ,\ldots,  Q_{i}^{j_k}) < \frac12 k \log (\ell_{i-1}!)
\]
Alors nous pouvons fournir la description de la liste intégrale des motifs de rang~$i$ suivante :
 \begin{itemize}
 \item une description de taille $\frac12 k \log (\ell_{i-1}!)$ pour ces $k$ motifs particuliers,
 \item $k\log \ell_{i}$ bits pour spécifier les indices $j_1,\ldots,j_k$ de ces $k$ motifs particuliers,
 \item une description directe des autres motifs standards de rang $i$, ce qui nécessite $(\ell_{i}-k)\log (\ell_{i-1}!)$ bits.
 \end{itemize}
Nous devons ajouter un en-tête de taille $O(\log k+\log i )$ pour unir ces différentes parties en une description.
Il reste à observer que la taille du résumé de ces données est inférieure à $ \ell_{i+1} \log (\ell_i!)$ 
(la partie droite de l'équation \eqref{eq:main-p}).

Estimons le temps nécessaire pour calculer la liste des motifs standards de rang~$i$ à partir de cette description. Tout d'abord, nous calculons la liste de tous les motifs standards de rang inférieur $Q_{i-1}^j$ (nous devons supposer que le temps nécessaire pour calculer ces motifs est bien inférieur à ~$t_i$). Ensuite en un temps $t_i'$ nous obtenons les $k$ motifs ``particuliers'' ayant une petite description. En $\poly(N_{i})$ étapes de calcul nous pouvons générer le reste des $(\ell_i -k)$ motifs standards, et en $\poly(N_{i})$ étapes de calcul supplémentaires nous assemblons ces motifs en une liste unique.
En supposant que $t_i$ soit suffisamment grand, nous obtenons une contradiction avec la \emph{Propriété Principale} des motifs standards.
\end{proof}

\begin{lemma}\label{l:2}
Pour chaque motif standards de rang $i$, la complexité de Kolmogorov simple des $Q_i^j$ est très petite :
\[
 \CK(Q_i^j) = O(\log N_i).
\]

\end{lemma}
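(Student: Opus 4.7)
The plan is to exploit the fact that the recursive construction of the standard patterns is entirely algorithmic. I would write a single program $\mathcal{P}$ that takes as input a pair of integers $(i,j)$ and outputs $Q_i^j$ by faithfully simulating the inductive construction just described: first compute the parameters $n_0,\ldots,n_i$ and $\ell_0,\ldots,\ell_i$ from the defining recurrence $n_{k+1} = (n_0\cdots n_k)^c$; then build the base-level patterns $Q_0^1,\ldots,Q_0^{\ell_0}$ canonically; and then, at each level $k < i$, enumerate all $\ell_k$-tuples of permutations of $\{1,\ldots,\ell_k\}$ in lexicographic order and select the first one whose resource-bounded complexity meets the required bound $\ell_{k+1}\log(\ell_k!)$. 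This last step is computable, since for a computable threshold $h_{k+1}$ the value $\CK^{h_{k+1}}(\,\cdot\,)$ of a given string can be computed by brute-force enumeration of programs of bounded length run for $h_{k+1}$ steps; so the lexicographically first valid tuple is algorithmically well-defined. Finally $\mathcal{P}$ outputs the $j$-th resulting pattern at level $i$.

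Given this program, the standard inequality for Kolmogorov complexity yields
\[
\CK(Q_i^j) \;\le\; \CK(i,j) + O(1),
\]
where the $O(1)$ constant absorbs the size of $\mathcal{P}$ itself (independent of $i$ and $j$). It then suffices to bound $\CK(i,j)$ in terms of $N_i$. Using a self-delimiting pairing one has $\CK(i,j) \le 2\log i + \log j + O(\log\log i)$.

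For the $\log j$ term, recall that $j \le \ell_i = n_{i+1}^2 = (n_0\cdots n_i)^{2c} = N_i^{2c}$, so $\log j \le 2c \log N_i = O(\log N_i)$. For the $\log i$ term, since every $n_k \ge 2$ we have $N_i \ge 2^{i+1}$, hence $i = O(\log N_i)$ and a fortiori $\log i = O(\log\log N_i) = O(\log N_i)$. Combining these bounds gives the desired conclusion $\CK(Q_i^j) = O(\log N_i)$.

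The only subtle point — and really the only thing to verify carefully — is that the inductive choice rule at each level is indeed computable, so that the program $\mathcal{P}$ exists at all. This rests on two facts already used in the construction: the threshold functions $h_k$ are computable (so the resource-bounded complexity test is decidable) and the search space at each level is finite, so the lexicographic minimum can always be located in finite time. No resource bounds are imposed on $\mathcal{P}$ itself; the bound is on plain Kolmogorov complexity, so the (possibly astronomical) running time of $\mathcal{P}$ is irrelevant here. This is precisely why Lemma~\ref{l:2} concerns simple Kolmogorov complexity rather than the resource-bounded version.
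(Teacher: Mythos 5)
Your proposal is correct and follows essentially the same route as the paper: the paper's proof likewise rests on the fact that the list of standard patterns of rank $i$ is computable given $i$ (the lexicographically-first choice under a computable resource-bounded complexity test being decidable), so that $\CK(Q_i^j)=O(\log \ell_i)+O(\log i)=O(\log N_i)$. You merely spell out the computability of the selection rule and the arithmetic $\log \ell_i = \log n_{i+1}^2 = 2c\log N_i$ in more detail than the paper does.
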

\begin{proof}
Il y a $\ell_i$ motifs standards de rang $i$, et la liste de tous les motifs standards peut être calculée étant donné $i$. Par conséquent,
\[
\CK(Q_i^j) = O(\log \ell_i).
\]
Pour les paramètres choisis nous avons $\log \ell_i = \log (n_{i+1}^2) = O(\log N_i)$.
\end{proof}
Nous pouvons observer que la construction des motifs standards est explicite, et que nous pouvons choisir une fonction calculable $\hat T(n)$ telle que la totalité des motifs standards soit calculable à partir de $i$ en temps $\hat T(n)$. Cette observation implique la validité de la Remarque~\ref{r:improved-upper-bound}.

\smallskip

\underline{Étape 2 : complexité des motifs contenus dans les motifs standards}.
Soit $P$ un tableau de motifs standards (deux à deux disjoints) de rang $i$ de taille $m\times m$. 
Nous pouvons observer que la taille de $P$ (mesurée en lettres individuelles) est $k\times k$, avec $k=mN_i$.

D'après le Lemme~\ref{l:1}, nous avons 
\[
 \CK^{t_i'}(P) \ge \frac12 m^2 \log (\ell_{i-1}!) = \Omega(m^2 \ell_{i-1}) = \Omega\big((mn_i)^2\big).
\]
En d'autres termes, la complexité de Kolmogorov à ressources bornées de ce motif de taille $k\times k$ est supérieure à 
\[
 \Omega\big((mn_i)^2\big) =  \Omega\big(k^2 / (n_0\cdot \ldots \cdot n_{i-1})^2\big).
\]

\smallskip

\underline{Étape 3: la clôture des motifs standards.}
Nous définissons le shift comme la clôture des motifs standards : un motif fini est globalement admissible si et seulement s'il apparaît dans un tableau de taille $2\times 2$ composé de motifs standards (d'un rang donné $i$).
La construction des motifs standards étant hiérarchique, nous pouvons conclure que un motif de taille $N_i\times N_i$ est un motif \emph{non} globalement admissible s'il n'apparaît jamais dans un tableau de taille $2\times 2$ composé de motifs standards de rang $i$.

Soit $P$ un motif globalement admissible dans le shift ci-dessus de taille $k\times k$. Soit $2N_i \le k < 2N_{i+1}$ pour un certain $i>0$.
Alors $P$ doit pouvoir être recouvert par un  tableau de motifs standards de taille $2\times 2$ de rang $i+1$. Par conséquent, une fraction constante de $P$
peut être représentée comme un tableau de motifs standards de rang $i$ (à l'intérieur d'un motif standard de rang $(i+1)$). 
Nous pouvons appliquer la limite de l'Étape 2 et conclure que
\begin{equation}
\label{eq:lower-bound}
 \CK^{T_i}(P) =  \Omega(k^2/(n_0\cdot \ldots \cdot n_{i-1})^2)
\end{equation}
(en supposant que l'écart entre $t'_i$ et $T_i$ est suffisamment important).

Par ailleurs, la complexité de Kolmogorov simple de la totalité des motifs globalement admissibles est très petite. En effet, pour décrire un motif $P$ de taille $k$ globalement admissible, il est suffisant de spécifier quatre motifs standards recouvrant $P$ et la position de $P$ par rapport à la grille des motifs standards. D'après le Lemme~\ref{l:2}, nous obtenons
\[
 \CK(P) =  O(\log k). 
\]

\smallskip

\underline{Étape 4: le choix des paramètres}.
Nous choisissons la constante $c$ dans l'équation \eqref{eq:def-n_0} telle que pour $k=\Omega(N_i)$ l'égalité \eqref{eq:lower-bound} se réécrive en 
\[
 \CK^{T_i}(P) =  \Omega(k^{2-\epsilon}).
\]

Il reste à discuter le choix des limites des ressources. La fonction de seuil $T_i$ est donnée dans le théorème. Étant donné $T_i$, nous choisissons une
fonction de seuil $t'_i$ adéquate (l'écart entre $t'_i$ et $T_i$ doit être suffisamment important, pour que l'argument de l'Étape 3 fonctionne).
Alors, étant donné $t_i'$ nous choisissons  une fonction de seuil adéquate $t_i$ (l'écart entre $t_i$ et $t'_i$ doit être suffisamment important, pour que la preuve du Lemme~\ref{l:1} soit correcte).
Le choix de $t_i$ détermine la valeur de $h_i$ à l'Étape~1. La définition de ces séquences est inductive : pour définir $t_i$ nous devons connaître $t_{i-1}$ et le temps nécessaire pour calculer les motifs standards de rang $(i-1)$ (voir la preuve du Lemme~\ref{l:1}).

Comme $T_i$ est une fonction calculable de $i$, nous pouvons choisir des séquences calculables $t''_i$, $t'_i$, $t_i$, and $h_i$. Cela conclut la preuve.
\end{proof}

\section{La notion de ``résumés''}
\label{s:epitomes}
La technique de la Partie~\ref{s:grande-complexité-bornée} ne s'applique pas aux shifts contenant des configurations très simples (pour lesquelles tous les motifs ont une petite complexité de Kolmogorov). En particulier, il ne s'applique pas à l'Exemple~\ref{ex:mirror} de l'introduction.
Dans cette partie nous proposons une technique différente (également basée sur la complexité de Kolmogorov) qui puisse s'appliquer à ces shifts.
L'intuition derrière cette technique est la suivante : nous essayons de résumer l'information ``essentielle'' présente dans chaque motif (en éliminant les données non pertinentes) et mesurons alors la complexité de Kolmogorov simple ou à ressources bornées d'un ``résumé'' extrait à partir de cette information essentielle.
Nous montrons que pour un shift sofique, si l'on définit soigneusement la notion de ``résumé de l'information essentielle'' d'un motif, alors la complexité du résumé doit être inférieure à la longueur de la bordure de ce motif. Ainsi, si cette condition n'est pas respectée, alors le shift n'est pas sofique.

Fixons quelques notations.
Nous appelons $B_n$ l'ensemble $\{0,\ldots, n-1\}^2\subset \mathbb{Z}^2$ et $F_n$ son complément, $F_n:=\mathbb{Z}^2\setminus B_n$. Nous disons que deux motifs de support disjoint sont \emph{compatibles} (pour un shift $S$) si l'union de ces motifs est globalement admissible pour $S$.
En particulier, un motif fini $P$ de support $B_n$ et un motif infini $R$ de support $F_n$ sont compatibles, si l'union de ces motifs est une configuration valide de ce shift. 

Dans la Sous-Partie~\ref{s:plain-epitomes} nous commençons par une définition simple et plus restrictive des  \emph{résumés simples}. Ensuite, dans la Sous-Partie~\ref{s:ordered-epitomes} nous considérons une définition plus générale de \emph{résumés ordonnés} et présentons plusieurs exemples où nous utilisons cette technique pour prouver qu'ils ne sont pas sofiques. Dans la Sous-Partie~\ref{s:km} nous comparons cette technique des résumés avec la technique des \emph{ensembles d'extensions formant une chaîne d'union croissante} proposée par Kass et Madden dans \cite{kass-madden}; nous montrons que la méthode de Kass et Madden est équivalente à un cas particulier de la technique des résumés ordonnés.

\subsection{Résumés simples}\label{s:plain-epitomes}
Dans cette sous-partie nous proposons une approche générale permettant de prouver qu'un shift n'est pas sofique ; cette approche est basée sur l'intuition qui guide la preuve standard de l'Exemple~\ref{ex:mirror}. Nous commençons par la définition des résumés ``simples''.

Un résumé est une fonction qui extrait l'information essentielle d'un motif carré. Dans la définition formelle, nous préférons définir cette fonction de résumé séparément pour chacune des tailles des motifs carrés.
\begin{definition}\label{d:epitome}
Pour tout entier $n>0$, soit
\[
{\cal E}_n \ : \ [\text{motif de taille }n\times n] \mapsto [\text{mot binaire}]
\]
une fonction {partielle}. Nous appelons ${\cal E}_n$ une famille de \emph{résumés} pour un shift $S$, si pour chaque motif $P$ de support $B_n$ globalement admissible tel que ${\cal E}_n(P)$ est défini, il existe un motif $R$ sur $F_n$
tel que
\begin{itemize}
\item[(i)] $R$ est compatible avec $P$, i.e., l'union de $P$ et $R$ forme une configuration valide de $S$, et
\item[(ii)]  pour tout motif $P'$  de support $B_n$ compatible avec $R$, si ${\cal E}_n(P')$ est défini alors
$
{\cal E}_n(P')  = {\cal E}_n(P) 
$
\textup(i.e., la configuration $R$ sur le complément $B_n$ détermine de manière unique la valeur des résumés ${\cal E}_n$ pour tous les motifs valides $P'$\textup).  
\end{itemize}
Nous appelons une famille de résumés \emph{calculable} s'il existe un  algorithme qui calcule les fonctions ${\cal E}_n$. 
Précisons que le calcul est uniforme : il existe un seul algorithme qui calcule ${\cal E}_n$ pour tous les entiers $n>0$.
Comme il est courant en théorie de la calculabilité, nous supposons qu'un algorithme calculant une fonction partielle ne retourne aucun résultat sur les entrées en dehors de son domaine.

Si, de plus, il existe un algorithme qui calcule ${\cal E}_n$ en temps $2^{O(n^2)}$, nous appelons cette famille de résumés \emph{calculable en temps exponentiel}. 

De façon similaire, nous appelons une famille de résumés \emph{calculable avec un oracle $\cal O$} s'il existe un algorithme qui calcule ${\cal E}_n$ étant donné un accès à l'oracle $\cal O$. Nous pouvons aussi parler de résumés \emph{calculables en temps exponentiel} avec un oracle.
\end{definition}

\begin{remark}
Si une famille de résumés est calculable en temps exponentiel, alors le domaine de ${\cal E}_n$ est décidable en temps exponentiel :
il suffit de lancer l'algorithme calculant ${\cal E}_n$ et de s'interrompre si le calcul ne s'est pas arrêté dans le temps imparti. 
\end{remark}

\begin{remark}
De manière formelle, les valeurs de ${\cal E}_n$ sont des chaînes de caractères binaires --- des objets pour lesquels nous avons défini la complexité de Kolmogorov. En pratique, il peut être commode d'utiliser des nombres, des vecteurs ou des motifs finis multidimensionnels sur un alphabet fini comme valeur des résumés. Tous ces objets peuvent être représentés par des chaînes de caractères binaires finies (avec un encodage naturel) ; par conséquent dans certains cas nous abusons des notations et parlons de résumés ayant pour valeur des entiers (ou des vecteurs, des motifs finis, et ainsi de suite) et de leur complexité de Kolmogorov.
\end{remark}

\begin{proposition}\label{p:epitomes} 
(a) Pour tout shift sofique avec une famille de résumés ${\cal E}_n$, pour tout motif $P$ de taille  $n\times n$ globalement admissible 
tel que ${\cal E}_n(P)$ est défini, nous avons
 $
 \CK({\cal E}_n(P)) = O(n).
 $
 
(b) Pour tout shift sofique avec une famille de résumés calculable en temps exponentiel ${\cal E}_n$, pour chaque motif $P$ de taille $n\times n$ globalement admissible tel que ${\cal E}_n(P)$ est défini,  nous avons
 $
 \CK^{T(n)}({\cal E}_n(P)) = O(n)
 $
Pour une fonction de seuil $T(n)=2^{O(n^2)}$.

(c) Les énoncés (a) et (b) relativisés. C'est à dire, pour tout shift sofique avec une famille de résumés ${\cal E}_n$ qui est \emph{calculable avec un oracle} $\cal O$ (ou \emph{calculable en temps exponentiel avec un oracle}  $\cal O$), pour tout motif $P$ de taille $n\times n$ globalement admissible tel que ${\cal E}_n(P)$ est défini, nous avons
 $
 \CK^{\cal O}({\cal E}_n(P)) = O(n)
 $
 (ou, respectivement,   $\CK^{T(n), \cal O}({\cal E}_n(P)) = O(n)$ pour une fonction de seuil  $T(n)=2^{O(n^2)}$).

\end{proposition}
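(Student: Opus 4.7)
Mon plan est d'exploiter la définition équivalente des shifts sofiques comme projections coordonnée par coordonnée de shifts de type fini (Proposition~\ref{p:shift-tuiles}) afin de traduire la propriété caractéristique des résumés en une description courte du type ``flux d'information à travers la bordure de $B_n$''. L'idée directrice sera de coder ${\cal E}_n(P)$ par le contenu d'une couronne de lettres de taille $O(n)$ dans le SFT de recouvrement, et de démontrer par un argument de recollement que cette couronne détermine entièrement la valeur du résumé.

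Je fixe donc une fois pour toutes un shift de type fini $S'$ sur un alphabet $\Sigma'$ et une projection $\pi:\Sigma'\to\Sigma$ tels que $S=\pi(S')$, avec des motifs interdits tous inclus dans une fenêtre de taille $w\times w$ pour une certaine constante $w$. Étant donné un motif $P$ sur $B_n$ globalement admissible avec ${\cal E}_n(P)$ défini, j'invoque la Définition~\ref{d:epitome} pour obtenir un motif $R$ sur $F_n$ compatible avec $P$ et tel que tout $P'$ compatible avec $R$ vérifie ${\cal E}_n(P')={\cal E}_n(P)$ (lorsque défini). Je relève alors $P\cup R$ en une configuration $\tilde c\in S'$, et je pose $\partial:=\{x\in F_n\mid d(x,B_n)\leq w\}$, de sorte que $|\partial|=O(nw)=O(n)$ : la description de $\tilde c|_\partial$ tient en $O(n)$ bits, auxquels j'ajoute $O(\log n)$ bits pour indiquer $n$.

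Le cœur de la preuve sera un lemme de recollement : pour toute configuration $\tilde c^{*}\in S'$ vérifiant $\tilde c^{*}|_\partial=\tilde c|_\partial$, la configuration $\hat c$ obtenue en prenant $\tilde c^{*}$ sur $B_n\cup\partial$ et $\tilde c$ sur $F_n$ est bien définie (les deux configurations coïncident sur la partie commune $\partial$) et localement admissible, puisque tout carré de taille $w\times w$ est soit entièrement inclus dans $B_n\cup\partial$ (où $\hat c=\tilde c^{*}$), soit entièrement inclus dans $F_n$ (où $\hat c=\tilde c$). Comme $S'$ est de type fini, $\hat c\in S'$, donc $\pi(\hat c)=P^{*}\cup R$ est une configuration de $S$, où $P^{*}:=\pi(\tilde c^{*}|_{B_n})$. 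Ainsi tout tel $P^{*}$ est compatible avec $R$, et la condition~(ii) de la Définition~\ref{d:epitome} fournit alors ${\cal E}_n(P^{*})={\cal E}_n(P)$ dès que ${\cal E}_n(P^{*})$ est défini.

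L'algorithme de décompression énumère donc les motifs $\tilde Q\in(\Sigma')^{B_n\cup\partial}$ localement admissibles dans $S'$ qui étendent la couronne donnée, calcule $P^{*}:=\pi(\tilde Q|_{B_n})$ et évalue ${\cal E}_n(P^{*})$ ; dès qu'une valeur est obtenue, il la renvoie. La restriction $\tilde c|_{B_n\cup\partial}$ étant un tel $\tilde Q$ valide, la procédure termine, et le lemme de recollement garantit que la sortie est bien ${\cal E}_n(P)$. Pour~(a), cette procédure est calculable (dès que ${\cal E}_n$ l'est), d'où $\CK({\cal E}_n(P))=O(n)$. Pour~(b), le nombre de candidats est majoré par $|\Sigma'|^{(n+2w)^2}=2^{O(n^2)}$, chaque test de locale admissibilité est polynomial, et chaque évaluation de ${\cal E}_n$ prend $2^{O(n^2)}$ étapes, d'où un temps total $T(n)=2^{O(n^2)}$. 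Pour~(c), l'argument se relativise directement en branchant l'oracle ${\cal O}$ à chaque évaluation de ${\cal E}_n$. La difficulté principale sera de bien calibrer la largeur de la couronne sur la portée des contraintes locales de $S'$, afin que le recollement de deux configurations valides fournisse effectivement une configuration localement --- donc globalement --- admissible dans $S'$ ; sans quoi la compatibilité de $P^{*}$ avec le $R$ ``privilégié'' du résumé ne serait pas garantie et l'ensemble de l'argument s'effondrerait.
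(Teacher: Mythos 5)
Votre démonstration est correcte et suit essentiellement la même voie que celle du manuscrit : décrire la valeur du résumé par la couronne de largeur $O(1)$ autour de $B_n$ dans le SFT de recouvrement (le manuscrit se ramène à des contraintes entre voisins immédiats et prend une bordure de largeur $1$, vous gardez une fenêtre $w\times w$ et une couronne de largeur $w$, ce qui est équivalent), puis énumérer les remplissages localement admissibles, projeter par $\pi$ et invoquer la propriété~(ii) des résumés via l'argument de recollement. Veillez seulement à préciser, pour le point~(a), que les évaluations de ${\cal E}_n(P^{*})$ doivent être lancées en parallèle (dovetailing), ${\cal E}_n$ étant partielle, afin qu'un candidat hors domaine ne bloque pas la procédure --- c'est exactement ce que fait la preuve du manuscrit.
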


\begin{proof}
(a) Supposons que $S$ est un shift sofique avec un Shift de type fini couvrant $\hat S$ ($S$ est une projection coordonnée par coordonnée $\pi$ de $\hat S$).
Soit $P$ un motif de $S$ de support $B_n$  et $R$ un motif sur le complément de $B_n$ qui impose la valeur du résumé ${\cal E}_n$ de $P$ (comme spécifié dans la Définition~\ref{d:epitome}). Appelons $Y$ une configuration de $\hat S$ dont la projection par $\pi$ donne l'union de $P$ et $R$.
Soit $Q$ un motif de taille $n\times n$ dans $Y$ tel que $P$ est la projection coordonnée par coordonnée de $Q$.
Soit $\partial Q$ la ``bordure'', i.e., le voisinage immédiat de $Q$ (la bordure de $Q$ n'est pas incluse dans $Q$; si $Q$ est un carré de taille $n\times n$, alors la bordure $\partial Q$ de $Q$ consiste en $4(n+1)$ lettres, voir Fig.~\ref{f:sft2sofique}).

Nous supposons sans perte de généralité que les contraintes locales de $\hat S$ n'impliquent que des paires de nœuds voisins de $\mathbb{Z}^2$. Alors, tout motif $Q'$ de taille $n\times n$ localement admissible qui est localement compatible avec la bordure $\partial Q$, est également compatible avec le reste de la configuration $Y$. Par conséquent, les projections de ces motifs $Q'$ par$\pi$ sont  compatibles avec $R$.
Nous appliquons la définition des résumés ; si ${\cal E}_n$ est défini sur les projections par $\pi$ de tels $Q'$, alors les valeurs du résumé sont égales à ${\cal E}_n(P)$.

\begin{figure}
\begin{tikzpicture}[scale=.6,on grid]

%
   \begin{scope}[
           yshift=-95,every node/.append style={
           yslant=0.5,xslant=-1},yslant=0.5,xslant=-1
           ]
       \draw[step=4mm, black] (0,0) grid (5,5);
       \draw[black,thick] (0,0) rectangle (5,5);
              \foreach \x in {-1,...,4} \foreach \y in {-1,...,4} 
           \fill[black!50] (1.65+0.4*\x,1.95+0.4*\y) rectangle (1.95+0.4*\x,1.65+0.4*\y); 


   \end{scope}

   \begin{scope}[
           xshift=0.0,yshift=0.5,every node/.append style={
           yslant=0.5,xslant=-1},yslant=0.5,xslant=-1
           ]
       \fill[white,fill opacity=0.80] (0,0) rectangle (5,5);
       \draw[step=4mm, black] (0,0) grid (5,5); 
       \draw[black,thick] (0,0) rectangle (5,5);

       \foreach \x in {-1,...,4} \foreach \y in {-1,...,4} 
           \fill[blue!70] (1.65+0.4*\x,1.95+0.4*\y) rectangle (1.95+0.4*\x,1.65+0.4*\y);

    \foreach \x in {-1,...,6} 
       \fill[red!41] (1.25+0.4*\x,1.55-0.4) rectangle (1.55+0.4*\x,1.25-0.4);
    \foreach \x in {-1,...,6} 
       \fill[red!41] (1.25+0.4*\x,1.55+6*0.4) rectangle (1.55+0.4*\x,1.25+6*0.4);
    \foreach \x in {-1,...,6} 
       \fill[red!41] (1.25-0.4+0*\x,1.55+\x*0.4) rectangle (1.55-0.4+0*\x,1.25+\x*0.4);
    \foreach \x in {-1,...,6} 
       \fill[red!41] (1.25+2+0.4,1.55+\x*0.4) rectangle (1.55+2+0.4,1.25+\x*0.4);
   \end{scope}

%
%
 
   \draw[-latex,ultra thick,black](-3,5)node[left]{motif $Q$ de taille $n\times n$}
       to[out=0,in=90] (0,2.6);
   \draw[-latex,ultra thick,black](-4,4)node[left]{bordure $\partial Q$ }
       to[out=0,in=90] (-.8,1.4);


   \draw[thick,black!30,dashed] (2.0,2.4)  -- (2.0,1.0);
   \draw[-latex,thick,black,dashed](2.0,1.0)  to (2.0,-1.0);
   \node[color=black] at (2.4,0.85) {$\pi$};
   \node [draw,circle,inner sep=0.5pt,fill] at (2.0,2.4) {};

   \draw[-latex,ultra thick,black](-4,-3)node[left]{motif $P$ de taille $n\times n$}
       to[out=0,in=200] (-0.5,-1.3);
   \draw[thick,black](6.4,4.5) node {une configuration dans un SFT};
   \draw[thick,black](7.3,-2.8) node {une configuration dans un shift sofique};

\end{tikzpicture}
\caption{Projection d'un motif de taille $n\times n$ d'un shift de type fini vers un shift sofique.} \label{f:sft2sofique}
\end{figure}
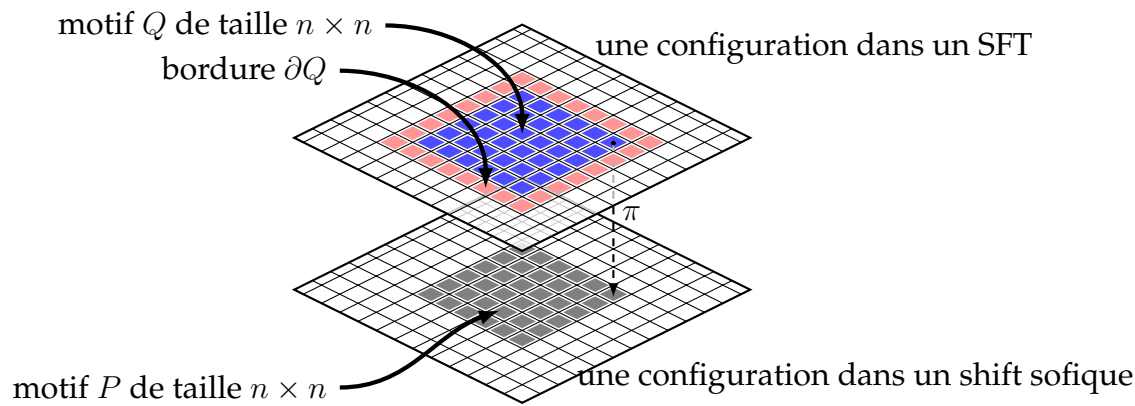

Il s'ensuit que ${\cal E}_n(P)$ peut être calculé étant donné le coloriage de $\partial Q$:  nous utilisons une recherche exhaustive pour trouver tous les motifs $Q'$ localement compatibles avec $\partial Q$, appliquons la projection $\pi$ sur ceux-ci, et calculons en parallèle pour chacun d'eux leur résumé. Dès que nous avons trouvé au moins un $Q'$ compatible avec  $\partial Q$ tel que le résumé de sa projection par $\pi$ est défini, nous avons fini :
même si la projection $\pi(Q')$ peut être différente de $P$, la valeur du résumé de $\pi(Q')$  coïncide avec celle du résumé de $P$. Puisque la taille de $\partial Q$ est linéaire en $n$, nous concluons que 
$
\CK({\cal E}_n(P)) = O(n).
$

Nous pouvons remarquer que pour une bordure $\partial Q$ dans $\hat S$, il peut exister plusieurs motifs $Q$ différents (dans $\hat S$), qui peuvent être projeté sur des motifs $P$ de $S$ différents ; néanmoins, tous ces motifs $P$ auront le même résumé. C'est pourquoi il est suffisant de trouver le premier motif $Q$ compatible avec la bordure $\partial Q$, puis de calculer sa projection.

Pour prouver le fait~(b), il nous suffit d'observer que la recherche exhaustive de tous les motifs de taille $n\times n$ s’exécute en temps $2^{O(n^2)}$, nous avons donc
$
\CK^{2^{O(n^2)}}({\cal E}_n(P)) = O(n).
$
Il est facile de vérifier que la preuve se relativise (chaque étape de l'argumentation reste valide avec un oracle $\cal O$ fixé), et donc nous obtenons le fait~(c).
\end{proof}

Nous utiliserons les résumés avec oracle dans la Section~\ref{s:km}.

La Proposition~\ref{p:epitomes} donne une condition nécessaire pour être sofique. Pour montrer qu'un shift n'est pas sofique, nous devons trouver une famille de résumés calculable (ou calculable en temps exponentiel) avec une grande complexité de Kolmogorov (ou, respectivement, de grande complexité  de Kolmogorov à ressources bornées).

\begin{remark}
Quand nous donnons à un algorithme un accès à un oracle, sa puissance de calcul peut augmenter très fortement. Ainsi, quand nous affirmons que 
 $\CK^{ \cal O}(w) = O(n)$ pour un oracle $\cal O$ (non spécifié), nous n'affirmons pas grand chose sur la chaîne de caractères binaires $w$. Cependant, pour tout oracle nous pouvons toujours affirmer qu'il ne peut y avoir qu'au plus $2^{O(n)}$ objets de complexité $O(n)$. Donc quand nous affirmons que $\CK^{ \cal O}({\cal E}_n(P)) = O(n)$ pour tout $P$, cela équivaut à dire qu'il existe $2^{O(n)}$ valeurs pour les résumés ${\cal E}_n$. Autrement dit, les limites sur la complexité de Kolmogorov avec un oracle (et sans limite de ressources) sont seulement un langage différent pour les traditionnels arguments de comptage.
En revanche, le sens de la complexité de Kolmogorov devient plus subtil quand nous restreignons les ressources de calcul à disposition des algorithmes ; les énoncés $\CK^{T(n)}(w) = O(n)$  ou $\CK^{T(n),{\cal O}}(w) = O(n)$ ne peuvent être réduits à un simple argument de comptage.
\end{remark}

Dans la suite nous présentons deux exemples simples d'application de cette technique. Nous commençons par l'exemple du shift symétrique en miroir.

\begin{examplerev}{ex:mirror}
\label{ex:mirror-revisited}

Soit $S_{\text{mirror}}$ le shift de l'Exemple~\ref{ex:mirror} du chapitre d'introduction (les configurations symétriques en miroir).
Nous ne pouvons appliquer le Théorème~\ref{th:dls} directement et conclure que le shift n'est pas sofique. L'obstacle réside dans le fait que ce shift admet des motifs avec une très petite complexité de Kolmogorov à ressources bornées; par exemple, le shift admet la configuration avec une ligne horizontale  infinie en rouge et uniquement des lettres blanches au-dessus et en-dessous de celle-ci.
Pour cet exemple nous définissons les fonctions de résumé ${\cal E}_n$ de la manière suivante :
\begin{itemize}
\item Si un motif $P$ de taille $n\times n$ contient uniquement des lettres noires et blanches, alors ${\cal E}_n(P)$ est égal à $P$ (de manière informelle, ${\cal E}_n$ ne compresse pas les motifs en noir et blanc);
\item  ${\cal E}_n$ n'est pas défini pour les motifs contenant au moins une lettre rouge.
\end{itemize}
La famille ${\cal E}_n$ définie satisfait la définition d'une famille de résumés calculable, puisque la partie de la configuration au-dessus de la ligne rouge détermine de manière unique tous les motifs noir et blanc situés en-dessous de celle-ci. 
Plus précisément, pour tout motif $P$ de taille $P$ qui contient uniquement des lettres noires et blanches, 
soit $R$ le motif infini défini de la manière suivante :
\begin{itemize}
\item le support de $R$ est le complément du support de $P$ de taille $n\times n$,
\item la ligne horizontale dans $R$ située une ligne au-dessus de $P$ est constituée uniquement de lettres rouges,
\item les positions symétriques (par rapport à la ligne rouge) aux positions des lettres noires de $P$ sont noires,
\item toutes les autres lettres de $R$ sont blanches,
\end{itemize}
comme montré dans la Fig.~\ref{f:p-and-r}. Il est clair que $R$ est compatible avec $P$ et n'est pas compatible avec d'autres motifs de taille $n\times n$, comme demandé dans la définition des résumés.

Puisqu'il y a $2^{n^2}$ motifs de taille $n\times n$ avec des lettres noires et blanches, pour certains motifs de taille $n\times n$ nous avons
 $
 \CK({\cal E}_n(P)) \ge n^2.
 $
Par conséquent, nous pouvons appliquer la Proposition~\ref{p:epitomes}~(a) et conclure que le shift n'est pas sofique.
\begin{figure}[H]

  \centering
  \begin{tikzpicture}[scale=0.25,x=1cm,baseline=2.125cm]

  \pgfmathsetseed{1}
    \foreach \x in {1,...,12} \foreach \y in {1,...,7}
    {
        \pgfmathparse{mod(int(random*23),2) ? "black!10" : "black!66"}
        \edef\colour{\pgfmathresult}
        \path[fill=black!10,draw=black] (\x,7-\y) rectangle ++ (1,1);
        \path[fill=black!10,draw=black] (\x,7+\y) rectangle ++ (1,1);
    }

    \foreach \x in {6,...,10} \foreach \y in {1,...,5}
    {
        \pgfmathparse{mod(int(random*23),2) ? "black!10" : "black!66"}
        \edef\colour{\pgfmathresult}
        \path[fill=\colour,draw=black] (\x-20,7-\y) rectangle ++ (1,1);
        \path[fill=\colour,draw=black] (\x,7-\y) rectangle ++ (1,1);
        \path[fill=\colour,draw=black] (\x,7+\y) rectangle ++ (1,1);
    }

     \foreach \x in {1,...,12} 
    {
        \path[fill=red!41,draw=black] (\x,7) rectangle ++ (1,1);
    }

\path[fill=white,draw=blue, ultra thick] (6,5-3) rectangle ++ (5,5);
\path[draw=orange, ultra thick, dashed] (6,4-3+7+0.1) rectangle ++ (5,5);
\path[draw=blue, ultra thick] (6-20,5-3) rectangle ++ (5,5);

    \draw[-latex,ultra thick,blue!95](-18,10.5)node[left]{$P$}    to[out=0,in=90] (-11.5,7.3);    
    \draw[-latex,ultra thick,blue!95](20,3.5)node[below]{$R$}    to[out=90,in=0] (13.2,7.5); 

    \draw[-latex,ultra thick,blue!95, dashed](-11.5,3.8)node[below]{}    to[out=-30,in=-135] (8.0,4.2);

\end{tikzpicture}

\caption[Un motif $P$ de taille $n\times n$ composé de lettres noires et blanches et le motif $R$ correspondant sur le domaine complémentaire.]{Un motif $P$ de taille $n\times n$ composé de lettres noires et blanches et le motif $R$ correspondant sur le domaine complémentaire. Un cadre orange indique les lettres de $R$ qui sont symétriques en miroir à celles de $P$.} 
\label{f:p-and-r}
\vspace{-5pt}

\end{figure}
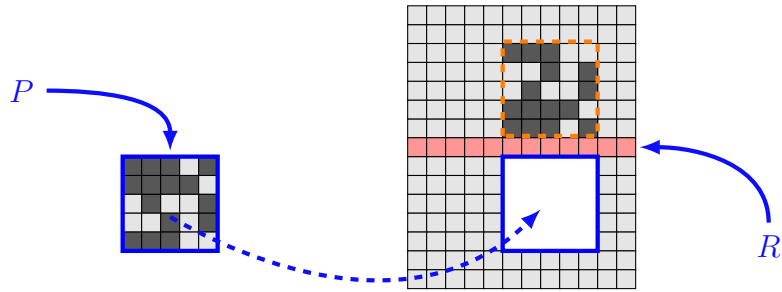
\end{examplerev}

\begin{example}[Le shift miroir avec une petite complexité de Kolmogorov simple]
\label{ex:mirror-low-complexity}
Soit le shift $S_{\text{mirror}}$ de l'Exemple~\ref{ex:mirror}:
Nous n'admettons toujours que les configurations symétriques, mais nous demandons en outre que les motifs $P$ de taille $n\times n$ noir et blanc soient globalement admissibles pour le shift ${S}_{0}$  défini dans la Remarque~\ref{r:paradoxical-shift},
p.~\pageref{r:paradoxical-shift}.
Une configuration non dégénérée de ce nouveau shift se présente ainsi :
il y a une ligne horizontale infinie composée de lettres rouges, et deux demi-plans (constitués de lettres noires et blanches) au-dessus et en-dessous de cette ligne, tels que tout motif $P$ de taille $n\times n$ contenu dans un demi-plan vérifie
\[
 \CK^{\hat T(n)}(P) \le \hat \lambda \log n
\]
(le choix de $\hat \lambda$ et de $ \hat T(n)$ est expliqué dans la Remarque~\ref{r:paradoxical-shift}).
Le nouveau shift est effectif, et le nombre de motifs globalement admissibles est au plus
$
2^{O(\log n)}=\poly(n).
$
Nous savons aussi que pour \emph{quelques} motifs de taille $n\times n$ globalement admissibles on a
\begin{equation}\label{eq:super-linear-lower-bound}
\CK^{2^{n^3}}(P) =\Omega(n^{1.5}). 
\end{equation}

La Proposition~\ref{p:epitomes}~(a) ne peut pas s'appliquer, car la complexité de Kolmogorov simple de tout motif globalement admissible (et, donc, de tout résumé calculable) est logarithmique. Cependant, dans ce cas nous pouvons utiliser les résumés calculables en temps exponentiel.

Les fonctions ${\cal E}_n$ définies dans l'Exemple~\ref{ex:mirror} (voir p.~\pageref{ex:mirror-revisited}) fournissent une famille de résumés \emph{calculables en temps exponentiel} pour ce shift. 
Pour certains motifs $P$ de taille $n\times n$ (bien que pas pour tous) nous avons \eqref{eq:super-linear-lower-bound},
et donc il s'ensuit par la Proposition~\ref{p:epitomes}~(b) que le shift n'est pas sofique.
\end{example}

\begin{remark}
Dans la définition des résumés nous avons autorisé que les fonctions ${\cal E}_n$ puissent être définies sur les motifs non admissibles. 
Dans l'exemple suivant nous montrons que cette possibilité peut être utile.
\end{remark}
\begin{figure}
\centering

  \begin{tikzpicture}[scale=0.27,x=1cm,baseline=2.125cm]
  
  \pgfmathsetseed{5}
    \foreach \x in {1,...,8} \foreach \y in {1,...,8}
    {
        \pgfmathparse{mod(int(random*23),2) ? "black!10" : "black!66"}
        \edef\colour{\pgfmathresult}
        \path[fill=\colour,draw=black] (\x,\y) rectangle ++ (1,1);
    }
    
     \foreach \x in {1,...,8}
    {
    \path[fill=red!41,draw=black] (\x,0) rectangle ++ (1,1);
    \path[fill=red!41,draw=black] (\x,9) rectangle ++ (1,1);    
    }    

     \foreach \y in {0,...,9}
    {
    \path[fill=red!41,draw=black] (0,\y) rectangle ++ (1,1);
    \path[fill=red!41,draw=black] (9,\y) rectangle ++ (1,1);    
    }

\end{tikzpicture}
  \caption[Un exemple de motif spécial de taille $n\times n$ de l'Exemple~\ref{ex:busy-beaver}.]{Un exemple de motif spécial de taille $n\times n$ de l'Exemple~\ref{ex:busy-beaver} : la bordure (incluse dans le motif) est constituée de lettres rouges, et les $(n-2)\times (n-2)$ lettres restantes peuvent être noires ou blanches.}
  \label{fig:special}
\end{figure}
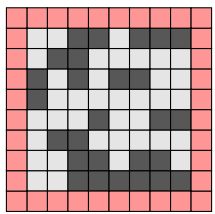  
\begin{example}
\label{ex:busy-beaver}
Dans cet exemple nous utilisons à nouveau un alphabet de trois lettres (\emph{rouge}, \emph{noir}, et \emph{blanc}). Avant de définir un shift sur cet alphabet, nous allons introduire quelques notations.
Nous appelons un motif $P$ de taille $n\times n$ \emph{special} s'il ne contient que des lettres rouges sur sa bordure, et seulement des lettres noires et blanches dans son intérieur, comme montré dans la Fig.~\ref{fig:special}. Il est commode d'imposer qu'un motif spécial contienne au moins une lettre noire, et il y a donc $2^{(n-2)^2}-1$ motifs spéciaux de taille $n\times n$.
Pour tout $n$ nous fixons une bijection entre ces motifs spéciaux de taille $n\times n$ et les chaînes de caractères binaires de taille inférieure à $(n-2)^2$. Nous appelons $x_P$ la chaîne de caractère binaire correspondant au motif $P$.

Soit $U$ une fonction calculable de la définition de la complexité de Kolmogorov (voir la Section~\ref{t:kolmogorov-invariance}). Soit un algorithme calculant $U$. Il est connu que la fonction  $U$ est partielle, donc l'algorithme diverge sur certaines entrées. 
Pour tout entier $m> 0$ nous appelons $\hat p_m$ la chaîne de caractères binaire de longueur inférieure à  $m$ telle que le calcul de $U(\hat p_m)$ converge mais prend plus de temps que le calcul de $U(p)$ pour tout autre $p$ avec $|p|<m$. En cas d'égalité, (s'il y a plusieurs $p$ nécessitant le même nombre d'étapes de calcul) nous choisissons comme $\hat p_m$ le premier pour l'ordre lexicographique.

Nous définissons alors le shift sur $\mathbb{Z}^2$ par la règle suivante : pour tout $n$, un motif spécial $P$ de taille $n\times n$ est globalement admissible si $U(x_P)$ n'est pas défini ou si $x_P = \hat p_{(n-2)^2}$ (i.e., si le motif correspond au résultat du plus long calcul de $U$ parmi ceux qui terminent pour toutes les entrées de taille inférieure à $(n-2)^2$). Il n'y a pas d'autre contrainte : une configuration appartient au shift si et seulement si elle ne contient aucun motif spécial interdit par la règle.

Le shift est effectif : nous devons interdire les motifs de complexité faible. Il suffit donc pour cela d'énumérer les motifs ayant une description courte. Comme l'ensemble des entrées $p$ (les descriptions) pour lesquelles le calcul de $U(p)$ converge est récursivement énumérable, l'ensemble des motifs spéciaux interdits l'est aussi. 

Nous allons maintenant montrer que le shift n'est pas sofique.
Pour cela, nous définissons une famille de résumés ${\cal E}_n$ de la manière suivante : 
\begin{itemize}
\item pour tout motif spécial $P$ de taille $n$ nous fixons ${\cal E}_n(P)= P$ si $U(x_P)$ est défini;
\item  ${\cal E}_n$ n'est pas défini pour les autres motifs.
\end{itemize}
La famille de résumés ${\cal E}_n$ est clairement calculable.
On peut noter que ${\cal E}_n$ est défini pour de nombreux motifs non admissibles et pour un et un seul motif globalement admissible de taille $n\times n$ (à savoir le motif correspondant au plus long calcul de $U$). Il est clair que la condition pour être une famille de résumés est satisfaite.

Étant donné $\hat p_{(n-2)^2}$, nous pouvons trouver le nombre maximal d'étapes de calcul pour que $U(p)$  converge pour $|p| <(n-2)^2$.
Cette information nous permet alors de trouver tous les calculs de $U(p)$ qui convergent pour $|p| <(n-2)^2$, et donc d'obtenir toutes les chaînes de caractères binaires $x$ ayant une complexité de Kolmogorov strictement inférieure à $(n-2)^2$. Nous pouvons alors renvoyer la première chaîne de caractères binaires incompressible de taille $(n-2)^2$. 
Autrement dit, étant donné $\hat p_{(n-2)^2}$ nous pouvons trouver un objet avec une complexité de  Kolmogorov d'au moins $(n-2)^2$. Par conséquent, la complexité de Kolmogorov de $\hat p_{(n-2)^2}$ elle-même est proche de $(n-2)^2$.
Ainsi, pour le seul motif $P$ de taille $n\times n$ globalement admissible pour lequel ${\cal E}_n(P)$ est défini, nous avons
\[
\CK({\cal E}_n(P)) = \Omega(n^2).
\]
Il s'ensuit d'après la Proposition~\ref{p:epitomes}~(a) que le shift n'est pas sofique. 

Nous pouvons noter pour conclure que dans cet exemple nous avons utilisé un résumé défini sur de nombreux motifs non admissibles, et que l'union des domaines de ${\cal E}_n$ pour $n\ge 0$ est récursivement énumérable mais non décidable. Ces propriétés semblent être essentielles dans cette preuve.
\end{example}

\subsection{Résumés ordonnés}\label{s:ordered-epitomes}
Dans l'article \cite{kass-madden}, S.Kass et K.Madden développent une technique générale permettant de prouver qu'un shift n'est pas sofic. En particulier, ils appliquent celle-ci sur un nouveau shift, dans l'Exemple~2.5 de leur article (nous allons présenter cet exemple par la suite).

Les preuves basées sur la Définition~\ref{d:epitome} ne s'appliquent pas à cet exemple et aux shifts similaires. Pour prouver que ces shifts ne sont pas sofiques, nous introduisons une définition un peu plus générale des résumés :
\begin{definition}
\label{d:ordered-epitome}
Soit $E_n$ un ensemble fini muni d'un ordre partiel $\preccurlyeq_n$, et
\[
{\cal E}_n \ : \ [\text{motif de taille }n\times n] \mapsto [\text{élément de }E_n]
\]
une fonction \emph{partielle}, pour tout entier $n>0$. Nous appelons $({\cal E}_n, \preccurlyeq_n)$ une famille de \emph{résumés ordonnés} pour un shift $S$, 
si pour tout motif $P$ de support $B_n$ globalement admissible tel que ${\cal E}_n(P)$ soit défini, il existe un motif $R$ sur $F_n$
tel que
\begin{itemize}
\item[(i)] $R$ est compatible avec $P$, i.e., l' union de $P$ et $R$ constitue une configuration valide de $S$, et
\item[(ii)]  pour tout motif $P'$ de $B_n$ compatible avec $R$, si ${\cal E}_n(P')$ est défini alors
\[
{\cal E}_n(P')  \preccurlyeq_n  {\cal E}_n(P) 
\]
\end{itemize}
\textup(i.e., cette configuration $R$ sur le complément de $B_n$ détermine le maximum des résumés ${\cal E}_n$ pour tout motif $P'$ valide\textup).  

Nous appelons une famille de résumés ordonnés \emph{calculable} 
s'il existe des algorithmes qui calculent les relations $\preccurlyeq_n$ et les fonctions ${\cal E}_n$. 
Précisons que le calcul est uniforme : il existe deux algorithmes qui calculent respectivement $\preccurlyeq_n$ et ${\cal E}_n$ pour tout entier $n>0$.

Si, de plus, ${\cal E}_n$ et $\preccurlyeq_n$ sont calculables en temps $2^{O(n^2)}$, nous appelons cette famille de résumés ordonnés \emph{calculable en temps exponentiel}.
Comme il est courant, un algorithme calculant une fonction partielle calculable ne doit renvoyer aucun résultat pour une entrée en dehors de son domaine.

De manière similaire, nous appelons une famille de résumés ordonnés \emph{calculable avec un oracle $\cal O$} s'il existe deux algorithmes qui 
calculent respectivement les relations $\preccurlyeq_n$  et les fonctions ${\cal E}_n$ étant donné un accès à l'oracle $\cal O$. 
Nous définissons aussi les résumés ordonnés \emph{calculable en temps exponentiel} avec un oracle.
\end{definition}

Les résumés simples de la Définition~\ref{d:epitome} peuvent être vus comme un cas particulier de la Définition~\ref{d:ordered-epitome}.
Si ${\cal E}_n$ est une famille de résumés calculable (ou calculable en temps exponentiel) dans le sens de la Définition~\ref{d:epitome}
et si $\preccurlyeq_n$ est un ordre calculable (calculable en temps exponentiel) arbitraire sur les résumés ${\cal E}_n$, alors $({\cal E}_n,\preccurlyeq_n)$ est une famille de résumés ordonnés calculable (ou respectivement calculable en temps exponentiel) dans le sens de la Définition~\ref{d:ordered-epitome}.
Dans la Définition~\ref{d:epitome} le voisinage $R$ détermine la valeur exacte de ${\cal E}_n(P')$ pour tous les motifs $P'$ compatible avec $R$, 
alors que dans la  Définition~\ref{d:ordered-epitome} $R$ détermine seulement le maximum des ${\cal E}_n(P')$.

\begin{proposition}\label{pbis:epitomes} 
(a) Pour tout shift sofique avec une famille de résumés ordonnés calculable $({\cal E}_n,\preccurlyeq_n)$, pour tout motif $P$ de taille $n\times n$ globalement admissible tel que ${\cal E}_n(P)$ est défini, nous avons
 $
 \CK({\cal E}_n(P)) = O(n).
 $
 
(b) Pour tout shift sofique avec une famille de résumés ordonnés calculable en temps exponentiel $({\cal E}_n,\preccurlyeq_n)$, pour tout motif $P$ de taille $n\times n$ globalement admissible tel que ${\cal E}_n(P)$ est défini, nous avons
 $
 \CK^{T(n)}({\cal E}_n(P)) = O(n)
 $
pour une fonction de seuil $T(n)=2^{O(n^2)}$.

(c) Les propositions (a) et (b) relativisées. C'est-à-dire que, pour tout shift sofique avec une famille de résumés ordonnés $({\cal E}_n,\preccurlyeq_n)$ \emph{calculable avec un oracle} $\cal O$ (ou \emph{calculable en temps exponentiel avec un oracle}  $\cal O$), pour tout motif $P$ de taille $n\times n$ globalement admissible tel que ${\cal E}_n(P)$ est défini, nous avons
 $
 \CK^{\cal O}({\cal E}_n(P)) = O(n)
 $
 (ou, respectivement,   $\CK^{T(n), \cal O}({\cal E}_n(P)) = O(n)$ pour une fonction de seuil $T(n)=2^{O(n^2)}$).

\end{proposition}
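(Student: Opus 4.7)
The plan is to adapt the argument of Proposition~\ref{p:epitomes} to the ordered setting. The only substantive change is that the boundary of a lift no longer pins down the value of the epitome uniquely: it only furnishes an upper bound in the partial order $\preccurlyeq_n$. The algorithm will therefore enumerate all candidates and return the $\preccurlyeq_n$-maximum instead of just any value.

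First, suppose $S$ is a sofic shift with SFT cover $\hat S$ and projection $\pi$. Take a globally admissible pattern $P$ of support $B_n$ with ${\cal E}_n(P)$ defined, apply Definition~\ref{d:ordered-epitome} to obtain a compatible configuration $R$ on $F_n$, and lift $P\cup R$ to a configuration $Y$ of $\hat S$. Let $Q$ be the restriction of $Y$ to $B_n$ and $\partial Q$ its immediate boundary, which has size $O(n)$. As in the proof of Proposition~\ref{p:epitomes}, we may assume the local constraints of $\hat S$ involve only pairs of neighbouring cells, so any pattern $Q'$ of support $B_n$ that is locally compatible with $\partial Q$ can be extended by the rest of $Y$ to a full valid configuration; hence $P':=\pi(Q')$ is compatible with $R$.

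Next, we describe the algorithm that computes ${\cal E}_n(P)$ from $\partial Q$. Enumerate exhaustively all $Q'$ of support $B_n$ that are locally admissible and locally compatible with $\partial Q$; form $P':=\pi(Q')$; evaluate ${\cal E}_n(P')$ and collect the defined values into a list $L$. By the previous paragraph every $P'$ is compatible with $R$, so by condition~(ii) of Definition~\ref{d:ordered-epitome} every element of $L$ is $\preccurlyeq_n {\cal E}_n(P)$. Moreover $Q$ is itself locally compatible with $\partial Q$ and $\pi(Q)=P$, so ${\cal E}_n(P)$ belongs to $L$. Consequently ${\cal E}_n(P)$ is simultaneously an element and an upper bound of $L$, hence it is the unique $\preccurlyeq_n$-maximum of $L$; the algorithm outputs it. This yields a description of ${\cal E}_n({P})$ of size $|\partial Q|+O(1)=O(n)$, proving~(a).

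For part~(b) the only additional observation is that the exhaustive search over the $|\Sigma'|^{n^2}$ patterns $Q'$ runs in $2^{O(n^2)}$ steps, and each evaluation of ${\cal E}_n$ and each $\preccurlyeq_n$-comparison takes $2^{O(n^2)}$ by the computability-in-exponential-time hypothesis; the total running time is therefore $2^{O(n^2)}$, giving $\CK^{T(n)}({\cal E}_n(P))=O(n)$ for some $T(n)=2^{O(n^2)}$. Part~(c) follows because every step of the argument — enumeration, projection, evaluation of ${\cal E}_n$, and comparison under $\preccurlyeq_n$ — relativises verbatim to an oracle $\cal O$. The only conceptual subtlety is to notice that well-definedness of the $\preccurlyeq_n$-maximum of $L$, which would be problematic for a generic partial order, is automatic here: $L$ contains an element that dominates all of its members.
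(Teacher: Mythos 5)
Your treatment of part (b) (and its relativisation in (c)) is correct and coincides with the paper's proof: enumerate all $Q'$ locally compatible with $\partial Q$, project by $\pi$, evaluate ${\cal E}_n$ using the exponential time bound as a timeout, and output the $\preccurlyeq_n$-maximum of the collected values, which exists because ${\cal E}_n(P)$ itself is in the list and dominates every element of it.

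Part (a), however, has a genuine gap. There ${\cal E}_n$ is only assumed computable as a \emph{partial} function, with no time bound and no decidability assumption on its domain. Your algorithm must ``evaluate ${\cal E}_n(P')$ and collect the defined values into a list $L$'', but this step is not effective: dovetailing the computations only \emph{enumerates} $L$, and at no finite stage can the algorithm certify that every converging computation has already converged, so it can never safely output the current $\preccurlyeq_n$-maximum (the run producing ${\cal E}_n(P)$ itself may converge arbitrarily late, and a non-maximal intermediate output would be wrong). Supplying as advice the number of converging computations among the $P'$ compatible with $\partial Q$ would cost on the order of $n^2$ bits, not $O(n)$, so the naive repair fails. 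This is exactly why the paper proves (a) by a different, counting-based argument: by the definition of ordered epitomes, each value ${\cal E}_n(P)$ is determined (abstractly, not computably) by some globally admissible boundary $\partial Q$ of the SFT cover, so the number of distinct values on globally admissible patterns is at most the number of such boundaries, which is $2^{O(n)}$ (Faits~\ref{claim:1} et~\ref{claim:2}); since globally admissible boundaries of an SFT are co-enumerable, knowing $n$ and the number $N_n$ of globally admissible boundaries ($O(n)$ bits) one can effectively enumerate all values of ${\cal E}_n$ on globally admissible patterns and describe ${\cal E}_n(P)$ by its ordinal in that enumeration, again $O(n)$ bits. Your proof of (a), and hence of the oracle version of (a) in part (c), needs to be replaced by an argument of this kind.
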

\begin{proof}
Nous réutilisons les notations de la preuve de la Proposition~\ref{p:epitomes}: soit $S$ un shift sofique, $\hat S$ un shift de type fini couvrant $S$, et $\pi$ une projection coordonnée par coordonnée de $\hat S$ sur $S$. Nous supposons que les contraintes locales de $\hat S$ n'impliquent que des paires de nœuds voisins dans $\mathbb{Z}^2$.
Pour un motif $P$ de taille $n\times n$ de $S$, nous appelons $Q$ un motif de taille $n\times n$ de $\hat S$ tel que son projeté par $\pi$ est $P$, et $\partial Q$ sa \emph{bordure} (les plus proches voisins) autour de $Q$ constituée de $4(n+1)$ lettres, comme montré dans la Fig.~\ref{f:sft2sofique}. Nous appelons de tels motifs $\partial Q$ des \emph{bordures} de taille $n$.

\smallskip

Nous commençons par la preuve de (b), qui est très similaire à la preuve de la Proposition~\ref{p:epitomes}. Dans la preuve précédente, il était suffisant de trouver \emph{au moins un} motif $Q'$ compatible avec la bordure $\partial Q$ tel que le résumé de $\pi(Q')$ est défini; alors en calculant le résumé de $\pi(Q')$ nous obtenions le résumé de $P$. Ici, nous devons trouver \emph{tous}  les motifs $Q'$ compatibles avec $\partial Q$, appliquer à chacun d'eux la projection $\pi$, calculer leur résumé (pour les motifs pour lesquels ${\cal E}_n$ est défini), et ensuite prendre le maximum des résultats obtenus. 
Le fait que les ${\cal E}_n$ soient des fonctions partielles ne pose pas de problème. En effet, nous avons supposé que les résumés sont calculables en temps exponentiel ; nous pouvons donc stopper les calculs qui n'ont pas convergé dans le temps imparti.
Il reste à observer que la recherche exhaustive sur l'ensemble des motifs de taille $n\times n$  nécessite l'examen de $2^{O(n^2)}$ possibilités, ce qui peut être réalisé en temps exponentiel.

\smallskip

Nous ne pouvons prouver~(a) de la même manière car le calcul des résumés peut nécessiter un temps arbitrairement long, et nous ne pouvons pas trouver de manière algorithmique la liste finale des ${\cal E}_n(P')$ pour tous les motifs $P'$ compatibles avec $\partial Q$. Dans ce cas nous avons besoin d'un argument plus subtil.

Nous allons utiliser de nouveau le fait suivant : 
\begin{claim}\label{claim:1}
Le nombre de bordures $\partial Q$ de taille $n$ globalement admissibles de $\hat S$ n'est pas plus grand que $2^{O(n)}$.
\end{claim}

La preuve de ce fait est assez trivial :
\begin{proof}
Le domaine de ces bordures consiste en $4(n+1)$ nœuds, et le nombre de motifs de cette taille (admissibles ou non) est au plus de $|\Sigma|^{4(n+1)}$, où $\Sigma$ est l'alphabet de $\hat S$.
\end{proof}

\begin{claim}\label{claim:2}
Le nombre de valeurs de ${\cal E}_n(P)$ pour les motifs $P$ globalement admissibles de $S$ n'est pas plus grand que le nombre de toutes les bordures $\partial Q$ de taille $n$ qui sont globalement admissibles dans $\hat S$.
\end{claim}
\begin{proof}
Par la définition des résumés ordonnés, pour toute valeur de ${\cal E}_n(P)$ il existe un motif $R$ sur $F_n$ qui détermine implicitement ${\cal E}_n(P)$ : le maximum des ${\cal E}_n(P')$ pour tous les motifs $P'$ compatibles avec $R$ est égal à ${\cal E}_n(P)$. Bien que $R$ soit infini, les ``interactions'' entre $P$ et $R$ passent à travers un contour de taille linéaire. 
Comme nous l'avons vu dans la preuve de la Proposition~\ref{p:epitomes}, nous pouvons choisir une bordure $\partial Q$ de taille $n$ (globalement admissible dans $\hat S$) 
avec la propriété suivante :
si nous prenons l'ensemble des motifs $Q$ de taille $n\times n$ localement compatibles avec $\partial Q$ dans $\hat S$, et considérons les projections par $\pi$ de ces motifs, nous obtenons le motif recherché  $P$ et potentiellement quelques autres motifs $P'$ (globalement admissibles pour $S$) tels que ${\cal E}_n(P')$, si défini, est inférieur ou égal à ${\cal E}_n(P)$ dans le sens de $\preccurlyeq_n$.
Par conséquent, le choix de la bordure $\partial Q$ de taille $n$ détermine de manière unique la valeur de ${\cal E}_n(P)$. 

On peut noter que $\partial Q$ détermine ${\cal E}_n(P)$ seulement dans un sens abstrait ; nous ne pouvons pas calculer ${\cal E}_n(P)$ à partir de $\partial Q$ de manière algorithmique. En effet, puisque ${\cal E}_n$ est une fonction partielle (et que l'algorithme calculant ${\cal E}_n$ ne s'arrête pas sur les entrées pour lesquelles la fonction n'est pas définie), nous ne pouvons trouver de manière algorithmique la valeur maximale des ${\cal E}_n(P')$ pour tous les $P'$ obtenus à partir de $\partial Q$. Cependant, nous pouvons affirmer que le nombre de valeurs des ${\cal E}_n(P)$ pour tous les motifs $P$ globalement admissibles (dans le shift $S$) n'est pas plus grand que le nombre de bordures $\partial Q$ globalement admissibles (dans le shift $\hat S$).
\end{proof}

L'ensemble des bordures de taille $n$ globalement admissibles d'un shift de type fini (en fait, d'un shift énumérable) est co-énumérable: nous pouvons examiner tous les motifs rangés par taille croissante, détecter ceux qui contiennent des motifs interdits, et ainsi obtenir une par une les bordures inadmissibles (celles dont toutes les extensions contiennent des motifs interdits).  
En utilisant cette procédure nous finirons par trouver tous les motifs non admissibles ; par contre, nous ne pouvons \emph{à priori} pas savoir quand la dernière bordure de taille $n$ inadmissible est découverte.

Notons $N_n$ le nombre de bordures de taille $n$ globalement admissibles pour $\hat S$. Étant donné ce nombre (sa représentation binaire) nous pouvons déterminer le moment où l'algorithme décrit ci-dessus finalise la liste des bordures de taille $n$ globalement admissibles (i.e., quand toutes les bordures de taille $n$ inadmissibles ont été trouvées. Ainsi, étant donné la liste de toutes les bordures de taille $n$ non éliminées (globalement admissibles), nous pouvons trouver tous les motifs de taille $n\times n$ de $\hat S $ qui sont localement compatibles avec au moins une des bordures de taille $n$ localement admissibles (ces motifs sont également globalement admissibles). Nous pouvons alors appliquer à chacun d'eux la projection $\pi$ et obtenir les motifs de taille $n\times n$ qui sont globalement admissibles dans $S$. Enfin, nous appliquons l'algorithme calculant ${\cal E}_n$ à chaque motif obtenu en exécutant ces calculs en parallèle. Chacune des valeurs de ${\cal E}_n(P)$ pour les motifs $P$ globalement admissibles sera tôt ou tard trouvée par cette procédure.

Ainsi, pour décrire une valeur spécifique ${\cal E}_n$, nous devons donc décrire la procédure d'énumération expliquée ci-dessus et spécifier le nombre ordinal de la valeur ${\cal E}_n$ recherchée dans cette énumération (son ordre d'arrivée). Pour pouvoir mener à bien ce processus d'énumération, nous devons connaître les nombres $n$ et $N_n$. Par le Fait~\ref{claim:1}, nous savons que la représentation binaire de $N_n$ est constituée de seulement $O(\log N_n) = O(n)$ bits.
Le nombre de valeurs différentes de ${\cal E}_n$ pour des motifs globalement admissibles est borné par  $N_n$ (Fait~\ref{claim:2}) ; donc le nombre ordinal d'un élément de cette énumération peut être spécifié avec $\log N_n = O(n)$ bits (à nouveau par le Fait~\ref{claim:1}). Par conséquent, toute valeur de ${\cal E}_n$ peut être décrite par un algorithme de taille $O(n)$, ce qui est l'énoncé~(a) du théorème.

\smallskip

Les arguments utilisés dans les preuves de (a) et (b) peuvent être relativisés, et nous obtenons l'énoncé~(c) pour les calculs avec un oracle.
\end{proof}

La Proposition~\ref{pbis:epitomes} nous fournit une nouvelle condition nécessaire pour être sofique. Pour prouver qu'un shift n'est pas sofique, il est donc suffisant de construire une famille de résumés ordonnés calculables (ou calculables en temps exponentiel) avec une complexité de Kolmogorov simple (respectivement une complexité de Kolmogorov à ressources bornées) sur-linéaire.

\begin{example}[semi-mirror shift]
\label{ex:semi-mirror}
Dans cet exemple, nous définissons un shift qui étend celui défini dans l'Exemple~\ref{ex:mirror}.
Soit $\Sigma$ l'alphabet avec trois lettres (par exemple, \emph{noir}, \emph{blanc}, et \emph{rouge}) ;
les configurations admissibles pour ce shift sont toutes les configurations en noir et blanc (sans lettre rouge) et les configurations avec une ligne rouge horizontale infinie et de deux demi-plans au-dessus et en-dessous d'elle qui sont \emph{semi-symétriques} dans le sens suivant :
si la lettre dans la $i$ème colonne à distance $k$ \emph{en-dessous} de la ligne rouge horizontale est noire, alors
la lettre dans la même $i$ème colonne à distance $k$ \emph{au-dessus} de la ligne rouge doit également être noire.
Autrement dit, nous pouvons prendre deux demi-plans symétriques par rapport à la ligne rouge, et ensuite transformer des lettres noires du demi-plan du dessous en lettres blanches, voir Fig.~\ref{f:semi-mirror}. 
Nous appelons ce shift $S_{\text{semi-mirror}}$.

\begin{figure}[H]
\begin{center}

  \centering
  \begin{tikzpicture}[scale=0.25,x=1cm,baseline=2.125cm]

  \pgfmathsetseed{2}
    \foreach \x in {1,...,12} \foreach \y in {1,...,7}
    {
        \pgfmathparse{mod(int(random*23),2) ? "black!10" : "black!66"}
        \edef\colour{\pgfmathresult}
        \path[fill=\colour,draw=black] (\x,7-\y) rectangle ++ (1,1);
        \path[fill=\colour,draw=black] (\x,7+\y) rectangle ++ (1,1);
    }
    
     \foreach \x in {1,...,12} 
    {
        \path[fill=red!41,draw=black] (\x,7) rectangle ++ (1,1);
    }

        \path[fill=black!10,draw=blue,ultra thick] (1,7-4) rectangle ++ (1,1);
        \path[fill=black!10,draw=blue,ultra thick] (3,7-3) rectangle ++ (1,1);
        \path[fill=black!10,draw=blue,ultra thick] (8,7-2) rectangle ++ (1,1);
        \path[fill=black!10,draw=blue,ultra thick] (9,7-6) rectangle ++ (1,1);

        \path[fill=black!66,draw=blue,ultra thick] (1,7+4) rectangle ++ (1,1);
        \path[fill=black!66,draw=blue,ultra thick] (3,7+3) rectangle ++ (1,1);
        \path[fill=black!66,draw=blue,ultra thick] (8,7+2) rectangle ++ (1,1);
        \path[fill=black!66,draw=blue,ultra thick] (9,7+6) rectangle ++ (1,1);

\end{tikzpicture}

\caption[Une configuration du shift $S_{\text{semi-mirror}}$.]{Une configuration du shift $S_{\text{semi-mirror}}$ (nous indiquons en bleu les positions où les demi-plans du dessous et du dessous diffèrent).} \label{f:semi-mirror}
\vspace{-5pt}

\end{center}

\end{figure}
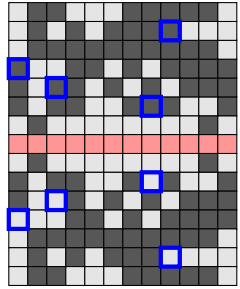

Il est facile de se convaincre que le shift est effectif : les motifs interdits sont ceux où les lettres rouges ne sont pas alignées, et ceux pour lesquels les zones au-dessus et en-dessous de la ligne rouge ne respectent pas la propriété de semi-symétrie.
Nous allons maintenant montrer que ce shift n'est pas sofique. Pour cela nous allons utiliser la technique des résumés ordonnés. 

Nous définissons la famille de résumés ${\cal E}_n$ sur les motifs de taille $n\times n$ qui ne contiennent que des lettres noires et blanches.
Pour un tel motif $P$ le résultat de ${\cal E}_n$ est le motif $P$ lui-même (ou l'encodage en binaire de ce motif de taille $n\times n$). 
${\cal E}_n$ n'est pas défini pour les motifs constitués d'au moins une lettre rouge.
Donc, par définition, il existe $2^{n^2}$ valeurs possibles pour ${\cal E}_n$.

Nous définissons l'ordre partiel $\preccurlyeq_n$ de la manière suivante : ${\cal E}_n(P_1) \preccurlyeq_n {\cal E}_n(P_2)$ si pour chaque position avec une lettre noire dans $P_1$
la position correspondante dans $P_2$ contient aussi une lettre noire, 
voir Fig.~\ref{f:p1-le-p2}.

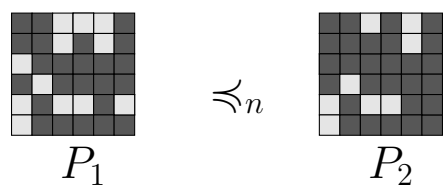
\begin{figure}[H]
\vspace{5pt}
  \centering
  \begin{tikzpicture}[scale=0.27,x=1cm,baseline=2.125cm]
  
  \pgfmathsetseed{1}
    \foreach \x in {1,...,6} \foreach \y in {1,...,6}
    {
        \pgfmathparse{mod(int(random*23),2) ? "black!10" : "black!66"}
        \edef\colour{\pgfmathresult}
        \path[fill=\colour,draw=black] (\x,7-\y) rectangle ++ (1,1);
        \path[fill=\colour,draw=black] (\x+15,7-\y) rectangle ++ (1,1);
    }

        \path[fill=black!10,draw=black] (3,5) rectangle ++ (1,1);
        \path[fill=black!66,draw=black] (3+15,5) rectangle ++ (1,1);

        \path[fill=black!10,draw=black] (4,6) rectangle ++ (1,1);
        \path[fill=black!66,draw=black] (4+15,6) rectangle ++ (1,1);

        \path[fill=black!10,draw=black] (1,4) rectangle ++ (1,1);
        \path[fill=black!66,draw=black] (1+15,4) rectangle ++ (1,1);

        \path[fill=black!10,draw=black] (6,2) rectangle ++ (1,1);
        \path[fill=black!66,draw=black] (6+15,2) rectangle ++ (1,1);

           \draw[thick,black](12,3) node {\Large $\preccurlyeq_n$};

           \draw[thick,black](4.5,-0.5) node {\Large $P_1$};
           \draw[thick,black](19.5,-0.5) node {\Large $P_2$};

\end{tikzpicture}

\caption[Une paire de motifs $P_1$ et $P_2$ avec ${\cal E}_n(P_1)\preccurlyeq_n {\cal E}_n(P_2)$.]{Une paire de motifs $P_1$ et $P_2$: la relation ${\cal E}_n(P_1)\preccurlyeq_n {\cal E}_n(P_2)$ signifie que l'ensemble des position des lettres noires de $P_1$ est inclus dans l'ensemble des positions des lettres noires de $P_2$.} \label{f:p1-le-p2}
\vspace{-5pt}

\end{figure}

Vérifions que $({\cal E}_n, \preccurlyeq_n)$ satisfait la définition des résumés ordonnés. Pour tout motif $P$ de taille $n\times n$ constitué de lettres noires et blanches nous définissons le motif $R$ correspondant exactement de la même manière que dans l'Exemple~\ref{ex:mirror} de la p.~\pageref{ex:mirror-revisited}, voir Fig.~\ref{f:p-and-r}.
Il est clair que quand nous faisons l'union de $P$ avec son motif $R$ correspondant, nous obtenons une configuration admissible du shift. Par ailleurs, si nous faisons l'union de ce motif $R$ avec un autre motif $P'$ (comme dans la Fig.~\ref{f:p-and-p'}), nous obtenons une configuration admissible si et seulement si l'union de $P'$ et de $R$ forme  une configuration semi-symétrique, i.e., une lettre de $P'$ peut être noire \emph{seulement si} la lettre à la position symétrique de $R$ est également noire. Par conséquent, $R$ est compatible avec $P'$ si et seulement si ${\cal E}_n(P') \preccurlyeq_n {\cal E}_n(P)$, ce qui est exactement la propriété requise dans la définition des résumés ordonnés.

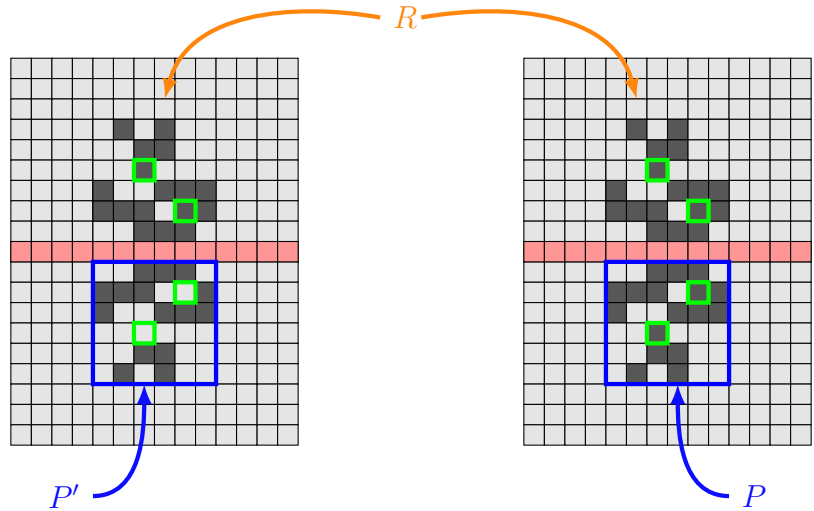
\begin{figure}[H]

  \centering
  \begin{tikzpicture}[scale=0.27,x=1cm,baseline=2.125cm]
  
 \pgfmathsetseed{2}

\foreach \x in {-3,...,10} \foreach \y in {-9,...,9}
{
  \path[fill=black!10,draw=black] (\x,7-\y) rectangle ++ (1,1);
  \path[fill=black!10,draw=black] (\x+25,7-\y) rectangle ++ (1,1);
}
     \foreach \x in {-3,...,10} 
    {
        \path[fill=red!41,draw=black] (\x,7) rectangle ++ (1,1);
        \path[fill=red!41,draw=black] (\x+25,7) rectangle ++ (1,1);
    }

    \foreach \x in {1,...,6} \foreach \y in {1,...,6}
    {
        \pgfmathparse{mod(int(random*23),2) ? "black!10" : "black!66"}
        \edef\colour{\pgfmathresult}
        \path[fill=\colour,draw=black] (\x,7-\y) rectangle ++ (1,1);
        \path[fill=\colour,draw=black] (\x,7+\y) rectangle ++ (1,1);

        \path[fill=\colour,draw=black] (\x+25,7-\y) rectangle ++ (1,1);
        \path[fill=\colour,draw=black] (\x+25,7+\y) rectangle ++ (1,1);

    }

 \path[fill=black!10,draw=green,ultra thick] (3,7-4) rectangle ++ (1,1);
 \path[fill=black!10,draw=green,ultra thick] (5,7-2) rectangle ++ (1,1);
 \path[fill=black!66,draw=green,ultra thick] (3,7+4) rectangle ++ (1,1);
 \path[fill=black!66,draw=green,ultra thick] (5,7+2) rectangle ++ (1,1);

 \path[fill=black!66,draw=green,ultra thick] (3+25,7-4) rectangle ++ (1,1);
 \path[fill=black!66,draw=green,ultra thick] (5+25,7-2) rectangle ++ (1,1);
 \path[fill=black!66,draw=green,ultra thick] (3+25,7+4) rectangle ++ (1,1);
 \path[fill=black!66,draw=green,ultra thick] (5+25,7+2) rectangle ++ (1,1);

 \path[draw=blue,ultra thick] (1,1) rectangle ++ (6,6); %
 \path[draw=blue,ultra thick] (1+25,1) rectangle ++ (6,6);

                                                      
    \draw[-latex,ultra thick,orange!95](15,19)node[right]{$R$}    to[out=170,in=75] (4.5,15.0);                                               
    \draw[-latex,ultra thick,orange!95](17,19)node[left]{}    to[out=10,in=+105] (27.5,15.0);     

    \draw[-latex,ultra thick,blue!95](1,-4.5)node[left]{$P'$}    to[out=0,in=-90] (3.5,1.0);    
    \draw[-latex,ultra thick,blue!95](32,-4.5)node[right]{$P$}    to[out=180,in=-90] (29.5,1.0);

\end{tikzpicture}
\caption[Deux motifs $P$ et $P'$ tels que ${\cal E}_n(P')\preccurlyeq_n {\cal E}_n(P)$.]{Sur la droite : l'union d'un motif $P$ noir et blanc (dans le cadre bleu) avec son motif $R$ correspondant (à l'extérieur du cadre bleu). Sur la gauche : l'union d'un autre motif $P'$ noir et blanc (tel que  ${\cal E}_n(P')\preccurlyeq_n {\cal E}_n(P)$) avec le même motif $R$. Nous indiquons en vert les positions $P$ et $P'$ qui diffèrent, et les positions dans $R$ correspondantes (symétriques par rapport à la ligne rouge).} \label{f:p-and-p'}
\vspace{-5pt}

\end{figure}

Nous avions fait la remarque que pour tout $n$ nous avons $2^{n^2}$ valeurs possibles pour ${\cal E}_n$. Par un argument de comptage nous pouvons affirmer que pour tout $n$ il existe au moins un motif $P$ de taille $n\times n$ tel que $\CK({\cal E}_n(P)) \ge n^2 \gg n$. La propriété $\CK({\cal E}_n(P)) =O(n)$ ne peut donc être vérifiée pour ce shift. D'après la Proposition~\ref{pbis:epitomes}~(a) il s'ensuit que le shift n'est pas sofique.
\end{example}

Une configuration non dégénérée du shift  $S_{\text{semi-mirror}}$ défini dans l'Exemple~\ref{ex:semi-mirror} consiste en deux demi-plans ``semi-sy\-mét\-riques'' séparés par une ligne rouge horizontale infinie.  
La propriété de semi-symétrie signifie que le demi-plan du bas peut être obtenu à partir du demi-plan du haut par la composition de la symétrie en miroir classique suivie de la recoloration de lettres noires du demi-plan du dessous en lettres blanches. Dans cette définition nous ne posons pas de contraintes sur le nombre de lettres noires qui sont converties en lettres blanches :
les deux demi-plans peuvent exactement le symétrique l'un de l'autre, ou, inversement, le demi-plan du bas ne peut être constitué que de lettres blanches. 
Il est intéressant d'étudier les sous-shifts de $S_{\text{semi-mirror}}$ où l'on ajoute des contraintes sur le nombre de positions qui diffèrent entre les deux demi-plans. Dans les deux exemples suivant nous étudions brièvement deux exemples extrêmes : un sous-shift de $S_{\text{semi-mirror}}$  où la symétrie peut ne pas être respectée par au plus une paire de lettres, et un autre sous-shift où la symétrie n'est pas respectée pour l'ensemble des lettres noires sauf éventuellement une.

\begin{examplebis}{ex:semi-mirror}[shift en miroir avec une très petite différence entre les deux demi-plans]
\label{ex:semi-mirror-bis}
Définissons un nouveau shift  $S_{\text{semi-mirror}}' $ (tel que $S_{\text{semi-mirror}}' \subset  S_{\text{semi-mirror}}$) en ajoutant à la définition de $ S_{\text{semi-mirror}}$ une nouvelle condition, imposant que la symétrie entre les deux demi-plans (au-dessus et en-dessous de la ligne rouge) n'est pas respectée par \emph{au plus une paire de lettres}, comme montré dans la Fig~\ref{f:semi-mirror-prim}. 
Il est facile de vérifier que le shift est effectif. Exactement le même argument que dans l'Exemple~\ref{ex:semi-mirror} implique que $S_{\text{semi-mirror}}' $ n'est lui aussi pas sofique.
\end{examplebis}

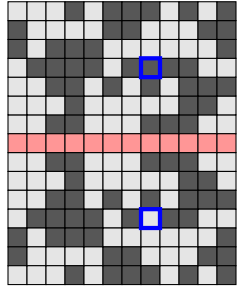
\begin{figure}[H]

  \centering
  \begin{tikzpicture}[scale=0.25,x=1cm,baseline=2.125cm]

  \pgfmathsetseed{5}
    \foreach \x in {1,...,12} \foreach \y in {1,...,7}
    {
        \pgfmathparse{mod(int(random*23),2) ? "black!10" : "black!66"}
        \edef\colour{\pgfmathresult}
        \path[fill=\colour,draw=black] (\x,7-\y) rectangle ++ (1,1);
        \path[fill=\colour,draw=black] (\x,7+\y) rectangle ++ (1,1);
    }
    
     \foreach \x in {1,...,12} 
    {
        \path[fill=red!41,draw=black] (\x,7) rectangle ++ (1,1);
    }

         \path[fill=black!10,draw=blue,ultra thick] (8,7-4) rectangle ++ (1,1);
 
         \path[fill=black!66,draw=blue,ultra thick] (8,7+4) rectangle ++ (1,1);

\end{tikzpicture}

\caption[Une configuration du shift $S_{\text{semi-mirror}}'$.]{Une configuration du shift $S_{\text{semi-mirror}}'$ (nous indiquons en bleu la paire unique des positions symétriques où le demi-plan du dessus diffère du demi-plan du dessous.} \label{f:semi-mirror-prim}
\vspace{-5pt}

\end{figure}

\begin{examplebis2}{ex:semi-mirror}[Le shift miroir avec une très grande différence entre les deux demi-plans]
\label{ex:semi-mirror-bis2}
Soit le shift $S_{\text{semi-mirror}}'' $ (de nouveau $S_{\text{semi-mirror}}'' \subset  S_{\text{semi-mirror}}$) 
avec la condition que le demi-plan en dessous de la ligne rouge horizontale contienne \emph{au plus une lettre noire}, voir Fig.~\ref{f:semi-mirror-prim2}.
Ce shift est également effectif.
Avec cette nouvelle restriction, l'argument de l'Exemple~\ref{ex:semi-mirror} ne s'applique plus.
En effet, comme nous avons au plus une lettre noire dans le demi-plan du dessous, la complexité de Kolmogorov des résumés ${\cal E}_n(P)$ définis dans l'Exemple~\ref{ex:semi-mirror} se réduit à $O(\log n)$, et nous ne pouvons pas obtenir de contradiction avec la Proposition~\ref{pbis:epitomes}~(a). 
Notre argument servant à montrer que le shift n'est pas sofique ne s'applique pas. Cela n'est pas problématique :
il n'est pas difficile de vérifier que le shift $S_{\text{semi-mirror}}'' $ est en réalité sofique. Nous étudierons cet exemple plus tard dans la Sous-partie~\ref{s:km}, lorsque nous verrons les ensembles des extensions d'un motif.
\end{examplebis2}

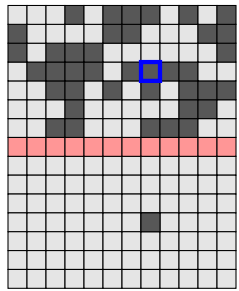
\begin{figure}[H]

  \centering
  \begin{tikzpicture}[scale=0.25,x=1cm,baseline=2.125cm]

  \pgfmathsetseed{5}
    \foreach \x in {1,...,12} \foreach \y in {1,...,7}
    {
        \pgfmathparse{mod(int(random*23),2) ? "black!10" : "black!66"}
        \edef\colour{\pgfmathresult}
        \path[fill=black!10,draw=black] (\x,7-\y) rectangle ++ (1,1);
        \path[fill=\colour,draw=black] (\x,7+\y) rectangle ++ (1,1);
    }
    
     \foreach \x in {1,...,12} 
    {
        \path[fill=red!41,draw=black] (\x,7) rectangle ++ (1,1);
    }

         \path[fill=black!66,draw=black] (8,7-4) rectangle ++ (1,1);
 
         \path[fill=black!66,draw=blue,ultra thick] (8,7+4) rectangle ++ (1,1);
                                                      
\end{tikzpicture}

\caption[Une configuration du shift $S_{\text{semi-mirror}}''$.]{Une configuration du shift $S_{\text{semi-mirror}}''$. Nous indiquons en bleu la position dans le demi-plan du haut qui est symétrique à l'unique lettre noire du demi-plan du bas.} \label{f:semi-mirror-prim2}
\vspace{-5pt}

\end{figure}

Dans le prochain exemple nous étudierons un shift avec une très petite complexité par bloc, pour lequel on ne pourra donc utiliser que la version des résumés ordonnés avec une complexité de Kolmogorov à ressources bornées pour prouver qu'il n'est pas sofique.

\begin{example}[semi-mirror shift with low block complexity] 
\label{ex:semi-mirror-sofique}
Pour cette exemple nous utilisons de nouveau le shift ${S}_{0}$ défini dans la Remarque~\ref{r:paradoxical-shift},
p.~\pageref{r:paradoxical-shift}.
Nous définissons un nouveau shift $S_{\text{semi-mirror}}^{0}$ (tel que $S_{\text{semi-mirror}}^{0} \subset S_{\text{semi-mirror}}'$) en ajoutant à la définition de $S_{\text{semi-mirror}}'$ la condition suivante : tout motif noir et blanc situé en-dessous de la ligne rouge horizontale doit être globalement admissible pour $S_{0}$.

Il est facile de vérifier que $S_{\text{semi-mirror}}^{0}$ est lui-aussi effectif. Par construction, pour tout motif $P$ de taille $n\times n$ globalement admissible pouvant apparaître en-dessous de la ligne rouge horizontale, nous avons $\CK(P)  = O(\log n)$. La propriété reste vraie pour les motifs apparaissant au-dessus de la ligne rouge horizontale, puisque effectuer la symétrie par miroir et ajouter une lettre noire ne peut augmenter la complexité de Kolmogorov d'au plus $O(\log n)$. Celle-ci reste également vraie pour les motifs globalement admissibles où apparaissent des lettres rouges, puisque de tels motifs peuvent être décrits par leurs parties au-dessus et en-dessous de la ligne rouge horizontale, et ces deux parties ont une complexité de Kolmogorov en $O(\log n)$. Par conséquent, pour \emph{tout} motif $P$ de taille $n\times n$ globalement admissible nous avons $\CK(P)  = O(\log n)$. Ainsi, il existe au plus $2^{O(\log n)} = \poly(n)$ motifs globalement admissibles, i.e., le shift a une complexité par bloc seulement polynomiale.

La technique des résumés ordonnés avec la complexité de Kolmogorov simple ne s'applique pas puisque pour tout motif $P$ globalement admissible la valeur de $\CK(P) $ est très petite, et la valeur de  $\CK({\cal E}_n(P)) $ est donc elle aussi sous-linéaire ; partant, nous ne pouvons pas obtenir de contradiction en utilisant la Proposition~\ref{pbis:epitomes}~(a).
Cependant, nous pouvons prouver que ce shift n'est pas sofique à l'aide de la complexité de Kolmogorov à ressources bornées. En effet, le shift est défini de telle manière que pour certains motifs $P$ de taille  $n\times n$ apparaissant au-dessous de la ligne rouge horizontale, nous avons $\CK^{2^{n^3}}({\cal E}_n(P)) = \Omega(n^{1.5})$. 
Nous pouvons donc répéter l'argumentation utilisée dans l'Exemple~\ref{ex:semi-mirror} avec la même famille de résumés ordonnés ${\cal E}_n$ (Ils sont calculables en temps exponentiel) mais cette fois en utilisant la complexité de Kolmogorov à ressources bornées avec une fonction de seuil $T=2^{\omega(n^2)}$.
En appliquant la Proposition~\ref{pbis:epitomes}~(b), nous pouvons conclure que le shift n'est pas sofique.

\end{example}

Dans l'exemple suivant nous étudions une instance intéressante du shift proposé par Kass et Madden dans \cite[Example 2.5]{kass-madden}. Nous reformulons la preuve que ce shift n'est pas sofique décrite dans  \cite{kass-madden} dans le langage de la complexité de Kolmogorov, en utilisant la technique des résumés ordonnés. Dans ce cas, la définition des résumés est moins directe que dans les exemples précédents.

\begin{example}[le shift sans carré rouge et noir caché]\label{ex:km}
Soit $\Sigma$ l'alphabet de trois lettres (par exemple, \emph{noir}, \emph{blanc}, et \emph{rouge}). Nous définissons les motifs interdits comme l'ensemble des motifs carrés (de taille quelconque) où la ligne du haut est constituée de lettres rouges, et celle du bas de lettres noires (\emph{les carrés rouge et noir cachés}), comme montrés dans la Fig.~\ref{f:012-forbidden}.

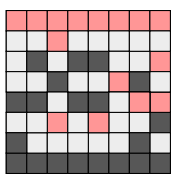
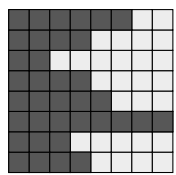
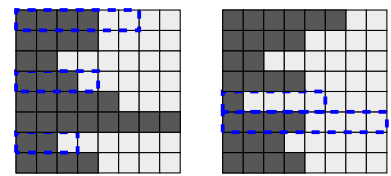
\begin{figure}[H]
\centering
\begin{subfigure}{0.2\linewidth}
  \centering
  \begin{tikzpicture}[scale=0.27,x=1cm,baseline=2.125cm]
  
  \pgfmathsetseed{1}
    \foreach \x in {0,...,7} \foreach \y in {0,...,7}
    {
        \pgfmathparse{mod(int(random*23),2) ? "red!41" : "black!66"}
        \edef\colour{\pgfmathresult}
        \path[fill=\colour,draw=black] (\x,\y) rectangle ++ (1,1);
    }
    
     \foreach \x in {0,...,7} 
    {
        \path[fill=red!41,draw=black] (\x,7) rectangle ++ (1,1);
    }
     \foreach \x in {0,...,7} 
    {
        \path[fill=black!66,draw=black] (\x,0) rectangle ++ (1,1);
    }

        \path[fill=black!7,draw=black] (3,4) rectangle ++ (1,1);
        \path[fill=black!7,draw=black] (3,2) rectangle ++ (1,1);
        \path[fill=black!7,draw=black] (6,5) rectangle ++ (1,1);
        \path[fill=black!7,draw=black] (1,1) rectangle ++ (1,1);
        \path[fill=black!7,draw=black] (5,3) rectangle ++ (1,1);
        \path[fill=black!7,draw=black] (6,3) rectangle ++ (1,1);
        \path[fill=black!7,draw=black] (5,2) rectangle ++ (1,1);
        \path[fill=black!7,draw=black] (1,6) rectangle ++ (1,1);
        \path[fill=black!7,draw=black] (0,5) rectangle ++ (1,1);
        \path[fill=black!7,draw=black] (2,3) rectangle ++ (1,1);
        \path[fill=black!7,draw=black] (2,1) rectangle ++ (1,1);
        \path[fill=black!7,draw=black] (4,1) rectangle ++ (1,1);
        \path[fill=black!7,draw=black] (5,1) rectangle ++ (1,1);                        
        \path[fill=black!7,draw=black] (0,2) rectangle ++ (1,1);
        \path[fill=black!7,draw=black] (3,6) rectangle ++ (1,1); 
        \path[fill=black!7,draw=black] (4,6) rectangle ++ (1,1); 
        \path[fill=black!7,draw=black] (5,6) rectangle ++ (1,1);       
        \path[fill=black!7,draw=black] (6,6) rectangle ++ (1,1);       
        \path[fill=black!7,draw=black] (7,6) rectangle ++ (1,1);       
        \path[fill=black!7,draw=black] (0,6) rectangle ++ (1,1);       
        \path[fill=black!7,draw=black] (7,1) rectangle ++ (1,1);  
        \path[fill=black!7,draw=black] (6,2) rectangle ++ (1,1);  
        \path[fill=black!7,draw=black] (7,4) rectangle ++ (1,1);  
        \path[fill=black!7,draw=black] (2,5) rectangle ++ (1,1); 
        \path[fill=black!7,draw=black] (0,4) rectangle ++ (1,1); 
        \path[fill=black!7,draw=black] (1,4) rectangle ++ (1,1); 
        \path[fill=black!7,draw=black] (3,1) rectangle ++ (1,1); 
        \path[fill=black!7,draw=black] (5,5) rectangle ++ (1,1); 
        \path[fill=black!7,draw=black] (1,2) rectangle ++ (1,1); 
        \path[fill=black!7,draw=black] (4,4) rectangle ++ (1,1); 
        \path[fill=black!7,draw=black] (5,3) rectangle ++ (1,1); 

        \path[fill=black!66,draw=black] (0,1) rectangle ++ (1,1); 
        \path[fill=black!66,draw=black] (1,5) rectangle ++ (1,1); 
        \path[fill=black!66,draw=black] (6,3) rectangle ++ (1,1); 
        \path[fill=black!66,draw=black] (6,1) rectangle ++ (1,1);
        \path[fill=red!41,draw=black] (2,6) rectangle ++ (1,1);  
        \path[fill=red!41,draw=black] (6,3) rectangle ++ (1,1); 
        \path[fill=red!41,draw=black] (4,2) rectangle ++ (1,1);
                                      
\end{tikzpicture}
  \caption{Un motif interdit~: un carré avec une ligne du haut rouge et une ligne du bas noire.}
  \label{f:012-forbidden}
\end{subfigure}
\quad
\begin{subfigure}{0.35\linewidth}
\begin{center}
  \begin{tikzpicture}[scale=0.27,x=1cm,baseline=2.125cm]
 
     
    \foreach \x in {0,...,7} \foreach \y in {0,...,7}
    {
          \path[fill=black!7,draw=black] (\x,\y) rectangle ++ (1,1);
    }

 \edef\y{0}
 \edef\xmax{3}
 \foreach \x in {0,...,\xmax}
 \path[fill=black!66,draw=black] (\x,\y) rectangle ++ (1,1);

 \edef\y{1}
 \edef\xmax{2}
 \foreach \x in {0,...,\xmax}
 \path[fill=black!66,draw=black] (\x,\y) rectangle ++ (1,1);

 \edef\y{2}
 \edef\xmax{7}
 \foreach \x in {0,...,\xmax}
 \path[fill=black!66,draw=black] (\x,\y) rectangle ++ (1,1);

 \edef\y{3}
 \edef\xmax{4}
 \foreach \x in {0,...,\xmax}
 \path[fill=black!66,draw=black] (\x,\y) rectangle ++ (1,1);

 \edef\y{4}
 \edef\xmax{3}
 \foreach \x in {0,...,\xmax}
 \path[fill=black!66,draw=black] (\x,\y) rectangle ++ (1,1);

 \edef\y{5}
 \edef\xmax{1}
 \foreach \x in {0,...,\xmax}
 \path[fill=black!66,draw=black] (\x,\y) rectangle ++ (1,1);

 \edef\y{6}
 \edef\xmax{3}
 \foreach \x in {0,...,\xmax}
 \path[fill=black!66,draw=black] (\x,\y) rectangle ++ (1,1);

 \edef\y{7}
 \edef\xmax{5}
 \foreach \x in {0,...,\xmax}
 \path[fill=black!66,draw=black] (\x,\y) rectangle ++ (1,1);


\end{tikzpicture}
\end{center}
\vspace{-10pt}
	\caption{Un motif pour lequel le résumé ${\cal E}_n$ est défini : chaque ligne commence avec des lettres noires à gauche suivies par des lettres blanches à droite.}
 \label{f:012-standard}
\end{subfigure}
\quad
\begin{subfigure}{0.35\linewidth}
\vspace{-25pt}
\begin{center}
  \begin{tikzpicture}[scale=0.27,x=1cm,baseline=2.125cm]
 
     
    \foreach \x in {0,...,7} \foreach \y in {0,...,7}
    {
          \path[fill=black!7,draw=black] (\x,\y) rectangle ++ (1,1);
    }

 \edef\y{0}
 \edef\xmax{3}
 \foreach \x in {0,...,\xmax}
 \path[fill=black!66,draw=black] (\x,\y) rectangle ++ (1,1);

 \edef\y{1}
 \edef\xmax{1}
 \foreach \x in {0,...,\xmax}
 \path[fill=black!66,draw=black] (\x,\y) rectangle ++ (1,1);
  \path[draw=blue,line width=0.5mm,dashed] (0,\y) rectangle ++ (3,1);

 \edef\y{2}
 \edef\xmax{7}
 \foreach \x in {0,...,\xmax}
 \path[fill=black!66,draw=black] (\x,\y) rectangle ++ (1,1);

 \edef\y{3}
 \edef\xmax{4}
 \foreach \x in {0,...,\xmax}
 \path[fill=black!66,draw=black] (\x,\y) rectangle ++ (1,1);

 \edef\y{4}
 \edef\xmax{2}
 \foreach \x in {0,...,\xmax}
 \path[fill=black!66,draw=black] (\x,\y) rectangle ++ (1,1);
  \path[draw=blue,line width=0.5mm,dashed] (0,\y) rectangle ++ (4,1);

 \edef\y{5}
 \edef\xmax{1}
 \foreach \x in {0,...,\xmax}
 \path[fill=black!66,draw=black] (\x,\y) rectangle ++ (1,1);

 \edef\y{6}
 \edef\xmax{3}
 \foreach \x in {0,...,\xmax}
 \path[fill=black!66,draw=black] (\x,\y) rectangle ++ (1,1);

 \edef\y{7}
 \edef\xmax{3}
 \foreach \x in {0,...,\xmax}
 \path[fill=black!66,draw=black] (\x,\y) rectangle ++ (1,1);
  \path[draw=blue,line width=0.5mm,dashed] (0,\y) rectangle ++ (6,1);


 \end{tikzpicture}
 \quad \begin{tikzpicture}[scale=0.27,x=1cm,baseline=2.125cm]
 
     
    \foreach \x in {0,...,7} \foreach \y in {0,...,7}
    {
          \path[fill=black!7,draw=black] (\x,\y) rectangle ++ (1,1);
    }

 \edef\y{0}
 \edef\xmax{3}
 \foreach \x in {0,...,\xmax}
 \path[fill=black!66,draw=black] (\x,\y) rectangle ++ (1,1);

 \edef\y{1}
 \edef\xmax{2}
 \foreach \x in {0,...,\xmax}
 \path[fill=black!66,draw=black] (\x,\y) rectangle ++ (1,1);

 \edef\y{2}
 \edef\xmax{3}
 \foreach \x in {0,...,\xmax}
 \path[fill=black!66,draw=black] (\x,\y) rectangle ++ (1,1);
  \path[draw=blue,line width=0.5mm,dashed] (0,\y) rectangle ++ (8,1);

 \edef\y{3}
 \edef\xmax{0}
 \foreach \x in {0,...,\xmax}
 \path[fill=black!66,draw=black] (\x,\y) rectangle ++ (1,1);
  \path[draw=blue,line width=0.5mm,dashed] (0,\y) rectangle ++ (5,1);

 \edef\y{4}
 \edef\xmax{3}
 \foreach \x in {0,...,\xmax}
 \path[fill=black!66,draw=black] (\x,\y) rectangle ++ (1,1);

 \edef\y{5}
 \edef\xmax{1}
 \foreach \x in {0,...,\xmax}
 \path[fill=black!66,draw=black] (\x,\y) rectangle ++ (1,1);

 \edef\y{6}
 \edef\xmax{3}
 \foreach \x in {0,...,\xmax}
 \path[fill=black!66,draw=black] (\x,\y) rectangle ++ (1,1);

 \edef\y{7}
 \edef\xmax{5}
 \foreach \x in {0,...,\xmax}
 \path[fill=black!66,draw=black] (\x,\y) rectangle ++ (1,1);

\end{tikzpicture}
\end{center}
\vspace{-10pt}
\caption{Une paire de motifs incomparables.} \label{f:012-patterns}
\end{subfigure}
\caption{Un motif interdit, un motif pour lequel ${\cal E}_n$ est défini, et deux motifs incomparables.}
\end{figure}

\begin{proposition}[\cite{kass-madden}]\label{prop-kass-maden}
Le shift sur $\mathbb{Z}^2$ défini par l'ensemble des motifs interdits spécifiés ci-dessus n'est pas sofique.
\end{proposition}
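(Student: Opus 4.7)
Le plan est d'appliquer la Proposition~\ref{pbis:epitomes}(a) à une famille de résumés ordonnés sur les motifs ``en escalier'' présentés à la Fig.~\ref{f:012-standard}. Pour chaque fonction $\sigma : \{1,\ldots,n\} \to \{0,\ldots,n\}$ appartenant à une sous-classe appropriée $\Sigma_n$, je considère le motif $P_\sigma$ de taille $n \times n$ dans lequel la ligne $i$ contient $\sigma(i)$ lettres noires dans les colonnes $1,\ldots,\sigma(i)$ suivies de lettres blanches. Comme ces motifs ne contiennent aucune lettre rouge, ils ne contiennent aucun carré rouge-noir interdit, et sont immédiatement globalement admissibles (on peut les étendre par une configuration entièrement blanche). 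Je pose ${\cal E}_n(P_\sigma) := \sigma$ pour $\sigma \in \Sigma_n$, en laissant ${\cal E}_n$ indéfinie ailleurs, et j'équipe l'ensemble des valeurs de l'ordre partiel composante par composante : $\sigma \preccurlyeq_n \tau$ si et seulement si $\sigma(i) \leq \tau(i)$ pour tout $i$. Ces données sont manifestement calculables.

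Le cœur technique de la preuve consiste à construire, pour chaque $\sigma \in \Sigma_n$, une configuration infinie $R_\sigma$ sur $F_n$ satisfaisant les deux conditions de la Définition~\ref{d:ordered-epitome}. L'idée est d'exploiter la contrainte des carrés rouges-noirs interdits pour borner supérieurement chaque composante $\tau(i)$ d'un motif $P_\tau$ compatible avec $R_\sigma$. Concrètement, je dispose dans $R_\sigma$ des lettres rouges au-dessus de $B_n$ de telle sorte qu'une ligne noire $i$ de $P_\tau$ rallongée d'au moins une colonne au-delà de $\sigma(i)$ complète automatiquement un carré $k\times k$ dont la ligne du haut est entièrement rouge et la ligne du bas entièrement noire. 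Le placement typique, pour bloquer l'extension de la ligne $i$, consiste à insérer une ligne rouge dont la longueur et la position verticale sont ajustées de sorte que le carré potentiellement formé ait sa largeur égale à sa hauteur : le cas le plus simple place une ligne de $\sigma(i)+1$ lettres rouges aux colonnes $1,\ldots,\sigma(i)+1$ à la hauteur $y_i + \sigma(i)$ au-dessus de la ligne $i$ de $B_n$.

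L'étape finale est un argument de comptage : si la sous-classe $\Sigma_n$ contient $2^{\omega(n)}$ fonctions (ce qui est atteint par exemple en ne gardant que des fonctions $\sigma$ satisfaisant la condition de régularité garantissant le bon fonctionnement de la construction de $R_\sigma$, pourvu que cette condition n'élimine qu'un facteur au plus exponentiel en $n$), alors par un argument de comptage standard il existe $\sigma \in \Sigma_n$ tel que $\CK({\cal E}_n(P_\sigma)) = \CK(\sigma) = \omega(n)$. Ceci contredit la borne $O(n)$ imposée aux shifts sofiques par la Proposition~\ref{pbis:epitomes}(a), et le shift considéré ne peut donc être sofique.

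Le principal obstacle de cette preuve est clairement la construction effective de la famille des $R_\sigma$. La rigidité géométrique des carrés interdits (la largeur doit être exactement égale à la hauteur) engendre des contraintes croisées entre les structures de blocage associées aux différentes lignes de $B_n$ : un placement naïf des lettres rouges échoue typiquement en formant accidentellement des carrés interdits entre une ligne rouge destinée à bloquer l'extension d'une ligne $i$ et une ligne noire d'un indice voisin dans $P_\sigma$ lui-même. La construction originale de Kass et Madden dans \cite{kass-madden} résout précisément cette difficulté par un choix adéquat des hauteurs, des colonnes et, le cas échéant, d'une restriction appropriée du domaine $\Sigma_n$; la preuve esquissée ici n'en est essentiellement qu'une reformulation dans le formalisme des résumés ordonnés développé dans cette section.
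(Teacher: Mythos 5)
Votre mise en place est exactement celle du texte : mêmes motifs ``en escalier'', même résumé ${\cal E}_n$ (le profil), même ordre composante par composante, même argument de comptage ($(n+1)^n=2^{\Theta(n\log n)}$ profils, donc un profil de complexité $\omega(n)$) et même conclusion via la Proposition~\ref{pbis:epitomes}~(a). Mais le cœur de la preuve — le Lemme~\ref{l:kass-madden}, c'est-à-dire la construction effective d'un $R$ qui soit compatible avec $P_\sigma$ et n'admette que des profils $\preccurlyeq_n \sigma$ — manque chez vous. Le placement que vous esquissez (une ligne de $\sigma(i)+1$ lettres rouges, alignée à gauche dans les colonnes de $B_n$, à hauteur $\sigma(i)$ au-dessus de la ligne $i$) échoue précisément pour la raison que vous signalez vous-même : une telle ligne rouge interagit avec les lignes noires intermédiaires $i'$ de $B_n$ (elle forme un carré interdit dès que $\tau(i') \ge i+\sigma(i)-i'+1$), si bien que $R_\sigma$ n'est même pas compatible avec $P_\sigma$ en général. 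Reconnaître l'obstacle et renvoyer à \cite{kass-madden} ne le comble pas : c'est exactement cette construction qui constitue la preuve.

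La solution du texte évite ces contraintes croisées en sortant entièrement de $B_n$ : on prolonge vers la \emph{gauche} la bande noire de la ligne $i$ jusqu'à une longueur totale $3n-2i+1$, et on place des bandes rouges de longueurs $3n, 3n-2,\ldots,n+2$ aux lignes $3n, 3n-1,\ldots,2n+1$, le rouge le plus à gauche de la ligne $3n-i+1$ étant aligné verticalement avec le noir le plus à gauche de la ligne $i$ (Fig.~\ref{f:012-enforcing-standard}). Les longueurs sont calibrées pour que, quelle que soit la ligne rouge considérée, soit la bande noire soit trop courte, soit la bande rouge soit trop courte pour fermer un carré (Fait~1), tandis qu'ajouter \emph{une seule} lettre noire de plus que $\sigma(i)$ dans la ligne $i$ fait exactement coïncider largeur et hauteur avec la ligne rouge $3n-i+1$ et crée un motif interdit (Fait~2). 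Notez aussi qu'aucune restriction du domaine $\Sigma_n$ n'est nécessaire : la construction fonctionne pour tous les motifs simples, et votre précaution sur un ``facteur exponentiel'' éliminé est superflue. Il vous faut donc rédiger cette construction (ou une variante dont vous vérifiez réellement les deux faits) pour que la preuve soit complète.
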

Dans \cite{kass-madden} cette proposition est prouvée en utilisant la technique des \emph{chaînes d'unions croissantes d'ensembles d'extensions de motifs}. Dans ce qui suit nous présentons essentiellement le même argument, mais transposé dans la technique des résumés ordonnés.


\begin{proof}[Preuve de la Proposition~\ref{prop-kass-maden} :] Nous définissons pour ce shift une famille de résumés ordonnés de la manière suivante. 
Tout d'abord, nous définissons une classe de \emph{motifs simples}: les motifs simples sont l'ensemble des carrés tels que : (i)~ils sont constitués uniquement de lettres noires et blanches (sans lettre rouge), et (ii)~chaque ligne commence par un certain nombre de lettres noires successives, suivies par une séquence de lettres blanches, comme montré dans la Fig.~\ref{f:012-standard}.
Tout motif simple de taille $n\times n$ peut être spécifié par son \emph{profil} --- un n-uplet d'entiers $(k_1,\ldots, k_n)$, où $k_i$ est le nombre de lettres noires présentes dans la $i$-ème ligne du motif. (Ainsi, un motif simple ayant pour profil $(k_1,\ldots, k_n)$ est un carré de taille $n\times n$ où chaque $i$-ème ligne commence par $k_i$ lettres noires suivies de $(n-k_i)$ lettres blanches.)

Soit ${\cal E}_n$ la famille de fonctions qui associent à chaque motif simple son profil, et ne sont pas définies sur les autres motifs. Par exemple, pour le motif $P$ de la Fig.~\ref{f:012-standard} nous avons ${\cal E}_8(P)=(4,3,8,5,4,2,4,6)$.

Nous définissons la famille d'ordre naturel $\preccurlyeq_n$ sur les profils des motifs simples de taille $n\times n$ de la manière suivante : le profil de $P_1$ n'est \emph{pas plus grand} que le profil de $P_2$ si le premier profil n'est pas, coordonnée par coordonnée, plus grand que le second. Par exemple, les profils des deux motifs montrés dans la Fig.~\ref{f:012-patterns} ne sont pas plus grands que le profil du motif de la Fig.~\ref{f:012-standard} (et incomparables entre eux).

Les familles de fonctions et d'ordres ${\cal E}_n$ et $\preccurlyeq_n$ sont clairement calculables, même en temps polynomial. Il reste à montrer que ${\cal E}_n$ et $\preccurlyeq_n$  satisfont la Définition~\ref{d:ordered-epitome} des résumés ordonnés :
\begin{lemma}\label{l:kass-madden}
La famille $({\cal E}_n, \preccurlyeq_n)$ définie ci-dessus forme une famille de résumés ordonnés calculable en temps exponentiel pour le shift considéré.
\end{lemma}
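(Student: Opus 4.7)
La calculabilité de ${\cal E}_n$ et de $\preccurlyeq_n$ est immédiate et en fait polynomiale. Pour évaluer ${\cal E}_n(P)$ on parcourt $P$ : s'il contient une lettre rouge ou une ligne dans laquelle une case blanche précède une case noire, la valeur n'est pas définie (ces deux tests caractérisent le domaine) ; sinon on retourne le profil $(k_1,\ldots,k_n)$ des nombres de cases noires par ligne. L'ordre partiel $\preccurlyeq_n$ est la comparaison coordonnée par coordonnée des profils. Les deux procédures s'exécutent en temps polynomial, a fortiori en temps exponentiel.

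Le cœur de la preuve est la condition~(ii) de la Définition~\ref{d:ordered-epitome}. Pour un motif simple $P$ de profil $(k_1,\ldots,k_n)$, il s'agit de construire $R$ sur $F_n$, compatible avec $P$, et tel que tout motif simple $P'$ compatible avec $R$ ait un profil $\preccurlyeq_n (k_1,\ldots,k_n)$. La construction utilise les carrés cachés rouges et noirs interdits comme obstructions : $R$ est uniformément blanc en dehors d'un ``escalier rouge'' placé au-dessus de $B_n$ (dans le demi-plan $r\geq n$), dont les segments rouges horizontaux miment l'escalier noir de $P$. Le calibrage est tel que, pour chaque ligne $i$, un carré interdit potentiel de côté $k_i+1$, avec pour bas la ligne $i$ de $B_n$ et pour haut un segment de l'escalier rouge, n'est complété que si la $k_i$-ème case de la ligne $i$ est noire --- c'est-à-dire exactement lorsque $k_i' > k_i$ dans $P'$. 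Dans $P$ cette case est blanche par définition d'un motif simple, donc $P\cup R$ est admissible ; dans tout motif simple $P'$ avec $k_{i_0}' > k_{i_0}$, la case noire supplémentaire à $(k_{i_0},i_0)$ complète, avec le segment rouge placé au-dessus par $R$, un carré interdit caché, et $P'$ ne peut donc pas être compatible avec $R$.

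La principale difficulté technique est le placement consistant de l'escalier rouge : les lignes $i$ avec $i+k_i \geq n$ peuvent être couvertes directement par un segment rouge horizontal dans $F_n$, mais les lignes dégénérées avec $i+k_i < n$ demanderaient de placer des cases rouges à l'intérieur de $B_n$, ce qui est exclu. Ceci se résout en utilisant des carrés interdits plus grands (de côté $>k_i+1$) qui combinent l'escalier au-dessus de $B_n$ avec un remplissage rouge et noir placé à gauche et à droite de $B_n$, de sorte qu'aucune ligne de $P$ n'échappe au contrôle. L'essentiel de la preuve consiste alors à vérifier, pour chaque carré potentiellement interdit dans $P\cup R$, que la géométrie de l'escalier force la présence d'une case blanche de $P$ (ou d'une case non-noire du remplissage dans $F_n$) sur sa ligne du bas. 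Cette combinatoire est exactement la construction d'ensembles d'extensions de Kass et Madden~\cite{kass-madden}, dont notre formalisme des résumés ordonnés donne une traduction directe ; nous renvoyons à leur article pour les détails du bookkeeping.
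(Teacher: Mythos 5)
Votre paragraphe sur la calculabilité est correct (et correspond à ce que fait le manuscrit : ${\cal E}_n$ et $\preccurlyeq_n$ sont même calculables en temps polynomial), et votre stratégie générale est la même que celle du texte : construire $R$ de sorte que les carrés rouge/noir cachés servent d'obstruction, la compatibilité donnant la condition~(i) et le blocage de tout $k_i'>k_i$ donnant la condition~(ii) de la Définition~\ref{d:ordered-epitome}. Mais le c\oe ur du lemme n'est pas établi. Votre première construction (un escalier rouge au-dessus de $B_n$ et des carrés de côté $k_i+1$) échoue, comme vous le reconnaissez, pour toutes les lignes telles que $i+k_i<n$ --- c'est le cas générique, pas un cas dégénéré. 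La réparation proposée (``des carrés plus grands'' combinés à un ``remplissage rouge et noir à gauche et à droite de $B_n$'') n'est ni spécifiée ni vérifiée : vous renvoyez à Kass et Madden pour le ``bookkeeping'', alors que ce calibrage \emph{est} précisément le contenu du lemme. En particulier, rien dans votre esquisse ne garantit la condition~(i) : il faut vérifier que les lettres noires et rouges ajoutées dans $R$ ne forment pas \emph{entre elles}, ni avec les lignes de $P$ de profil exact $(k_1,\ldots,k_n)$, un carré interdit (deux lignes distinctes $i_1\neq i_2$ peuvent interagir avec la bande rouge de l'autre si les longueurs ne sont pas soigneusement entrelacées). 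C'est exactement l'objet du Fait~1 de la preuve du manuscrit, et il est absent de votre argument. (Au passage, petite coquille : c'est la $(k_i+1)$-ème case de la ligne $i$, et non la $k_i$-ème, qui caractérise $k_i'>k_i$.)

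Pour comparaison, la construction du manuscrit évite entièrement le problème des lignes ``dégénérées'' au lieu de le traiter à part : le bloc noir de chaque ligne $i$ de $P$ est prolongé \emph{vers la gauche} jusqu'à une longueur totale $3n-2i+1$, et une bande rouge de longueur $3n-2i+2$ est placée sur la ligne $3n-i+1$ (donc dans les lignes $2n+1$ à $3n$, loin au-dessus de $B_n$), alignée à gauche sur ce bloc noir ; tout le reste de $R$ est blanc. Les longueurs strictement décroissantes et entrelacées donnent alors les deux faits d'un coup : pour $P$, la bande noire de la ligne $i$ est trop courte d'une case pour sa bande rouge, trop courte pour les bandes rouges situées plus haut, et les bandes rouges situées plus bas sont elles-mêmes trop courtes, donc $P\cup R$ est admissible ; dès que $k_i'>k_i$, la ligne $i$ atteint la longueur $3n-2i+2$ et complète avec sa bande rouge un carré interdit. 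Sans cette construction explicite (ou une autre, menée jusqu'au bout avec la vérification de la condition~(i)), votre texte reste une paraphrase de l'énoncé plus un renvoi à~\cite{kass-madden}, et ne constitue pas une preuve du Lemme~\ref{l:kass-madden}.
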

\noindent
Ce lemme est prouvé implicitement dans \cite{kass-madden}. Nous reprenons ici les grandes lignes de la preuve.

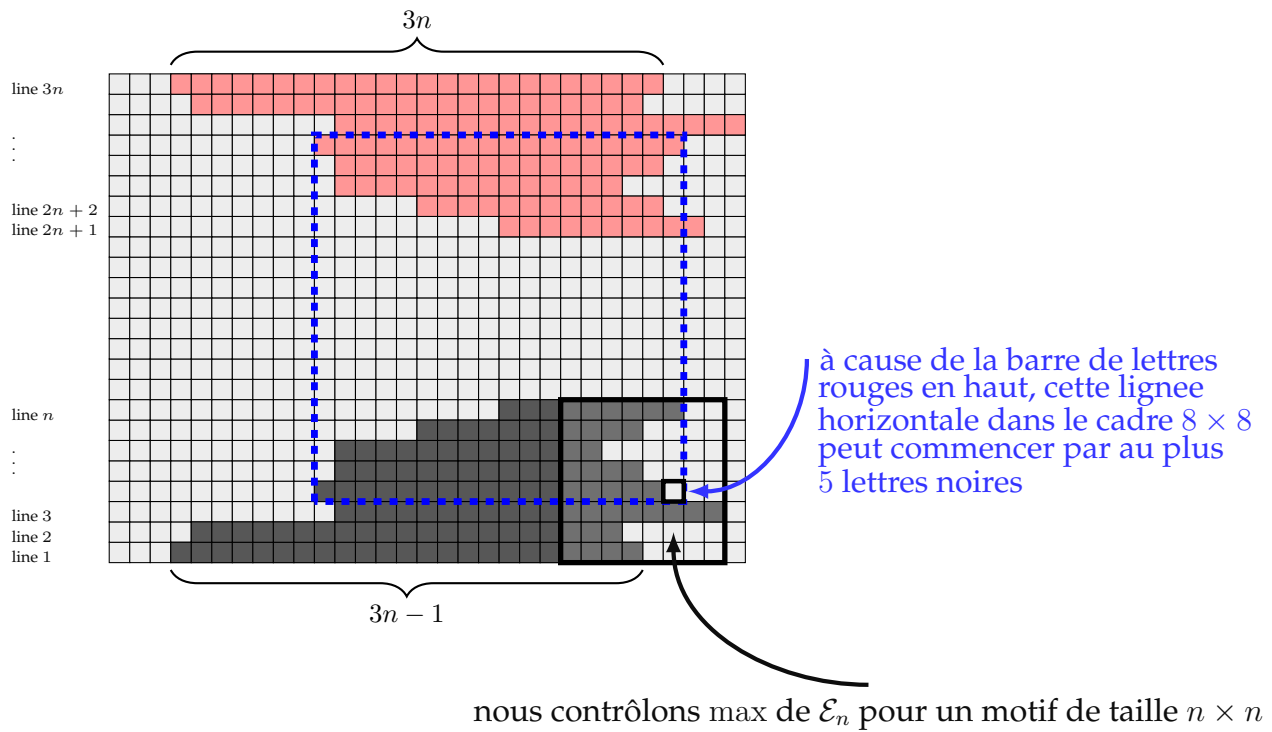
\begin{figure}
\begin{center}
  \begin{tikzpicture}[scale=0.27,x=1cm,baseline=2.125cm]

    \foreach \x in {-22,...,8} \foreach \y in {8,...,15}
    {
          \path[fill=black!7,draw=black] (\x,\y) rectangle ++ (1,1);
    }

    \foreach \x in {-22,...,8} \foreach \y in {16,...,23}
    {
          \path[fill=black!7,draw=black] (\x,\y) rectangle ++ (1,1);
    }

    \foreach \x in {-22,...,-1} \foreach \y in {0,...,7}
    {
          \path[fill=black!7,draw=black] (\x,\y) rectangle ++ (1,1);
    }

     
    \foreach \x in {0,...,8} \foreach \y in {0,...,7}
    {
          \path[fill=black!7,draw=black] (\x,\y) rectangle ++ (1,1);
    }

 \edef\y{0}
 \edef\xmax{3}
 \foreach \x in {0,...,\xmax}
 \path[fill=black!56,draw=black] (\x,\y) rectangle ++ (1,1);

 \edef\xmin{-19}  
 \foreach \x in {\xmin,...,-1} 
 \path[fill=black!66,draw=black] (\x,\y) rectangle ++ (1,1);
 \edef\xmaxtop{4}  
 \edef\ytop{23}  
 \foreach \x in {\xmin,...,\xmaxtop} 
 \path[fill=red!41,draw=black] (\x,\ytop) rectangle ++ (1,1);

 \edef\y{1}
 \edef\xmax{2}
 \foreach \x in {0,...,\xmax}
 \path[fill=black!56,draw=black] (\x,\y) rectangle ++ (1,1);

 \edef\xmin{-18}  
 \foreach \x in {\xmin,...,-1} 
 \path[fill=black!66,draw=black] (\x,\y) rectangle ++ (1,1);
 \edef\xmaxtop{3}  
 \edef\ytop{22}  
 \foreach \x in {\xmin,...,\xmaxtop} 
 \path[fill=red!41,draw=black] (\x,\ytop) rectangle ++ (1,1);

 \edef\y{2}
 \edef\xmax{7}
 \foreach \x in {0,...,\xmax}
 \path[fill=black!56,draw=black] (\x,\y) rectangle ++ (1,1);

\edef\xmin{-11}  
 \foreach \x in {\xmin,...,-1} 
 \path[fill=black!66,draw=black] (\x,\y) rectangle ++ (1,1);
 \edef\xmaxtop{8}  
 \edef\ytop{21}  
 \foreach \x in {\xmin,...,\xmaxtop} 
 \path[fill=red!41,draw=black] (\x,\ytop) rectangle ++ (1,1);

 \edef\y{3}
 \edef\xmax{4}
 \foreach \x in {0,...,\xmax}
 \path[fill=black!56,draw=black] (\x,\y) rectangle ++ (1,1);

\edef\xmin{-12}  
 \foreach \x in {\xmin,...,-1} 
 \path[fill=black!66,draw=black] (\x,\y) rectangle ++ (1,1);
 \edef\xmaxtop{5}  
 \edef\ytop{20}  
 \foreach \x in {\xmin,...,\xmaxtop} 
 \path[fill=red!41,draw=black] (\x,\ytop) rectangle ++ (1,1);

 \edef\y{4}
 \edef\xmax{3}
 \foreach \x in {0,...,\xmax}
 \path[fill=black!56,draw=black] (\x,\y) rectangle ++ (1,1);

\edef\xmin{-11}  
 \foreach \x in {\xmin,...,-1} 
 \path[fill=black!66,draw=black] (\x,\y) rectangle ++ (1,1);
 \edef\xmaxtop{4}  
 \edef\ytop{19}  
 \foreach \x in {\xmin,...,\xmaxtop} 
 \path[fill=red!41,draw=black] (\x,\ytop) rectangle ++ (1,1);

 \edef\y{5}
 \edef\xmax{1}
 \foreach \x in {0,...,\xmax}
 \path[fill=black!56,draw=black] (\x,\y) rectangle ++ (1,1);

\edef\xmin{-11}  
 \foreach \x in {\xmin,...,-1} 
 \path[fill=black!66,draw=black] (\x,\y) rectangle ++ (1,1);
 \edef\xmaxtop{2}  
 \edef\ytop{18}  
 \foreach \x in {\xmin,...,\xmaxtop} 
 \path[fill=red!41,draw=black] (\x,\ytop) rectangle ++ (1,1);

 \edef\y{6}
 \edef\xmax{3}
 \foreach \x in {0,...,\xmax}
 \path[fill=black!56,draw=black] (\x,\y) rectangle ++ (1,1);

\edef\xmin{-7}  
 \foreach \x in {\xmin,...,-1} 
 \path[fill=black!66,draw=black] (\x,\y) rectangle ++ (1,1);
 \edef\xmaxtop{4}  
 \edef\ytop{17}  
 \foreach \x in {\xmin,...,\xmaxtop} 
 \path[fill=red!41,draw=black] (\x,\ytop) rectangle ++ (1,1);

 \edef\y{7}
 \edef\xmax{5}
 \foreach \x in {0,...,\xmax}
 \path[fill=black!56,draw=black] (\x,\y) rectangle ++ (1,1);

\edef\xmin{-3}  
 \foreach \x in {\xmin,...,-1} 
 \path[fill=black!66,draw=black] (\x,\y) rectangle ++ (1,1);
 \edef\xmaxtop{6}  
 \edef\ytop{16}  
 \foreach \x in {\xmin,...,\xmaxtop} 
 \path[fill=red!41,draw=black] (\x,\ytop) rectangle ++ (1,1);


 \path[draw=black,line width=0.7mm] (0,0) rectangle ++ (8,8);



 \path[draw=blue,line width=0.9mm,dashed] (-12,3) rectangle ++ (18,18);
 \path[fill=black!7,draw=black,line width=0.6mm] (5,3) rectangle ++ (1,1);

 \draw[-latex,ultra thick,blue!80](12,10)node[right]{à cause de la barre de lettres }
       to[out=-90,in=0] (6.2,3.5);
\node[right,blue!80] at (12,8.5){rouges en haut, cette lignee  };
\node[right,blue!80] at (12,7.0){ horizontale dans le cadre $8\times 8$  };
\node[right,blue!80] at (12,5.5){peut commencer par au plus};
\node[right,blue!80] at (12,4.0){ $5$ lettres noires};

 \draw[-latex,ultra thick,black!95](15,-6)node[below]{nous contrôlons $\max$ de ${\cal E}_n$ pour un motif de taille $n\times n$}
       to[out=180,in=-90] (5.5,1.5);

\node[right,black!95] at (-27.3,0.3){\tiny line $1$};
\node[right,black!95] at (-27.3,1.3){\tiny line $2$};
\node[right,black!95] at (-27.3,2.3){\tiny line $3$};

\node[right,black!95] at (-27.3,5.3){\tiny  $\vdots$};

\node[right,black!95] at (-27.3,7.3){\tiny line $n$};

\node[right,black!95] at (-27.3,16.3){\tiny line $2n+1$};
\node[right,black!95] at (-27.3,17.3){\tiny line $2n+2$};

\node[right,black!95] at (-27.3,20.7){\tiny  $\vdots$};

\node[right,black!95] at (-27.3,23.3){\tiny line $3n$};

\draw [decorate,decoration={brace,amplitude=10pt},xshift=0pt,yshift=4pt,thick] (-19,24.3) -- (5,24.3) node [black,midway,yshift=17.0] {\footnotesize $3n$};

\draw [decorate,decoration={brace,amplitude=10pt},xshift=0pt,yshift=4pt,thick] (4,-0.5) -- (-19,-0.5) node [black,midway,yshift=-17.0] {\footnotesize $3n-1$};

\end{tikzpicture}
\end{center}
\caption[Un motif $P$ et son motif correspondant $R$.]{Un motif $P$ de taille $n\times n$ avec un voisinage garantissant des résumés ${\cal E}_n$ de valeurs inférieures au maximum désiré.} \label{f:012-enforcing-standard}
\vspace{-5pt}
\end{figure}

Pour tout \emph{motif simple} $P$ de taille $n\times n$ nous devons construire une configuration $R$ sur le complement de $B_n$, telle que 
 \begin{itemize}
 \item[(i)] $P$ et $R$ sont compatibles,
 \item[(ii)] pour tout autre motif simple $P'$ compatible avec $R$ nous avons ${\cal E}_n(P')\preccurlyeq_n {\cal E}_n(P)$.
 \end{itemize}
Pour construire la configuration $R$ requise, nous suivons la construction décrite dans \cite{kass-madden}.

Le motif $R$ sera constitué d'un nombre fini de lettres noires et rouges (les autres lettres seront blanches).

\textbf{Lettres noires de $R$.}
Pour construire $R$, nous étendons chaque bande de lettres noires de $P$ vers la gauche, de telle manière que pour la première ligne nous avons une séquence continue de $(3n-1)$ lettres noires (en comptant les lettres noires appartenant à $P$), dans la seconde ligne une séquence continue de $(3n-3)$  lettres noires, dans la troisième ligne une séquence continue de $(3n-5)$  lettres noires, etc. Dans la  $n$-ème ligne nous obtenons une séquence continue de $(n+1)$ lettres noires, voir la Fig.~\ref{f:012-enforcing-standard}.

\textbf{Lettres rouges de $R$.} 
De manière similaire, nous ajoutons des bandes de lettres rouges à $R$ : une séquence continue de $3n$ lettres rouges à la ligne $3n$, une séquence continue de $(3n-2)$ lettres rouges à la ligne $3n-1$,
 \ldots, une séquence continue de $(n+2)$ lettres rouges à la ligne $(2n+1)$. Nous plaçons ces bandes de lettres rouges de telle manière que pour tout $i=1,\ldots,n$ la lettre rouge la plus à gauche de la ligne $(3n-i+1)$ est verticalement alignée avec la lettre noire la plus à gauche de la ligne $i$, comme montré dans la Fig.~\ref{f:012-enforcing-standard}. 

Toutes les autres lettres autour de $B_n$ sont blanches.

\smallskip

\emph{Fait~1.}  Le motif $R$ construit est compatible avec $P$. 

\smallskip

\emph{Preuve du Fait~1 :}
Ce fait est facile à vérifier : nous avons choisi les longueurs des bandes de lettres noires et rouges de manière à ce qu'ils ne puissent pas former de motifs interdits (voir la Fig.~\ref{f:012-forbidden}), peu importe le placement horizontal de ces bandes. En effet, d'une part les lettres noires de la $i$-ème ligne ne peuvent interférer avec les bandes rouges des lignes $3n, 3n-1, \ldots, 3n-i$, puisque \emph{cette bande noire} est trop courte pour former un motif interdit avec une de ces bandes rouges ; d'autre part, les lettres noires de la $i$-ème ligne ne peuvent interférer avec les bandes rouges des lignes  $3n-i-1, 3n-i-2, \ldots, 2n+1$, puique \emph{ces bandes rouges} sont trop petites. \qed

\smallskip

\emph{Fait~2.} Le motif $R$ construit n'est compatible qu'avec les motifs simples $P'$ tels que ${\cal E}_n(P')\preccurlyeq_n {\cal E}_n(P)$.

\smallskip

\emph{Preuve du Fait~2 : }
Si $R$ est compatible avec un motif  $P'$ de taille $n\times n$, le profil de $P'$ n'est pas déterminé de manière unique. 
En fait, $R$ peut être compatible avec les motifs simples $P'$ dont les profils sont \emph{strictement plus petits} que le profil de $P$ (pour chaque ligne de $P'$ le nombre de lettres noires ne doit pas être plus grand que le nombre de lettres noires pour la ligne correspondante de $P$), voir la Fig.~\ref{f:012-enforcing-ex1}. Par ailleurs, si au moins une ligne de $P'$ contient plus de lettres noires que la même ligne de $P$, alors $P'$ et $R$ sont incompatibles, i.e., l'union de $P'$ et $R$ contient un motif interdit, comme montré dans la Fig.~\ref{f:012-enforcing-ex2}.
\qed

\smallskip

Le lemme découle de Fait~1 et Fait~2. Pour une preuve plus détaillée le lecteur peut se référer à~\cite{kass-madden}.

\begin{remark}
Dans la construction décrite ci-dessus, le motif $R$ ne détermine pas de manière unique les résumés des motifs $P'$ compatibles avec $R$ (ces résumés peuvent être différents, même s'ils doivent tous n'être \emph{pas plus grands} que le résumé du motif initial $P$). C'est pourquoi nous ne pouvons pas utiliser la  Proposition~\ref{p:epitomes}, et que nous devons utiliser la Proposition~\ref{pbis:epitomes}.
\end{remark}


%
\begin{figure}[H]
\begin{center}
  \begin{tikzpicture}[scale=0.24,x=1cm,baseline=2.125cm]

    \foreach \x in {-22,...,8} \foreach \y in {8,...,15}
    {
          \path[fill=black!7,draw=black] (\x,\y) rectangle ++ (1,1);
    }

    \foreach \x in {-22,...,8} \foreach \y in {16,...,23}
    {
          \path[fill=black!7,draw=black] (\x,\y) rectangle ++ (1,1);
    }

    \foreach \x in {-22,...,-1} \foreach \y in {0,...,7}
    {
          \path[fill=black!7,draw=black] (\x,\y) rectangle ++ (1,1);
    }

     
    \foreach \x in {0,...,8} \foreach \y in {0,...,7}
    {
          \path[fill=black!7,draw=black] (\x,\y) rectangle ++ (1,1);
    }

 \edef\y{0}
 \edef\xmax{3}
 \foreach \x in {0,...,\xmax}
 \path[fill=black!56,draw=black] (\x,\y) rectangle ++ (1,1);

 \edef\xmin{-19}  
 \foreach \x in {\xmin,...,-1} 
 \path[fill=black!66,draw=black] (\x,\y) rectangle ++ (1,1);
 \edef\xmaxtop{4}  
 \edef\ytop{23}  
 \foreach \x in {\xmin,...,\xmaxtop} 
 \path[fill=red!41,draw=black] (\x,\ytop) rectangle ++ (1,1);

 \edef\y{1}
 \edef\xmax{1}
 \foreach \x in {0,...,\xmax}
 \path[fill=black!56,draw=black] (\x,\y) rectangle ++ (1,1);
  \path[draw=black,line width=0.5mm,dashed] (0,\y) rectangle ++ (3,1);

 \edef\xmin{-18}  
 \foreach \x in {\xmin,...,-1} 
 \path[fill=black!66,draw=black] (\x,\y) rectangle ++ (1,1);
 \edef\xmaxtop{3}  
 \edef\ytop{22}  
 \foreach \x in {\xmin,...,\xmaxtop} 
 \path[fill=red!41,draw=black] (\x,\ytop) rectangle ++ (1,1);

 \edef\y{2}
 \edef\xmax{2}
 \foreach \x in {0,...,\xmax}
 \path[fill=black!56,draw=black] (\x,\y) rectangle ++ (1,1);
  \path[draw=black,line width=0.5mm,dashed] (0,\y) rectangle ++ (8,1);

\edef\xmin{-11}  
 \foreach \x in {\xmin,...,-1} 
 \path[fill=black!66,draw=black] (\x,\y) rectangle ++ (1,1);
 \edef\xmaxtop{8}  
 \edef\ytop{21}  
 \foreach \x in {\xmin,...,\xmaxtop} 
 \path[fill=red!41,draw=black] (\x,\ytop) rectangle ++ (1,1);

 \edef\y{3}
 \edef\xmax{4}
 \foreach \x in {0,...,\xmax}
 \path[fill=black!56,draw=black] (\x,\y) rectangle ++ (1,1);

\edef\xmin{-12}  
 \foreach \x in {\xmin,...,-1} 
 \path[fill=black!66,draw=black] (\x,\y) rectangle ++ (1,1);
 \edef\xmaxtop{5}  
 \edef\ytop{20}  
 \foreach \x in {\xmin,...,\xmaxtop} 
 \path[fill=red!41,draw=black] (\x,\ytop) rectangle ++ (1,1);

 \edef\y{4}
 \edef\xmax{1}
 \foreach \x in {0,...,\xmax}
 \path[fill=black!56,draw=black] (\x,\y) rectangle ++ (1,1);
  \path[draw=black,line width=0.5mm,dashed] (0,\y) rectangle ++ (3,1);

\edef\xmin{-11}  
 \foreach \x in {\xmin,...,-1} 
 \path[fill=black!66,draw=black] (\x,\y) rectangle ++ (1,1);
 \edef\xmaxtop{4}  
 \edef\ytop{19}  
 \foreach \x in {\xmin,...,\xmaxtop} 
 \path[fill=red!41,draw=black] (\x,\ytop) rectangle ++ (1,1);

 \edef\y{5}
 \edef\xmax{1}
 \foreach \x in {0,...,\xmax}
 \path[fill=black!56,draw=black] (\x,\y) rectangle ++ (1,1);

\edef\xmin{-11}  
 \foreach \x in {\xmin,...,-1} 
 \path[fill=black!66,draw=black] (\x,\y) rectangle ++ (1,1);
 \edef\xmaxtop{2}  
 \edef\ytop{18}  
 \foreach \x in {\xmin,...,\xmaxtop} 
 \path[fill=red!41,draw=black] (\x,\ytop) rectangle ++ (1,1);

 \edef\y{6}
 \edef\xmax{3}
 \foreach \x in {0,...,\xmax}
 \path[fill=black!56,draw=black] (\x,\y) rectangle ++ (1,1);

\edef\xmin{-7}  
 \foreach \x in {\xmin,...,-1} 
 \path[fill=black!66,draw=black] (\x,\y) rectangle ++ (1,1);
 \edef\xmaxtop{4}  
 \edef\ytop{17}  
 \foreach \x in {\xmin,...,\xmaxtop} 
 \path[fill=red!41,draw=black] (\x,\ytop) rectangle ++ (1,1);

 \edef\y{7}
 \edef\xmax{3}
 \foreach \x in {0,...,\xmax}
 \path[fill=black!56,draw=black] (\x,\y) rectangle ++ (1,1);
  \path[draw=black,line width=0.5mm,dashed] (0,\y) rectangle ++ (6,1);

\edef\xmin{-3}  
 \foreach \x in {\xmin,...,-1} 
 \path[fill=black!66,draw=black] (\x,\y) rectangle ++ (1,1);
 \edef\xmaxtop{6}  
 \edef\ytop{16}  
 \foreach \x in {\xmin,...,\xmaxtop} 
 \path[fill=red!41,draw=black] (\x,\ytop) rectangle ++ (1,1);


 \path[draw=black,line width=0.7mm] (0,0) rectangle ++ (8,8);



 \draw[-latex,ultra thick,black!95](15,-6)node[below]{ce motif $P'$ de taille  $n\times n$ est compatible avec le voisinage}
       to[out=180,in=-90] (5.5,1.5);

\end{tikzpicture}
\end{center}
\caption[Un motif $P'$ compatible avec $R$.]{Un motif $P'$ avec ${\cal E}_n(P')\preccurlyeq_n {\cal E}_n(P)$ est compatible avec le voisinage.} \label{f:012-enforcing-ex1}
\vspace{10pt}
\end{figure}

\bigskip

\begin{figure}[H]
\vspace{-5pt}
\begin{center}
  \begin{tikzpicture}[scale=0.24,x=1cm,baseline=2.125cm]

    \foreach \x in {-22,...,8} \foreach \y in {8,...,15}
    {
          \path[fill=black!7,draw=black] (\x,\y) rectangle ++ (1,1);
    }

    \foreach \x in {-22,...,8} \foreach \y in {16,...,23}
    {
          \path[fill=black!7,draw=black] (\x,\y) rectangle ++ (1,1);
    }

    \foreach \x in {-22,...,-1} \foreach \y in {0,...,7}
    {
          \path[fill=black!7,draw=black] (\x,\y) rectangle ++ (1,1);
    }

     
    \foreach \x in {0,...,8} \foreach \y in {0,...,7}
    {
          \path[fill=black!7,draw=black] (\x,\y) rectangle ++ (1,1);
    }

 \edef\y{0}
 \edef\xmax{3}
 \foreach \x in {0,...,\xmax}
 \path[fill=black!56,draw=black] (\x,\y) rectangle ++ (1,1);

 \edef\xmin{-19}  
 \foreach \x in {\xmin,...,-1} 
 \path[fill=black!66,draw=black] (\x,\y) rectangle ++ (1,1);
 \edef\xmaxtop{4}  
 \edef\ytop{23}  
 \foreach \x in {\xmin,...,\xmaxtop} 
 \path[fill=red!41,draw=black] (\x,\ytop) rectangle ++ (1,1);

 \edef\y{1}
 \edef\xmax{2}
 \foreach \x in {0,...,\xmax}
 \path[fill=black!56,draw=black] (\x,\y) rectangle ++ (1,1);

 \edef\xmin{-18}  
 \foreach \x in {\xmin,...,-1} 
 \path[fill=black!66,draw=black] (\x,\y) rectangle ++ (1,1);
 \edef\xmaxtop{3}  
 \edef\ytop{22}  
 \foreach \x in {\xmin,...,\xmaxtop} 
 \path[fill=red!41,draw=black] (\x,\ytop) rectangle ++ (1,1);

 \edef\y{2}
 \edef\xmax{7}
 \foreach \x in {0,...,\xmax}
 \path[fill=black!56,draw=black] (\x,\y) rectangle ++ (1,1);

\edef\xmin{-11}  
 \foreach \x in {\xmin,...,-1} 
 \path[fill=black!66,draw=black] (\x,\y) rectangle ++ (1,1);
 \edef\xmaxtop{8}  
 \edef\ytop{21}  
 \foreach \x in {\xmin,...,\xmaxtop} 
 \path[fill=red!41,draw=black] (\x,\ytop) rectangle ++ (1,1);

 \edef\y{3}
 \edef\xmax{4}
 \foreach \x in {0,...,\xmax}
 \path[fill=black!56,draw=black] (\x,\y) rectangle ++ (1,1);

\edef\xmin{-12}  
 \foreach \x in {\xmin,...,-1} 
 \path[fill=black!66,draw=black] (\x,\y) rectangle ++ (1,1);
 \edef\xmaxtop{5}  
 \edef\ytop{20}  
 \foreach \x in {\xmin,...,\xmaxtop} 
 \path[fill=red!41,draw=black] (\x,\ytop) rectangle ++ (1,1);

 \edef\y{4}
 \edef\xmax{3}
 \foreach \x in {0,...,\xmax}
 \path[fill=black!56,draw=black] (\x,\y) rectangle ++ (1,1);

\edef\xmin{-11}  
 \foreach \x in {\xmin,...,-1} 
 \path[fill=black!66,draw=black] (\x,\y) rectangle ++ (1,1);
 \edef\xmaxtop{4}  
 \edef\ytop{19}  
 \foreach \x in {\xmin,...,\xmaxtop} 
 \path[fill=red!41,draw=black] (\x,\ytop) rectangle ++ (1,1);

 \edef\y{5}
 \edef\xmax{1}
 \foreach \x in {0,...,\xmax}
 \path[fill=black!56,draw=black] (\x,\y) rectangle ++ (1,1);

\edef\xmin{-11}  
 \foreach \x in {\xmin,...,-1} 
 \path[fill=black!66,draw=black] (\x,\y) rectangle ++ (1,1);
 \edef\xmaxtop{2}  
 \edef\ytop{18}  
 \foreach \x in {\xmin,...,\xmaxtop} 
 \path[fill=red!41,draw=black] (\x,\ytop) rectangle ++ (1,1);

 \edef\y{6}
 \edef\xmax{3}
 \foreach \x in {0,...,\xmax}
 \path[fill=black!56,draw=black] (\x,\y) rectangle ++ (1,1);

\edef\xmin{-7}  
 \foreach \x in {\xmin,...,-1} 
 \path[fill=black!66,draw=black] (\x,\y) rectangle ++ (1,1);
 \edef\xmaxtop{4}  
 \edef\ytop{17}  
 \foreach \x in {\xmin,...,\xmaxtop} 
 \path[fill=red!41,draw=black] (\x,\ytop) rectangle ++ (1,1);

 \edef\y{7}
 \edef\xmax{5}
 \foreach \x in {0,...,\xmax}
 \path[fill=black!56,draw=black] (\x,\y) rectangle ++ (1,1);

\edef\xmin{-3}  
 \foreach \x in {\xmin,...,-1} 
 \path[fill=black!66,draw=black] (\x,\y) rectangle ++ (1,1);
 \edef\xmaxtop{6}  
 \edef\ytop{16}  
 \foreach \x in {\xmin,...,\xmaxtop} 
 \path[fill=red!41,draw=black] (\x,\ytop) rectangle ++ (1,1);


 \path[draw=black,line width=0.7mm] (0,0) rectangle ++ (8,8);



 \path[draw=blue,line width=0.9mm,dashed] (-12,3) rectangle ++ (18,18);
 \path[fill=black!66,draw=black,line width=0.4mm] (5,3) rectangle ++ (1,1);

 \draw[-latex,ultra thick,blue!80](12,8)node[right]{en ajoutant une lettre noire }
       to[out=-90,in=0] (5.5,3.5);
\node[right,blue!80] at (12,6.5){supplémentaire,  nous obtenons };
\node[right,blue!80] at (12,5.0){un motif interdit  };

 \draw[-latex,ultra thick,black!95](15,-6)node[below]{ce motif $P''$ de taille $n\times n$ est  incompatible avec le voisinage}
       to[out=180,in=-90] (5.5,1.5);

\end{tikzpicture}
\end{center}
\caption[Un motif $P''$ qui n'est pas compatible avec $R$.]{Un motif $P''$ avec ${\cal E}_n(P'')\not\preccurlyeq_n {\cal E}_n(P)$ n'est pas compatible avec le voisinage.} \label{f:012-enforcing-ex2}
\vspace{-5pt}
\end{figure}

\smallskip

Il reste à observer que pour tout $n$ il y a $(n+1)^n$ motifs simples de taille $n\times n$ (pour chaque ligne la position de la frontière entre les zones noires et blanches varie entre $0$ et $n$). Par conséquent, pour des motifs simples $P$ de taille $n\times n$ la complexité de Kolmogorov de leur profil est supérieure à $n\log (n+1)$, i.e.,  même la complexité de Kolmogorov simple $\CK({\cal E}_n(P))$ est super-linéaire. Nous pouvons appliquer la Proposition~\ref{pbis:epitomes}~(a) et conclure que le shift n'est pas sofique.
\end{proof}

\end{example}

\subsection{La technique des résumés ordonnés et celle des chaînes d'unions croissantes d'ensemble d'extensions de motif}
\label{s:km}

Dans l'Exemple~\ref{ex:km} nous traduisons dans le langage  des résumés une preuve proposée par S.~Kass et K.~Madden dans \cite{kass-madden}.
Dans cette section nous montrons que cet exemple est un cas particulier pouvant être généralisé : la  technique de Kass et Madden est équivalente à un cas particulier de la technique des résumés ordonnés.
Dans ce cas particulier nous supposons que les résumés ont des domaines décidables, et nous utilisons la complexité de Kolmogorov entière (i.e., sans limite de ressources).
Nous commençons par deux définitions issues de \cite{kass-madden}.

\begin{definition} 
Soit $S$ sur $\mathbb{Z}^2$ un shift et $P$ un motif de support $B_n$.
L'ensemble des extensions de $P$, appelé $E_S(P)$, est l'ensemble des motifs $Q$ sur $F_n$ tels que l'union de $P$ et $Q$ forme une configuration de $S$.
La famille des ensembles d'extensions de tous les motifs $P$ de support $B_n$ globalement admissibles est appelée 
\[
Ext_S^n  := \{ E_S(P)\ |\ P \text{ est un motif globalement admissible sur }B_n\}.
\]
\end{definition}

Dans un certain sens, la cardinalité de $Ext_S^n$ correspond au ``flux d'information'' entre un motif de taille $n\times n$ et le reste de la configuration sur $\mathbb{Z}^2$. Intuitivement, ce ``flux d'information'' traverse la bordure, de taille seulement en $O(n)$, entourant un motif de taille $n\times n$. Cette intuition peut être précisée pour les shifts de type fini : pour tout shift de type fini $S$ sur $\mathbb{Z}^2$ la taille de $Ext_S^n$ ne peut excéder $2^{O(n)}$. 
Cependant, cette propriété n'est plus nécessairement vraie pour les shifts sofiques multidimensionnels (dans \cite{kass-madden} cette observation est attribuée à des travaux non publiés de C.~Hoffman,  A.~Quas, and R.~Pavlov). \label{p:extenders-do-not-work}

En réalité, pour certains shifts sofiques sur $\mathbb{Z}^2$ la taille de $Ext_S^n$ peut croître plus rapidement que $2^{O(n)}$. Par exemple, pour le shift sofique de l'Exemple~\ref{ex:semi-mirror-bis2}, la croissance de la taille de $Ext_S^n$ est en $2^{\Omega(n^2)}$. 

Par conséquent, pour montrer qu'un shift n'est pas sofique il ne suffit pas d'étudier la \emph{taille} de la famille des ensembles d'extensions de motifs ; nous devons nous intéresser à des propriétés plus subtiles de ces objets. Plus précisément, nous allons utiliser la structure induite par la relation d'inclusion sur la classe des ensembles d'extensions de motifs. En pratique, nous utilisons les chaînes d'union croissante des ensembles d'extensions de motifs.
\begin{definition}
Soit $S_1,\ldots,S_m$ une séquence finie d'ensembles non vides. Cette séquence est dite une chaîne d'\emph{union croissante} si les unions partielles des ensembles $S_i$ croissent strictement, i.e., 
pour tout $1\le i\le m$
\[
 S_i  \nsubseteq \bigcup\limits_{j=1}^{i-1} S_j .
\]
\end{definition}

Nous pouvons maintenant formuler le théorème de Kass et Madden. 
\begin{theorem}[\cite{kass-madden}]\label{th:km}
Soit $S$ un shift sur $\mathbb{Z}^2$. Supposons que pour tout $M > 0$ donné, il existe un $n > 0$, un $m > M^{n}$, 
et des motifs $P_1,\ldots,P_m$ de support $B_n$ globalement admissibles pour lesquels 
\[
Ext_S(P_1), \ldots,Ext_S(P_m)
\]
est une chaîne d'union croissante. Alors $S$ n'est pas sofique.
\end{theorem}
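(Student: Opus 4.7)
Le plan est de prouver la contraposée : en supposant $S$ sofique, je montrerai que toute chaîne d'union croissante d'ensembles d'extensions de motifs de taille $n\times n$ a une longueur au plus $C^n$ pour une constante $C$ dépendant seulement du shift, ce qui contredira l'hypothèse dès qu'on choisit $M>C$.

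L'ingrédient principal sera, comme dans la preuve de la Proposition~\ref{p:epitomes}, le passage à un shift de type fini couvrant $\hat S$ sur un alphabet $\Sigma'$, muni d'une projection coordonnée par coordonnée $\pi$. On peut supposer, par un regroupement standard de cases, que les contraintes locales de $\hat S$ ne portent que sur des paires de lettres voisines. Pour chaque $i\in\{1,\ldots,m\}$, je fixerai d'abord un témoin $R_i \in Ext_S(P_i) \setminus \bigcup_{j<i} Ext_S(P_j)$ fourni par la définition de chaîne d'union croissante ; ensuite, via un argument de compacité, je relèverai la configuration $P_i \cup R_i$ de $S$ en une configuration de $\hat S$, et noterai $Q_i$ la restriction de ce relèvement sur $B_n$ et $\partial Q_i$ sa \emph{bordure} (le motif sur les $4(n+1)$ cases entourant immédiatement $B_n$, au sens de la Fig.~\ref{f:sft2sofique}).

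L'étape centrale, et le point potentiellement délicat, sera de montrer que les bordures $\partial Q_1, \ldots, \partial Q_m$ sont deux à deux distinctes, au moyen d'un argument de \emph{greffe}. Si on avait $\partial Q_i = \partial Q_j$ pour un couple $i<j$, on pourrait remplacer $Q_j$ par $Q_i$ dans le relèvement de $P_j \cup R_j$ : la compatibilité locale dans $\hat S$ ne dépendant que de la bordure commune, la configuration obtenue serait encore admissible, et sa projection contiendrait $P_i \cup R_j$ comme configuration valide de $S$ ; on aurait donc $R_j \in Ext_S(P_i)$, contredisant le choix $R_j \notin \bigcup_{k<j} Ext_S(P_k)$. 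Une fois cette étape établie, le comptage est immédiat : il y a au plus $|\Sigma'|^{4(n+1)} \le C^n$ bordures distinctes, d'où la borne recherchée. Cet argument correspondra exactement à la spécialisation annoncée de la technique des résumés ordonnés (Proposition~\ref{pbis:epitomes}), la bordure $\partial Q_i$ jouant le rôle de ``certificat'' déterminant l'ensemble des extensions compatibles avec $P_i$.
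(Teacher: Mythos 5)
Votre preuve est correcte. Notez d'abord que le manuscrit ne démontre pas le Théorème~\ref{th:km} directement : il est cité de \cite{kass-madden}, puis retrouvé indirectement en le plongeant dans le cadre des résumés ordonnés (Proposition~\ref{p:km2e} transforme la chaîne en une famille de résumés $({\cal E}_n,\preccurlyeq_n)$ à valeurs dans $\{1,\ldots,m\}$, puis la Proposition~\ref{pbis:epitomes}~(c), relativisée à un oracle décrivant les motifs $P_i$, fournit la contradiction via un argument de comptage déguisé en complexité de Kolmogorov). Votre argument est une version directe et plus élémentaire du même cœur combinatoire : votre étape de greffe (si $\partial Q_i=\partial Q_j$ avec $i<j$, recoller $Q_i$ dans le relèvement de $P_j\cup R_j$ donne $R_j\in Ext_S(P_i)$, contredisant le choix du témoin) est exactement ce qui est implicite dans la preuve du Fait~\ref{claim:2}, et le comptage final $m\le |\Sigma'|^{4(n+1)}\le C^n$ est le Fait~\ref{claim:1} ; la validité du recollement repose bien, comme vous le signalez, sur la réduction aux contraintes entre voisins immédiats, et le relèvement de $P_i\cup R_i$ ne demande que la surjectivité de $\pi$ sur les configurations (la compacité est superflue mais inoffensive). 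Ce que chaque approche apporte : la vôtre est autonome, courte, et évite entièrement les résumés, les oracles et la complexité algorithmique, ce qui est préférable si l'on ne vise que ce théorème ; la voie du manuscrit, plus lourde ici, montre que le critère de Kass et Madden est le cas particulier « domaines décidables, complexité non bornée » d'une méthode qui s'étend aux résumés à ressources bornées ou à domaines non décidables (Exemples~\ref{ex:busy-beaver} et~\ref{ex:semi-mirror-sofique}), hors de portée du Corollaire~\ref{cor:km}.
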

Intuitivement, ce théorème signifie que l'``information essentielle'' qui peut traverser la bordure d'un motif de taille $n\times n$ (dans un shift sofique)
ne correspond pas au nombre d'ensembles d'extensions de motifs, mais à la longueur des chaînes d'union croissante d'ensembles d'extensions de motifs.
Ce théorème peut être reformulé ainsi :
\begin{corollary}\label{cor:km}
Soit $S$ un shift sur $\mathbb{Z}^2$. Pour tout $n$ nous choisissons une séquence de motifs $P_1^n,\ldots,P_{m(n)}^n$ de support $B_n$ globalement admissibles pour lesquels la séquence d'ensembles d'extensions de motifs
\[
Ext_S(P_1^n), \ldots,Ext_S(P_{m(n)}^n)
\]
est une chaîne d'union croissante. Si $\log m(n) = \omega(n)$, alors  $S$ n'est pas sofique.
\end{corollary}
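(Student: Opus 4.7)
La stratégie consiste à déduire le corollaire directement du Théorème~\ref{th:km} en observant que l'hypothèse asymptotique $\log m(n) = \omega(n)$ n'est qu'une reformulation de l'hypothèse quantifiée de ce théorème.

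D'abord, je rappellerai ce que la notation $\log m(n) = \omega(n)$ signifie précisément : par définition, $\log m(n) / n \to +\infty$ lorsque $n \to +\infty$, ou de manière équivalente, pour tout réel $c > 0$ fixé, il existe un entier $n_0$ tel que $m(n) > 2^{c n}$ pour tout $n \ge n_0$.

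Ensuite, étant donné un $M > 0$ quelconque (comme demandé dans l'énoncé du Théorème~\ref{th:km}), je choisirai $c := \log_2 M + 1$. Il existera alors un entier $n \ge n_0$ avec $m(n) > 2^{c n} > M^n$. Par hypothèse du corollaire, on dispose pour ce $n$ de motifs globalement admissibles $P_1^n, \ldots, P_{m(n)}^n$ de support $B_n$ dont les ensembles d'extensions $Ext_S(P_1^n), \ldots, Ext_S(P_{m(n)}^n)$ forment une chaîne d'union croissante, avec $m(n) > M^n$. Ces données vérifient exactement l'hypothèse du Théorème~\ref{th:km}.

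Enfin, l'application du Théorème~\ref{th:km} conclut que $S$ n'est pas sofique. Il n'y a pas d'obstacle technique dans cette preuve : elle se réduit essentiellement à une traduction entre la notation asymptotique $\omega(\cdot)$ et la quantification universelle sur $M$ du théorème de Kass et Madden. Tout le contenu mathématique non trivial est déjà présent dans la démonstration du Théorème~\ref{th:km} lui-même.
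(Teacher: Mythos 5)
Votre preuve est correcte et suit exactement l'approche du manuscrit, qui présente le Corollaire~\ref{cor:km} comme une simple reformulation du Théorème~\ref{th:km} : la traduction entre la condition asymptotique $\log m(n)=\omega(n)$ et la quantification universelle sur $M$ (via le choix $c=\log_2 M+1$ et un $n$ assez grand avec $m(n)>2^{cn}>M^n$) est précisément l'argument attendu, et vous l'explicitez sans lacune.
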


Nous montrons que chaque preuve qu'un shift n'est pas sofique basée sur le Théorème~\ref{th:km} peut être reformulée dans le langage des résumés ordonnés de la Définition~\ref{d:ordered-epitome}, avec des domaines décidables et la complexité de Kolmogorov entière (pas à ressources bornées), et vice-versa. Nous prouvons cette équivalence en deux étapes : tout d'abord nous montrons qu'un argument exprimé dans le langage des chaînes d'union croissante d'ensembles d'extensions de motifs peut être traduit dans le langage des résumés ordonnés (Proposition~\ref{p:km2e}); ensuite, nous prouvons l'implication inverse en montrons comment un argument expliqué dans le langage des résumés ordonnés peut être reformulé dans le langage des chaînes d'union croissante des ensembles d'extensions de motifs (Proposition~\ref{p:e2km} and Proposition~\ref{p:counting}).

\begin{proposition}\label{p:km2e}
Soit $S$ un shift sur $\mathbb{Z}^2$. Supposons que pour tout $n$ il y a une séquence de motifs $P_1,\ldots,P_{m(n)}$ de support $B_n$ globalement admissibles tels que la séquence des ensembles des extensions de motifs
\[
Ext_S(P_1), \ldots,Ext_S(P_{m(n)})
\]
est une chaîne d'union croissante. Alors il existe une bijection des motifs globalement admissibles de $S$ de support $B_n$ vers les entiers inférieurs à $m$:  
\[
 {\cal E}_n :[\text{motif sur le domaine } B_n] \to \{1,\ldots, m\}
 \]
et un ordre partiel $\preccurlyeq_n$ sur $\{1,\ldots, m(n)\}$ tels que 
$( {\cal E}_n , \preccurlyeq_n)$ satisfait la définition des résumés ordonnés pour $S$.
\end{proposition}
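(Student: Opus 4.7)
The plan is to convert the chain-of-increasing-union into an ordered epitome family $({\cal E}_n,\preccurlyeq_n)$, using the ``separating'' configurations produced by the chain property as the witness motifs $R$ required by Definition~\ref{d:ordered-epitome}. The main work is a bookkeeping construction; the only genuinely delicate point is checking that the relation we define really is a partial order.

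First, for each $i \in \{1,\ldots,m\}$ the hypothesis $Ext_S(P_i) \not\subseteq \bigcup_{j<i} Ext_S(P_j)$ lets us fix a motif $R_i$ on $F_n$ that lies in $Ext_S(P_i)$ but not in any $Ext_S(P_j)$ for $j<i$. In other words, $R_i$ is compatible with $P_i$ and incompatible with all earlier $P_j$'s in the chain; this is the pointwise ``separating'' information the chain was designed to yield. I would then define ${\cal E}_n$ as the partial function supported on $\{P_1,\ldots,P_m\}$ with ${\cal E}_n(P_i)=i$. Since the $P_i$'s are pairwise distinct globally admissible patterns on $B_n$, this is well defined and injective into $\{1,\ldots,m\}$ (giving the bijection onto its image mentioned in the statement); for all other patterns on $B_n$, ${\cal E}_n$ is left undefined, which Definition~\ref{d:ordered-epitome} allows.

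Next I would build the partial order. Define a relation $\prec_n$ on $\{1,\ldots,m\}$ by $j\prec_n i$ iff $R_i \in Ext_S(P_j)$ (equivalently, $P_j$ and $R_i$ are compatible), and let $\preccurlyeq_n$ be its reflexive–transitive closure. The order property that requires care is antisymmetry; the key observation is that the raw relation $\prec_n$ already refines the ordering of integers, since by the defining property of $R_i$ the inequality $j<i$ would force $R_i \notin Ext_S(P_j)$, hence $j\prec_n i$ implies $j\geq i$. Any finite $\prec_n$-chain from $j$ to $i$ therefore gives $j\geq i$ in $\mathbb{N}$, and consequently $i\preccurlyeq_n j$ together with $j\preccurlyeq_n i$ yields $i=j$. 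Reflexivity and transitivity are built in.

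The final step is to check the ordered-epitome condition: for every $P_i$ I claim that $R_i$ itself serves as the required configuration on $F_n$. Condition (i) is immediate, since $R_i \in Ext_S(P_i)$. For condition (ii), take any pattern $P'$ on $B_n$ compatible with $R_i$ with ${\cal E}_n(P')$ defined; then $P' = P_j$ for some $j$, and compatibility gives $R_i \in Ext_S(P_j)$, i.e.\ $j\prec_n i$, whence ${\cal E}_n(P')=j\preccurlyeq_n i={\cal E}_n(P_i)$. This is exactly what the definition demands. The hard part, as flagged above, is not any of these verifications individually but ensuring that the natural relation derived from the $R_i$'s is antisymmetric after transitive closure; the chain ordering is precisely the device that guarantees this, which explains why the Kass--Madden chain hypothesis, rather than a mere count of extender sets, is the right notion to translate into the language of ordered epitomes.
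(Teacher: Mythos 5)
Your proof is correct and follows essentially the same route as the paper: the same partial map ${\cal E}_n(P_i)=i$ (undefined elsewhere) and the same separating configurations $R_i$ supplied by the chain property, with condition~(ii) of Definition~\ref{d:ordered-epitome} reduced to the observation that compatibility of $P_j$ with $R_i$ forces $j\ge i$. The only difference is that you define $\preccurlyeq_n$ as the reflexive--transitive closure of the compatibility relation and therefore have to check antisymmetry, whereas the paper simply takes the reverse linear order $m(n)\preccurlyeq_n\cdots\preccurlyeq_n 2\preccurlyeq_n 1$ on $\{1,\ldots,m(n)\}$, for which that verification is unnecessary.
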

\begin{proof}
La construction des résumés adéquats est directe : soit ${\cal E}_n(P_i) = i$ pour $i=1,\ldots,m(n)$, ${\cal E}_n$ non défini pour les autres motifs. On définit l'ordre linéaire sur les valeurs de ${\cal E}_n$ comme l'inverse de l'ordre naturel sur les entiers :
\[
m(n)\preccurlyeq_n   \ldots  \preccurlyeq_n 2  \preccurlyeq_n 1.
\]
Vérifions que la famille $({\cal E}_n,\preccurlyeq_n)$ obtenue vérifie la définition des résumés. 
Puisque la séquence d'ensembles d'extensions de motifs $Ext_S(P_i)$ est une chaîne d'union croissante, pour tout $i$ nous avons
\[
Ext_S(P_i)  \not\subset  \bigcup\limits_{j=1}^{i-1} Ext_S(P_i).
\]
Cela signifie qu'il existe un motif $R$ de support $F_n$ qui est compatible avec $P_i$ (i.e., appartient à $Ext_S(P_i)$) mais pas avec $P_1,\ldots, P_{i-1}$ (i.e., n'appartient pas à $ \bigcup\limits_{j=1}^{i-1} Ext_S(P_i)$).
Par conséquent, pour tout motif $P'$ de taille $n\times n$ compatible avec $R$, si ${\cal E}_n(P')$ est défini alors ${\cal E}_n(P')\preccurlyeq_n i$, ce qui est exactement la définition des résumés ordonnés.
\end{proof}

Supposons que pour tout $n$ nous avons une séquence de motifs $P^n_1,\ldots,P^n_{m(n)}$ de support $B_n$ globalement admissibles tels que la séquence correspondante d'ensembles d'extension de motifs
\[
Ext_S(P^n_1), \ldots,Ext_S(P^n_{m(n)})
\]
est une chaîne d'union croissante. En appliquant la Proposition~\ref{p:km2e} nous obtenons une famille de résumés ${\cal E}_n$ avec $m(n)$ valeurs. Dans le cas général, cette famille de résumés peut ne pas être calculable. Cependant, elle est trivialement calculable étant donné l'oracle $\cal O$ qui contient la description des séquences $P^n_1,\ldots,P^n_{m(n)}$.

Si le nombre $\log m(n)$ croît de manière super-linéaire (i.e., $m(n) =2^{\omega(n)}$, ce qui est nécessaire pour appliquer le Corollaire~\ref{cor:km}), alors par un argument de comptage nous pouvons dire que la complexité de Kolmogorov de ${\cal E}_n(P^n_i)$ (même avec un accès à l'oracle $\cal O$) n'est pas limitée par une fonction linéaire. Plus précisément, pour tout $\lambda>0$ et pour un $n$ suffisamment grand il existe un motif $P^n_i$ tel que $\CK^{\cal O}({\cal E}_n(P^n_i))  > \lambda n$. Nous pouvons donc appliquer la Proposition~\ref{pbis:epitomes}(c) et conclure que $S$ n'est pas sofique.

Ainsi, si nous pouvons montrer qu'un shift $S$ n'est pas sofique avec la technique de Kass et Madden (Corollaire~\ref{cor:km}), alors grâce à la Proposition~\ref{p:km2e} nous pouvons appliquer la Proposition~\ref{pbis:epitomes}~(c)  (avec la complexité de Kolmogorov relativisée avec l'oracle $\cal O$ choisi ci-dessus) et montrer que $S$ n'est pas sofique avec la technique des résumés.
 
 \medskip
 
Avec les deux propositions suivantes nous montrons que la traduction dans l'autre sens (du langage des résumés ordonnés vers celui des chaîne d'ensembles d'extensions de motifs) est également possible tant que nous considérons des résumés ordonnés avec un domaine décidable et la complexité de Kolmogorov entière.   
 
\begin{proposition}\label{p:e2km}
Soient un shift $S$ , $({\cal E}_n,\preccurlyeq_n)$ une famille de résumés de $S$, et
 $m=m(n)$ le nombre d'images distinctes de ${\cal E}_n$ calculées à partir de motifs globalement admissibles. 
Alors il existe une famille de motifs $P_1,\ldots, P_m$ de support $B_n$ globalement admissibles tels que la séquence d'ensembles d'extensions de motifs
\[
Ext_S(P_1), \ldots,Ext_S(P_m)
\]
est une chaîne d'union croissante.
\end{proposition}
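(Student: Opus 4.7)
La démonstration procédera en trois étapes. D'abord, énumérons les $m=m(n)$ valeurs distinctes de ${\cal E}_n$ atteintes sur les motifs globalement admissibles de support $B_n$, notées $v_1, \ldots, v_m$, et choisissons pour chaque $v_i$ un motif $P_i$ globalement admissible avec ${\cal E}_n(P_i) = v_i$. L'étape clé consistera alors à réordonner la liste $(v_1, \ldots, v_m)$ (et par conséquent la liste $(P_1, \ldots, P_m)$) selon une extension linéaire de l'ordre \emph{dual} de $\preccurlyeq_n$. Plus précisément, nous choisirons une numérotation telle que, pour tous indices $i \neq j$,
\[
v_j \preccurlyeq_n v_i \ \Longrightarrow\ j > i,
\]
autrement dit, les éléments ``maximaux'' pour $\preccurlyeq_n$ apparaissent en premier. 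Une telle numérotation existe toujours pour un ordre partiel fini : il suffit de construire la liste par récurrence en choisissant à chaque étape un élément maximal parmi les valeurs restantes.

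Il restera à vérifier que la séquence $\mathrm{Ext}_S(P_1), \ldots, \mathrm{Ext}_S(P_m)$ ainsi obtenue forme bien une chaîne d'union croissante. Pour cela, fixons $i \in \{1, \ldots, m\}$ et appliquons la Définition~\ref{d:ordered-epitome} au motif $P_i$ : nous obtenons un motif $R_i$ sur $F_n$ compatible avec $P_i$ (donc $R_i \in \mathrm{Ext}_S(P_i)$) et tel que tout motif $P'$ sur $B_n$ compatible avec $R_i$ pour lequel ${\cal E}_n(P')$ est défini vérifie ${\cal E}_n(P') \preccurlyeq_n v_i$. Supposons par l'absurde que $R_i$ appartienne à $\mathrm{Ext}_S(P_j)$ pour un certain $j < i$ ; alors $R_i$ serait compatible avec $P_j$, et comme ${\cal E}_n(P_j) = v_j$ est défini, on en déduirait $v_j \preccurlyeq_n v_i$, ce qui contredit le choix de la numérotation. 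Ainsi $R_i \in \mathrm{Ext}_S(P_i) \setminus \bigcup_{j<i} \mathrm{Ext}_S(P_j)$, ce qui établit exactement la propriété voulue.

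L'unique difficulté de ce plan se résume à l'étape de réordonnancement, qui repose sur le fait classique que tout ordre partiel fini admet une extension linéaire ; la construction est autrement immédiate à partir de la définition des résumés ordonnés. Notons qu'aucune considération algorithmique ni de calculabilité n'intervient dans ce raisonnement : le résultat est purement existentiel et s'applique à toute famille de résumés ordonnés, qu'elle soit calculable ou non. C'est précisément pourquoi l'équivalence entre les deux techniques n'est établie que pour la complexité de Kolmogorov simple (avec un oracle éventuel), et non pour les versions à ressources bornées.
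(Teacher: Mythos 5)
Votre preuve est correcte et suit essentiellement la même démarche que celle du manuscrit : ranger les valeurs distinctes $e_1,\ldots,e_m$ selon une extension linéaire où les éléments maximaux pour $\preccurlyeq_n$ viennent en premier, puis utiliser le motif $R_i$ fourni par la définition des résumés ordonnés pour montrer que $Ext_S(P_i)$ n'est pas inclus dans $\bigcup_{j<i} Ext_S(P_j)$. Votre rédaction explicite simplement par l'absurde ce que le texte original affirme directement (le motif $R_i$ est compatible avec $P_i$ mais avec aucun $P_j$, $j<i$), sans apporter de différence de fond.
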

\begin{proof}
Soit $e_1,\ldots,e_m$ les images de ${\cal E}_n$. Sans perte de généralité nous pouvons supposer que pour tout $i$
\begin{equation} \label{eq:km}
e_j \not\preccurlyeq_n e_i \text{ for all } j<i 
\end{equation}
(dans tout ensemble muni d'un ordre partiel nous pouvons arranger les indices des éléments $e_j$ de telle manière que \eqref{eq:km} soit vraie : $e_1$ doit être un des éléments maximaux, $e_2$ doit être un des éléments maximaux parmi les éléments restants, et ainsi de suite). Ensuite, nous fixons les motifs $P_i$ de support $B_n$ globalement admissibles tels que ${\cal E}_n(P_i) = e_i$. Par la définition des résumés, pour chaque $i$ il existe un motif
$R_i$ de support $F_n$ qui est compatible avec $P_i$ mais non compatible avec $P_1,\ldots, P_{i-1}$. 
Par conséquent, $Ext_S(P_i)$ n'est pas contenu dans l'union $\bigcup\limits_{j=1}^{i-1} Ext_S(P_i)$. Ainsi, la séquence d'ensembles d'extensions de motifs $Ext_S(P_i)$ est une chaîne d'union croissante, ce qui conclut la preuve.
\end{proof}

\begin{proposition}\label{p:counting}
Soit $({\cal E}_n,\preccurlyeq_n )$ une famille de résumés ordonnés calculable sur un shift effectif  $S$. 

(a) Supposons que pour tout motif $P$ de taille $n\times n$ globalement admissible
\begin{equation}
\label{eq:ck}
\CK({\cal E}_n(P)) = m,
\end{equation}
et $m\gg \log n$. Alors il existe $2^{\Omega(m)}$ valeurs distinctes de ${\cal E}_n$ correspondant aux motifs de taille $n\times n$
(globalement admissibles ou non admissibles).

(b) Supposons que les domaines de ${\cal E}_n$ sont décidables (il existe un algorithme qui pour tout $n$ et pour tout motif $P$ de taille $n\times n$ peut décider si ${\cal E}_n(P)$ est défini ou non)
et que l'équation \eqref{eq:ck} avec $m\gg \log n$ est vraie pour au moins un motif $P$ de taille $n\times n$ globalement admissible.
Alors il existe $2^{\Omega(m)}$ valeurs distinctes de résumés ${\cal E}_n$ pour les motifs de cette taille.

(c) Les propositions (a) et (b) se relativisent : elles restent vraies pour une complexité de Kolmogorov relativisée avec un oracle $\cal O$.
\end{proposition}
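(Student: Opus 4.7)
Mon plan consiste à utiliser un argument d'énumération classique en complexité de Kolmogorov : je convertis la borne inférieure sur la complexité d'une valeur de résumé en une borne inférieure sur le nombre de valeurs distinctes possibles. L'intuition est que s'il y avait peu de valeurs distinctes, chaque valeur pourrait être spécifiée par son rang dans une énumération canonique, d'où une description courte qui contredirait l'hypothèse de grande complexité de Kolmogorov.

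Plus précisément, pour chaque $n$ je construirais par entrelacement une énumération récursivement énumérable des valeurs de ${\cal E}_n$. Il suffit de simuler en parallèle le calcul de ${\cal E}_n(P)$ pour les $|\Sigma|^{n^2}$ motifs $P$ de taille $n \times n$, et d'ajouter toute valeur nouvellement produite à une liste $v_1, v_2, \ldots, v_N$, où $N$ est la quantité à minorer. Comme cette énumération peut être relancée de manière reproductible à partir de $n$, chaque $v_i$ est déterminée par le couple $(n, i)$, d'où la majoration essentielle
\[
\CK(v_i) \leq \log N + O(\log n).
\]
Pour le point (a), l'hypothèse fournit au moins un motif $P$ globalement admissible avec ${\cal E}_n(P)$ défini et $\CK({\cal E}_n(P)) \geq m$ (nous interprétons l'égalité de~\eqref{eq:ck} comme une borne inférieure, cas non trivial) ; cette valeur figure dans l'énumération, d'où $m \leq \log N + O(\log n)$ et, puisque $m \gg \log n$, $\log N = \Omega(m)$, soit $N \geq 2^{\Omega(m)}$. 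Pour (b), la décidabilité du domaine permet de remplacer l'énumération récursivement énumérable par un algorithme totalement récursif qui produit en temps fini la liste $D_n$ des valeurs définies ; la valeur $v^* = {\cal E}_n(P^*)$ associée à l'unique motif admissible fourni par l'hypothèse est un élément de $D_n$, et le même raisonnement donne $|D_n| \geq 2^{\Omega(m)}$. Enfin, (c) s'obtient en relativisant l'argument mot pour mot : l'énumération est menée avec accès à l'oracle $\cal O$, et la majoration devient $\CK^{\cal O}(v_i) \leq \log N + O(\log n)$, d'où la même conclusion pour la complexité relativisée.

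Aucun obstacle conceptuel ne se profile dans ce plan ; la seule subtilité technique est de vérifier, dans le point (a) où l'énumération n'est que récursivement énumérable, que toute valeur ${\cal E}_n(P)$ définie y apparaît effectivement, ce qui est immédiat puisque la convergence de l'algorithme calculant ${\cal E}_n$ sur une entrée $P$ ne dépend que de cette entrée, indépendamment de toute notion d'admissibilité globale ou de décidabilité du domaine.
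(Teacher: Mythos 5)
Vos points (a) et (c) suivent pour l'essentiel la preuve du texte : énumération par entrelacement des valeurs de ${\cal E}_n$ sur \emph{tous} les motifs de taille $n\times n$, description d'une valeur par son rang d'apparition et par $n$, d'où $\CK({\cal E}_n(P)) \le \log N_n + O(\log n)$, puis la conclusion grâce à $m\gg\log n$ ; la relativisation avec l'oracle est identique.

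En revanche, votre point (b) comporte une lacune réelle. La conclusion de (b), telle qu'elle est démontrée dans le texte et utilisée ensuite (pour alimenter la Proposition~\ref{p:e2km}), porte sur le nombre $N'_n$ de valeurs distinctes prises par ${\cal E}_n$ sur les motifs \emph{globalement admissibles}, et non sur le nombre $N_n$ de valeurs prises sur l'ensemble des motifs. Votre liste $D_n$ recense toutes les valeurs définies, si bien que vous ne minorez que $N_n$ : c'est exactement ce que donne déjà l'argument de (a) (qui n'utilise qu'une seule valeur de grande complexité), et l'hypothèse de décidabilité du domaine ne joue alors aucun rôle essentiel. La difficulté est que l'admissibilité globale n'est pas décidable (elle est seulement co-énumérable pour un shift effectif), donc on ne peut pas énumérer directement les valeurs réalisées par des motifs admissibles et décrire $v^*$ par son rang dans une telle liste. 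La preuve du texte contourne cela : après avoir calculé ${\cal E}_n(P)$ pour tous les motifs (décidabilité du domaine), on énumère les motifs non admissibles et on élimine une valeur dès que tous les motifs qui la réalisent ont été reconnus non admissibles ; on s'arrête lorsqu'il reste exactement $2^{\ell}$ candidats, où $\ell=\lceil \log N'_n\rceil$ est fourni comme conseil (le cas $N'_n > N_n/2$ étant traité à part, par réduction à (a)). Une valeur réalisée par un motif globalement admissible n'est jamais éliminée, donc elle se décrit par $n$, $\ell$ et son rang parmi ces $2^{\ell}$ candidats, soit $\ell + O(\log n)$ bits ; avec $m\gg\log n$ on en déduit $\ell=\Omega(m)$, c'est-à-dire $N'_n \ge 2^{\Omega(m)}$. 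C'est cette procédure d'élimination avec arrêt contrôlé par le conseil $\ell$ qui manque à votre proposition, et sans elle la partie (b) n'atteint pas l'énoncé requis.
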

\begin{proof}
(a) Étant donné un entier $n$, nous pouvons appliquer l'algorithme calculant ${\cal E}_n$ pour chaque motif de taille $n\times n$ en effectuant ces calculs en parallèle. Nous pouvons afficher les valeurs de ${\cal E}_n$ au fur et à mesure que ces calculs convergent.
(si le domaine de ${\cal E}_n$ n'est pas décidable, nous ne pouvons pas savoir quand le dernier calcul qui converge s'est terminé, dans la mesure où certains calculs peuvent prendre un temps infini sans jamais renvoyer de résultats.)
Soit $N_n$ le nombre de valeurs qui seront trouvées à un moment ou à un autre. 
Chaque valeur de ${\cal E}_n$ peut être spécifiée par son indice dans la liste (la position dans l'ordre de l'apparition des valeurs) et par la représentation binaire de $n$. L'indice (la position dans la liste) peut être spécifié par $\lceil \log N_n \rceil$ bits.
Par conséquent, pour tout motif $P$ de taille $n\times n$ nous avons
 \[
 \CK({\cal E}_n(P))  = O(\log N_n) +O (\log n) , 
 \]
ce qui prouve l'énoncé~(a).

(b) Soient $N_n$ le nombre de valeurs de ${\cal E}_n(P)$ pour tous les motifs $P$ de taille $n\times n$,
et $N'_n$ le nombre de valeur de ${\cal E}_n(P)$ pour tous les motifs $P$ de la même taille \emph{globalement admissibles}.
Notons $\ell = \lceil \log N_n'\rceil$.

Si $N'_n > N_n/2$, l'énoncé~(b) découle de l'énoncé~(a). Sinon, nous calculons ${\cal E}_n(P)$ pour tous les motifs de taille $n\times n$ (Ceci est possible puisque le domaine des résumés est décidable)
et commençons à énumérer les motifs du shift de cette taille non admissibles. À chaque motif non admissible trouvé, on vérifie pour chaque valeur ${\cal E}_n(P)$ si celle-ci ne correspond plus qu'à des motifs $P$ déjà trouvés ; dans ce cas nous pouvons l'éliminer (à chaque étape les valeurs qui n'ont pas été éliminées correspondent donc à au moins un motif $P$ potentiellement globalement admissible). Nous arrêtons ce processus lorsqu'il ne reste plus qu'exactement $2^\ell$ valeurs de résumées qui n'ont pas été éliminées. 
Un résumé d'un motif $P$ globalement admissible peut alors être spécifié par les nombres $n$, $\ell$ et le nombre ordinal de la valeur ${\cal E}_n(P)$ dans la liste des $2^\ell$ candidats non éliminés. Un tel nombre ordinal peut être représenté par une chaîne de caractère binaire de $\ell_n$ bits. Ainsi,
\[
 \CK({\cal E}_n(P))  = O(\ell)+ O (\log n), 
\]
ce qui conclut la preuve de l'énoncé.

(c) Il est facile de voir que l'argument ci-dessus reste valide pour les algorithmes ayant un accès à un oracle.
\end{proof}

Supposons que nous pouvons prouver qu'un shift $S$ n'est pas sofique avec la technique des résumés ordonnés, en utilisant la Proposition~\ref{pbis:epitomes} et une complexité de Kolmogorov sans limitation des ressources. C'est possible si nous avons pour $S$ une famille de résumés ordonnées $({\cal E}_n,\preccurlyeq_n)$ calculable (éventuellement avec un oracle $\cal O$), et pour laquelle les valeurs $\CK({\cal E}_n(P))$ ne sont pas limitées par une fonction linéaire de $n$. Supposons que l'union des domaines de ${\cal E}_n$ est un ensemble décidable.
Alors en utilisant la Proposition~\ref{p:counting}~(b)  nous pouvons dire que le nombre de valeurs ${\cal E}_n$ pour les motifs globalement admissibles croît en $2^{\omega(n)}$. 
Nous pouvons donc appliquer la Proposition~\ref{p:e2km} et obtenir la liste des motifs $P^n_1,\ldots, P^n_{m(n)}$ de taille $n \times n$ globalement admissibles,  
avec $m(n) = 2^{\omega(n)}$ et telle que la chaînes d'ensembles des extensions de motifs
$
Ext_S(P^n_1), \ldots,Ext_S(P^n_{m(n)})
$
est d'union croissante.
Par conséquent, nous pouvons appliquer le Corollaire~\ref{cor:km} et prouver que $S$ n'est pas sofique avec l'argument de Kass et Madden.
 
\medskip

La Proposition~\ref{p:km2e} et la Proposition~\ref{p:e2km} montrent que la technique des chaînes d'ensembles d'extensions de motifs est étroitement liée à la version de la technique des résumés ordonnés avec un domaine décidable et la complexité de Kolmogorov entière.
Par ailleurs, il semble que l'argument utilisant les résumés simples avec des domaines non décidables
(voir l'Exemple~\ref{ex:busy-beaver}) ne peut être traduit directement dans le langage des chaînes d'union d'ensembles d'extensions de motifs.
Nous pouvons aussi rappeler que la méthode des résumés est plus puissante lorsque nous utilisons la version de la complexité de Kolmogorov à ressources bornées.
En particulier, elle permet de montrer que des shifts avec une complexité par blocs sous-exponentielle ne sont pas sofiques, ce qui ne peut être fait avec le Théorème~\ref{th:km} et son Corollaire~\ref{cor:km}.

\chapter{Des conditions suffisantes pour qu'un shift soit sofique}
\label{c:sofique}

Dans le Chapitre~\ref{c:non-sofique} nous avons défini des outils permettant de montrer qu'un shift n'est pas sofique. En particulier, nous avons montré qu'un shift effectif et non sofique pouvait avoir une complexité par bloc petite, i.e. polynomiale. De manière complémentaire, nous nous intéressons dans ce chapitre à des conditions suffisantes pour qu'un shift soit sofique. Nous verrons ainsi que si la complexité par bloc d'un shift effectif sur l'alphabet des cases noires et blanches est très petite, i.e. le nombre de cases noires dans tout bloc de taille $N \times N$ est sous-linéaire, alors le shift est sofique.

Nous commençons ce chapitre par deux théorèmes formalisant ce lien entre la faible complexité par bloc d'un shift et le fait qu'il soit sofique (Partie~\ref{s:résultats}). Nous introduisons ensuite plusieurs outils qui nous seront nécessaires dans la preuve de ce résultat :
\begin{itemize}
	\item Un modèle de calcul, les automates cellulaires non-déterministes à une dimension, dans la Partie~\ref{s:automate}. Nous utilisons à la fois le parallélisme et le non-déterminisme de ce modèle pour effectuer des calculs de manière efficace (i.e. en temps quasi-linéaire, au lieu d'un temps polynomial pour le cas d'une machine de Turing classique).
	\item Un jeu de tuiles dont les pavages sont auto-similaires. Cette construction servira de base à notre construction finale. Cette Partie~\ref{s:shift-auto-similaire} est basée sur le Chapitre~2 de l'article ``The expressiveness of quasiperiodic and minimal shifts of finite type'' de A. Romashchenko et B. Durand.
	\item Un résultat sur les flots pouvant circuler dans des graphes d'un certain type. Celui-ci est utilisé pour prouver que l'ensemble des pavages de notre construction finale n'est pas vide.
\end{itemize} 

Enfin, dans la dernière partie nous utilisons tous ces outils à la fois pour démontrer les résultats principaux de ce chapitre.

\section{Un shift de très faible complexité par bloc est sofique : résultats formels.}\label{s:résultats} 

Dans cette partie nous formulons les résultats principaux de ce chapitre, en formalisant l'assertion qu'un shift de très faible complexité par bloc est sofique. De plus, nous verrons que les sous-shifts effectifs de tels shifts sont également sofiques. Le reste du chapitre sera dédié à la preuve de ces deux résultats.

\begin{definition}
\label{d:densite}
Soit $\epsilon<1$ un réel.
On dit qu'un shift $S$ sur l'alphabet $\{ \blacksquare,\square \}$ est de densité $\epsilon$ si tout motif $P$ globalement admissible de $S$ de taille $n\times n$ ne contient pas plus de $n^\epsilon$ lettres de couleur $\blacksquare$.

Le shift $S_\epsilon$ sur l'alphabet $\{ \blacksquare,\square \}$, dont les motifs interdits sont exactement les motifs $P$ de taille $n \times n$ contenant plus de $n^\epsilon$ lettres de couleur $\blacksquare$, est appelé le shift de densité $\epsilon$.
\end{definition}

Il est clair que le shift $S_\epsilon$ est effectif, et que tout shift de densité $\epsilon$ est un sous-shift de celui-ci.

\begin{remark}\label{r:calculable}
Dans la suite on suppose que $\epsilon$ est un réel calculable (il existe un algorithme qui pour tout $n$ renvoie la $n$-ème décimale de $\epsilon$), ou du moins que $\epsilon$ peut être approximé par en haut (il existe un programme qui énumère l'ensemble des rationnels supérieurs à $\epsilon$). Dans ce cas, il est clair que le shift de densité $\epsilon$ est un shift effectif.
\end{remark}

\begin{theorem}\label{th:sparse1}
Pour tout $\epsilon<1$ calculable par en haut, le shift de densité $\epsilon$ est sofique.
\end{theorem}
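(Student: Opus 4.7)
My plan is to construct a sofic cover of $S_\epsilon$ via the fixed-point self-similar tiling technique of \cite{fixed-point}, following the roadmap the author lays out in the introduction to the chapter. I would fix a rapidly growing hierarchy $N_0 < N_1 < N_2 < \ldots$ with $N_k = n_k N_{k-1}$, and design an SFT $\hat S$ whose configurations are hierarchies of macro-tiles: each macro-tile of rank $k$ is an $N_k \times N_k$ square partitioned into an $n_k \times n_k$ grid of macro-tiles of rank $k-1$. The self-similar (fixed-point) construction guarantees that the local rules of $\hat S$, expressible with a fixed finite alphabet, simulate at every level a universal machine that checks both the structural consistency between ranks $k-1$ and $k$, and the density constraint at scale $N_k$. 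The projection $\pi$ then forgets everything except a distinguished bit per cell indicating $\blacksquare$ or $\square$.

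The first key step is to encode inside each macro-tile of rank $k$ a record of the positions of the black cells inside its support, together with a computation that verifies that this record contains at most $N_k^\epsilon$ entries and is consistent with the records of its $n_k^2$ sub-macro-tiles of rank $k-1$. Since $\epsilon$ is computable from above, the threshold $N_k^\epsilon$ can itself be computed inside the macro-tile. To keep everything within the bandwidth available between macro-tiles, I would use the non-deterministic cellular automaton model of Section~\ref{s:automate} to run these checks in quasi-linear parallel time, so that the checking computation fits inside a macro-tile without blowing up the alphabet size (which must remain fixed by the fixed-point technique). The subtle point is that the density constraint is \emph{global}, in the sense that every $n \times n$ window, not just those aligned with the hierarchy, must contain at most $n^\epsilon$ black cells; this is handled by having every macro-tile also export to its neighbors a summary of the positions of black cells near its boundary, so that non-aligned windows can be checked by combining summaries of the $O(1)$ macro-tiles that cover them, exactly as in the example of Figure~\ref{f:non-standard-block}.

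The hard part, and the place where the flow result enters, is to argue that there is \emph{enough room on the wires} to transmit these summaries. A macro-tile of rank $k$ has a boundary of length $O(N_k)$, and it may need to forward information about up to $N_k^\epsilon$ black cells coming from any of its sub-macro-tiles; worse, these communication wires must be routed through a planar hierarchical grid without colliding. The max-flow/min-cut argument of the graph-flows section is used to show that, since $\epsilon < 1$, the total volume of information to route across any cut of the macro-tile is at most $O(N_k^\epsilon) \ll N_k$, so a valid routing exists and can be encoded by a finite-alphabet wiring scheme embedded in the macro-tile. This is the step where the assumption $\epsilon < 1$ is essential: if the density were linear, the boundary would be saturated and no routing would fit.

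Finally, I would check the two inclusions $\pi(\hat S) \subseteq S_\epsilon$ and $S_\epsilon \subseteq \pi(\hat S)$. The first is immediate from the density check performed at every level. For the second, given an arbitrary configuration $c \in S_\epsilon$, I would build by induction on $k$, and by a compactness argument to go from arbitrarily large macro-tiles to a configuration on all of $\mathbb Z^2$, a macro-tile structure decorating $c$: at each scale the density bound of $c$ guarantees that the counters and the routed summaries required by the SFT rules can actually be produced, and the non-determinism of the cellular automata ensures that no computation step is blocked. The extension to effective sub-shifts of $S_\epsilon$, needed for the second theorem of the section, is then obtained by adding at each level an extra computation that enumerates the forbidden patterns of the effective sub-shift and rejects any local violation, using the same routing/bandwidth argument to show that this additional computation still fits.
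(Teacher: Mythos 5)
Votre proposition suit essentiellement la même démarche que la preuve du manuscrit : pavage à point-fixe avec hiérarchie de super-tuiles $N_k\times N_k$ stockant la liste des points noirs, vérifications en temps quasi-linéaire par l'automate cellulaire non déterministe, routage non déterministe de l'information des filles vers la mère justifié par l'argument flot-max/coupe-min (c'est bien là qu'intervient $\epsilon<1$), contrôle des fenêtres non alignées via les $O(1)$ super-tuiles qui les recouvrent (les ``zones de responsabilité''), et les deux inclusions $\pi(\hat S)\subseteq S_\epsilon$ et $S_\epsilon\subseteq\pi(\hat S)$, avec la même extension aux sous-shifts effectifs. Le plan est correct et correspond, à des détails d'implémentation près (champs d'échange, énumération lente des motifs interdits), à la construction de la Partie~\ref{s:preuve}.
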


De plus, chaque sous-shift effectif d'un shift de densité $\epsilon$ est également sofique :

\begin{theorem}\label{th:sparse2}
Pour tout $\epsilon<1$ calculable par en haut, tout sous-shift effectif du shift de densité $\epsilon$ est sofique.
\end{theorem}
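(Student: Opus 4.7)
Mon plan est d'étendre le recouvrement sofique de $S_\epsilon$ fourni par le Théorème~\ref{th:sparse1} en ajoutant au shift de type fini sous-jacent $\hat S$ des contraintes locales supplémentaires qui interdisent exactement les motifs définissant le sous-shift $S' \subseteq S_\epsilon$. Fixons un algorithme énumérant les motifs interdits $F_1, F_2, \ldots$ de $S'$ (possible car $S'$ est effectif). Le SFT $\hat S$ construit dans le Chapitre~\ref{c:sofique} possède une structure hiérarchique auto-similaire dont les macro-cellules de côté $N_k$ contiennent, à l'intérieur de leur zone blanche, suffisamment d'espace pour simuler des calculs arbitraires : puisque tout motif admissible de côté $N_k$ contient au plus $N_k^{\epsilon}$ cases noires, les $N_k^{2} - O(N_k^{2\epsilon})$ cases restantes sont disponibles pour héberger rubans, fils et transitions du modèle d'automate cellulaire non déterministe de la Partie~\ref{s:automate}.

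À l'intérieur de chaque macro-cellule de niveau $k$, je réserverais une partie de cette zone blanche pour exécuter deux calculs d'automate non déterministe en parallèle. Le premier énumère le préfixe $F_1, \ldots, F_{g(k)}$ de la liste des motifs interdits de $S'$, pour une fonction calculable $g(k) \to \infty$ choisie plus loin. Le second parcourt la projection de la macro-cellule (décomposée en ses sous-macro-cellules de niveau $k-1$ déjà validées par récurrence) et vérifie qu'aucun des $F_i$ ainsi énumérés n'apparaît, ni à l'intérieur d'un enfant, ni à cheval entre deux enfants voisins. L'automate non déterministe de la Partie~\ref{s:automate} exécute ces tâches en temps quasi-linéaire, ce qui fait tenir confortablement le calcul dans le budget computationnel fourni par la construction à point-fixe. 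La cohérence entre macro-cellules voisines et entre niveaux hiérarchiques est garantie exactement comme dans la preuve du Théorème~\ref{th:sparse1} ; la seule nouveauté est une règle locale qui rejette une configuration dès qu'un motif interdit de $S'$ serait détecté.

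La correction est alors immédiate : la projection du SFT enrichi est contenue dans $S'$, puisque tout motif interdit de $S'$ finit par apparaître dans l'énumération $F_1, F_2, \ldots$ et sera donc rejeté par une macro-cellule suffisamment grande. Pour la complétude --- à savoir que toute configuration $c \in S'$ admet un relèvement valide dans le SFT enrichi --- je réutiliserais l'argument de relèvement et la machinerie de flots déjà employés pour le Théorème~\ref{th:sparse1}. La décomposition hiérarchique de $c$ en macro-cellules existe parce que ses cases noires sont peu denses ; et comme $c \in S'$, tous les calculs de vérification acceptent le long de cette hiérarchie. L'argument de flot maximal/coupe minimale sert, comme au théorème précédent, à montrer que les rares cases noires peuvent être acheminées à travers les fils de la zone computationnelle sans entrer en collision avec les rubans réservés, indépendamment de leurs positions à l'intérieur des macro-cellules.

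La difficulté principale sera la calibration conjointe de $g(k)$ et de la suite $N_k$ : il faut que $g(k)$ croisse assez vite pour que tout motif interdit fini de $S'$ soit détecté à un niveau hiérarchique fini, mais assez lentement pour que l'énumération et le balayage tiennent dans les $N_k^{2}-O(N_k^{2\epsilon})$ cases blanches disponibles et dans le budget en temps quasi-linéaire de l'automate ; par ailleurs, les motifs interdits à cheval sur deux macro-cellules frères nécessitent un mécanisme de recouvrement soigneux, qui devra être articulé avec le routage des flots. Ces deux points s'appuient de manière cruciale sur l'hypothèse $\epsilon < 1$, qui laisse asymptotiquement bien plus d'espace libre que nécessaire, et sur l'hypothèse que $\epsilon$ est calculable par en haut, qui rend calculable l'ensemble des motifs interdits de $S_\epsilon$ et donc de tout sous-shift effectif $S'$.
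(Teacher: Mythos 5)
Votre stratégie générale coïncide avec celle du manuscrit : le Théorème~\ref{th:sparse2} y est démontré par \emph{la même} construction que le Théorème~\ref{th:sparse1}, en prenant simplement pour algorithme ${\cal A}$ celui qui énumère les motifs interdits du sous-shift effectif, avec la hiérarchie lente de motifs de rang $k$ (au plus $\ell_k$ étapes, $\ell_k$ points noirs, taille $\ell_k\times\ell_k$). Mais votre proposition comporte deux lacunes réelles. D'abord, vous logez les calculs ``dans la zone blanche'' de la configuration et comptez $N_k^2-O(N_k^{2\epsilon})$ cases disponibles ; or dans la construction les calculs vivent dans la couche structurelle des tuiles (noires comme blanches), et le vrai budget n'est pas l'aire des cases blanches mais la taille de la zone de calcul d'une super-tuile de rang $k$, qui est une fraction du grossissement relatif $n_k\approx N_k^{1-1/C}$, donc à peine supérieure à $N_k^{\epsilon}\cdot\log N_{k+1}$ (taille de la liste des points noirs à stocker et à transmettre). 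C'est précisément ce goulot qui impose la croissance doublement exponentielle des $N_k$ et la calibration finale $C\geq(\log c_3+1)/(1-\epsilon')$ : la marge n'est pas ``confortable'', c'est exactement là que l'hypothèse $\epsilon<1$ est consommée.

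Ensuite, et c'est le point le plus sérieux, votre étape de vérification suppose que la macro-cellule ``parcourt la projection'' de sa zone. Elle ne le peut pas : balayer $N_k\times N_k$ cases dépasse largement le temps disponible (de l'ordre de $n_k$), et une super-tuile n'a accès qu'à une description comprimée, à savoir la liste des coordonnées de ses points noirs, liste qui doit elle-même être acheminée vers sa mère par le mécanisme non déterministe de flots de la Partie~\ref{s:flots} (une fille ne peut pas mémoriser la liste complète de sa mère, ce qui interdit la transmission naïve). La recherche des motifs interdits doit donc s'effectuer directement sur cette représentation clairsemée, en temps quasi-linéaire grâce à l'automate de la Partie~\ref{s:automate} (Lemme~\ref{l:automate}), et elle doit être \emph{ancrée aux points noirs} : pour chaque point noir de la zone et chaque point noir d'un motif interdit, on teste l'alignement correspondant. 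Cette stratégie n'est suffisante que grâce au Lemme~\ref{un-point-noir} (tout motif interdit d'un sous-shift non vide du shift de densité $\epsilon$ contient au moins un point noir), ingrédient absent de votre proposition. De même, les motifs à cheval entre super-tuiles, que vous laissez comme ``mécanisme de recouvrement soigneux'' à préciser, sont traités dans le manuscrit par les zones de responsabilité (blocs $2\times2$ de super-tuiles de rang $k$, Champ~8, vérifiés à tout rang assez grand) ; sans ces éléments, la direction ``la projection est incluse dans $S'$'' de votre argument n'est pas établie.
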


\begin{remark}
Le Théorème~\ref{th:sparse2} reste vrai pour plusieurs couleurs, dans le cas où toutes les couleurs sauf une (la couleur majoritaire) sont de densité $\epsilon$ (dans tout motif de taille $N \times N$, le nombre de cases qui ne sont pas de la couleur majoritaire est au plus $N^\epsilon$).
\end{remark}

Le reste de ce chapitre est dédié à la preuve du Théorème~\ref{th:sparse1} et du Théorème~\ref{th:sparse2}. La preuve combine plusieurs techniques différentes, et malheureusement il est difficile de diviser l'argumentation en une série de lemmes indépendants (cela est souvent le cas pour les preuves basées sur les constructions de pavages à point-fixe). Pour aider le lecteur a suivre cette longue preuve, nous ébauchons ici (jusque la fin de cette partie) les idées principales, les différentes difficultés et les principales nouveautés introduites dans notre construction.

D'un point de vue de haut niveau, l'idée de la preuve est assez commune. Nous avons un shift $S$ de densité $\epsilon$ (ou un sous-shift effectif de $S$) avec une description explicite de ses motifs interdits. Nous voulons construire un shift de type fini dont la projection coordonnée par coordonnée nous donnera exactement $S$. Un tel shift de type fini doit associer à une configuration valide de $S$ une sorte de ``certificat'' prouvant que cette configuration appartient bien à $S$. Pour implémenter ce certificat, nous encodons dans le shift de type fini des calculs énumérant les motifs interdits et vérifiant qu'aucun d'eux ne peut apparaître dans la projection.

Plus précisément, nous allons construire un shift de type fini possédant une structure hiérarchique : chaque configuration valide est une grille de motifs carrés spéciaux (nous appelons ces carrés des ``super-tuiles'' ; nous en donnerons une définition précise ultérieurement). Chacun de ces carrés contient un diagramme espace-temps simulant un calcul (fini). Ces super-tuiles sont regroupées dans des carrés de taille plus importante (des ``super-super-tuiles'') simulant un calcul sur un temps plus long, et ainsi de suite. À chaque niveau de la hiérarchie, nous pouvons voir la configuration entière comme une grille consistant en des super-tuiles de rang $n$. Chaque super-tuile de rang $n$ est construite avec des super-tuiles de rang $(n-1)$, et sert à son tour de ``brique'' dans une plus grande super-tuile de rang $(n+1)$. Chaque super-tuile de rang $n$ contient une ``zone de calcul'' où est représenté le diagramme espace-temps d'un calcul (a priori il peut s'agir du diagramme espace-temps d'une machine de Turing, d'un automate cellulaire, ou d'un autre modèle de calcul similaire).
Ces calculs reçoivent en entrée une description complète de certains motifs du shift $S$ (les motifs qui seront obtenus par la projection de cette super-tuile et de ses voisines), et vérifie la consistance de ces motifs (si ils sont admissibles ou non). Nous pouvons noter que comme le diagramme espace-temps est encodé dans une super-tuile, le ``temps de calcul'' et l'``espace disponible'' sont limités par la taille de la super-tuile.

Le plan résumé dans le paragraphe précédent est très standard. L'idée d'encoder un calcul dans un shift de type fini remonte à Berger (voir \cite{berger}) et Robinson (voir \cite{robinson}). Des constructions similaires sont utilisées, par exemple, dans \cite{hochman2009dynamics,aubrun-sablik,dls} et bien d'autres travaux. Une autre approche peut être empruntée pour construire des shifts de type fini, ayant une structure hiérarchique, et encodant des calculs, que celle de Berger et Robinson. 

Cette approche a été développée en se basant sur des pavages à point-fixe (également nommés pavages auto-similaires), voir \cite{fixed-point}, ou, plus récemment, \cite{point-fixe}). Nous choisissons la méthode des pavages à point-fixe, et dans ce qui suit nous justifions ce choix.

De manière schématique, dans notre preuve nous avons à résoudre trois problèmes~:
\begin{itemize}
\item \emph{Stockage de l'information} : nous devons encoder dans la zone de calcul de chaque super-tuile de rang $n$ une description complète d'un motif ``clairsemé'' de taille comparable à celle de la super-tuile ;
\item \emph{Transmission de l'information} : nous devons délivrer à la zone de calcul d'une super-tuile de rang $n$ l'information de chaque lettre noire dans sa ``zone de responsabilité'' ;
\item \emph{Calcul de l'information} : nous devons simuler dans chaque super-tuile de rang $n$ un calcul vérifiant que le motif correspondant est valide.
\end{itemize}

Le Théorème~\ref{th:sparse1} et le Théorème~\ref{th:sparse2} peuvent être prouvés avec des constructions plus standards (par exemple, avec une adaptation assez directe de la construction de \cite{fixed-point}) pour des shifts \emph{extrêmement} clairsemés, par exemple pour des shifts sur l'alphabet $\{ \square, \blacksquare \}$ où chaque motif de taille $N \times N$ ne contient au plus que $\poly(\log N)$ lettres noires (les autres lettres étant blanches). Nous ne connaissons pas de preuve n'utilisant pas les shifts auto-similaire.

Pour démontrer le résultat pour des configurations ``modérément clairsemées'' avec $N^\epsilon$ lettres noires apparaissant dans les motifs de taille $N\times N$ (avec un $\epsilon<1$ arbitraire). C'est cela qui cause plusieurs difficultés techniques.

La difficulté pour le \emph{stockage de l'information} : la ``mémoire'' d'un calcul simulé par une super-tuile de rang $k$ doit contenir la description d'une configuration ``clairsemée'' d'une taille comparable avec celle de la super-tuile entière. Si une super-tuile de rang $k$ est de taille $N_k \times N_k$, il faut pouvoir stocker une liste de $N_k^\epsilon$ coordonnées de lettres noires. La taille de la liste est seulement légèrement plus petite que la taille de la super-tuile. Pour que cela soit possible, nous avons besoin de super-tuiles de rang $k$ dont la taille croît très vite (de manière double exponentielle, comme nous le verrons plus tard) en fonction de $k$. Cela peut être réalisé avec la construction des pavages à point-fixe. Nous expliquons cette partie de la construction dans la Partie~\ref{s:shift-auto-similaire}.

La difficulté avec le \emph{transfert d'information} : puisque la taille des super-tuiles de rang $k$ croît très vite, une super-tuile de rang $k$ n'a pas assez d'espace pour se ``souvenir'' de la liste de toutes les lettres noires du voisinage d'une super-tuile de rang $(k+1)$. Cela rend le transfert d'information depuis les super-tuiles de rang $k$ vers les super-tuiles de rang $(k+1)$ plus complexe ; en particulier, nous ne pouvons pas utiliser les techniques classiques de ``transmission de l'information'' utilisées dans \cite{drs,fixed-point,point-fixe} et dans la plupart des autres travaux basés sur les pavages à point-fixe. C'est pourquoi nous utilisons une construction très différente (et nécessairement non déterministe) pour contrôler le flux d'information. Intuitivement, chaque morceau d'information ``devine'' de manière non déterministe son propre parcours depuis la source jusqu'à la destination, et nous n'avons qu'à vérifier localement que les flux d'information sont cohérents. Nous démontrons qu'il est possible de répartir les ``canaux de communication'' d'une façon cohérente en utilisant des techniques de flots sur les graphes (le principe ``flot maximal/coupe minimale''). Cette partie de la construction est expliquée dans la Partie~\ref{s:flots}.
 
La difficulté dans le \emph{calcul de l'information} : chaque super-tuile de rang $k$ doit réaliser un calcul avec des données de taille seulement légèrement plus petite que la taille de la super-tuile elle-même. Cela signifie que nous devons simuler un calcul qui effectue certaines opérations sur des entrées de taille $N_k^\epsilon$ (où $\epsilon$ peut être proche de 1) en un temps inférieur à $N_k$. Par conséquent, nous ne pouvons pas nous permettre des calculs s'effectuant en temps polynomial (sans spécifier les polynômes) ; nous devons au contraire réaliser ces calculs en temps quasi-linéaire. Une telle performance ne peut pas être réalisée avec une machine de Turing classique. C'est pourquoi nous utilisons un modèle de calcul moins conventionnel, incluant du parallélisme et du non déterminisme (une sorte d'automate cellulaire non déterministe). Ce modèle de calcul est présenté dans la Partie~\ref{s:automate}.

Enfin, dans la Partie~\ref{s:preuve} nous combinons tous les ingrédients ensemble pour obtenir la preuve des Théorèmes~\ref{th:sparse1} et~\ref{th:sparse2}.

\section{Un modèle de calcul en parallèle : les automates cellulaires non déterministes en une dimension}\label{s:automate}

Dans la Partie~\ref{s:preuve} nous construirons un jeu de tuiles $\tau$, dont la projection des pavages nous donnera le shift de densité $\epsilon$ (ou un sous-ensemble effectif de ce shift). Pour cela, nous utiliserons des shifts auto-similaires, qui seront présentés dans la partie suivante. Ceux-ci sont fortement basés sur la possibilité de simuler une machine de Turing par un pavage. Nous verrons que dans notre cas, la simulation d'une machine de Turing classique ne sera pas suffisante : nous aurons également besoin de simuler une machine de Turing à deux têtes de lecture sur un ruban, et un automate cellulaire. Plus précisément, il s'agira d'un automate cellulaire non déterministe unidimensionnel, avec un voisinage de rayon 1 (l'état d'une cellule à l'étape suivante dépend de l'état de sa voisine de gauche, de son état et de l'état de sa voisine de droite).

\subsection{Simulation par un pavage}

La manière dont nous entendons simulation est restrictive : il s'agit de simuler une machine de Turing ou un automate cellulaire uniquement au sein d'un rectangle fini $R$. Plus précisément, $R$ représente une partie finie d'un diagramme espace-temps, sa ligne du bas représentant les données d'entrée (le temps va vers le haut). En particulier, nous pouvons choisir la position pour la ou les têtes de lecture d'une machine de Turing au niveau de ces données d'entrée.

\begin{remark}
Le fait que nous simulons un diagramme espace-temps d'une machine de Turing seulement sur un espace fini nous évite de devoir prendre en compte certaines considérations, qui sont nécessaires quand la simulation du diagramme espace-temps se fait sur un espace infini. En effet, dans ce dernier cas il peut ne pas y avoir de tête de lecture, et qu'aucun calcul ne soit effectué dans le pavage. De plus, comme il existe des zones arbitrairement grandes où aucun calcul ne se fait, par compacité cela ne peut être évité. Il faut aussi vérifier dans ce cas que, outre les têtes de lecture désirées, il n'existe pas de tête de lecture surnuméraire.
\end{remark} 

La Proposition~\ref{p:shift-tuiles} nous garantit qu'à tout shift $S$ nous pouvons associer un jeu de tuiles $\tau$ tel que les configurations de $S$ sont isomorphes aux pavages de $\tau$. Par conséquent, nous allons expliquer comment simuler ces modèles de calcul par un shift $S$, l'explication étant plus directe qu'avec un jeu de tuiles. La manière de construire $S$ est essentiellement la même pour les trois modèles de calcul que nous utilisons (une machine de Turing standard, une machine de Turing à deux têtes de lecture sur un ruban, ou un automate cellulaire non-déterministe). En effet, pour connaître l'état (ou les états possibles) d'une cellule à l'étape suivante, il suffit de connaître l'état de la cellule et de ses deux voisines à l'étape donnée. Par exemple, nous pouvons décider que le support des motifs interdits de $S$ sera un rectangle $3\times 2$.

Commençons par expliquer comment simuler par un shift $S$ une machine de Turing classique $M$ (une seule tête et un seul ruban) dans une partie rectangulaire finie $R$ d'un pavage. Chaque cellule du diagramme espace-temps de $M$ contient un symbole appartenant à l'alphabet des symboles écrits sur le ruban. Une cellule par ligne, où se trouve la tête de lecture, contient en plus de son symbole l'état interne de la machine. Il suffit alors d'interdire tous les rectangles de taille $3\times 2$ ne correspondant pas à la table de transition de $M$ pour simuler le comportement de la tête de lecture ; nous interdisons également les rectangles où le symbole d'une cellule ne contenant pas de tête de lecture est modifié.

Pour une machine de Turing $M'$ à deux têtes de lecture sur un ruban, la situation est légèrement plus complexe. En effet, il faut que les deux têtes de lecture connaissent l'état interne de $M'$, mais également, outre son propre symbole, le symbole de l'autre tête de lecture. Pour cela, nous décidons que chaque cellule du diagramme espace-temps connaît, outre son symbole, l'état interne de $M'$ et les deux symboles pointés par les deux têtes de lecture (les têtes de lecture vérifient chacune que leur symbole respectif est cohérent avec ces symboles partagés). Nous pouvons alors de nouveau interdire tous les rectangles de taille $3\times 2$ ne correspondant pas à la table de transition de $M'$ et ceux où le symbole d'une cellule sans tête de lecture est modifié pour obtenir $S$.

Enfin, pour un automate cellulaire non-déterministe avec un voisinage de taille 1, il n'y a pas de difficulté particulière : nous interdisons tous les rectangles de taille $3\times 2$ pour lesquels la cellule au milieu en haut du rectangle ne correspond pas à une transition de l'automate pour les trois cellules du bas du rectangle.

\subsection{Calculer avec un automate cellulaire non déterministe}

Dans cette section nous allons montrer un lemme technique nous assurant que des opérations sur des listes peuvent être effectuées de manière très rapide par un automate cellulaire non déterministe.

Nous verrons que ce résultat nous sera utile dans la Section~\ref{s:preuve} pour démontrer le résultat principal de cette partie.

Dans la preuve du résultat principal nous avons besoin d'encoder dans un shift de type fini des calculs implémentant la procédure suivante. Nous recevons en entrée un motif de taille $N \times N$ sur l'alphabet constitué de points noirs ou blancs, et ce motif est décrit par la liste complète de ses points noirs. Nous supposons que le motif est donné comme une liste de points noirs, où chaque point noir est représenté par ses coordonnées. Nous supposons que le nombre de points noirs $L$ est au plus $\leq N^{\epsilon}$, avec $\epsilon < 1$. Nous recevons également en entrée une liste de motifs interdits. Nous supposons que cette liste est très petite. Disons qu'elle contient $\ell$ éléments, et que $\ell \ll L$. La tâche est de vérifier qu'aucun de ces motifs interdits n'apparaît dans le motif de taille $N \times N$. La tâche est facile à implémenter en temps $\poly (N)$, mais nous voulons qu'elle soit effectuée de manière très efficace, en utilisant un temps pas plus grand que $O(L)$. Dans ce but, nous utilisons un modèle de calcul non-conventionnel, permettant un usage important du parallélisme et du non-déterminisme.

D'un point de vue formel, un tel modèle peut être représenté par un automate cellulaire non-déterministe. De manière moins formelle, nous parlerons d'une machine de Turing possédant des têtes de lecture ``classiques'', dites principales, et de nombreuses têtes de lecture ``secondaires'' (voir ci-dessous). Dans cette partie nous donnons les grandes lignes de la description des opérations nécessaires. Nous expliquons que ce modèle nous permet de manipuler de manière très efficace de longues listes de ``points''. Nous verrons que nous pouvons vérifier si un ou plusieurs éléments appartiennent à une liste, et localiser ces éléments dans la liste (voir Lemme~\ref{l:automate}). Le lecteur ayant une expérience de ``programmation'' à l'aide d'automate cellulaire peut facilement reconstruire les constructions nécessaires. Par commodité pour le lecteur nous donnons les grandes lignes de la preuve technique de ce lemme.

Nous allons détailler dans un premier temps le ``fonctionnement général'' de l'automate, puis les symboles de ses cellules. Ensuite, nous expliquerons comment utiliser ce fonctionnement général pour réaliser efficacement certains calculs. Nous nous appuierons sur ceux-ci dans la Partie~\ref{s:preuve} pour définir le comportement exact de l'automate $A$. 

Intuitivement, l'automate $A$ consiste en une machine de Turing multi-têtes ``améliorée''. Soit une machine de Turing $M$ multi-têtes, possédant $k$ têtes de lecture ($k$ est un \emph{entier constant}). L'automate $A$ peut facilement simuler la machine $M$ : chaque cellule de l'automate $A$ possède un \emph{symbole}, l'automate possède un \emph{état interne} (connu par chaque cellule), et pour chaque ligne, un nombre $k$ de cellules, dites ``principales'', jouent le rôle de têtes de lecture (les symboles de ces cellules sont également connus par chaque cellule de la ligne). Dans une machine de Turing multi-têtes de lecture, les différentes têtes sont ``discernables'' : nous pouvons associer à chacune un nombre entre 1 et $k$, et le comportement de chacune des différentes têtes de lecture doit être spécifié. De même, à chaque cellule principale de l'automate est associé un nombre différent entre 1 et $k$. À chaque étape, l'automate décide, en fonction des symboles pointés par les $k$ cellules principales et de son état interne, de son nouvel état interne et du comportement de chaque cellule principale : celles-ci peuvent écrire un symbole, et se déplacer vers la gauche, vers la droite ou rester immobiles. Le symbole des cellules dites ``inactives'' n'est pas modifié. En particulier, toute tâche pouvant être effectuée par une machine de Turing multi-têtes peut être effectuée par un tel automate cellulaire.

Le principal ajout est celui de cellules dites ``secondaires''. Celles-ci jouent le rôle de têtes de lecture ``supplémentaires'', et peuvent être en \emph{nombre non borné}. Ces cellules secondaires connaissent l'état interne de l'automate, les symboles des cellules principales et, pour chacune d'elles, le symbole qu'elle pointe. Ainsi, une cellule secondaire est la seule à connaître le symbole qu'elle pointe. À chaque étape, chacune de ces cellules secondaires décident, de manière indépendante et en fonction de leur symbole, de l'état interne et des symboles des cellules principales, soit de devenir inactive, soit d'écrire un symbole et de se diriger vers la gauche, vers la droite ou de rester immobile.

Une fois initialisées, ces cellules secondaires opèrent indépendamment, sans communiquer les unes avec les autres et sans envoyer d'information aux cellules principales (par contre leur comportement peut dépendre des symboles des cellules principales). Par conséquent, avec un nombre non borné de cellules secondaires, nous pouvons initialiser des calculs indépendants s'exécutant en parallèle sur différentes parties du ruban. Cependant, nous pouvons réaliser bien plus que des calculs indépendants si nous autorisons des interactions plus subtiles entre les cellules principales et secondaires. Dans ce but, nous supposons qu'une cellule principale, à laquelle est associée le numéro $i$ (avec $1\leq i \leq k$), peut également envoyer un ``signal spécial'', vers la gauche ou vers la droite. Dans ce cas, la cellule devient par défaut secondaire (elle peut éventuellement devenir inactive) et la première cellule secondaire à ``capter'' le signal devient principale (il s'agit de la première cellule secondaire située respectivement à gauche ou à droite de la cellule principale), avec $i$ comme numéro associé. Cette opération ressemble à un ``saut'' de la position de cette cellule principale vers la gauche ou la droite, jusqu'à la cellule où, dans la direction choisie, la cellule secondaire la plus proche est localisée. Remarquons que la longueur potentielle d'un tel saut n'est pas bornée.

Pour envoyer ces signaux spéciaux, nous utilisons essentiellement le fait que notre automate cellulaire soit non-déterministe. Plus précisément, si les informations d'une cellule et de ses deux voisines ne sont pas cohérentes (i.e., elles ne respectent pas les règles décrites ci-dessus, par exemple deux cellules inactives voisines encodent des signaux différents), alors l'automate entre dans un ``état d'erreur''. 

Par ailleurs, quand l'état interne change (chaque cellule connaît alors l'ancien et le nouvel état interne), une cellule principale peut décider de devenir secondaire ou inactive, une cellule secondaire peut décider de devenir inactive, et une cellule inactive peut devenir principale ou secondaire.  Nous devons ainsi préciser, pour chaque état $e$, pour chaque état $e'$ pouvant succéder à $e$, pour chaque symbole et pour chaque ``activité'' des cellules (principale, secondaire ou inactive), le comportement adopté. Une fois ces changements effectués, le nombre de cellules principales doit toujours être égal à $k$, et les nombres entre 1 et $k$ associés à chaque cellule principale doivent être différents : les choix de ces comportements doivent garantir cela (au moins sur les entrées ``valides'' de l'automate). Bien que l'automate soit non déterministe, son évolution est entièrement déterminée par ses données d'entrée.

Comme c'est le cas pour tout automate cellulaire (déterministe ou non-déterministe), un diagramme espace-temps valide de l'automate est déterminé par un nombre fini de motifs interdits. Nous avons vu que nous pouvions décrire un automate ayant un voisinage de taille 1 par des motifs interdits dont les supports sont inclus dans des rectangles de taille $3\times 2$ ; nous interdisons en outre les motifs d'une seule case correspondant à l'état d'erreur de l'automate. Nous pouvons donc simuler par un shift de type fini un fonctionnement correct de l'automate (sans état d'erreur) sur un rectangle fini $R$. D'après la Proposition~\ref{p:shift-tuiles}, nous pouvons également simuler l'automate ({i.e., représenter son diagramme espace-temps) par un jeu de tuiles. 

Nous considérons que chaque symbole de l'automate est une paire constituée d'un bit d'information et éventuellement d'une ``marque spéciale''. Ces marques spéciales nous permettront d'initialiser des calculs massivement parallèles sur un nombre arbitrairement grand d'éléments, et seront détaillées dans la Partie~\ref{s:preuve}. 

Dans le lemme suivant nous supposons qu'un \emph{élément} est une suite de bits, encodés de telle manière qu'une machine de Turing puisse déterminer où commence et finit l'élément. Nous supposons que les données d'entrée de l'automate consistent en des éléments groupés dans plusieurs listes. Pour indiquer où commencent et finissent les données d'entrée, chaque liste et chaque élément d'une liste, nous ajoutons des marques spéciales supplémentaires. Plus précisément, une paire de marques labellisent le premier et le dernier bit des données d'entrée. Pour chaque liste, une paire de marques labellisent le premier et le dernier élément de la liste. Enfin, pour chaque liste, une paire de marques labellisent, pour chaque élément, le premier et le dernier bit de l'élément.

\begin{lemma}\label{l:automate}
Soit un élément $e$ formé d'une chaîne binaire de $q$ bits, et un nombre $k$ de listes $L_1, \ldots, L_k$, dont les éléments sont constitués d'une chaîne binaire de $q'\geq q$ bits. Alors nous pouvons vérifier si $e$ est le préfixe d'exactement un élément $e'$ présent dans exactement une liste, ou si $e$ est le préfixe d'exactement deux éléments $e_1$ et $e_2$ présents dans exactement deux listes distinctes. Nous pouvons également localiser ces éléments $e'$ ou $e_1$ et $e_2$. Ces opérations se font en temps $O(q)$.
\end{lemma}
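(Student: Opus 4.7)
Le plan serait d'exploiter le parallélisme massif offert par les cellules secondaires de l'automate, qui peuvent être initialisées en nombre non borné sans coût supplémentaire de temps. L'idée directrice est de placer, dès le départ, une cellule secondaire au début de chaque élément de chaque liste $L_j$ (placement rendu possible par les marques délimitant les éléments dans l'entrée), et d'encoder dans son symbole l'indice $j$ de sa liste. Ces cellules secondaires constitueront les ``candidats'' que l'on va filtrer au fil du calcul.

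Je procéderais alors en deux phases. Tout d'abord, une \emph{phase de comparaison parallèle} : les cellules principales balaient $e$ de gauche à droite en diffusant à chaque étape $i$ le bit courant $e_i$ via l'état interne de l'automate. Simultanément, chaque cellule secondaire candidate se déplace d'un cran vers la droite à chaque pas de temps, de sorte qu'à l'étape $i$ elle se trouve sur le $i$-ème bit de son élément ; elle compare alors ce bit avec $e_i$ et devient inactive en cas de désaccord. Comme tous les candidats se déplacent à la même vitesse, les ``fronts d'onde'' issus d'éléments distincts restent parallèles et ne se chevauchent jamais. Après $q$ étapes, les cellules candidates encore actives marquent exactement les éléments dont les $q$ premiers bits coïncident avec $e$.

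Vient ensuite une \emph{phase de localisation} : les cellules principales utilisent le mécanisme de saut pour parcourir les candidats survivants. En envoyant successivement des signaux spéciaux vers la gauche puis vers la droite, une cellule principale se téléporte de candidat en candidat, lit l'indice de liste encodé dans son symbole, et dénombre ainsi les correspondances tout en les localisant. Puisque l'on ne s'intéresse qu'aux cas où ce nombre vaut $1$ ou $2$ (au-delà, on rejette), un nombre constant de sauts suffit pour statuer. Le coût total est donc $O(q) + O(1) = O(q)$.

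La principale difficulté technique sera de s'assurer que l'initialisation non déterministe des cellules secondaires (une par élément, au bon endroit) peut être validée par les règles locales de l'automate en $O(1)$ étapes --- en exploitant les marques de début et de fin d'élément --- et que le déplacement parallèle des candidats respecte la contrainte d'unicité : deux candidats ne doivent jamais tenter d'occuper la même case au même instant. Cette propriété devrait découler du fait que tous les candidats se déplacent à la même vitesse vers la droite, à condition que deux éléments initialement adjacents voient leurs fronts d'onde rester séparés d'une case tout au long de la phase de comparaison. Il faudra également prévoir un traitement ``propre'' des candidats dont le front arriverait à la fin de leur élément avant d'avoir été invalidé, ce qui peut se faire en s'appuyant sur les marques de fin d'élément pour les stopper sans interférer avec les candidats voisins.
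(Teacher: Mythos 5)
Votre proposition suit essentiellement la même démarche que la preuve du manuscrit : une phase de comparaison massivement parallèle où une cellule secondaire placée au début de chaque élément avance en synchronie avec la cellule principale qui balaie $e$ et s'inactive au premier désaccord (la collision entre candidats n'est d'ailleurs pas un vrai souci puisque $q'\geq q$ garantit que chaque candidat reste confiné dans son élément), suivie d'un nombre constant de sauts pour dénombrer et localiser les survivants, le tout en temps $O(q)$. Les seules divergences sont des détails de mise en œuvre : le manuscrit détecte l'épuisement des candidats grâce à une cellule sentinelle $f$ placée après les listes (sans quoi un saut vers la droite ne peut conclure qu'il n'y a plus de survivant, point que vous laissez implicite) et vérifie que $e_1$ et $e_2$ appartiennent à deux listes distinctes en rendant secondaires les débuts de listes puis en effectuant un saut vers la gauche, tandis que vous proposez de lire un indice de liste « encodé dans le symbole » du candidat — ce qui demande soit de réécrire cet indice à chaque pas (le symbole sous le candidat change au fil de son déplacement), soit de revenir à la marque de début d'élément, corrections mineures qui restent dans le cadre du même argument.
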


\begin{proof}
Initialement, la première cellule de l'élément $e$ est principale, et les premières cellules de tous les éléments des $k$ listes $L_1, \ldots, L_k$ sont secondaires. La cellule de fin $f$, située après les $k$ listes, est également secondaire.
La cellule principale parcourt les $q$ cellules de l'élément $e$ ; les cellules secondaires parcourent leur élément respectif, et si le bit d'information lu ne correspond pas au bit d'information de la cellule principale elles deviennent inactives (excepté pour la cellule $f$, qui reste secondaire et immobile). Quand la dernière cellule de l'élément $e$ est principale, la cellule principale effectue alors un premier saut vers la droite ; si la cellule $f$ devient principale, c'est qu'aucun élément des $k$ listes n'a pour préfixe l'élément $e$, et l'automate entre dans un état d'erreur. Sinon, la cellule principale effectue un deuxième saut vers la droite. Si la cellule de fin $f$ est principale, c'est que l'élément $e$ est préfixe d'exactement un élément $e'$ dans exactement une liste. La cellule principale $f$ effectue alors un saut à gauche, puis se déplace vers la gauche jusqu'à atteindre le premier bit de l'élément $e'$.

Dans le cas contraire, c'est qu'au moins 2 éléments des $k$ listes ont pour préfixe $e$. Soit $e_1$ et $e_2$ respectivement le premier et le deuxième élément trouvé. Nous devons vérifier que $e_1$ et $e_2$ n'appartiennent pas à la même liste. Pour cela, toutes les premières cellules des premiers éléments des $k$ listes deviennent secondaires. La cellule principale (située au niveau du $q$-ème bit de $e_2$) effectue un saut vers la gauche. Si la cellule principale est située au niveau de $e_1$, c'est que les deux éléments appartiennent à la même liste et l'automate entre dans un état d'erreur. Sinon, la cellule principale est située sur le premier bit du premier élément d'une liste. Toutes les cellules secondaires situées sur les premiers bits des premiers éléments d'une liste deviennent inactives. La cellule principale effectue un saut vers la droite (et devient inactive et non secondaire), et se situe de nouveau dans $e_2$. La cellule principale effectue alors un saut vers la droite. Si la cellule principale n'est pas $f$, alors il existe au moins trois éléments qui sont préfixes de $e$, et l'automate entre dans un état d'erreur. Sinon, nous avons trouvé nos deux éléments ; il reste à effectuer un saut vers la gauche, puis à se déplacer vers la gauche jusqu'à atteindre le premier bit de $e_2$. Pour accéder à l'élément $e_1$, il suffira d'effectuer de nouveau un saut vers la gauche, puis de se déplacer vers la gauche jusqu'à atteindre le premier bit de $e_1$.
\end{proof}

\section{Shifts auto-similaires}\label{s:shift-auto-similaire}

Historiquement, la notion d'auto-similarité est introduite par John Von Neumann pour les automates cellulaires dans \cite{von1963general}. Ces idées sont reprises par Peter G\'acs pour construire un automate cellulaire tolérant aux erreurs, voir \cite{gacs1986reliable,gacs2001reliable} (l'article \cite{gacs2001reliable} est commenté dans \cite{gray2001reader}). Plus tard, cette technique d'auto-simulation a été utilisée dans le contexte des shifts multidimensionnels de type fini, voir \cite{fixed-point}, ou, plus récemment, \cite{surface-entropies} et \cite{point-fixe} (une explication à destination d'un public plus large est également donnée dans \cite{durand2009foundations}). Mentionnons également une approche très formelle sur les shifts auto-similaires dans \cite{zinoviadis2016hierarchy}, et un survol reprenant ces notions d'auto-simulation pour les shift de type fini et pour les automates cellulaires dans \cite{torma2021fixed}.

Considérons à nouveau le shift constitué de carrés noirs de tailles différentes de \cite{westrick2017seas}, évoqué page~\pageref{p:ex-westrick} dans la Sous-Section~\ref{ss:2-dimension} ; la preuve que celui-ci est sofique est basée sur la technique d'auto-simulation de \cite{fixed-point}.

Dans notre preuve, nous ne pouvons pas utiliser les \emph{résultats} démontrés dans ces papiers, mais nous devons plutôt réutiliser la \emph{méthode} générale développée pour obtenir des constructions auto-similaires et l'adapter à nos besoins. Comme cette construction est assez complexe, nous commençons par une présentation générale de la technique permettant d'obtenir un jeu de tuiles auto-similaire. 
Dans notre présentation de cette technique, nous suivons essentiellement la version de la construction par point-fixe présentée dans \cite{point-fixe}.
Pour le lecteur familier avec \cite{point-fixe}, nous pouvons indiquer les principales différences : (1)~nous adaptons la construction pour le cas où les ``super-couleurs'' sur les bords des ``super-tuiles'' encodent beaucoup d'information (soit un nombre de bits comparable avec la taille d'un côté de la super-tuile, et non avec le logarithme de cette taille), et (2)~dans les ``zones de calcul'' des ``super-tuiles'', nous utilisons, en plus d'une machine de Turing classique, une machine de Turing à plusieurs têtes (avec deux têtes de lecture sur un même ruban) et un automate cellulaire non déterministe, ce qui nous permet de traiter ces très longues ``super-couleurs'' en temps linéaire. Dans la Partie~\ref{s:preuve} nous ajusterons cette construction générale en la combinant avec les autres ingrédients de notre preuve.  

Décrivons le plan de cette partie plus en détail. Dans la première sous-partie, nous commençons par expliquer comment un jeu de tuiles peut en simuler un autre. La deuxième sous-partie est la plus technique : c'est là que nous reprenons la construction d'un jeu de tuiles auto-similaire de \citep{point-fixe}, avec les différences indiquées précédemment. Cette construction repose sur la simulation d'une machine de Turing au sein des pavages de ce jeu de tuiles. Elle est ``hiérarchique'', et possède une infinité de ``rangs''. La simulation de la machine de Turing a lieu à chaque ``rang'' de la hiérarchie. La machine de Turing est simulée sur un espace et durant un temps fini ; cet espace et ce temps sont les mêmes à tous les rangs de la hiérarchie. Nous modifions alors ce jeu de tuiles dans la troisième sous-partie pour obtenir une construction plus flexible ; une machine de Turing y est toujours simulée à chaque rang de la hiérarchie, mais cette fois l'espace et le temps disponibles croissent avec le rang de la hiérarchie. Ces trois premières sous-parties sont directement basées sur la Partie~2 de \cite{point-fixe}, à quelques modifications près (celles-ci nous permettront d'adapter la construction pour obtenir la construction finale dans la Partie~\ref{s:preuve}).

Nous obtenons ainsi, à la fin de cette partie, le ``squelette'' pour la construction principale de la Partie~\ref{s:preuve}, squelette que nous enrichirons pour prouver le résultat principal de ce Chapitre~\ref{c:sofique}.
Notre construction (basée sur la simulation d'une machine de Turing) garantit que chaque pavage valide possède une structure hiérarchique. Par la suite nous utiliserons cette structure pour encoder une ``charge utile''. Plus précisément, notre construction implique que chaque pavage valide contient la représentation du diagramme espace-temps d'un automate cellulaire. Dans la Partie~\ref{s:shift-auto-similaire}, cela peut paraître inutile : cet automate n'est pas utilisé par les mécanismes garantissant l'auto-simulation, et en fait cet automate n'effectue aucun calcul. Nous définirons plus précisément cet automate dans la Partie~\ref{s:preuve} où il joue un rôle important dans la preuve de notre résultat principal. Dans la Partie~\ref{s:flots} nous démontrons un résultat sur les flots qui nous garantira que le shift défini n'est pas vide.

\subsection{Simuler un shift quelconque}\label{ss:simulation-shift}

Nous commençons par montrer comment un jeu de tuiles $\tau$ peut simuler un autre jeu de tuiles $\rho$.

\begin{definition}\label{d:simule-jeu-tuiles}
On dit qu'un jeu de tuiles $\tau$ simule un jeu de tuiles $\rho$ pour un grossissement $N > 1$ si : 
\begin{itemize}
	\item il existe ${\cal L}$, un sous-ensemble des motifs globalement admissibles de $\tau$ de taille $N\times N$. Ces motifs carrés sont appelés les super-tuiles de $\tau$. Pour une super-tuile $T$, la super-couleur d'un bord de $T$ (gauche, droit, haut ou bas) est la concaténation des couleurs du bord des tuiles situées sur ce bord de $T$ ;
	\item il existe $\phi$ une fonction injective de ${\cal L}$ dans $\rho$ ;
	\item pour tout pavage ${\cal C}'$ de $\tau$, il existe une unique manière de partitionner ${\cal C}'$ en une grille de carrés de ${\cal L}$. En appliquant $\phi$ sur chacun des carrés de la grille, nous obtenons un pavage ${\cal C}$ de $\rho$. On note alors $\phi({\cal C'})={\cal C}$ ;
	\item pour tout pavage ${\cal C}$ de $\rho$, il existe un pavage ${\cal C'}$ de $\tau$ tel que $\phi({\cal C'})={\cal C}$. 
\end{itemize}
\end{definition}

\begin{remark}
Si une tuile de $\rho$ apparaît au moins une fois dans un pavage de $\rho$, alors une unique super-tuile de ${\cal L}$ lui est associée (d'après les points 2 et 4 de la définition). Autrement dit, $\phi$ est une bijection entre ${\cal L}$ et les tuiles de $\rho$ apparaissant au moins une fois dans au moins un pavage de $\rho$.
Si deux tuiles de $\rho$ peuvent être placées l'une à côté de l'autre (l'une à gauche de l'autre, ou l'une au-dessus de l'autre), i.e., elles ont la même couleur sur leur bord en contact (les bords gauche /droite ou ceux haut/bas), alors les super-tuiles correspondantes (si elles existent) doivent aussi avoir la même super-couleur sur leur bord en contact.
\end{remark}

\begin{proposition}
Soit $\rho$ un jeu de tuiles, et $N$ un entier suffisamment grand. Alors il existe un jeu de tuiles $\tau$ qui simule $\rho$ avec un grossissement $N$.
\end{proposition}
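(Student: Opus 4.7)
Le plan consiste à construire un jeu de tuiles $\tau$ dont les pavages se décomposent de manière unique en super-tuiles de taille $N \times N$, chaque super-tuile simulant une unique tuile de $\rho$. La structure interne de chaque super-tuile comporte quatre éléments principaux : un cadre sur sa bordure muni de quatre coins distinctifs (pour assurer l'unicité de la décomposition du pavage) ; sur chacun des quatre côtés, une séquence binaire formant la super-couleur, codant en $O(\log |\rho|)$ bits la couleur du bord correspondant de la tuile simulée de $\rho$ ; des ``fils'' transportant ces super-couleurs depuis la bordure vers une zone de calcul centrale ; et une zone de calcul rectangulaire où est simulé le diagramme espace-temps d'une machine de Turing, en utilisant la technique de simulation de la Partie~\ref{s:automate}.

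Je définirais ensuite une machine de Turing $M_\rho$ qui, lisant en entrée les quatre super-couleurs délivrées par les fils, vérifie qu'il existe une tuile $t \in \rho$ dont les couleurs des bords correspondent à ces quatre super-couleurs. Puisque $\rho$ est fini, sa description complète peut être codée en dur dans la table de transition de $M_\rho$, et cette vérification consiste en une simple recherche exhaustive sur les tuiles de $\rho$. Si $M_\rho$ échoue à trouver une telle tuile, elle entre dans un état d'erreur, et la super-tuile n'est alors pas globalement admissible. La fonction $\phi : {\cal L} \to \rho$ associe à chaque super-tuile valide la tuile $t$ qu'elle simule. L'injectivité découle du fait que les quatre super-couleurs de la super-tuile déterminent les quatre couleurs de $t$, et que $\rho$ n'est pas supposé contenir deux tuiles identiques.

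Les contraintes locales de $\tau$ sont imposées par les couleurs des bords des tuiles : elles forcent (i)~la présence d'un cadre formant une grille rigide sur $\mathbb{Z}^2$, via la propagation des couleurs ``cadre'' et le positionnement obligatoire des tuiles ``coin'' ; (ii)~le bon routage des super-couleurs depuis la bordure vers la zone de calcul via les fils ; (iii)~la validité des transitions de $M_\rho$ dans la zone de calcul, ainsi que la cohérence des super-couleurs affichées par les tuiles de la bordure avec celles de leurs voisines dans les super-tuiles adjacentes. Cette dernière condition garantit que la projection $\phi$ d'un pavage valide de $\tau$ donne bien un pavage valide de $\rho$. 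Réciproquement, tout pavage de $\rho$ se relève en un pavage de $\tau$ en substituant à chaque tuile la super-tuile image réciproque par $\phi$, ce qui peut être vérifié puisque $M_\rho$ accepte par construction sur chacune de ces super-tuiles.

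La principale difficulté technique est de garantir l'unicité de la décomposition du pavage en super-tuiles : il faut s'assurer qu'aucune disposition ``décalée'' des cadres n'est localement compatible avec les contraintes de $\tau$. Cela se fait en concevant soigneusement les couleurs des coins et des cadres (typiquement en utilisant des couleurs qui ne peuvent apparaître qu'à intervalles de $N$), de sorte que leur placement force une unique grille de décomposition. Le choix de $N$ doit être suffisamment grand pour accommoder (a)~la description complète de $\rho$ dans la table de transition de $M_\rho$, (b)~l'espace et le temps nécessaires à l'exécution de $M_\rho$ (polynomial en $|\rho|$), et (c)~le placement du cadre, des fils et de la zone de calcul au sein d'un carré $N \times N$. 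Puisque $|\rho|$ est fixé, il suffit de prendre $N$ supérieur à une constante qui ne dépend que de $\rho$.
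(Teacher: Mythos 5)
Votre architecture générale coïncide avec celle du manuscrit : super-tuiles de taille $N\times N$, super-couleurs sur les bords codant les couleurs de la tuile simulée, câbles acheminant ces bits vers une zone de calcul centrale, et machine de Turing vérifiant le prédicat d'appartenance à $\rho$ (le fait que vous codiez $\rho$ en dur dans $M_\rho$ plutôt que de passer par une machine universelle est sans importance pour cette proposition ; la machine universelle du manuscrit ne sert qu'à préparer la construction auto-similaire ultérieure). La véritable divergence est le mécanisme censé forcer la décomposition unique en blocs $N\times N$, et c'est là que votre argument présente une lacune concrète. Vous proposez un cadre avec des coins et des couleurs « qui ne peuvent apparaître qu'à intervalles de $N$ » ; or aucune contrainte locale (sur des paires de tuiles voisines) ne peut empêcher une couleur d'apparaître plus souvent, ni garantir qu'un cadre apparaisse du tout, sans un mécanisme de comptage : un pavage sans aucun cadre, ou avec des cadres espacés autrement, resterait localement admissible si les tuiles intérieures ne portent aucune information de position. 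Le même problème se répercute sur la géométrie interne : rien ne force le placement des fils et de la zone de calcul à l'intérieur de la super-tuile si les tuiles ne savent pas où elles se trouvent.

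La solution du manuscrit est précisément d'encoder dans les couleurs de chaque tuile ses coordonnées $(i,j)$ modulo $N$ (la couleur droite porte $(i+1 \bmod N, j)$, la couleur haute $(i, j+1 \bmod N)$, etc.), ce qui donne un jeu de $O(N^2)$ tuiles : la décomposition unique en blocs $N\times N$ en découle immédiatement, et le rôle de chaque tuile (cellule de câble, cellule du diagramme espace-temps, bloc de construction, bit de super-couleur) est entièrement déterminé par ses coordonnées, sans qu'aucun cadre ni coin ne soit nécessaire. Votre preuve devient correcte si vous remplacez le cadre par ce compteur de coordonnées (ou si vous explicitez un mécanisme équivalent, ce qui revient au même) ; telle quelle, l'étape que vous identifiez vous-même comme la principale difficulté n'est pas résolue mais seulement renommée.
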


\begin{proof}
Nous allons décrire une procédure générale permettant, étant donné un jeu de tuiles $\rho$ et un entier suffisamment grand $N$, de construire un jeu de tuiles $\tau$ simulant $\rho$ avec un grossissement $N$. Le nombre de tuiles de $\tau$ est en $O(N^2)$, et la constante dans le grand $O$ ne dépend pas de $\rho$ ; par contre, la valeur de $N$ à partir de laquelle la construction de $\tau$ est possible dépend de $\rho$.

Soient $\rho$ un jeu de tuiles et $N$ un entier suffisamment grand. Nous supposons que les couleurs des bords des tuiles de $\rho$ sont des chaînes binaires de $q'$ bits (i.e., l'ensemble des couleurs $C\subset \{0,1\}^{q'}$) et que le jeu de tuiles $\rho \subset C^4$ peut être représenté par un prédicat $P(c_1,c_2,c_3,c_4)$ (le prédicat est vrai si et seulement si le quadruplet $(c_1,c_2,c_3,c_4)$ correspond à une tuile de $\rho$).

Nous supposons également que nous avons une machine de Turing $M$ qui calcule $P$ (cela peut paraître excessif d'utiliser une machine de Turing pour calculer un prédicat sur un domaine fini, mais nous verrons que cette façon de faire se révélera utile par la suite). Plus précisément, étant donné $4q'$ bits écrits à la suite, $M$ est capable de déterminer s'ils correspondent à 4 couleurs d'une tuile de $\rho$. Pour cela nous supposons, quitte à augmenter un peu la valeur de $q'$, qu'une machine de Turing parcourant ces $4q'$ bits peut déterminer où commence et où finit chacune des 4 parties correspondant à une couleur de $\rho$. Nous allons construire en parallèle un jeu de tuiles $\tau$, simulant $\rho$ avec un facteur de grossissement $N$, et un ensemble ${\cal L}$ de super-tuiles de $\tau$ de taille $N\times N$ correspondant à la définition de la simulation, voir Définition~\ref{d:simule-jeu-tuiles}. 

\begin{figure}[H]
	\centering
	\includegraphics[scale=1]{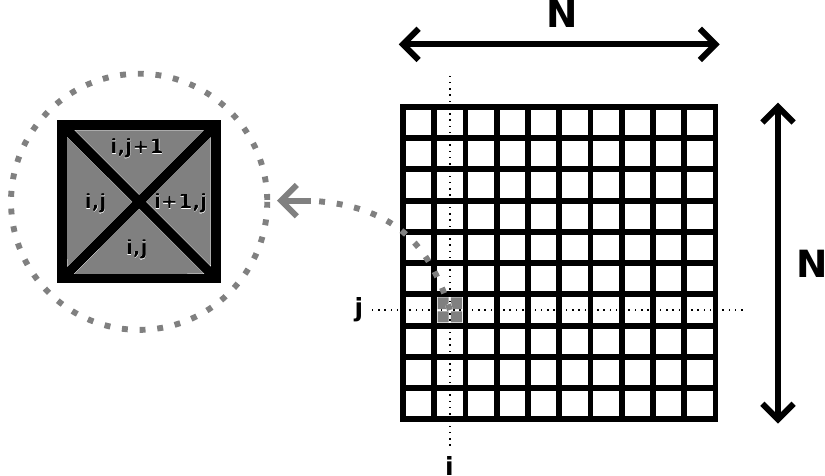}
	\caption{Détail d'une tuile à la position $(i,j)$.}\label{f:tuile}
\end{figure}

Nous voulons que dans un pavage de $\tau$, chaque tuile ``connaisse'' ses coordonnées modulo $N$, i.e. qu'à chaque tuile soit associée une paire d'entier de l'intervalle $[0,N-1]$. En pratique, cette information est encodée au niveau des couleurs des bords des tuiles. Plus précisément, pour une tuile de coordonnée $(i,j)\mod{N}$, les couleurs des bords gauche et bas doivent encoder $(i,j)$, celle du bord droit $(i+1 \mod{N},j)$ et celle du haut $(i,j+1 \mod{N})$, voir Fig.~\ref{f:tuile}. Cela nous garantit qu'un pavage de $\tau$ peut être découpé de manière unique en blocs (en super-tuiles) de taille $N\times N$, dans lesquels les tuiles en bas à gauche et en haut à droite ont pour coordonnées respectives $(0,0)$ et $(N-1,N-1)$. Intuitivement, chaque tuile ``connaît'' sa position au sein de la super-tuile.
Pour les tuiles situées sur les bords d'une super-tuile (i.e., une de leurs coordonnées au moins est égale à $0$ ou $N-1$), nous leur ajoutons une valeur binaire supplémentaire au niveau de leur couleur située au bord de la super-tuile. Pour une super-tuile de $\tau$ (de taille $N\times N$), les 4 super-couleurs correspondantes peuvent donc être représentées comme 4 chaînes binaires de taille $N$.

Une première différence entre la construction de \cite{point-fixe} et la nôtre est que dans \cite{point-fixe} le nombre $N$ de bits encodés dans une super-couleur est excessivement large, et seul un nombre $q\ll N$ de ces bits sont utilisés (plus précisément, $q=O(\log N)$). 
Au contraire, dans notre construction, la situation est différente ; nous voulons que le nombre de bits $q$ encodés dans une super-couleur soit important, même comparé à $N$. Techniquement, nous choisissons $q=\frac{N}{16}$ (nous verrons que ce choix de $q$ sera adéquat pour notre construction). Une petite partie de ces bits, soit $q'$ bits, correspond aux symboles ``réellement'' utilisés dans la simulation de $\rho$. L'écart entre $q$ et $q'$ correspond à des bits qui sont encore inutilisés à cette étape, que nous réservons pour des ajustements futurs de notre construction. Nous supposons que les super-couleurs sont encodées de telle manière qu'une machine de Turing puisse, lorsqu'elle parcourt les $q$ bits, détecter où la partie composée des $q'$ bits servant à simuler $\rho$ commence et finit.

Pourquoi avons-nous besoin d'encoder autant de bits``inutilisés'' dans les super-couleurs ?
Intuitivement, nous aurons besoin dans la Partie~\ref{s:preuve} que deux super-tuiles voisines puissent s'échanger, via leurs super-couleurs, une quantité d'information encodée par un nombre de bits égal à une fraction de $N$. Nous préférons donc expliquer, dès maintenant, la ``géométrie'' d'une super-tuile de $\tau$, en prévoyant déjà la possibilité d'encoder des bits supplémentaires dans ses super-couleurs. La fraction exacte choisie devra être suffisamment petite et nous la préciserons plus tard, dans la Partie~\ref{s:preuve}.

Les bits utilisés pour la simulation de $\rho$ sont situés en bas pour les bords gauche et droit d'une super-tuile ; pour les bords haut et bas, ils sont situés à une distance $q$ du bord gauche. Les $q'$ bits effectivement utilisés sont situés, pour chacun des bords, au début des $q$ bits réservés (en partant du haut ou de la gauche de ces bits). Les $N-q$ bits des super-couleurs restants sont ``triviaux'' : ils sont, et resteront, ``inutilisés''. Formellement, nous leur affectons la valeur 0, voir Fig.~\ref{f:tuile-bords}.

\begin{figure}[H]
	\centering
	\includegraphics[scale=1]{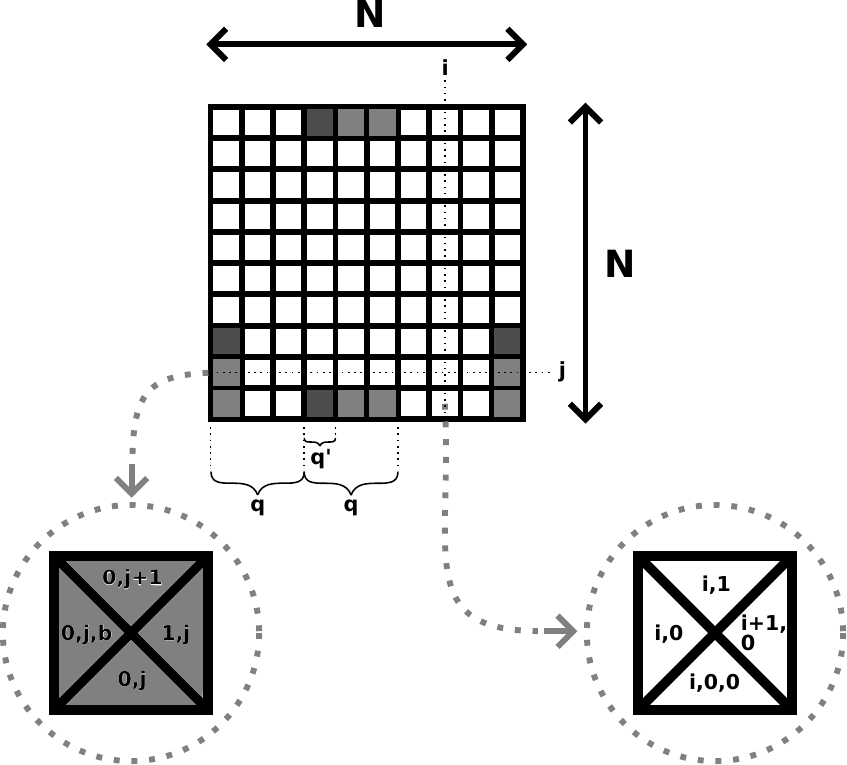}
	\caption[Les bits encodés sur les bords d'une super-tuile.]{Nous voulons qu'à chaque tuile sur les bords d'une super-tuile de taille $N\times N$ soit associé un bits supplémentaire. Par exemple, au bord gauche de la $j$-ème tuile du bord gauche d'une super-tuile est associée un triplet $(0,j,b)$, où la paire $(0,j)$ représente les coordonnées de la tuile dans la super-tuile, et $b$ le bit de la tuile. Ainsi, chaque super-couleur d'un bord d'une super-tuile peut maintenant être représentée par une chaîne binaire de taille $N$, encodée dans les couleurs des bords des tuiles correspondantes situées sur ce bord. Nous supposons que pour chaque bord, ces bits supplémentaires ne sont pas triviaux pour seulement $q$ tuiles, $q$ correspondant à une fraction constante de $N$. Parmi ces $q$ tuiles, les $q'$ premières sont utilisées dans la simulation de $\rho$. Les $q$ tuiles sont montrées en gris clair (i.e. elles encodent un bit non trivial), excepté pour les $q'$ premières tuiles montrées en gris foncé (i.e. elles encodent un bit qui est utilisé pour simuler $\rho$).}\label{f:tuile-bords}
\end{figure}

Les autres tuiles du bord de la super-tuile ne contribuent pas à la super-couleur -- leur bit est toujours égal à 0. Ainsi, une super-couleur est en réalité déterminée par une chaîne binaire de taille $q$.

Les bits représentés au niveau des super-couleurs sont ``transmis'' au sein de la super-tuile. Dans ce qui suit nous détaillons comment cette information est traitée au sein d'une super-tuile. Nous pouvons noter que, de même que dans la construction de \cite{point-fixe}, $q'$ est bien plus petit que $N$ (i.e., $q'=O(\log N)$). Même si le nombre $q$ de bits ``transmis'' au sein de la super-tuile est important (égal à une fraction de $N$), le nombre $q'$ de bits à traiter pour simuler $\rho$ est petit.  

Ajoutons des contraintes supplémentaires sur le jeu de tuiles $\tau$ pour garantir que les super-couleurs des super-tuiles ``simulent'' bien le prédicat $P$. Pour cela, assurons-nous que les bits des tuiles dont le rôle est d'encoder les super-couleurs, situés sur les bords de la super-tuile, sont transférés jusqu'à la zone du milieu de la super-tuile. Plus précisément, nous voulons que les 4 chaînes binaires encodées par les 4 super-couleurs de la super-tuile soient présentes au niveau de la  ligne du bas de la zone du milieu. Nous nous assurons ensuite que, dans cette zone du milieu, est vérifié que pour ces 4 chaînes binaires le prédicat $P$ est vrai. La taille totale de ces chaînes binaires est $4q$ ; néanmoins, pour vérifier la véracité du prédicat $P$ nous n'avons qu'à considérer $4q'$ bits. Nous fixons quelles tuiles dans la super-tuile jouent le rôle de ``câbles de communication'', et plus précisément quelle sorte de cellule du câble de communication (horizontale, verticale, ou un des 4 angles droits) la tuile joue. En particulier, les $4q$ tuiles impliquées dans le codage effectif d'une super-couleur sont des cellules d'un câble de communication. 

Nous demandons que pour ces tuiles jouant le rôle de câble de communication, les couleurs de \emph{deux} bords (le choix de ces bords étant déterminé par le type de la cellule du câble joué par la tuile) encodent le même bit. Par exemple, pour une tuile jouant le rôle d'une cellule horizontale d'un câble de transmission, il s'agit des couleurs des bords gauche et droit de la tuile. Dans ce cas, le bit encodé dans le bord droit de la tuile voisine de gauche est égal au bit encodé dans le bord gauche de la tuile voisine de droite, voir Fig.\ref{f:tuile-complète}.

\begin{figure}[H]
	\centering
	\includegraphics[scale=0.38]{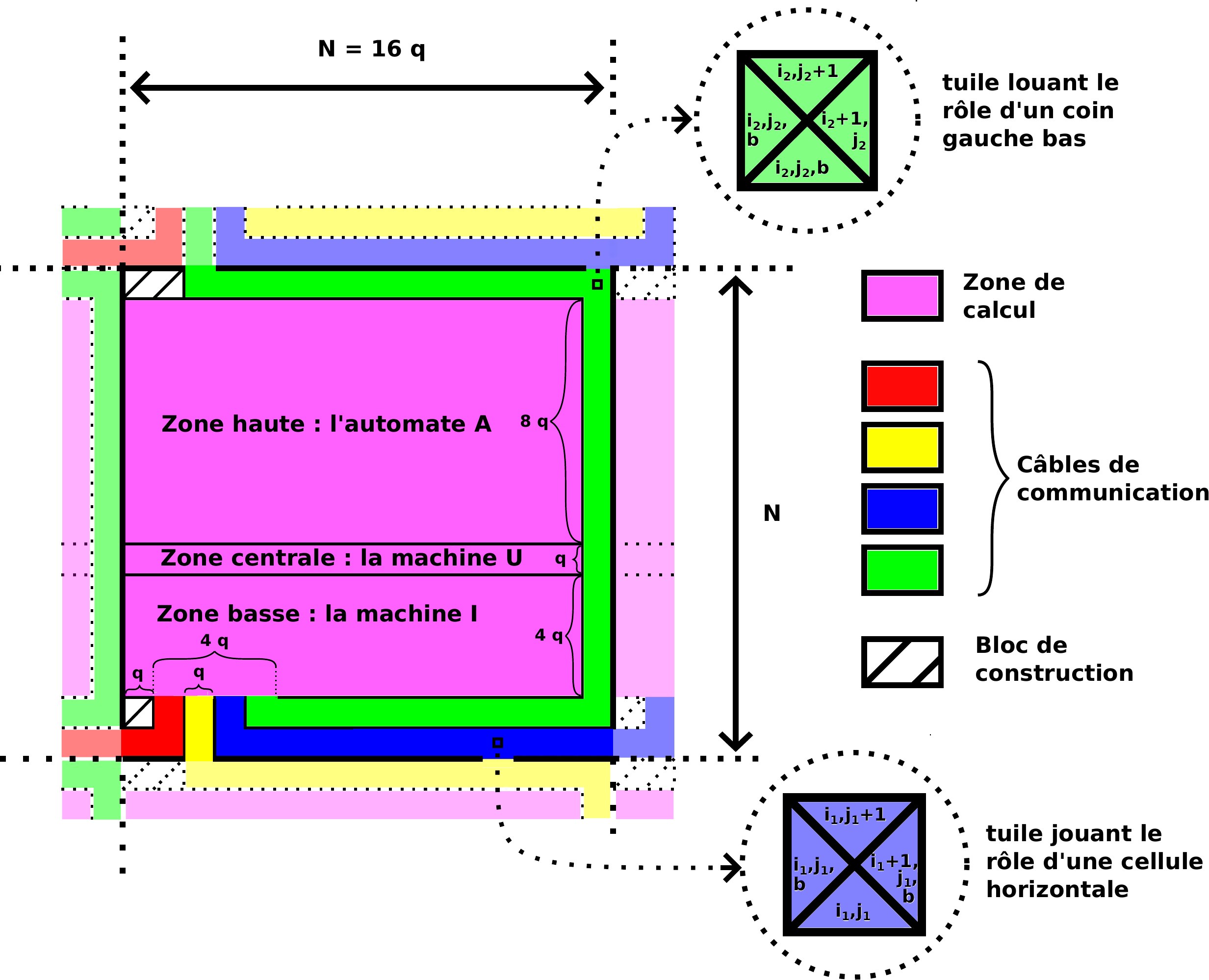}
	\caption[Les câbles de communication et les différentes parties de la zone de calcul d'une super-tuile.]{Au niveau de chacun des 4 bords d'une super-tuile nous encodons dans les bords de $q$ tuiles les $q$ bits représentant une super-couleur. Ces tuiles font partie des ``câbles de communication'', i.e. les tuiles chargées de transférer ces données jusqu'à la zone du milieu. Puisque $q$, la largeur des câbles rouge, vert, bleu et gris dans la figure, est égale à une fraction de $N$, la manière dont nous disposons ces câbles doit être plus précise que dans la construction de \cite{point-fixe}. Nous les organisons comme montré dans la figure. Par ailleurs, nous détaillons les couleurs de deux bords de deux tuiles impliquées dans les ``câbles de communication''. La tuile de coordonnées $(i_1,j_1)$ appartient au câble de communication bleu, chargé du transfert de la super-couleur de gauche jusqu'à la zone du milieu. Le rôle de cette tuile est celui d'une cellule horizontale d'un câble de communication : les bits de ses bords gauche et droit sont identiques, et elle ``transfère'' le bit du bord droit de sa voisine de gauche au bord gauche de celle de droite. Pareillement, la tuile de coordonnées $(i_2,j_2)$ appartient au câble de communication vert, qui est chargé de transférer la super-couleur du haut jusqu'à la zone du milieu. Le rôle de cette tuile est celui d'un ``coin gauche bas'' d'un câble de communication. En tant que tuile jouant ce rôle, le bit encodé dans son bord gauche est égal à celui encodé dans son bord bas : elle ``transfère'' donc le bit présent sur le bord droit de sa voisine de gauche jusqu'au bord haut de sa voisine du bas. Nous pouvons noter que les coordonnées et le bit d'une tuile sont encodés au niveau des \emph{couleurs} des bords de celle-ci. Les couleurs indiquées à l'intérieur d'une tuile (rouge, vert, bleu ou gris dans la figure) ne sont là qu'à titre illustratif pour montrer le rôle que joue la tuile dans la super-tuile. Nous pouvons également noter que bien que la notion de ``transfert'' corresponde bien à l'intuition, l'information véhiculée par les câbles de communication n'est pas réellement ``orientée'' : en réalité, les câbles nous garantissent juste que les bits composant les super-couleurs sont les mêmes que ceux présents au niveau de la ligne du bas de la zone du milieu.}\label{f:tuile-complète}
\end{figure}

La zone du milieu de la super-tuile, que nous appelons ``la zone de calcul'', est de taille $l\times h$, avec $l=N-q=\frac{15N}{16}$ et $h=N-3q=\frac{13N}{16}$ : c'est dans celle-ci qu'est vérifié que les couleurs de $\rho$ représentées par les 4 super-couleurs correspondent bien à une tuile de $\rho$ (i.e. que le prédicat $P$ est valide pour les 4 chaînes binaires encodées dans les super-couleurs). La communication entre cette zone de calcul et le ``reste du monde'' est restreinte à sa ligne du bas. Il s'agit même plus précisément d'une partie de cette ligne composée de $4q$ tuiles, dont les bits doivent être cohérents avec les 4 super-couleurs de la super-tuile. Nous pouvons noter qu'il existe pour l'instant également $q$ tuiles inutilisées au début de la première ligne de la zone de calcul.

Dans la construction de \cite{point-fixe}, la zone du milieu de la super-tuile représente juste le diagramme espace-temps de la machine de Turing $M$ (le ruban est horizontal, et le temps va vers le haut). Cette machine $M$ reçoit en entrée 4 chaînes binaires de taille $O(\log N)$, correspondant aux 4 super-couleurs de la super-tuile.

Dans notre construction, nous subdivisons les calculs simulés dans la \emph{zone de calcul} en trois parties ; la zone de calcul est coupée horizontalement deux fois, pour obtenir une zone ``basse'', une zone ``centrale'' et une zone ``haute'', voir Fig.\ref{f:tuile-complète}.

La première étape, ayant lieu dans la zone basse, est réalisée par la machine de Turing $I$. Son rôle est très simple : elle se contente de rassembler à un endroit du ruban les parties des données d'entrée utilisées pour la simulation de $\rho$. Cependant, la tâche n'est pas complètement triviale ; la difficulté réside en la collecte de petits blocs de données qui sont au départ éparpillés sur le ruban, et fortement éloignés les uns des autres. Nous devons réaliser cette tâche efficacement, en temps linéaire. Par conséquent, nous ne pouvons pas ``transporter'' les bits d'information utile un par un, en parcourant à chaque fois une partie importante des données d'entrée. La tâche ne peut être réalisée suffisamment rapidement seulement si nous utilisons une machine de Turing à plusieurs têtes de lecture ; c'est pourquoi la machine $I$ possède deux têtes de lecture. Nous pouvons noter que les $q'$ bits utilisés pour la simulation sont situés au début des $q$ bits correspondant aux super-couleurs de gauche et du bas, et à la fin pour les super-couleurs de droite et du haut. 

La seconde étape, dans la partie centrale, est réalisée par la machine $M$ (que nous avons présentée ci-dessus). Cette machine réalise l'essentiel des calculs complexes et non-triviaux. Il est important de noter que pour mener ceux-ci à bien elle n'a à considérer qu'une très petite partie des données d'entrée. Par conséquent, nous n'avons pas besoin de spécifier dans le détail l'algorithme de la machine $M$ ; il est suffisant de dire qu'elle effectue des opérations classiques (i.e., des opérations arithmétiques sur des entiers naturels) en temps polynomial.

La troisième étape, dans la partie haute, sera spécifiée dans la Partie~\ref{s:preuve} et contiendra des opérations qui requerront une utilisation massive du parallélisme. Dans ce cas, une machine de Turing à plusieurs têtes de lecture ne sera pas un modèle de calcul adéquat, et nous utiliserons un automate cellulaire non déterministe $A$.

Détaillons maintenant l'étape ``d'initialisation'' réalisée par la machine $I$. Elle consiste à ``récupérer'', parmi l'ensemble des $4q$ bits présents au niveau de la première ligne de la zone de calcul (représentant les 4 super-couleurs), les $4q'$ bits représentant 4 couleurs de $\rho$. La machine de Turing $I$ a deux têtes de lecture et un ruban, les symboles du ruban étant des bits. 

La première tête de lecture $t_1$ de $I$ commence au début de la zone de calcul, et la deuxième tête de lecture $t_2$ commence au début des $4q$ bits encodant les 4 super-couleurs. La seconde tête de lecture va parcourir l'intégralité de ces $4q$ bits, en un nombre $4q$ d'étapes de calcul. L'encodage de ces $4q$ bits lui permet de savoir quand elle est arrivée au début et à la fin d'un bloc de bits, de taille $q'$, utilisé dans la simulation de $\rho$. Quand elle passe au-dessus d'un tel bloc, la première tête de lecture recopie ce bloc de bits sur le ruban (un bit à la fois). Une fois cette tâche effectuée, le ruban contient $4q'$ bits à la suite, situés au début de la première ligne de la zone de calcul (voir Fig.\ref{f:tuile-zone-basse}. Ils représentent les parties des 4 super-couleurs utilisées pour simuler $\rho$. Ces bits correspondent exactement aux données d'entrée de la zone de calcul de la construction de \cite{point-fixe} ; nous pouvons donc exécuter la machine $M$ sur ceux-ci (i.e., la partie centrale de la zone de calcul représente le diagramme espace-temps de $M$). Nous pouvons noter que la machine $I$ ne dépend ni de $N$ ni de $\rho$. 

\begin{figure}[H]
	\center
	\includegraphics[scale=0.7]{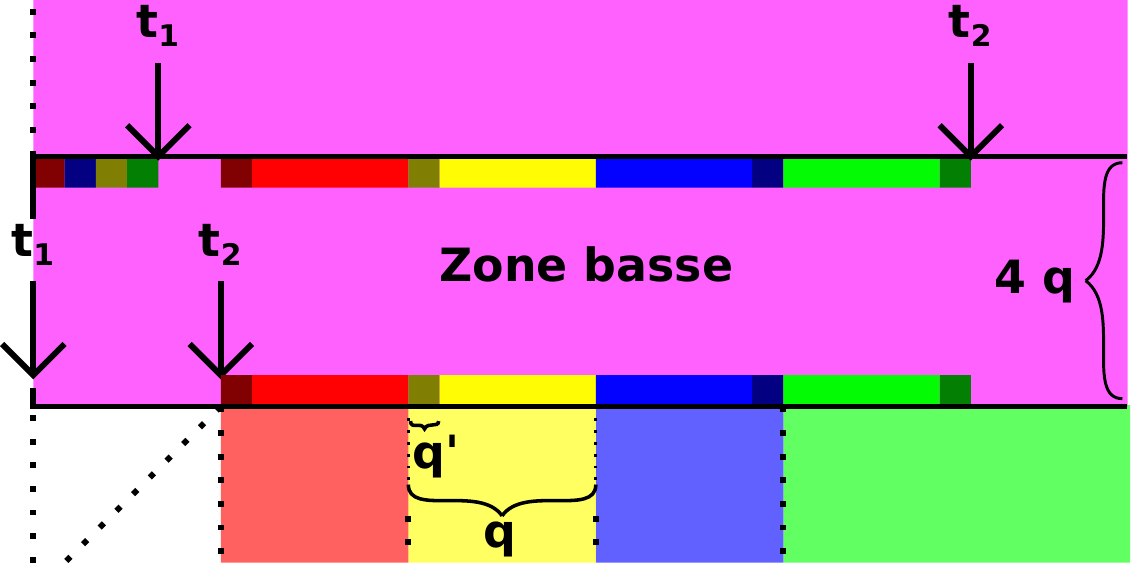}
	\caption{Regroupement des bits utilisés pour la simulation du shift au niveau de la zone basse de la zone de calcul.}\label{f:tuile-zone-basse}
\end{figure}

\begin{remark}
La machine $I$ se termine après $4q=\frac{N}{4}$ étapes de calcul. Ainsi, nous connaissons précisément la taille de la zone basse de la zone de calcul, qui est de $\frac{15N}{16} \times \frac{N}{4}$.
\end{remark}

Intéressons-nous maintenant à la partie centrale, où s'exécute la machine $M$. Nous devons nous assurer que nous avons assez de temps et d'espace dans la zone de calcul, une fois exécutée la machine $I$, pour que la machine $M$ puisse mener ses calculs à bien. De plus, nous voulons garder un espace suffisant pour la partie haute. Nous choisissons d'allouer un temps $q=\frac{N}{16}$ à $M$ ; il faut donc que tous les calculs de $M$ sur des entrées valides se terminent en au plus $q$ étapes de calcul, en ayant utilisé un espace d'au plus $N-q$ cases du ruban. Une fois que la machine $M$ s'est arrêtée, le ruban n'est plus modifié lors des éventuelles étapes restantes de la zone centrale (et $M$ reste dans son état final). L'entier $q$ étant une fraction de $N$, il est nécessaire pour que cette construction fonctionne que l'entier $N$ (i.e. la taille d'une super-tuile) soit suffisamment grand. 

Nous avons vu dans la Partie~\ref{s:automate} que nous pouvons simuler une machine de Turing classique par un jeu de tuiles. Cependant, pour l'instant la quantité d'information supplémentaire à encoder dans les couleurs des tuiles de la zone centrale dépend du choix de la machine $M$. Nous voulons éviter cette dépendance, et comme dans \cite{point-fixe} nous remplaçons $M$ par une machine de Turing universelle $U$ fixée, qui exécute un programme simulant $M$. Nous choisissons $U$ telle que $U$ simule $M$ de manière efficace, i.e. un calcul effectué par $M$ en temps polynomial doit être simulé par $U$ également en temps polynomial (le polynôme peut être différent). 

L'article~\cite{woods2009complexity} nous garantit l'existence d'une machine de Turing universelle a un ruban, pouvant simuler n'importe quelle machine de Turing à un ruban, avec une perte en temps polynomial. Nous pouvons noter que pour les machines de Turing à plusieurs rubans, un résultat plus fort existe : l'article~\cite{hennie1966two} montre qu'il existe une machine de Turing universelle à deux rubans pouvant simuler n'importe quelle machine de Turing à plusieurs rubans, avec une perte en temps seulement logarithmique. Par ailleurs, une machine de Turing à un ruban peut simuler une machine de Turing à plusieurs rubans avec une perte en temps quadratique (voir~\cite{hopcroft2001introduction}). Avec ce résultat, combiné avec l'article de~\cite{woods2009complexity}, nous obtenons une machine de Turing universelle à un ruban pouvant simuler n'importe quelle machine de Turing multi-rubans, avec une perte en temps polynomiale.   

Par ailleurs, nous voulons que la machine $U$ puisse déterminer où se termine le code de la machine qu'elle simule.

En pratique, nous allons encoder le code de la machine $M$ (pour la machine universelle $U$) au début du ruban de la machine $I$ (au niveau des $q$ tuiles réservées situées sur la ligne du bas de la zone de calcul).

La machine de Turing $I$ a donc deux ``champs'' en entrée : (i) le code de $M$, soit une chaîne de caractères de l'alphabet de $U$, situé au début des données d'entrée, et (ii) $4q$ bits, représentant les 4 super-couleurs, situés après les $q$ premières tuiles de la zone de calcul. Nous pouvons noter que contrairement aux tuiles du champ (ii), qui sont déterminées par les super-couleurs de la super-tuile, les tuiles du champ (i) sont uniquement déterminées par leurs coordonnées, i.e. les tuiles de $\tau$ dont les coordonnées correspondent au champ (i) n'existent qu'en un seul exemplaire (autrement dit, la lettre qui leur est associée est déterminée par leurs coordonnées).

Nous ajustons le code de $I$ pour prendre en compte ce nouveau champ : la première tête de lecture (recopiant les $4q'$ bits correspondant aux 4 couleurs de $\rho$) est désormais initialisée juste après le dernier symbole du premier champ, et non plus au début de la première ligne de la zone de calcul ; la deuxième tête de lecture est toujours initialisée au début du champ (ii), et les deux têtes de lectures se comportent toujours de la même manière. 
Quand la machine $I$ se termine, les $q$ premières cellules du ruban sont composées de deux champs : (i) le code de $M$ et (iii) la représentation des 4 couleurs de $\rho$. Nous pouvons noter que le premier champ n'est pas modifié par $I$. Ces deux champs constituent les données d'entrée de la machine universelle $U$. La tête de lecture de $U$ est restreinte aux $q$ premières tuiles des lignes de la zone de calcul ; en particulier, $U$ n'a pas accès aux $4q$ bits représentant les 4 super-couleurs. Gardons à l'esprit que nous avons choisi une machine universelle $U$ et un encodage de l'information dans les $4q'$ bits de telle manière qu'une machine de Turing puisse déterminer où commencent et où finissent le code de $M$ (le champ (i)) et chacune des 4 parties (de taille $q'$) des $4q'$ bits.  
Une autre restriction imposée à la machine $U$ concerne le premier champ (i) : celui-ci pourra être lu, mais pas modifié. Ainsi, les valeurs du ruban correspondant à ce champ ne changeront jamais durant les calculs effectués par $U$ (donc les cases d'une colonne du diagramme espace-temps de $U$ correspondant à ce champ partagent une même lettre, qui reste inchangée lors de la totalité des étapes de calcul de $U$).

Ainsi, dans la zone centrale dédiée à $U$ d'une super-tuile, est représenté un calcul se terminant dans un état final acceptant, pour le programme $M$. Par conséquent, nous supposons désormais que les données d'entrée du programme simulé $M$ sont constituées de 4 chaînes binaires encodant 4 couleurs de $\rho$, cohérentes avec les 4 super-couleurs de la super-tuile (cette information est ``transférée'' depuis les bords de la super-tuile jusqu'à la ligne du bas de la zone de calcul par les câbles de communication, puis ``triée'' par la machine $I$ dans la zone basse de la zone de calcul).

Maintenant que nous avons décrit les machine $I$ et $U$, nous utilisons la construction de la Partie~\ref{s:automate} pour les machines de Turing à deux têtes de lecture et pour les machines de Turing classiques pour obtenir deux jeux de tuiles $i$ et $u$ simulant les machines $I$ et $U$. Comme les tuiles de $\tau$ ``connaissent'' leurs coordonnées, à chaque tuile de $i$ et de $u$ doivent correspondre plusieurs tuiles de $\tau$. Ainsi, chaque tuile de la zone basse (excepté la première ligne) peut à priori correspondre à n'importe quelle tuile du jeu de tuiles $i$, et nous devons ajouter à $\tau$, pour chaque tuile de $i$, exactement $(\frac{15N}{16}) \times (\frac{N}{4}-1)$ tuiles. Autrement dit, dans les couleurs de ces tuiles sont encodées des coordonnées et une quantité d'information garantissant qu'elles jouent le rôle d'une cellule du diagramme espace-temps de $I$. 

Soit $k$ la taille du code de la machine $M$ pour $U$. Nous ajoutons également aux tuiles de $\tau$ les tuiles dont les positions correspondent à la ligne du bas de la zone de calcul (de longueur $N-q$). Pour les $q$ premières positions de la ligne, les tuiles correspondantes n'existent qu'en un seul exemplaire (les $k$ premières encodent les lettres du code de la machine $M$, et les $q-k$ suivantes le bit de valeur nulle, correspondant à une case vierge du ruban de $I$). Les $4q$ positions suivantes correspondent aux tuiles d'``arrivée'' des câbles de transmission (i.e., leur tuile voisine du bas est une cellule d'un câble de transmission), et deux tuiles doivent être ajoutées à $\tau$ par position (encodant la valeur 0 ou 1 du bit supplémentaire). Pour les $N-6q$ positions restantes, une seule tuile est ajoutée à $\tau$ (correspondant également à une case vierge du ruban). En particulier, l'unique tuile correspondant à la $k+1$-ème position de la ligne du bas contient la première tête de lecture de $I$, et les deux tuiles correspondant à la $q+1$-ème position contiennent la deuxième tête de lecture de $I$. Nous ajoutons donc à $\tau$ un nombre de tuiles égal à $\frac{N}{16}+2\frac{N}{4}+\frac{10N}{16}$. 

De même, la zone centrale étant de taille $\frac{15N}{16}\times \frac{N}{16}$, chaque tuile du jeu de tuiles $u$ correspond à $\frac{15N}{16} \cdot (\frac{N}{16}-1)$ tuiles de $\tau$ (les couleurs des bords de ces tuiles encodent des coordonnées et une information supplémentaire garantissant qu'elles jouent le rôle d'une cellule du diagramme espace-temps de $U$). De nouveau, il existe $\frac{15N}{16}$ tuiles particulières permettant l'interface entre la zone basse et la zone centrale. Nous pouvons associer à chacune de celles-ci deux tuiles de $\tau$ (encodant le bit de valeur 0 ou 1). En particulier, les deux tuiles de $\tau$ correspondant à la première position de cette ligne contiennent la tête de lecture de $U$.

Les tuiles restantes de la zone de calcul sont dans la partie haute (de taille $(N-8q)\times (N-q)$), et jouent le rôle de cellule du diagramme espace-temps de l'automate cellulaire $A$. Nous avons vu comment simuler un automate cellulaire par un jeu de tuiles dans la Partie~\ref{s:automate}. Pour l'instant, l'automate cellulaire $A$ de la zone haute est trivial et n'effectue aucun calcul, les différentes lignes de cette zone sont donc identiques. Nous préciserons le code de l'automate $A$ dans la Partie~\ref{s:preuve}. Là encore, des tuiles spéciales permettent l'interface entre la zone centrale et la zone haute (comme pour l'interface entre la zone basse et la zone haute, les tuiles dont la position correspond à cette ligne existent en deux exemplaires, selon le bit qu'elles encodent).

Les tuiles de $\tau$ peuvent donc être un des 6 types possibles de cellules des câbles de communication (et éventuellement participer au codage d'une super-couleur), une cellule du diagramme espace-temps de $I$, de $U$ ou de $A$, ou n'avoir aucun rôle particulier (nous disons qu'elles jouent le rôle de ``bloc de construction''). Le rôle d'une tuile est entièrement déterminé par ses coordonnées.

Ceci termine la description de la construction explicite d'un jeu de tuiles $\tau$, possédant $O(N^2)$ tuiles, et simulant $\rho$. Cette construction marche pour tous les $N$ suffisamment grands. Nous pouvons noter que la plupart des étapes de la construction de $\tau$ ne dépendent pas du programme simulé $M$ : le jeu de tuiles $\tau$ dépend seulement de $M$ au niveau de la zone basse de la zone de calcul. Néanmoins, cette dépendance est très limitée. Le programme simulé (et implicitement, le prédicat $P$) affecte seulement les règles pour les tuiles utilisées au niveau de la ligne du bas de la zone de calcul (précisément les tuiles du premier champ (i) de la ligne, où est encodé $M$). Les couleurs des bords de toutes les autres tuiles sont génériques, et ne dépendent pas du jeu de tuiles simulé $\rho$.

Le caractère explicite de la construction peut être compris formellement de la façon suivante : il existe un algorithme qui prend en entrée deux entiers, $N$ et $q'$, et le code d'une machine de Turing $M$ pour la machine universelle $U$ (calculant le prédicat $P$, qui décrit le jeu de tuiles $\rho$). Cet algorithme retourne la liste des tuiles du jeu de tuiles $\tau$ décrit ci-dessus, simulant $\rho$ avec un grossissement $N$. L'algorithme s'arrête avec un message d'erreur approprié si les entiers $N$ ou $q'$ sont trop petits pour notre construction. De plus, pour tout quadruplet de couleurs des bords de $\tau$, nous pouvons vérifier en temps polynomial (en temps $\poly(\log N)$) si ce quadruplet correspond à une tuile de $\tau$ ou pas.
\end{proof}

\subsection{Un shift auto-similaire élémentaire}

Nous avons vu dans la sous-partie précédente comment un jeu de tuiles $\tau$ pouvait simuler un autre jeu de tuiles $\rho$. Dans cette partie, nous cherchons à ce que le jeu de tuiles simulé soit le même que le jeu de tuiles qui le simule (i.e., que $\tau = \rho$).

\begin{definition}
Si un jeu de tuiles $\tau$ se simule lui-même (selon la Définition~\ref{d:simule-jeu-tuiles}) pour un grossissement $N$, alors $\tau$ est appelé un jeu de tuiles auto-similaire de grossissement $N$.
\end{definition}

\begin{remark}
Si un jeu de tuiles $\tau$ est auto-similaire de grossissement $N$, alors par définition il existe un ensemble ${\cal L} \subset \tau^{N^2}$ et une fonction injective $\phi : {\cal L} \rightarrow \tau$ tels que définis dans la Définition~\ref{d:simule-jeu-tuiles}.

Soit ${\cal P}_1$ un pavage  de $\tau$. Il peut être découpé de manière unique en une grille de super-tuiles de $\tau$ de taille $N \times N$ ; nous disons que ces super-tuiles sont de \emph{rang} 1. En appliquant $\phi$ sur ${\cal P}_1$ (i.e., nous appliquons $\phi$ sur chacune des super-tuiles de la grille) nous obtenons un nouveau pavage ${\cal P}_2$ de $\tau$. ${\cal P}_2$ se découpe également de manière unique en une grille d'éléments de ${\cal L}$. Pour chacun de ces motifs carrés $P$ de taille $N \times N$, nous considérons l'ensemble des tuiles de ${\cal P}_1$ qui sont projetées par $\phi$ sur $P$, soit un carré de taille $N^2 \times N^2$. Nous appelons ces carrés des super-tuiles de $\tau$ de rang 2. Clairement, le pavage ${\cal P}_1$ peut se découper de manière unique en une grille de super-tuiles de rang 2. Remarquons qu'une super-tuile de rang 2 est composée de $N \times N$ super-tuiles de rang 1. Nous pouvons itérer cet argument : appliquer $\phi$ sur ${\cal P}_2$ pour obtenir un pavage ${\cal P}_3$, découper celui-ci en une grille unique de super-tuiles, considérer les tuiles de ${\cal P}_1$ envoyées par $\phi^2$ sur chacune de ces super-tuiles de ${\cal P}_3$. Les super-tuiles de rang 3 sont composées de $N \times N$ super-tuiles de rang 2. Nous obtenons ainsi une \emph{hiérarchie} de super-tuiles de $\tau$ de rangs $1, 2, \ldots$, voir Fig~\ref{f:tuiles-hiérarchie}.

\begin{figure}[H]
	\center
	\includegraphics[scale=0.1]{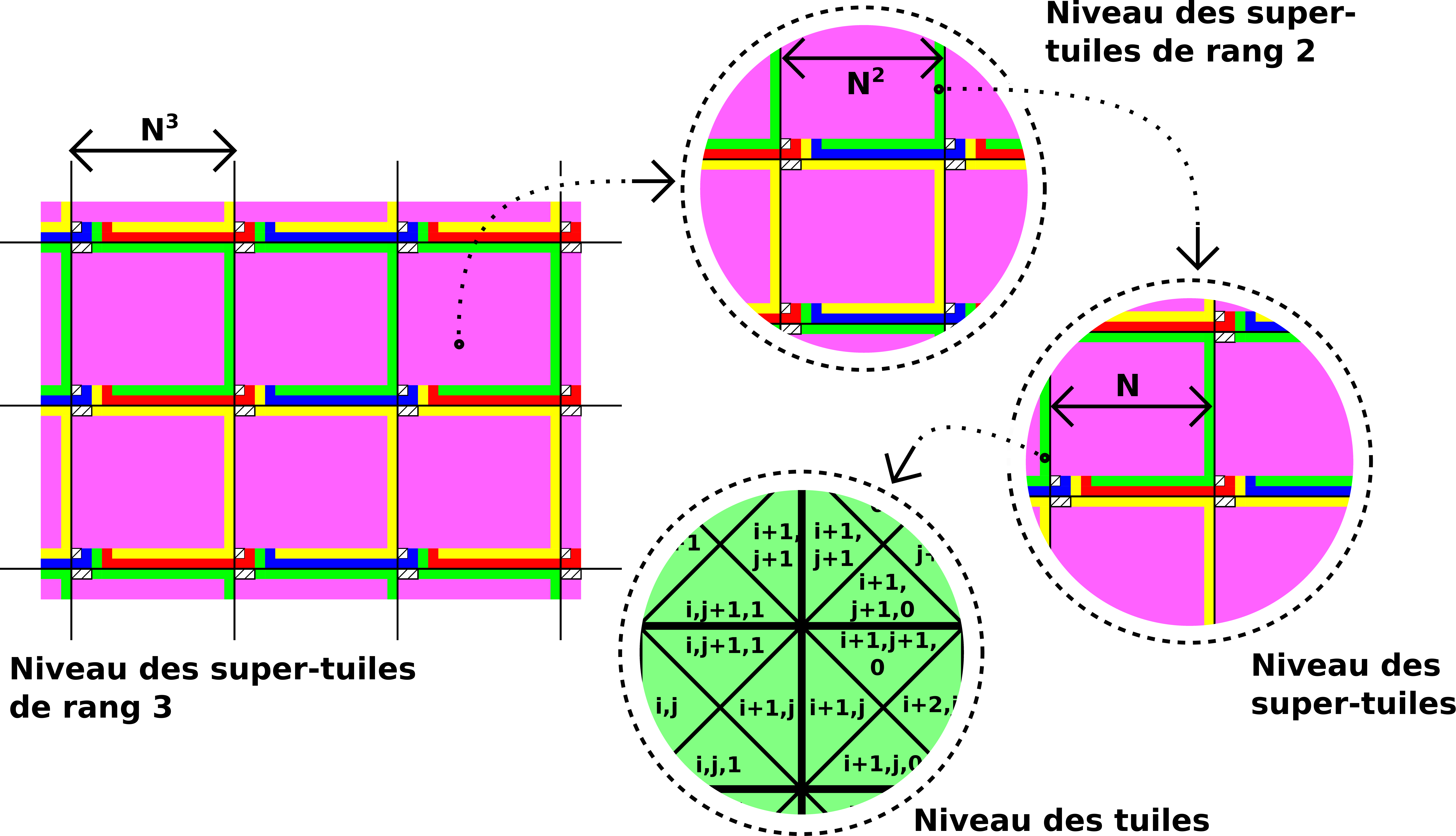}
	\caption[La structure hiérarchique des super-tuiles.]{La structure hiérarchique des super-tuiles : les super-tuiles de rang $k$ sont des blocs pour les super-tuiles de rang $k+1$, les super-tuiles de rang $k+1$ sont des blocs pour les super-tuiles de rang $k+2$, etc. À tous les rangs de la hiérarchie, la structure des super-tuiles est similaire.}\label{f:tuiles-hiérarchie}
\end{figure}

Introduisons quelques notations qui nous seront utiles par la suite. Si une super-tuile $T$ de rang $k$ appartient à une super-tuile $M$ de rang $k+1$, nous disons que $M$ est la super-tuile \emph{mère} de $T$, et que $T$ est une super-tuile \emph{fille} de $M$. Deux super-tuiles $T$ et $T'$ ayant la même super-tuile mère sont dites sœurs. Pour une super-tuile $T$ de rang $k$, les super-tuiles de rang $l>k$ auxquelles elle appartient sont appelées les ancêtres de $T$.
\end{remark}

\begin{proposition}
Pour tout $N$ suffisamment grand, il existe un jeu de tuiles $\tau$ auto-similaire de grossissement $N$.
\end{proposition}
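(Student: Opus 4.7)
Le plan consiste à appliquer le théorème de récursion de Kleene à la construction explicite de la sous-partie précédente. Rappelons que cette construction fournit un algorithme uniforme qui, étant donné les entiers $N$ et $q'$ et le code d'une machine $M$ calculant le prédicat $P$ décrivant un jeu de tuiles $\rho$, produit un jeu de tuiles $\tau_M$ simulant $\rho$ avec grossissement $N$. De plus, le prédicat $P_{\tau_M}$ décrivant $\tau_M$ lui-même est calculable en temps $\poly(\log N)$ à partir du code de $M$ et des entiers $N$ et $q'$ : pour tester si un quadruplet $(c_1, c_2, c_3, c_4)$ de couleurs correspond à une tuile de $\tau_M$, il suffit de lire les coordonnées encodées dans les $c_i$, de déterminer le rôle que jouerait la tuile correspondante (cellule d'un câble de communication, bloc de construction, cellule du diagramme espace-temps de $I$, $U$ ou $A$), et de vérifier localement la cohérence attendue (en particulier, pour les tuiles de la ligne du bas de la zone de calcul, consulter le code de $M$).

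Pour obtenir un jeu de tuiles auto-similaire, je cherche donc une machine $M^*$ telle que $M^*$ calcule précisément le prédicat $P_{\tau_{M^*}}$. Une telle machine existe par le théorème de récursion de Kleene : je construis $M^*$ ayant accès à son propre code $\langle M^*\rangle$ et qui, sur une entrée $(c_1, c_2, c_3, c_4)$, simule l'algorithme de la sous-partie précédente appliqué à $\langle M^*\rangle$ (avec les paramètres $N$ et $q'$ fixés à l'avance, eux aussi encodés dans la machine) pour décider si ce quadruplet correspond bien à une tuile de $\tau_{M^*}$. Posant $\tau := \tau_{M^*}$, la construction de la sous-partie précédente garantit alors que $\tau$ simule le jeu de tuiles décrit par $M^*$, à savoir $\tau$ lui-même, ce qui est exactement la propriété d'auto-similarité de la définition.

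Il reste à vérifier que les contraintes de taille sont satisfaites pour $N$ suffisamment grand. Deux conditions interviennent. D'une part, la construction de la sous-partie précédente impose que $N$ soit assez grand pour accommoder la géométrie des super-tuiles (câbles de communication de largeur $q = N/16$, zones basse, centrale et haute de la zone de calcul). D'autre part, et c'est le point où la circularité typique du point-fixe pourrait poser problème, la machine universelle $U$ doit exécuter $M^*$ sur un quadruplet de couleurs en un temps inférieur à $q = N/16$ et un espace inférieur à $l = 15N/16$. Or le code de $M^*$ est de taille constante indépendante de $N$, et le prédicat $P_{\tau_{M^*}}(c_1, c_2, c_3, c_4)$ se calcule en temps $\poly(\log N)$ --- la seule opération non triviale étant la manipulation des coordonnées encodées dans les $c_i$, qui sont des entiers de taille $O(\log N)$. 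Puisque la simulation par la machine universelle $U$ ne fait perdre qu'un facteur polynomial supplémentaire, le temps de calcul total reste en $\poly(\log N)$, ce qui est négligeable devant $q = N/16$ pour tout $N$ assez grand. Le principal ``obstacle'' de la preuve consiste ainsi à vérifier que l'auto-référence introduite par le théorème de récursion ne fait pas gonfler le temps de calcul de façon incontrôlée ; mais la taille constante de $\langle M^*\rangle$ et l'uniformité de l'algorithme de la sous-partie précédente garantissent que ce n'est pas le cas.
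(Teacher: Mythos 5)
Votre preuve est correcte et suit essentiellement la même démarche que celle du manuscrit : celui-ci résout lui aussi l'auto-référence par l'idée du théorème de récursion de Kleene, mais en construisant le point fixe ``à la main'' (le programme $\pi$ relit son propre code écrit en tête du ruban, et $N$ est fourni comme champ d'entrée) plutôt qu'en invoquant le théorème comme boîte noire, et votre analyse des ressources (prédicat décidable en temps $\poly(\log N)$, donc simulable par $U$ dans une zone de taille proportionnelle à $N$ pour $N$ assez grand) coïncide avec la sienne. Signalons seulement que, puisque vous codez $N$ et $q'$ dans $M^*$, le code de celui-ci est de taille $O(\log N)$ et non constante, et qu'il faut aussi prendre $q' \ge 2\log N + O(1)$ pour pouvoir encoder les $O(N^2)$ couleurs de $\tau$ dans les super-couleurs --- deux points sans incidence sur la validité de l'argument.
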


\begin{proof}
Pour la construction de ce jeu de tuiles nous utilisons une idée venant de la preuve du théorème de récursion de Kleene. Intuitivement, nous utilisons l'idée qu'un programme peut, d'une certaine manière, accéder à son propre code et utiliser les symboles de celui-ci pour effectuer des calculs.
À la fin de la sous-partie précédente nous avons vu que le jeu de tuiles $\tau$, simulant le jeu de tuiles $\rho$, ne dépend que très peu du programme $M$ simulé par $U$ (et donc très peu de $\rho$).

Nous fixons les paramètres $q'$ et $N$ (nous supposons que $q' \ll q$, $q$ étant toujours une fraction de $N$), et appliquons partiellement la procédure décrite dans la sous-partie précédente, pour obtenir la liste $L$ des tuiles de $\tau$, exceptées celles correspondant aux $q$ premières tuiles de la ligne du bas de la zone de calcul d'une super-tuile de $\tau$. Dans notre construction, à chacune de ces $q$ positions ne correspond qu'une seule tuile de $\tau$ ; il ne ``manque'' donc que $q$ tuiles pour obtenir la totalité du jeu de tuiles $\tau$. La liste $L$ ne dépend pas du jeu de tuiles $\rho$ que simule $\tau$. Ainsi, soit un jeu de tuiles $\rho$, possédant au plus $2^{q'}$ couleurs, et pour lequel le grossissement $N$ est assez grand. Plus précisément, soit $M$ le programme correspondant au prédicat associé à $\rho$. $N$ doit être assez grand pour que d'une part nous ayons une place suffisante pour écrire le code de $M$ pour $U$ sur les $q-4q'$ premières tuiles de la première ligne de la zone de calcul. D'autre part, il faut également que nous ayons l'espace et le temps nécessaires pour que $U$ puisse se terminer sur toutes les entrées valides, i.e. les entrées correspondant à des couleurs représentant une tuile de $\rho$. Dans ce cas, il est possible de compléter les $L$ tuiles de $\tau$ par $q$ tuiles, de façon à ce que $\tau$ simule $\rho$. 

Nous allons maintenant ``oublier'' le fait que la liste $L$ peut être complétée de \emph{nombreuses} façons différentes en des jeux de tuiles $\tau$ simulant des jeux de tuiles $\rho$. Nous allons au contraire compléter la liste $L$ d'une façon \emph{spécifique}, de manière à obtenir un jeu de tuiles $\tau$ auto-similaire. Nous pouvons noter que le programme $\pi$ que simulera $U$ différera de celui décrit dans la Sous-Partie~\ref{ss:simulation-shift} précédente : alors que la machine $M$ correspondait au calcul d'un prédicat, $\pi$ correspondra à des calculs plus complexes, dépendant entre autre de $N$. Il est donc nécessaire que l'entier $N$ appartienne aux données d'entrée de $\pi$. Pour cela, nous ajoutons, après le premier champ contenant le code de $\pi$ pour $U$, un second champ aux données d'entrée de $I$, qui contient une représentation binaire de l'entier $N$ permettant à une machine de Turing de déterminer où commence et finit cette représentation ; ces deux champs sont situés au début de la première ligne de la zone de calcul. La machine $I$ ne modifie pas ce champ, et celui-ci est donc également présent au niveau des données d'entrée de la machine $U$ (elles sont situées sur la première ligne de la zone centrale) ; la première tête de lecture de $I$ est désormais initialisée au niveau de la tuile située après ce deuxième champ (la deuxième tête de lecture est toujours initialisée au début des bits représentant les super-couleurs, qui sont maintenant le champ (iii) des données d'entrée de $I$). Ces modifications dans la construction de $\tau$ n'affectent que des tuiles parmi les $q$ premières tuiles de la ligne du bas de la zone de calcul, et donc ne modifient pas la liste $L$ définie ci-dessus. 

Pour que la liste $L$ puisse être complétée en un jeu de tuiles auto-similaire, il est nécessaire que $q'>2 \log N + O(1)$, afin que nous puissions encoder $O(N^2)$ couleurs dans les chaînes binaires de taille $q'$. Autrement dit, il faut que $q'$, le nombre de bits encodés dans une super-couleur, soit assez grand pour pouvoir encoder des coordonnées de super-tuiles au sein de la super-tuile mère de rang 2 (ce qui nécessite $2 \log N$ bits), ainsi que éventuellement $O(1)$ bits d'information supplémentaire (si la super-tuile joue le rôle d'un câble de communication ou d'une cellule du diagramme espace-temps de $I$, de $U$ ou de $A$ dans sa super-tuile mère). De plus, il est nécessaire qu'une machine de Turing puisse déterminer où commencent et finissent les bits de la première coordonnée, les bits de la seconde coordonnée et les bits supplémentaires. Nous avons donc $q'=O(\log N)$ bits.

Bien que la définition de $\tau$ ne soit pas terminée, nous pouvons déjà dire qu'un pavage de $\tau$ (s'il en existe) est constitué d'une grille de super-tuiles de taille $N \times N$, les super-tuiles ayant des chaînes binaires de taille $q \gg q'$ encodées dans les super-couleurs de leurs bords, ainsi que des câbles de communication de largeur $q$, et d'une zone de calcul de taille $(N-q) \times (N-3q)$, comme montré dans la Fig.~\ref{f:tuile-complète}. Cette zone de calcul est coupée horizontalement en trois : dans la partie basse s'exécute la machine de Turing $I$ à deux têtes de lecture initialisant les données, dans la partie centrale la machine de Turing universelle $U$ simulant un programme $\pi$ (une machine de Turing) qu'il reste à définir, et dans la partie haute un automate cellulaire $A$ (pour l'instant trivial). Le programme $\pi$ reçoit en entrée son propre code (selon la machine de Turing universelle $U$), la représentation binaire de $N$, et $4q'$ bits représentant, pour chacune des 4 super-couleurs, les parties de celles-ci utilisées pour l'auto-simulation.

Détaillons le programme $\pi$, qui doit effectuer les vérifications nécessaires (ces vérifications sont effectuées dans la zone centrale de chaque super-tuiles). Pour les super-tuiles qui ``représentent'' des tuiles incluses dans $L$ (i.e. les super-tuiles qui sont envoyées par $\phi$ sur des tuiles de $L$), les vérifications nécessaires sont directes (voir les commentaires à la fin de la sous-partie précédente). Il reste à implémenter (et encoder dans le programme $\pi$) les vérifications pour les tuiles qui ne sont pas encore définies. Gardons à l'esprit que pour compléter les tuiles de $\tau$ encore manquantes il nous suffit de connaître $\pi$, le code du programme. Cela peut sembler un paradoxe : nous devons écrire le code d'un programme qui est chargé de vérifier les tuiles qui encodent ce programme. D'une part, nous devons connaître ces tuiles pour écrire le programme ; d'autre part, nous devons connaître le code du programme pour déterminer les tuiles. Cependant, nous pouvons compléter la description de $\pi$ sans connaître les tuiles manquantes de $\tau$. Puisque dans notre construction d'une super-tuile, le programme voulu $\pi$ (la liste des instructions interprétées par la machine de Turing universelle $U$) est écrit au début du ruban de la machine universelle, nous pouvons demander au programme d'accéder aux lettres de son propre code : si les coordonnées de la super-tuile indiquent qu'elle représente la $j$-ème tuile du premier champ des données d'entrée de $I$, la machine universelle $U$ vérifie que la lettre de la case du ruban encodée au niveau des super-couleurs de la super-tuile correspond bien à la $j$-ème lettre de son code (i.e. à la $j$-ème lettre du premier champ de $U$).

C'est le moment crucial de la construction. L'algorithme $\pi$ simulé par $U$ peut être expliqué de la manière suivante. L'algorithme obtient en entrée une représentation binaire de $N$ et les bits encodant les parties des 4 super-couleurs d'une super-tuile qui sont utiles pour l'auto-simulation. Nous supposons qu'une super-tuile représente une tuile de $\tau$, et nous savons qu'au niveau des couleurs de chaque tuile de $\tau$ est encodée une paire de coordonnées $(i,j)\mod{N}$. Nous pouvons donc extraire les coordonnées $(i,j)$ des super-couleurs de la super-tuile. Si la ligne $j$ ne correspond pas à la première ligne du diagramme espace-temps de $I$, notre tâche est facile : nous utilisons l'algorithme décrit à la fin de la sous-partie précédente. Sinon, la machine $U$ détermine la taille $t$ de son premier champ, où est encodé $\pi$ ; nous avons choisi $U$ de telle manière qu'elle puisse déterminer là où se termine le programme qu'elle simule. Là encore, si $i>t$, nous utilisons l'algorithme décrit à la fin de la sous-partie précédente. Dans le cas contraire, la tâche est plus subtile : dans les couleurs de la tuile de $\tau$ représentée doit être encodée une lettre du code de $\pi$. 

Comment vérifier que cette lettre est correcte ? Où trouvons-nous la valeur ``correcte'' de celle-ci ? La réponse est directe : la machine de Turing universelle simulant $\pi$ doit déplacer sa tête dans sa propre zone de calcul, à exactement $i$ cases du début du ruban, et lire la valeur de la lettre requise. Notons que nous ne sommes pas confrontés au paradoxe de l’œuf et de la poule, car nous pouvons écrire les instructions de $\pi$ \emph{avant} de connaître l'intégralité de son code. Ainsi, nous obtenons l'intégralité du code (i.e. la liste des instructions) du programme $\pi$.
C'est exactement le programme qui doit être simulé par la machine de Turing universelle $U$ dans la zone de calcul de chaque super-tuile, et donc c'est le programme qui doit être écrit dans le premier champ des données d'entrée de la zone de calcul. Ce programme nous fournit la part manquante du jeu de tuiles $\tau$ (nous ajoutons à la liste $L$ les tuiles de $\tau$ présentes dans le premier champ des données d'entrée, qui représentent le code de $\pi$, puis à la suite les $O(\log N)$ tuiles de la représentation binaire de $N$, et complétons les $q$ premières tuiles de la première ligne du diagramme espace-temps de $I$ par des tuiles n'encodant pas d'information).

Il reste à choisir le paramètre $N$. Il doit être de valeur suffisamment grande pour que les calculs décrits ci-dessus (qui reçoivent des données d'entrée de taille $O(\log N)$) puissent s'effectuer dans la zone centrale de taille $(N-q) \times q$. Les calculs sont assez élémentaires (polynomiaux en la taille des entrées, i.e., polynomiaux en $O(\log N)$), et nous savons que $U$ les simule de manière efficace (i.e., également en temps polynomial). Comme le temps et l'espace disponibles sont des fractions constantes de $N$, pour $N$ assez grand ces calculs ont le temps et l'espace nécessaires pour être menés à terme. Ceci termine la construction de notre jeu de tuiles auto-similaire. Il est alors facile de vérifier que le jeu de tuiles construit (i) permet de paver le plan, et (ii) est auto-similaire. La construction décrite ci-dessus fonctionne pour tout grossissement $N$ suffisamment grand. En d'autres termes, pour tout $N$ suffisamment grand nous obtenons un jeu de tuiles auto-similaire $\tau_N$ de grossissement $N$, et tous ces jeux de tuiles $\tau_N$ ont une structure très similaire, avec des super-tuiles telles que montrées dans Fig.~\ref{f:tuile-complète}.
\end{proof}

\subsection{Un shift auto-similaire à grossissement variable}\label{ss:g-variable}

Dans cette sous-partie nous reprenons les idées développées dans \cite{point-fixe}, expliquant comment obtenir à partir d'un jeu de tuiles auto-similaire basique une construction plus flexible, en les adaptant à notre construction. 

Plus précisément, alors que dans la construction basique la zone de calcul a la même taille à tous les niveaux de la hiérarchie, dans la construction présentée ici la taille de la zone de calcul va croître avec le rang. Cette propriété nous permettra de simuler l'exécution d'un algorithme au niveau de la zone de calcul durant un temps et sur un espace de plus en plus importants. Typiquement, cet algorithme ne se termine pas, et à chaque niveau de la hiérarchie nous ne pouvons simuler qu'un nombre fini d'étapes de l'algorithme. Néanmoins, pour tout nombre d'étapes $n$, il existe un rang $k$ à partir duquel, dans la zone de calcul des super-tuiles de rang supérieur ou égal à $k$, l'algorithme est simulé durant au moins $n$ étapes.

Le jeu de tuiles $\tau$, une fois modifié pour avoir cette propriété, ne sera plus auto-similaire. Au contraire, nous obtiendrons non un unique jeu de tuiles mais une suite infinie de jeu de tuiles, chaque jeu de tuiles de la suite simulant le jeu de tuiles suivant dans la suite. 

Plus précisément, pour une large classe de suites d'entiers $n_k$ ayant de ``bonnes propriétés'', nous pouvons construire une famille de jeux de tuiles $\tau_k$ ($k = 0, 1, \ldots$) telle que $\tau_{k-1}$ simule le jeu de tuiles suivant $\tau_k$ avec un grossissement $n_k$ (et, par conséquent, que $\tau_0$ simule chaque jeu de tuiles $\tau_k$ avec un grossissement $N_k = n_1 \cdot n_2 \ldots n_k$).
L'idée est de réutiliser la construction élémentaire de la partie précédente et de varier les tailles des super-tuiles (i.e. les grossissements) pour les différents niveaux de la hiérarchie.
Tandis que dans la construction basique les super-tuiles (construites à partir de $N \times N$ tuiles), les ``super-super-tuiles'' (construites à partir de $N\times N$ super-tuiles), les ``super-super-super-tuiles'' (construites à partir de $N \times N$ super-super-tuiles), et ainsi de suite, se comportent exactement de la même manière, dans la construction revisitée nous avons des super-tuiles construites à partir de $n_1 \times n_1$ tuiles de base, des super-super-tuiles faites de $n_2 \times n_2$ super-tuiles, des super-super-super-tuiles faites de $n_3 \times n_3$ super-super-tuiles, et ainsi de suite. Dans cette construction, nous avons un isomorphisme seulement entre les super-tuiles de rang $k$ et les tuiles de $\tau_k$, et l'idée d'``auto-similarité'' doit être entendue de manière moins littérale.

Pour implémenter cette idée, nous n'avons besoin que de modifier légèrement la construction de la sous-partie précédente. Comme dans notre construction auto-similaire basique, chaque tuile de $\tau_k$ ``connaît'' sa position modulo $n_k$ dans le pavage : les couleurs de gauche et du bas doivent encoder $(i,j)$, la couleur de droite $(i+1 \mod {n_k},j)$, et la couleur du haut $(i, j+1 \mod{n_k})$. Par conséquent, chaque pavage de $\tau_k$ peut être découpé de manière unique en blocs (super-tuiles) de taille $n_k \times n_k$, où les coordonnées des cellules vont de $(0,0)$ pour le coin en bas à gauche à $(n_{k-1},n_{k-1})$ pour le coin en haut à droite, telles que montrées dans la Fig.\ref{f:tuile}.  
De nouveau, intuitivement, chaque super-tuile de rang $k$ ``connaît'' sa position dans sa super-tuile mère de rang $k+1$. Pour tout $k$, les $n_k \times n_k$ super-tuiles (faites de tuiles de $\tau_k$) doivent avoir la structure montrée dans la Fig.\ref{f:tuile-complète}, avec des câbles de communication, une zone de calcul où s'exécute une phase d'initialisation (la machine $I$), suivie d'une phase de calculs auto-référentiels (la machine $U$), suivie de l'exécution de l'automate cellulaire trivial. À chaque rang $k$ de la hiérarchie, le nombre de tuiles $q_k$ encodant une super-couleur est égal à la largeur des câbles de communication ; la zone de calcul est de taille $(n_k-q_k)\times (n_k-3q_k)$, et la ligne du bas de la zone de calcul commence par $q_k$ tuiles, suivies de $4q_k$ tuiles représentant les 4 super-couleurs de la super-tuile. Pour tout rang $k$, $q_k$ est égal à une fraction constante $q$ de $n_k$ (on suppose que pour tout $k$, $n_k$ est divisible par $q$).

La différence avec la construction basique est que maintenant les calculs simulés par une super-tuile de rang $k$ n'ont plus comme entrée $N$ mais la valeur $k$ ; le grossissement $n_k$ est désormais calculé par une fonction dépendant de $k$. Dans ce qui suit, nous supposons toujours que $n_k$ peut être facilement calculé en fonction de $k$ (disons, en temps $\poly (\log n_k)$). Techniquement, nous supposerons à partir de maintenant qu'au niveau de la première ligne de la zone de calcul, les données d'entrée contiennent les champs suivants :

(i) le programme de la machine de Turing $\pi$ qui vérifie si un quadruplet de super-couleurs correspond à une super-tuile valide,

(ii) une représentation binaire de $k$, l'entier représentant le rang de la super-tuile dans la hiérarchie ; nous choisissons une représentation permettant de déterminer où celle-ci commence et finit,

(iii) les bits encodant les super-couleurs : dans chaque super-couleur sont ainsi encodés une position dans la super-tuile mère de rang $k+1$ (deux coordonnées modulo $n_{k+1}$), et éventuellement $O(1)$ bits encodant l'information supplémentaire assignée aux super-couleurs (un bit si la super-tuile joue le rôle d'une super-couleur ou d'un câble de transmission, ou $O(1)$ bits si elle joue le rôle d'une cellule du diagramme espace-temps de $I$, de $U$ ou de $A$). Cela représente un nombre de bits très inférieurs à $q_k$ ; les bits restants sont réservés pour un usage futur (voir Partie~\ref{s:preuve}).

\begin{figure}[H]
	\centering
	\includegraphics[scale=0.7]{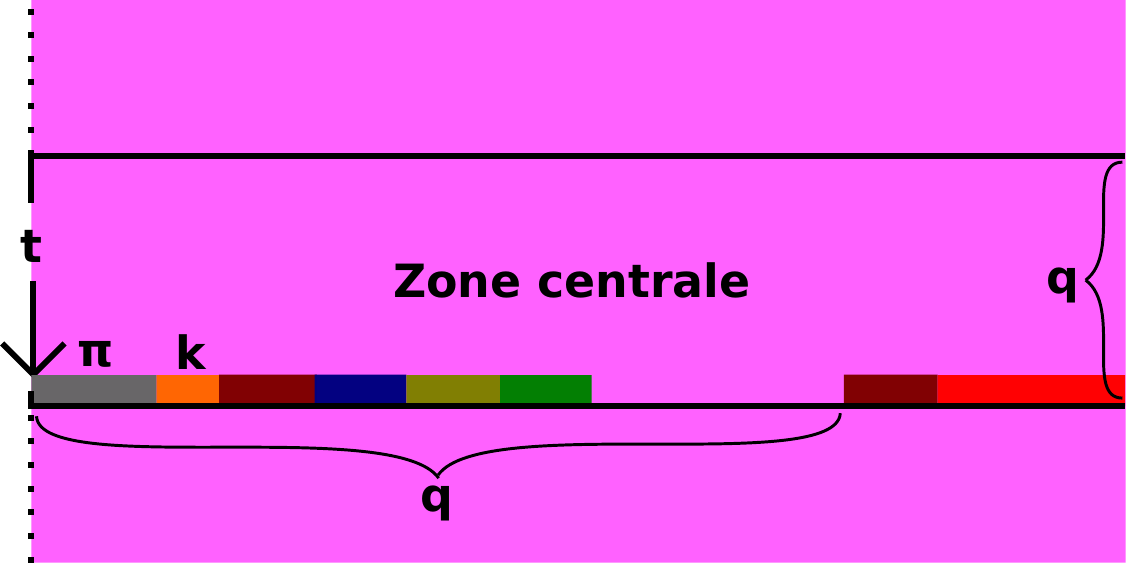}
	\caption[La partie centrale de la zone de calcul d'une super-tuile.]{La partie centrale représente la machine de Turing universelle $U$ qui simule un programme $\pi$, recevant en entrée le code de $\pi$, la représentation binaire du rang $k$ et les bits encodés dans les super-couleurs utilisés pour assurer la construction hiérarchique.}\label{f:tuile-zone-centrale}
\end{figure}

Remarquons que désormais le grossissement n'est plus fourni explicitement comme un champ des données d'entrée. Au contraire, nous avons la représentation binaire de $k$, à partir de laquelle une machine de Turing peut calculer la valeur $n_k$, voir Fig.~\ref{f:tuile-zone-centrale}. La différence par rapport à la Fig.~\ref{f:tuile-complète} est que les calculs effectués dans une super-tuile de rang $k$ reçoivent en entrée le rang $k$ au lieu du grossissement universel $N$.

Comme précédemment, la machine de Turing $I$ parcourt le champ (iii), et recopie les bits servant aux calculs garantissant la structure hiérarchique à la suite du champ (ii) (la représentation de $k$). La machine universelle $U$ s'exécute alors, en étant restreinte aux $q_k$ premières cases de son ruban. Nous demandons à ce que les calculs simulés se terminent dans un état acceptant si les super-couleurs de la super-tuiles forment un quadruplet valide (sinon, aucun pavage correct ne peut être formé). Les calculs simulés garantissent qu'il y a un isomorphisme entre les super-tuiles de rang $k$ et les tuiles de $\tau_{k+1}$.

Remarquons qu'à chaque rang $k$ de la hiérarchie, nous simulons dans les super-tuiles une exécution  d'une seule et unique machine de Turing $\pi$. Seules les données d'entrée reçues par la machine (incluant la représentation binaire de $k$) varient d'un rang à l'autre. Cette construction fonctionne si $n_k$ ne croît ni trop lentement (pour que les super-tuiles de rang $k$ aient suffisamment de place pour écrire la représentation binaire de $k$), ni trop rapidement (pour que dans la zone de calcul des super-tuiles de rang $k$ puissent s'effectuer des opérations arithmétiques élémentaires, sur des données d'entrée de taille $\log n_{k+1}$. 

Dans ce qui suit nous supposons que pour tout $k$, nous avons $N_k = 2^{C^k}$ pour une constante $C$ suffisamment large. Le choix exact de $C$ sera explicité dans la Partie~\ref{s:preuve}. 

Le facteur de grossissement $n_k$ permet d'encoder une ``charge utile'' dans la zone de calcul : des calculs ``utiles'', qui n'ont rien à voir avec l'auto-similarité, mais affectent les propriétés du pavage. Plus précisément, dans \cite{point-fixe} il est supposé que le programme $\pi$ (dont la simulation par une machine de Turing universelle est encodée dans chaque super-tuile) effectue deux tâches différentes : la première tâche est de vérifier la consistance des 4 super-couleurs de la super-tuile, comme décrit ci-dessus ; la deuxième tâche est d'exécuter un certain algorithme ``utile'' $B$. Cette algorithme $B$ varie selon les propriétés des pavages souhaitées ; celui-ci est explicitement codé dans le programme $\pi$, et, par conséquent, implicitement encodé dans le jeu de tuiles construit.

Dans notre construction, il n'est pas suffisant de modifier $\pi$, la ``deuxième tâche'' nécessitant un usage massif du parallélisme, et ne pouvant être effectuée de manière suffisamment efficace par une machine de Turing. Par conséquent, nous ne modifierons pas le programme $\pi$ ; cette deuxième tâche sera réalisée par l'automate cellulaire $A$, qui est exécuté dans la partie haute.

Nous avons vu que le code de cet automate sera détaillé dans la Partie~\ref{s:preuve}. Il utilisera comme données d'entrée l'entier $k$ représentant le rang (en effet ce champ n'a été modifié ni par $I$ ni par $U$), et les $q_k$ bits représentant les super-couleurs de la super-tuile (ceux-ci n'ont pas été modifiés par $I$, et la machine $U$ étant restreinte aux $q_k$ premières tuiles, elle n'a pas pu les modifier non plus). Nous supposons que dans chaque super-tuile la simulation de $A$ est effectuée en respectant les limites d'espace et de temps disponibles (soit respectivement $(n_k-q_k)$ et $(n_k-8q_k)$), et que la simulation s'arrête si l'automate atteint une de ces limites. Bien que toutes les super-tuiles (à tous les rangs de la structure hiérarchique) simulent un et un seul automate $A$, l'espace et le temps disponibles dépendent du rang de la super-tuile. Puisque le grossissement augmente avec le rang, nous pouvons allouer à la simulation de cet automate $A$ de plus en plus d'espace et de temps, au fur et à mesure que le rang des super-tuiles augmente.

\section{Flots d'information}
\label{s:flots}

Dans cette partie nous introduisons une autre technique que nous utilisons dans la preuve principale. Dans notre construction principale nous allons décrire un shift de type fini avec des ``flux d'information'' ayant quelques propriétés spécifiques. Essentiellement, ces ``flux d'information'' peuvent être pensés comme des flots multi-commodités à valeur entière sur un graphe particulier (nous nous intéressons à des graphes constitués d'une partie finie de la grille bidimensionnelle, avec des restrictions sur les sommets sources et les sommets puits). Dans ce qui suit nous définissons un graphe adéquat et démontrons certaines propriétés sur les flots de ce graphe. Les arguments développés dans cette partie n'ont rien à voir avec la dynamique symbolique. Les résultats peuvent être formulés en termes de flots entiers sur des graphes orientés. Nous expliquons les preuves en utilisant l'idée standard du rapport entre une coupe minimale et un flot maximal sur un graphe.

Nous commençons avec la définition standard d'un flot sur un graphe :

\begin{definition}\label{d:graphe-flot}
On dit qu'un graphe orienté $G$ est un \emph{graphe de flot} si :
\begin{itemize}
	\item il possède deux sommets distincts : une source notée $s$, et un puits noté $p$. Le sommet $s$ n'a pas d'arc entrant, et le sommet $p$ n'a pas d'arc sortant ;
	\item à chaque arc $(u,v)$ de $G$ est associé un entier positif appelé la capacité de l'arc, notée $c(u,v)$. La capacité d'un arc représente le flot maximal pouvant passer par cet arc. Si $(u,v) \notin G$, alors $c(u,v)=0$. 
\end{itemize}
\end{definition}

\begin{definition}\label{d:flot}
Soit $G$ un graphe de flot. Un \emph{flot} $f$ sur $G$ est une fonction associant à chaque arc $(u,v)$ de $G$ un entier naturel ou nul, représentant la quantité de flot passant par l'arc, notée $f(u,v)$. Ce flot doit respecter les propriétés suivantes :
\begin{itemize}
	\item contrainte de capacité : pour tout arc $(u,v)$ de $G$, $f(u,v) \leq c(u,v)$
	\item conservation du flot : pour tout sommet de $G$ autre que $s$ et $p$, la somme des flots des arcs entrants est égale à la somme des flots des arcs sortants.
\end{itemize}  
La quantité de flot sortant de $s$ est alors égale à la quantité de flot entrant dans $p$, et est appelée la valeur du flot.
En particulier, un flot de valeur maximale est appelé un flot maximal.
\end{definition}

\begin{remark}
Dans la définition ci-dessus nous considérons que le flot associé à un arc de $G$ est un entier naturel ou nul (et non un réel positif ou nul). Nous ne considérons que ce type particulier de flot, qui nous suffit pour nos besoins. Cela nous permet en outre de ne pas avoir à nous soucier des flots non entiers, ce qui simplifie les propositions et les preuves de cette partie.
\end{remark}

\begin{definition}\label{d:flot-super-tuile}
Pour les besoins de la construction du shift de densité $\epsilon$, nous considérons les graphes de flot $G$ suivants, appelés \emph{graphes d'une super-tuile}, ayant trois entiers naturels en paramètres ($N$, $F$ et $\rho$) :
\begin{itemize}
	 \item[1] les sommets de $G$ qui ne sont pas la source ou le puits forment une grille de taille $N \times N$ (il y en a donc $N^2$). On dit que deux sommets $u,v$ sont reliés s'il existe un arc de $u$ vers $v$ et un arc de $v$ vers $u$.  Chacun des sommets de la grille est relié à ses sommets voisins du haut, du bas, de gauche et de droite (excepté pour les sommets situés sur les bords ou les coins de la grille, reliés respectivement à trois et deux sommets voisins). Tous les arcs de la grille ont pour capacité un même entier $\rho$ ;
 	\item[2] un nombre d'arcs compris entre $0$ et $N^2$ partent de la source $s$ vers les sommets de la grille. Ces arcs ont pour capacité un entier compris entre $1$ et $\rho$. Nous notons $F$ la somme des capacités des arcs sortant de $s$ ;
 	\item[3] un nombre $F$ d'arcs partent des sommets de la grille vers le puits $p$. Ces arcs ont tous pour capacité $1$ ;
 	\item[4] pour tout sous-ensemble de sommets de la grille formant un carré $C$ de taille $c \times c$, nous imposons que la somme des capacités des arcs de $s$ vers les sommets de $C$ n'est pas plus grande que $\rho \cdot c$, et que la somme des capacités des arcs de $C$ vers $p$ n'est pas plus grande que $\rho \cdot c$.
\end{itemize}
\end{definition}

\begin{figure}[H]
	\center
	\includegraphics[scale=0.5]{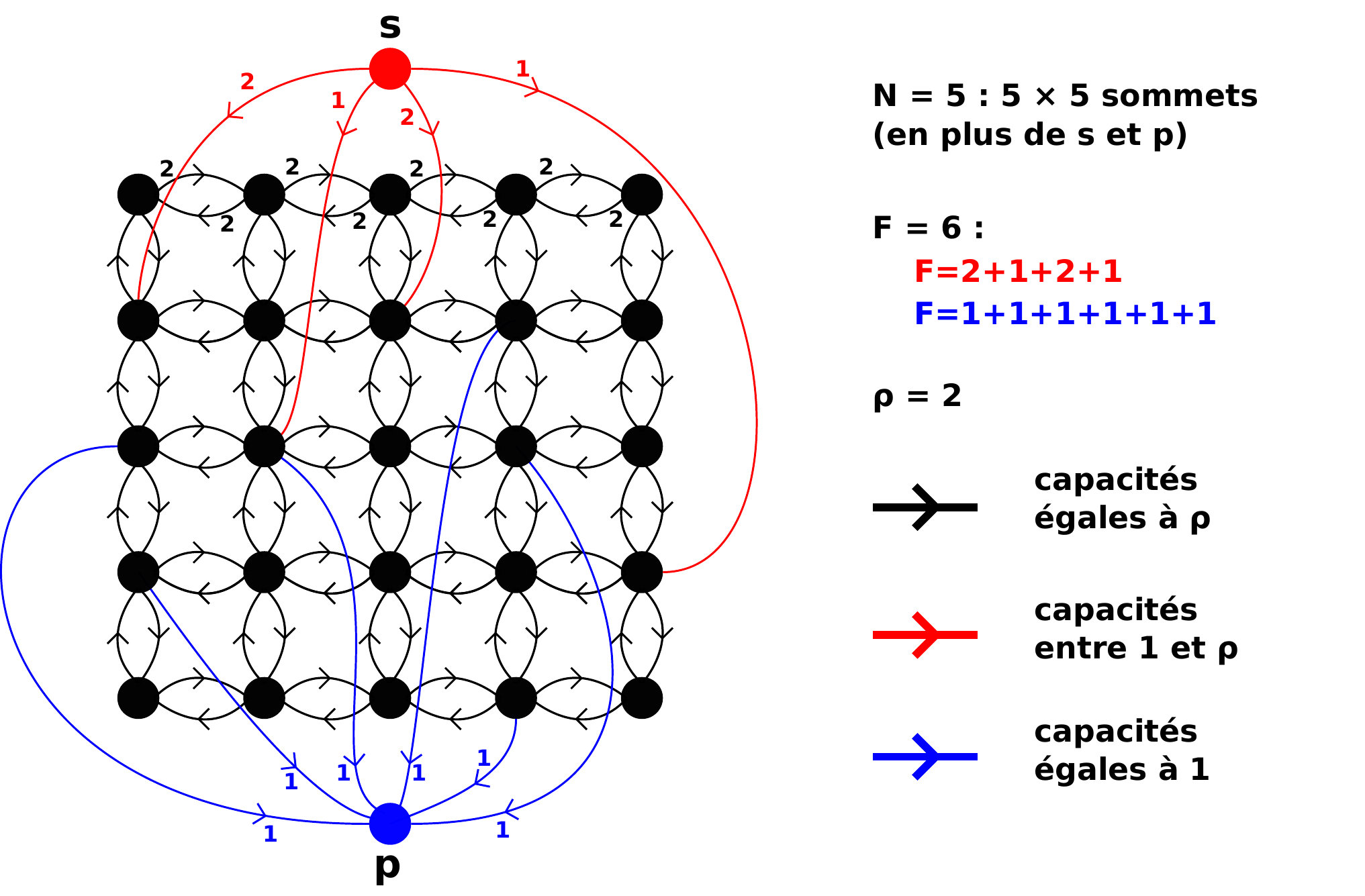}
	\caption{Image du graphe d'une super-tuile.}\label{f:graphe-tuile}
\end{figure}

Nous pouvons noter que pour que les propriétés du graphe soient respectées, il est nécessaire que les trois paramètres $N$, $F$ et $\rho$ soient compatibles entre eux (par exemple, il faut que $F \leq N^2$).

Maintenant nous rappelons l'idée de la décomposition d'un flot en une combinaison de chemins orientés et de cycles.

\begin{definition}
Soit $G$ un graphe de flot. Nous appelons un chemin élémentaire un flot $f$ de valeur $1$ tel que :
\begin{itemize}
 \item le flot $f$ dans un arc de $G$ est $0$ ou $1$ ;
 \item les arcs pour lesquels le flot $f$ est $1$ forment un chemin orienté (sans cycle) allant de la source $s$ au puits $p$.
\end{itemize}

Nous appelons un cycle élémentaire un flot $f$ de valeur nulle tel que :
\begin{itemize}
 \item le flot $f$ dans un arc de $G$ est $0$ ou $1$ ;
 \item les arcs pour lesquels le flot $f$ est $1$ forment un cycle orienté.
\end{itemize}
\end{definition}

\begin{figure}[H]
	\center
	\includegraphics[scale=0.37]{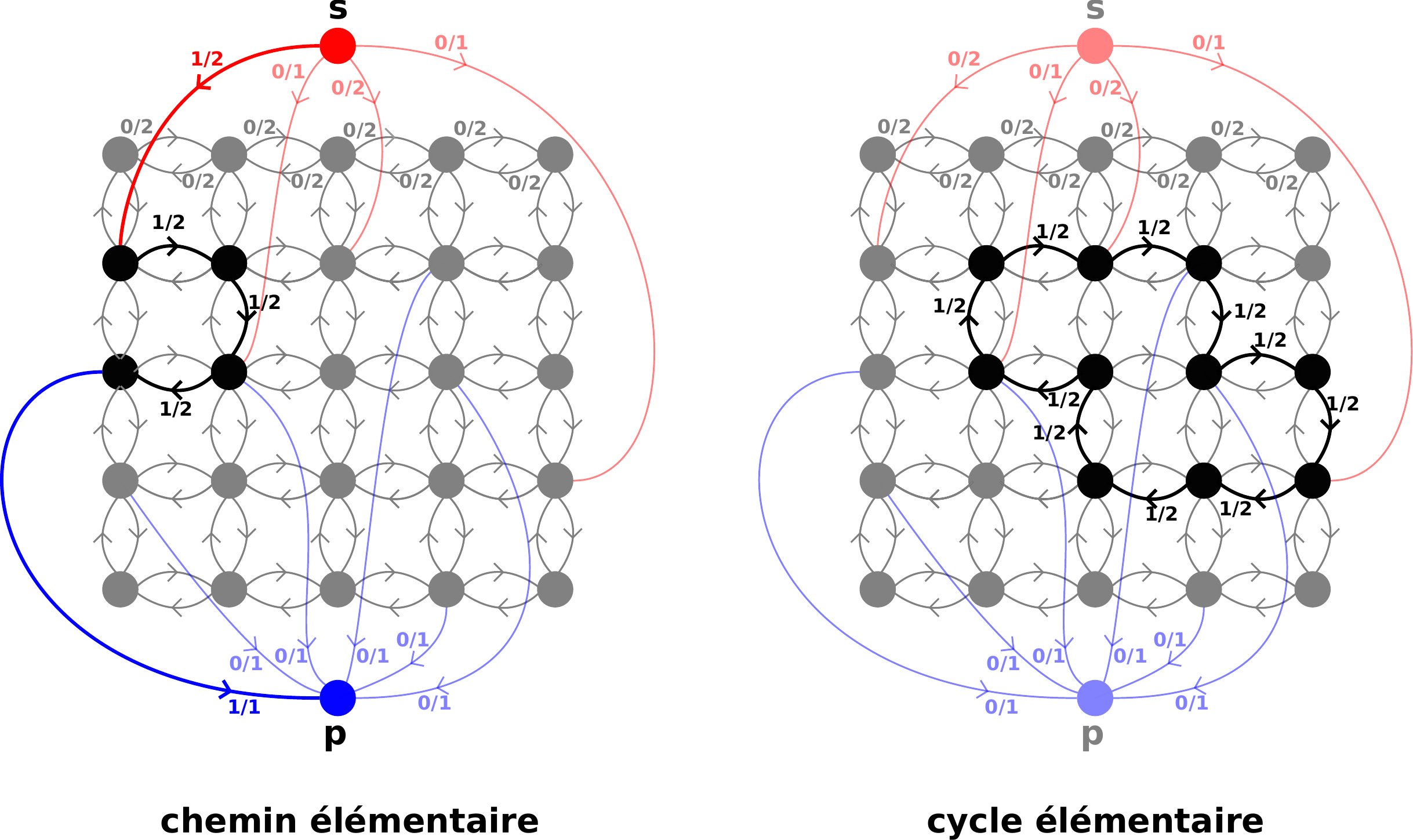}
	\caption{Un chemin et un cycle élémentaires.}\label{f:chemin-cycle}
\end{figure}

\begin{lemma}\label{l:flot-chemin}
Soient $G$ un graphe de flot et $f$ un flot sur $G$, de valeur $F$. 

Alors $f$ peut être représenté par un ensemble $A$ de chemins élémentaires de cardinalité $F$, et un ensemble $B$ de cycles élémentaires. Plus précisément, le flot $f$ d'un arc $(u,v)$ de $G$ est égal à la somme de deux entiers : le nombre de chemins élémentaires de $A$ passant par $(u,v)$, et le nombre de cycles élémentaires de $B$ passant par $(u,v)$.
\end{lemma}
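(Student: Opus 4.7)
Le plan est de raisonner par récurrence sur la masse totale $\sum_{(u,v)} f(u,v)$ du flot, en extrayant à chaque étape soit un chemin élémentaire de $s$ à $p$, soit un cycle élémentaire, tous les arcs de ces objets portant un flot strictement positif, puis en soustrayant $1$ à chaque arc concerné. Le cas de base où $f \equiv 0$ est immédiat ($F = 0$, et l'on prend $A = B = \emptyset$), et l'étape inductive reposera entièrement sur un lemme d'extraction qui constitue le cœur de l'argument.

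Ce lemme d'extraction affirmera que tout flot non nul sur $G$ contient soit un chemin simple orienté de $s$ à $p$ dont chaque arc porte un flot $\ge 1$ (ce sera toujours possible dès que $F > 0$), soit un cycle simple orienté ayant la même propriété (cas pouvant se produire lorsque $F = 0$ mais $f$ non identiquement nul). Pour le démontrer, je construirai une marche le long d'arcs de flot strictement positif : la conservation du flot en chaque sommet intermédiaire assure qu'un arc entrant positif implique un arc sortant positif, et la marche peut donc toujours être prolongée tant qu'elle n'atteint pas $p$ ni ne se recoupe ; comme $s$ n'est jamais atteint (pas d'arc entrant) et que le graphe est fini, l'une de ces deux issues doit survenir. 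Si $F > 0$, je partirai de $s$ : la marche donne soit un chemin simple de $s$ à $p$ après court-circuit des sous-cycles éventuels, soit un cycle simple. Si $F = 0$ et $f \not\equiv 0$, je partirai du bout d'un arc de flot positif ; puisqu'aucun arc entrant dans $p$ ne porte de flot positif, la marche ne peut qu'aboutir à un cycle.

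Une fois un chemin $P$ ou un cycle $C$ extrait, je retrancherai $1$ au flot de chacun de ses arcs. Le flot résiduel $f'$ vérifiera encore les contraintes de capacité (toutes les valeurs ne font que décroître) et de conservation (pour un cycle, les décréments d'entrée et de sortie se compensent en chaque sommet traversé ; pour un chemin de $s$ à $p$, il en va de même en chaque sommet intermédiaire). Sa valeur sera $F-1$ dans le premier cas et $F$ dans le second, et sa masse totale aura strictement diminué. L'hypothèse de récurrence fournira alors une décomposition $(A', B')$ de $f'$, et j'en déduirai la décomposition voulue en posant $A = A' \cup \{P\}$, $B = B'$ ou $A = A'$, $B = B' \cup \{C\}$ selon le cas.

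Le point délicat sera la démonstration du lemme d'extraction : il faudra distinguer soigneusement les cas selon que $F > 0$ ou $F = 0$, justifier que la marche peut toujours être étendue, et expliquer comment en extraire un chemin ou un cycle véritablement simple à partir d'une marche potentiellement non simple. Une fois ce lemme établi, le reste de l'argument relève essentiellement d'une vérification directe des invariants de flot après soustraction et d'un décompte final garantissant que l'on a bien extrait exactement $F$ chemins élémentaires.
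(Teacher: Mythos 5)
Votre proposition est correcte et suit essentiellement la même démarche que la preuve du manuscrit : récurrence sur la masse totale $\sum_{(u,v)} f(u,v)$, extraction d'un chemin ou d'un cycle élémentaire en suivant des arcs de flot strictement positif (la conservation du flot garantissant le prolongement de la marche, la finitude du graphe garantissant l'arrêt sur $p$ ou sur un sommet répété), puis soustraction de $1$ le long de l'objet extrait. La seule différence est cosmétique : le manuscrit extrait le cycle dès qu'un sommet se répète au lieu de le court-circuiter, ce que votre rédaction admet également, donc l'argument aboutit de la même façon.
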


\begin{proof}
Soit $G$ un graphe de flot. Pour tout flot $f$ sur $G$, on note $S$ la valeur totale de $f$ : il s'agit de la somme, pour chaque arc $a$ de $G$, de la valeur du flot $f$ associé à $a$ ($S$ ne doit pas être confondue avec la valeur du flot $f$). Nous allons démontrer ce lemme par récurrence forte sur $S$, la valeur totale du flot $f$ de $G$. 

Si $S=0$, le résultat est immédiat : pour tous les arcs de $G$, le flot $f$ est nul. On peut donc prendre $A$, l'ensemble des chemins élémentaires, et $B$, l'ensemble des cycles élémentaires, tous deux égaux à l'ensemble vide.

Supposons le résultat vrai pour tous les flots sur $G$ pour lesquels $S\leq n$, et considérons un flot $f$ sur $G$ de valeur $F$ pour lequel $S=n+1$.

Si $F$ est non nulle, nous initialisons une suite $L$ de sommets avec uniquement la source ($L = (s)$). Comme $F$ est non nulle (et donc $F \geq 1$), il existe un arc $(s,u_0)$ sortant de $s$ pour lequel $f(s,u_0)\geq 1$. 

Si $u_0$ est égal à $p$, nous avons trouvé un chemin élémentaire $c$ constitué d'un seul arc $(s,p)$. Nous définissons un nouveau flot $f'$ sur $G$ obtenu en soustrayant $c$ à $f$. Plus précisément, nous entendons par là que le flot $f'$ de l'arc $(s,p)$ est égal au flot $f$ de cet arc décrémenté de $1$, et que le flot $f'$ de chacun des autres arcs de $G$ est égal au flot de $f$ associé à eux. Nous avons $F'$, la valeur de $f'$, égale à $F-1$. La somme $S'$ du flot de $f'$ sur chacun des arcs de $G$ est égale à $n$, et nous pouvons donc appliquer l'hypothèse de récurrence sur $f'$. Le flot $f'$ peut donc être représenté par un ensemble $A'$ de chemins élémentaires de cardinalité $F'$, et un ensemble $B'$ de cycles élémentaires. Le flot $f$ est alors représenté par un ensemble $A$ de chemins élémentaires de cardinalité $F$ obtenu en ajoutant à $A'$ le chemin élémentaire $c$, et par un ensemble $B$ de cycles élémentaires égal à $B'$.

Si $u_0 \neq p$, nous ajoutons $u_0$ à la suite $L$. Comme il y a conservation du flot en $u_0$, la  somme $f(u_i,u_0)$ pour tous les $u_i \neq u_0$ est égale à la somme $f(u_0,u_j)$ pour tous les $u_j \neq u_0$. Il existe donc un arc $(u_0,u_1)$ avec $f(u_0,u_1)\geq 1$. De même que précédemment, si $u_1 = p$ nous avons trouvé un chemin élémentaire $c$, et nous appliquons la même procédure (définir un flot $f'=f-c$, appliquer l'hypothèse de récurrence sur $f'$, ajouter $c$ à $A'$ pour obtenir une représentation en chemins et cycles élémentaires de $f$). Dans le cas contraire, nous ajoutons $u_1$ à $L$. Nous pouvons itérer cette procédure d'ajout de sommets à $L$. Comme le nombre de sommets de $G$ est fini, un des deux cas suivants finit par se produire :
\begin{itemize}
	\item soit il existe un entier $k$ tel que $u_k = p$ ($L=(s,u_0, \ldots, u_{k-1},p)$). Comme précédemment, cela correspond au cas où un chemin élémentaire $c$ est trouvé et nous appliquons la procédure habituelle ;
	\item soit il existe deux entiers $i < k$ avec $u_i = u_k$ ($L=(s,\ldots, u_i, \ldots u_k)$. Dans ce cas nous avons trouvé un cycle élémentaire $c=(u_i,\ldots,u_{k-1})$. Nous définissons un flot $f'$, de valeur $F$, obtenu en soustrayant $c$ à $f$ ; la somme $S'$ des flots de $f'$ sur $G$ est inférieure ou égale à $n$. Nous pouvons donc appliquer l'hypothèse de récurrence à $f'$, et obtenir les ensembles $A'$ et $B'$. Le flot $f$ est alors représenté par l'ensemble $A=A'$ de chemins élémentaires de cardinalité $F$, et par l'ensemble $B=B'\cup \{c\}$ de cycles élémentaires. 
\end{itemize}

Étudions maintenant le cas où $F$ est nulle. Comme $S \geq 1$, il existe un arc $(u_0,u_1)$ tel que $f(u_0,v_1)\geq 1$. Comme $F=0$, on a $u_0$ différent de $s$ et $u_1$ différent de $p$. Nous définissons la suite $L=(u_0,u_1)$. Comme pour le cas où $F$ est non nulle, il y a conservation du flot en $u_1$, et il existe donc un arc $(u_1,u_2)$ avec $f(u_1,u_2)\geq 1$ ($u_2 \neq p$) ; nous ajoutons $u_2$ à $L$. Nous itérerons là aussi cette procédure d'ajout de sommets à $L$. Le nombre de sommets de $G$ étant fini et les sommets $u_i$ ajoutés étant différents de $p$, le cas suivant finit par se produire : il existe deux entiers $i,k$ tels que $u_i = u_k$ et $L=(u_0,\ldots, u_i, \ldots u_k)$. Nous avons alors trouvé un cycle élémentaire, et nous appliquons la même procédure que dans le cas où un cycle élémentaire est trouvé et que $F$ est non nulle. 

Cela termine le raisonnement par récurrence, ainsi que la preuve.
\end{proof}

Dans ce qui suit nous déterminons le flot maximal du graphe d'une super-tuile en utilisant l'égalité entre un flot maximal et une coupe minimale.

\begin{proposition}\label{p:flot-max}
Soient $G$ un graphe d'une super-tuile (voir Définition~\ref{d:flot-super-tuile}) et $F$ la somme des capacités des arcs sortant de la source $s$. Alors il existe un flot $f$ sur $G$ de valeur $F$.  
\end{proposition}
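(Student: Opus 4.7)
Pour démontrer l'existence d'un flot de valeur $F$, la stratégie naturelle est d'invoquer le théorème classique du flot maximal et de la coupe minimale : il suffit d'établir que la capacité de toute coupe $(A, B)$ séparant $s$ de $p$ est au moins $F$. Fixons une telle coupe avec $s \in A$, $p \in B$, et notons $X = A \setminus \{s\}$, $Y = B \setminus \{p\}$ la partition correspondante des sommets de la grille. La capacité de la coupe s'écrit
\[
\sigma(Y) + \pi(X) + \rho \cdot e(X, Y),
\]
où $\sigma(Y)$ désigne la somme des capacités des arcs allant de $s$ vers $Y$, $\pi(X)$ le nombre d'arcs (chacun de capacité $1$) allant de $X$ vers $p$, et $e(X, Y)$ le nombre d'arêtes orientées de la grille allant d'un sommet de $X$ vers un sommet de $Y$ (chacune de capacité $\rho$). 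Comme $F = \sigma(X) + \sigma(Y)$ par définition, l'inégalité recherchée $\mathrm{cap}(A, B) \geq F$ se réduit à
\[
\sigma(X) - \pi(X) \leq \rho \cdot e(X, Y),
\]
qu'il faudra établir pour tout sous-ensemble $X$ des sommets de la grille.

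Pour prouver cette inégalité isopérimétrique, le plan est d'exploiter pleinement la condition (4) de la Définition~\ref{d:flot-super-tuile}, qui borne à la fois la capacité source et la capacité puits de tout sous-carré de côté $c$ par $\rho c$. Cette contrainte traduit le fait que les sources et les puits sont distribués dans la grille de manière essentiellement unidimensionnelle : plus un sous-carré est grand, plus sa densité moyenne de sources et de puits doit décroître. L'idée sera de décomposer $X$ en pièces géométriquement simples (typiquement ses composantes connexes, puis éventuellement des sous-carrés ou des rectangles à l'intérieur de chacune), d'appliquer la condition (4) pour majorer la contribution de chaque pièce à $\sigma(X)$, et de mettre en correspondance la ``taille'' de ces pièces avec les arêtes du bord $e(X, Y)$. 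Le terme $\pi(X)$ permettra par ailleurs d'absorber les sources situées loin du bord dans le cas où $X$ est une forme très massive ; dans cette situation, il sera plus commode de passer au complémentaire $Y$ (de petite taille relative) et d'utiliser la borne symétrique donnée par la condition (4) sur les capacités puits.

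L'obstacle principal sera précisément la construction de cette décomposition, uniformément pour toute partie $X$ : il s'agira de traiter aussi bien les formes massives (pour lesquelles $e(X, Y)$ est petit et où la borne doit venir directement de la condition (4) appliquée à un grand sous-carré contenu dans $X$) que les formes dentelées ou fragmentées (où il faudra sommer des bornes locales en s'assurant que chaque portion de bord $e(X, Y)$ est bien comptée avec le poids approprié, sans double comptage). Une fois cette inégalité établie pour toute partie $X$, la conclusion découlera immédiatement du théorème flot-max/coupe-min ; l'intégralité des capacités sur tous les arcs du graphe assurera au passage que le flot maximal obtenu prend bien des valeurs entières, comme le requiert la Définition~\ref{d:flot}.
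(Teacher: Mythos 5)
Votre mise en place est correcte et suit la même route que le texte : appel au théorème flot-max/coupe-min (Théorème~\ref{th:max-min}), intégralité du flot gratuite puisque les capacités sont entières, et réduction de la minoration de toute coupe s-p à l'inégalité $\sigma(X) \le \pi(X) + \rho\, e(X,Y)$ pour toute partie $X$ de la grille. Mais votre proposition s'arrête exactement là où commence la vraie difficulté : la ``décomposition de $X$ en pièces géométriquement simples'' contrôlées par la condition (4) est annoncée, jamais construite, et vous reconnaissez vous-même que c'est l'obstacle principal. En l'état, l'inégalité clé n'est pas démontrée, donc la proposition non plus : il s'agit d'un plan de preuve, pas d'une preuve.

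Ce qui manque est précisément le mécanisme que le texte met en place. Pour une coupe $(S,P)$, on regarde les composantes connexes de $S$ (ou de $P$) dans la grille et leur plus petit rectangle englobant. Le Fait~\ref{c:flot-connexe} montre que si ce rectangle, de taille $l\times h$ avec $h\le l$, n'est pas la grille entière, alors l'un de ses côtés de longueur $l$ est strictement intérieur à la grille : en basculant la composante de l'autre côté de la coupe, on retire au moins $l$ arcs de grille de capacité $\rho$, tandis que la condition (4) de la Définition~\ref{d:flot-super-tuile}, appliquée à un carré $l\times l$ contenant le rectangle, borne par $\rho\, l$ les arcs de source (ou de puits) qu'on ajoute ; la valeur de la coupe ne peut donc que décroître. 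Le Fait~\ref{c:flot-rectangle} (argument de type Jordan discret) garantit qu'au plus un des deux côtés de la coupe possède une composante dont le rectangle englobant est la grille entière ; c'est ce qui permet, en itérant le basculement composante par composante, d'aboutir à l'une des deux coupes triviales, de valeur $F$ (Proposition~\ref{p:flot-coupe-minimale}). Sans un tel argument --- rectangle englobant, côté intérieur fournissant le bord à comparer, et dichotomie réglant le cas des formes ``massives'' via le passage au complémentaire --- votre découpe en sous-carrés reste exposée aux problèmes de double comptage et de formes dentelées que vous signalez sans les résoudre ; c'est là le trou à combler pour que votre démarche aboutisse.
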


Ce flot est maximal, puisqu'il correspond à la somme des capacités des arcs sortant de $s$, ou encore à la somme des capacités des arcs entrant dans $p$.

Nous pouvons maintenant énoncer le résultat principal de cette partie :
\begin{corollary}\label{cor:flot-chemin}
Soit $G$ un graphe d'une super-tuile et $F$ la somme des capacités des arcs sortant de $s$. Alors il existe un flot $f$ sur $G$ de valeur $F$, qui peut être représenté par une paire d'ensembles $(A,B)$ : un ensemble $A$ de chemins élémentaires de cardinalité $F$, et un ensemble $B$ de cycles élémentaires.
\end{corollary}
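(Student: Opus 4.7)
Ma preuve combinera directement les deux résultats énoncés précédemment. Je commencerai par appliquer la Proposition~\ref{p:flot-max} au graphe d'une super-tuile $G$ considéré, ce qui me fournira un flot $f$ sur $G$ de valeur exactement $F$, où $F$ est la somme des capacités des arcs sortant de la source $s$. Notons qu'un tel flot est nécessairement maximal, puisque $F$ est une borne supérieure triviale sur la valeur d'un flot quelconque (aucun flot ne peut faire sortir de $s$ plus que la somme des capacités des arcs sortant de $s$).

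Ensuite, j'appliquerai le Lemme~\ref{l:flot-chemin} à ce flot $f$. Comme $G$ est bien un graphe de flot au sens de la Définition~\ref{d:graphe-flot} (c'est immédiat à vérifier à partir de la Définition~\ref{d:flot-super-tuile} : il y a une unique source $s$ sans arc entrant, un unique puits $p$ sans arc sortant, et les capacités sont des entiers positifs), et puisque $f$ est un flot entier de valeur $F$ sur $G$, le lemme fournit directement une décomposition de $f$ en un ensemble $A$ de chemins élémentaires de cardinalité $F$ et un ensemble $B$ de cycles élémentaires ayant la propriété voulue. Le corollaire en découle immédiatement.

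La difficulté réelle ne se situe donc pas dans ce corollaire, qui est une simple mise en forme combinée des deux résultats antérieurs, mais bien dans la Proposition~\ref{p:flot-max} (dont la preuve n'a pas encore été donnée dans le texte ci-dessus). C'est là que la condition~4 de la Définition~\ref{d:flot-super-tuile} jouera un rôle essentiel : elle garantit que dans tout sous-carré $C$ de côté $c$, la somme des capacités sortant de $s$ vers $C$ (comme celle des capacités de $C$ vers $p$) ne dépasse pas $\rho \cdot c$, qui est précisément la capacité totale des arcs de la grille traversant le bord de $C$. Cette contrainte est ce qui permet d'exhiber, via l'argument standard flot-maximal/coupe-minimale, une coupe de capacité au moins $F$ (donc un flot de valeur $F$). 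L'étape cruciale sera de vérifier que toute coupe séparant $s$ de $p$ doit nécessairement avoir une capacité supérieure ou égale à $F$, en utilisant le fait qu'une coupe ``économique'' dans le graphe de la grille induit naturellement une région dont le périmètre majore la taille, ce qui se combine avec la condition~4 pour empêcher que la coupe soit strictement inférieure à $F$.
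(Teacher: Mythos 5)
Votre preuve est correcte et suit exactement la même démarche que celle du manuscrit : le corollaire s'obtient en appliquant le Lemme~\ref{l:flot-chemin} au flot de valeur $F$ fourni par la Proposition~\ref{p:flot-max}. Votre remarque finale sur le rôle de la condition~4 et de l'argument flot-max/coupe-min correspond bien à la stratégie effectivement employée ensuite pour démontrer la Proposition~\ref{p:flot-max}, mais elle ne fait pas partie de la preuve du corollaire lui-même.
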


\begin{proof}
Le résultat s'obtient en appliquant le Lemme~\ref{l:flot-chemin} au résultat obtenu par la Proposition~\ref{p:flot-max}.
\end{proof}

Pour démontrer la Proposition~\ref{p:flot-max}, nous allons introduire le concept de coupe dans un graphe de flot, et utiliser le Théorème~\ref{th:max-min}.

\begin{definition}
Soit $G$ un graphe de flot. Nous appelons \emph{coupe s-p} de $G$, notée ${\cal C}$, un couple de sous-ensembles de sommets $(S,P)$ disjoints d’union $G$, tels que $s\in S$ et $p\in P$. La capacité d'une coupe ${\cal C}$, notée $|{\cal C}|$, est la somme des capacités respectives des arcs allant de $S$ à $P$.
En particulier, une coupe s-p de capacité minimale est appelée une coupe s-p minimale.
\end{definition}

\begin{theorem}\label{th:max-min}
Théorème (\cite{flot-max-coupe-min}) \emph{flot-max/coupe-min} : Pour tout graphe de flot $G$, la valeur d'une coupe s-p minimale est égale à la valeur d'un flot maximal. 
\end{theorem}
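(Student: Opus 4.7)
Le plan est de prouver l'égalité en deux étapes classiques. D'abord, je démontrerais l'inégalité faible : pour tout flot $f$ de valeur $F$ et toute coupe s-p $(S,P)$ de capacité $|\mathcal{C}|$, on a $F \leq |\mathcal{C}|$. Pour cela, en sommant les équations de conservation du flot sur les sommets de $S \setminus \{s\}$ et en utilisant le fait que $s$ n'a pas d'arc entrant, on obtient que $F$ est égal à la somme des flots des arcs de $S$ vers $P$ moins la somme des flots des arcs de $P$ vers $S$. Les premiers termes sont majorés par la capacité de l'arc correspondant et les seconds sont positifs ou nuls, ce qui donne immédiatement $F \leq |\mathcal{C}|$. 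Cette inégalité, appliquée à un flot maximal et à une coupe minimale, montre déjà que la valeur d'un flot maximal est bornée par la capacité d'une coupe minimale.

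Ensuite, je montrerais l'inégalité inverse en exhibant une coupe qui atteint exactement la valeur du flot maximal. Comme les flots considérés sont à valeurs entières et bornés par la somme (finie) des capacités des arcs sortant de $s$, un flot maximal $f^*$ existe. Étant donné un tel $f^*$, je construirais le \emph{graphe résiduel} $G_{f^*}$ sur les mêmes sommets que $G$, contenant un arc ``avant'' $(u,v)$ de capacité résiduelle $c(u,v) - f^*(u,v)$ chaque fois que $f^*(u,v) < c(u,v)$, et un arc ``arrière'' $(v,u)$ de capacité résiduelle $f^*(u,v)$ chaque fois que $f^*(u,v) > 0$. Je définirais alors $S$ comme l'ensemble des sommets accessibles depuis $s$ dans $G_{f^*}$, et $P$ comme le complémentaire. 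L'observation clé est que $p \in P$, sinon un chemin orienté de $s$ à $p$ dans $G_{f^*}$ permettrait, en augmentant $f^*$ d'une quantité entière strictement positive le long de ce chemin (la plus petite capacité résiduelle rencontrée), de produire un flot strictement plus grand, contredisant la maximalité de $f^*$.

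Il restera alors à vérifier que la capacité de la coupe $(S,P)$ ainsi construite vaut exactement la valeur de $f^*$. Par définition de $S$, tout arc $(u,v)$ de $G$ avec $u \in S$ et $v \in P$ est saturé ($f^*(u,v) = c(u,v)$, sinon $v$ serait accessible via l'arc avant correspondant), et tout arc $(u,v)$ de $G$ avec $u \in P$ et $v \in S$ a un flot nul (sinon l'arc arrière $(v,u)$ dans $G_{f^*}$ rendrait $u$ accessible). En réutilisant la formule qui exprime $F$ comme le flot net traversant la coupe, on obtient que la valeur de $f^*$ égale précisément la capacité de $(S,P)$, et donc majore la capacité minimale parmi toutes les coupes s-p. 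Combinée à l'inégalité faible, ceci conclut.

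Le principal point technique sera la vérification soigneuse que l'augmentation le long d'un chemin dans $G_{f^*}$ (chemin qui alterne potentiellement arcs avant et arrière) produit un flot à valeurs entières valide : les contraintes de capacité sont préservées par construction de la capacité résiduelle, et la conservation du flot reste vraie à chaque sommet intermédiaire car l'augmentation entre et sort en quantité égale. Le reste de l'argument se réduit à des manipulations algébriques directes des équations de conservation, sans difficulté conceptuelle particulière.
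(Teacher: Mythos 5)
Votre démonstration est correcte : c'est l'argument classique de Ford--Fulkerson (dualité faible par sommation des équations de conservation sur $S\setminus\{s\}$, puis construction du graphe résiduel d'un flot maximal, dont l'ensemble des sommets accessibles depuis $s$ fournit une coupe saturée de capacité exactement égale à la valeur du flot). Notez toutefois que le manuscrit ne démontre pas ce théorème : il est cité tel quel depuis \cite{flot-max-coupe-min}, et la remarque qui suit l'énoncé signale seulement une autre voie, via la formulation en programmation linéaire et le théorème de dualité forte. Votre preuve coïncide donc en substance avec celle de la référence citée (chemins augmentants et capacités résiduelles) plutôt qu'avec la voie par dualité LP évoquée dans la remarque ; les deux approches sont valables, la vôtre ayant l'avantage d'être élémentaire et de traiter directement le cas entier (l'existence d'un flot maximal découlant de la finitude des flots entiers bornés par la somme des capacités sortant de $s$), tandis que la dualité LP donne l'énoncé sans construction explicite mais requiert un argument supplémentaire (unimodularité totale) pour récupérer l'intégralité. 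Votre point technique final — la préservation des contraintes de capacité et de conservation lors de l'augmentation le long d'un chemin alternant arcs avant et arrière — est bien le seul endroit demandant une vérification soigneuse, et votre esquisse en est exacte.
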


\begin{remark}
Le problème de maximisation du flot peut être formulé en termes de programmation linéaire, comme les auteurs le faisaient déjà remarquer dans \cite{flot-max-coupe-min}. Plus précisément, les problèmes de maximisation du flot et de minimisation d'une coupe s-p peuvent être formulés comme les versions primale et duale d'un même programme linéaire, et le théorème est alors une conséquence directe du théorème de dualité forte de l'optimisation linéaire (voir par exemple \cite{ford1957simple}). La preuve de \cite{flot-max-coupe-min} est cependant intéressante car elle fournit un schéma, qui sera implémenté par différents algorithmes, pour calculer le flot maximal plus efficacement qu'en utilisant la programmation linéaire.
\end{remark}

Soit $G$ un graphe d'une super-tuile, et ${\cal C}=(S,P)$ une coupe s-p de $G$. Le singleton contenant la source $s$ et celui contenant le puits $p$ sont des composantes fortement connexes respectivement de $S$ et de $P$. À chaque autre composante fortement connexe $C$ de $S$ ou de $P$, on associe le plus petit sous-ensemble de sommets de la grille formant un rectangle $R$ de taille $l \times h$ et contenant $C$.

\begin{figure}[H]
	\center
	\includegraphics[scale=0.35]{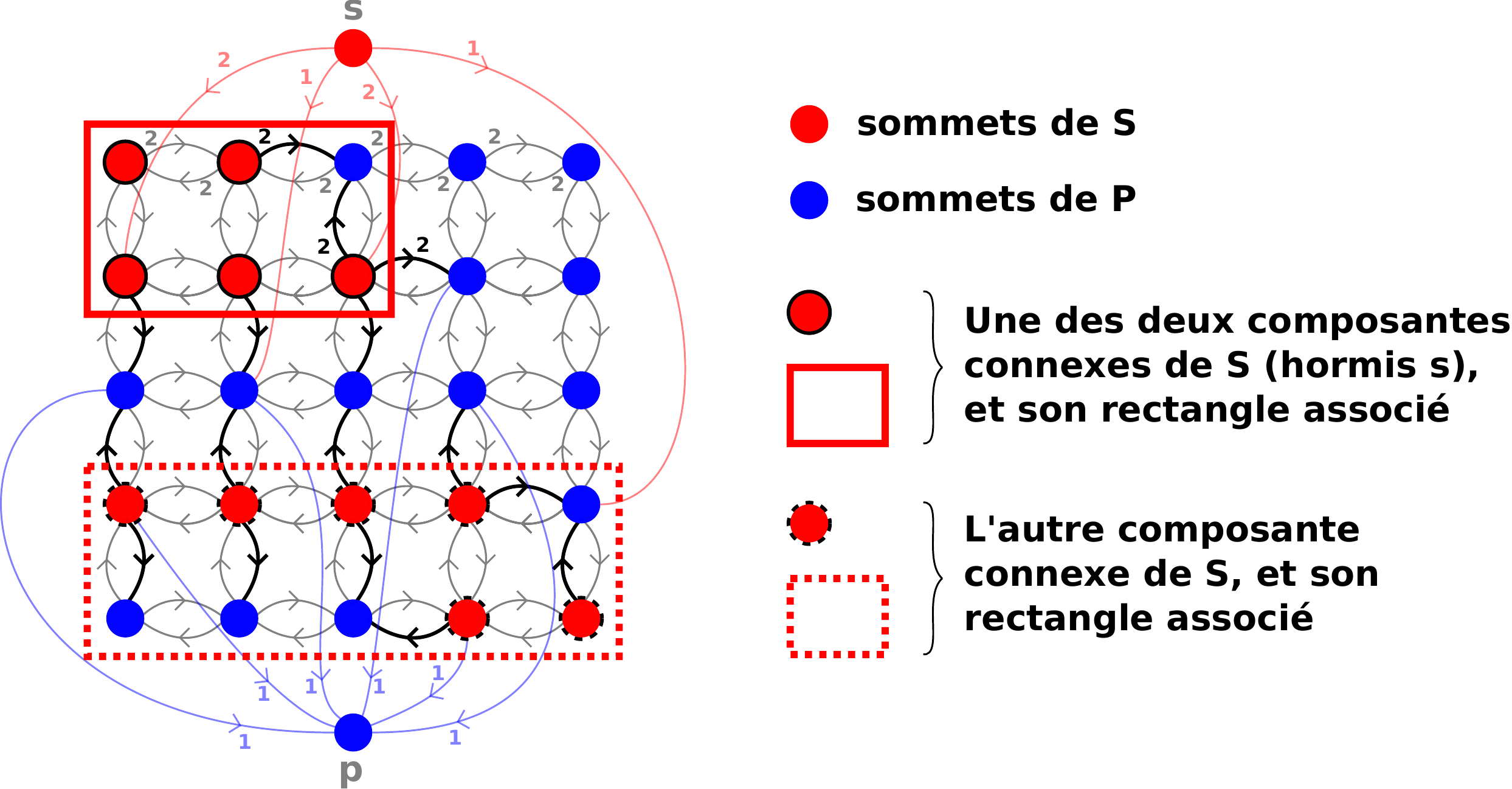}
	\caption[$(S,P)$ une coupe s-p.]{$(S,P)$ une coupe s-p. Les ensemble $S$ et $P$ ont tous deux 2 composantes connexes (hormis respectivement $s$ et $p$).}\label{f:coupe-s-p}
\end{figure}

\begin{claim}\label{c:flot-rectangle}
Si un rectangle $R$ associé à une composante fortement connexe $C$ de $S$ est égal à l'ensemble de la grille, alors aucun rectangle $R'$ associé à une composante fortement connexe $C'$ de $P$ n'est égal à la grille.
\end{claim}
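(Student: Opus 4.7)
Je prouverai ce fait par l'absurde. Supposons que le rectangle englobant $R$ d'une composante fortement connexe $C \subset S$ soit égal à la grille entière, et qu'il existe de même une composante fortement connexe $C' \subset P$ dont le rectangle englobant $R'$ est également égal à la grille. Mon objectif sera d'en déduire que $C$ et $C'$ doivent nécessairement partager un sommet, ce qui contredira $S \cap P = \emptyset$.

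Je commencerai par observer que, puisque tous les sommets voisins de la grille sont reliés par des arcs dans les deux sens, les composantes fortement connexes du sous-graphe de $G$ induit par les sommets de grille de $S$ (respectivement de $P$) coïncident avec les composantes connexes au sens non orienté du sous-graphe induit. Ainsi, $C$ et $C'$ sont des sous-ensembles \emph{connexes} de la grille (vue comme graphe non orienté). Par hypothèse, le plus petit rectangle contenant $C$ est la grille entière ; $C$ possède donc un sommet $u_L$ d'abscisse $0$ et un sommet $u_R$ d'abscisse $N-1$. Par connexité de $C$, il existe un chemin $\gamma$ dans $C$, constitué d'arêtes de la grille, reliant $u_L$ à $u_R$, et quitte à en extraire un sous-chemin je pourrai supposer $\gamma$ simple. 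De manière symétrique, $C'$ contient un sommet $v_B$ d'ordonnée $0$ et un sommet $v_T$ d'ordonnée $N-1$, reliés par un chemin simple $\gamma' \subset C'$.

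L'étape principale, et la plus délicate, consistera à établir que $\gamma$ et $\gamma'$ ont nécessairement un sommet en commun. Il s'agit essentiellement d'une version discrète du théorème de la courbe de Jordan pour la grille : un chemin simple reliant le bord gauche au bord droit d'un carré $N \times N$ sépare les sommets de grille restants en deux parties, l'une contenant le bord supérieur privé de $\gamma$ et l'autre le bord inférieur privé de $\gamma$, qui ne peuvent être reliées par des arêtes de la grille sans rencontrer $\gamma$. Une telle séparation peut s'établir par un argument combinatoire direct (par exemple par récurrence sur la longueur de $\gamma$, ou en utilisant la planarité de la grille). Le chemin $\gamma'$, joignant $v_B$ (situé dans la partie inférieure ou sur $\gamma$) à $v_T$ (situé dans la partie supérieure ou sur $\gamma$), devra donc passer par un sommet de $\gamma$, fournissant ainsi un sommet commun $w \in C \cap C'$. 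Or $C \subset S$, $C' \subset P$, et $S \cap P = \emptyset$ par définition d'une coupe s-p : un tel $w$ ne peut exister, ce qui donne la contradiction recherchée et achève la preuve.
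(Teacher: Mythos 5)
Votre preuve est correcte et suit essentiellement la même démarche que celle du manuscrit : vous extrayez de $C$ un chemin reliant les bords gauche et droit de la grille, de $C'$ un chemin reliant les bords bas et haut, et vous invoquez la version discrète du théorème de Jordan (le lemme de croisement de deux chemins dans la grille) pour conclure qu'ils devraient partager un sommet, ce qui contredit la disjonction de $S$ et $P$. La seule différence est de présentation : le manuscrit raisonne directement (la chaîne gauche--droite de $C$ sépare la grille, donc aucune chaîne de $C'$ ne peut relier les bords haut et bas, et $R'$ ne peut être la grille), tandis que vous raisonnez par l'absurde, l'ingrédient clé étant admis au même niveau de détail dans les deux cas.
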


\begin{proof}
Supposons que le rectangle $R$ associé à une composante fortement connexe $C$ de $S$ est égal à l'ensemble de la grille. Dans ce cas, il existe au moins un sommet de $C$ présent sur le bord gauche de la grille, et au moins un autre sur le bord droit (il en est de même pour les bords haut et bas). Autrement dit, il existe une chaîne $c$ de sommets de $C$ où le premier sommet est situé sur le bord gauche de la grille, et le dernier est situé sur le bord droit ; cette chaîne $c$ ``sépare'' en deux la grille, voir Fig.~\ref{f:coupe-s-p-grand-carré} (il y a la partie au-dessus de la chaîne, et la partie au-dessous ; une de ces parties peut éventuellement être vide). 

Soit $R'$ le rectangle associé à une composante connexe $C'$ de $P$. Il ne peut y avoir de chaîne de sommets de $C'$ reliant les bords haut et bas de la grille (on ne peut pas ``traverser'' la chaîne $c$ de sommets de $C$ reliant les bords gauche et droit de la grille). 

\begin{figure}[H]
	\center
	\includegraphics[scale=0.35]{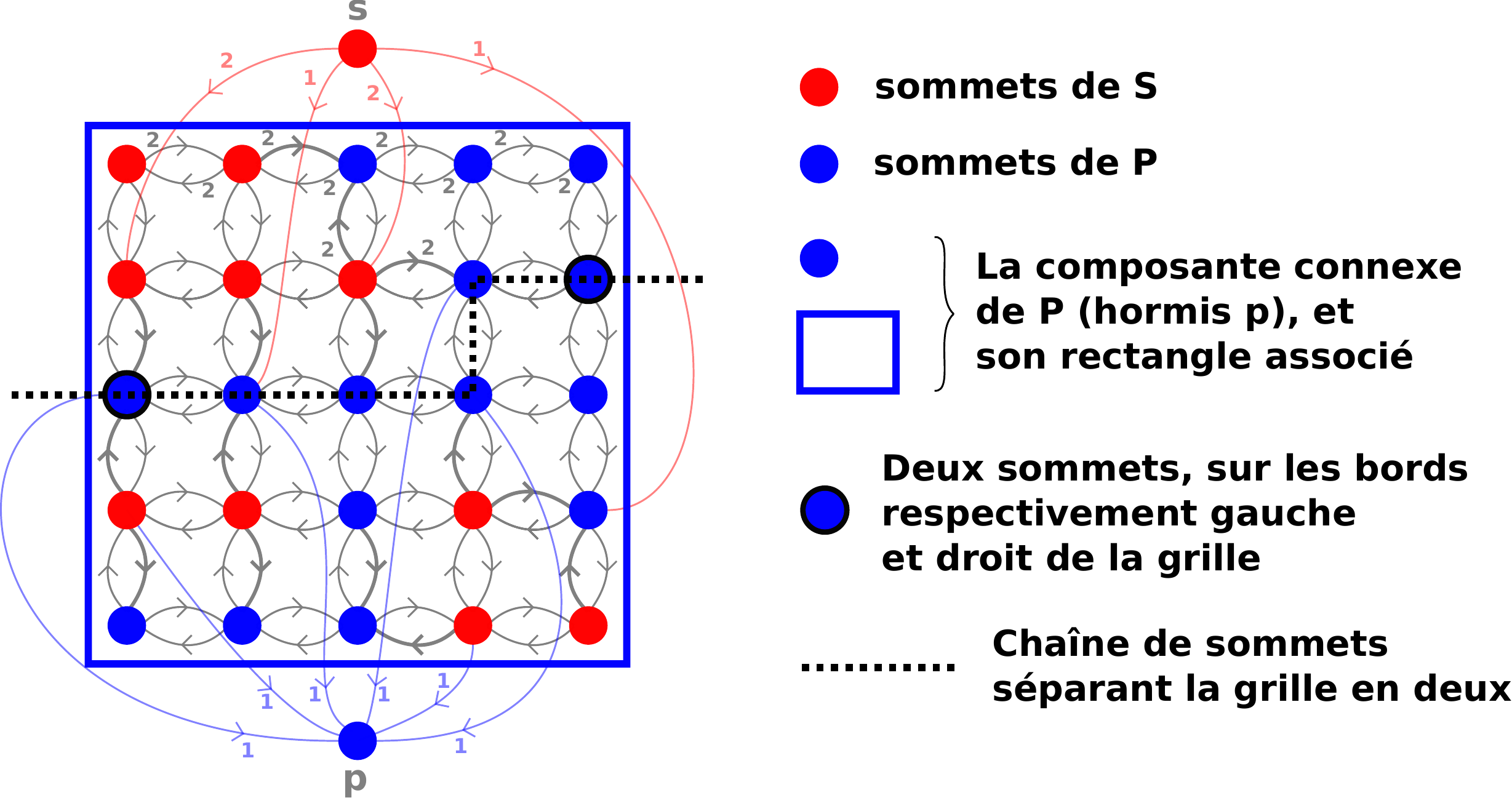}
	\caption[Une composante connexe de l'ensemble $P$ dont le rectangle associé est l'ensemble de la grille.]{$(S,P)$ une coupe s-p. L'ensemble $P$ a une composante connexe dont le rectangle associé est l'ensemble de la grille.}\label{f:coupe-s-p-grand-carré}
\end{figure}

Ce résultat, intuitif, repose en réalité sur le Théorème de Jordan, dans sa version discrète. Ce théorème a été énoncé par Jordan en 1887, mais avec une preuve fausse ; une preuve correcte a été donnée dans~\cite{jordan}.

Par conséquent $R'$ ne peut être égal à la grille.
\end{proof}

\begin{claim}\label{c:flot-connexe}
Soit ${\cal C}=(S,P)$ une coupe s-p. Si $C$ est une composante connexe de $S$ (autre que le singleton $\{s\}$) telle que le rectangle associé à $C$ ne soit pas égal à la grille, alors ${\cal C'} = (S-C,P\cup C)$ est une coupe s-p et $|{\cal C'}| \leq |{\cal C}|$.
De même, si $C$ est une composante connexe de $P$ (autre que le singleton $\{p \}$) telle que le rectangle associé ne soit pas la grille, alors ${\cal C'} = (S\cup C,P - C)$ est une coupe s-p et $|{\cal C'}| \leq |{\cal C}|$.
\end{claim}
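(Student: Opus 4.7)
The plan is to verify both parts of the statement: that ${\cal C'} = (S-C, P \cup C)$ is a valid coupe s-p, and that $|{\cal C'}| \leq |{\cal C}|$. Validity is immediate: since $\{s\}$ is its own composante connexe of $S$, $C \neq \{s\}$ implies $s \in S - C$, while $p \in P \subseteq P \cup C$, and $(S-C) \cup (P \cup C) = G$ with the two sets disjoint. For the capacity comparison, I denote by $E(X,Y)$ the total capacity of arcs going from $X$ to $Y$. Writing $|{\cal C}| = E(C,P) + E(S-C,P)$ and $|{\cal C'}| = E(S-C,C) + E(S-C,P)$, one gets
\[
|{\cal C'}| - |{\cal C}| = E(S-C, C) - E(C, P),
\]
so it suffices to prove $E(S-C, C) \leq E(C, P)$.

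Next I would exploit the (strong) connexité of $C$, which is where the structure of the graphe d'une super-tuile enters. Because every pair of neighbouring grid vertices is joined by arcs in both directions with identical capacity $\rho$, no arc of the grid can cross between $C$ and $S-C$: such an arc would force its outer endpoint into the same composante connexe as $C$. Hence all arcs contributing to $E(S-C,C)$ start at the source, giving $E(S-C,C) = A_1$, where $A_1$ is the sum of the capacities of the arcs $s \to C$. Symmetrically, every grid-neighbour of a vertex of $C$ outside $C$ must lie in $P$, so
\[
E(C, P) \geq \rho \cdot |\partial C|,
\]
where $\partial C$ denotes the set of arêtes de la grille allant d'un sommet de $C$ vers un voisin extérieur à $C$.

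It remains to establish $A_1 \leq \rho \cdot |\partial C|$, and this is where the hypothesis $R \neq $ grille is needed. Write $l \leq h$ for the dimensions of $R$. The property~4 of Définition~\ref{d:flot-super-tuile}, applied to an $h \times h$ square of the grid containing $R$ (which exists: extend $R$ if $h < N$, or take the whole grid if $h = N$), gives $A_1 \leq \rho h$. The heart of the argument is then the isoperimetric estimate $|\partial C| \geq h$. If $R$ does not touch the left side of the grid, then in each of the $h$ lignes de $R$ the leftmost vertex of $C$ has, immediately to its left, a grid neighbour that lies outside $R$ and therefore outside $C$; this provides $h$ distinct arêtes de bord. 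The symmetric argument applies if $R$ does not touch the right side. Finally, if $R$ touched both the left and right sides then $l = N$; combined with $l \leq h \leq N$ this forces $h = N$ and hence $R$ to touch all four sides, contradicting $R \neq $ grille. So one of the two good cases occurs, and $|\partial C| \geq h$.

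Combining gives $A_1 \leq \rho h \leq \rho |\partial C| \leq E(C,P)$, which yields $|{\cal C'}| \leq |{\cal C}|$. The second half of the statement (for a composante $C$ of $P$) follows by an entirely symmetric argument, exchanging the roles of $s$ and $p$: now $E(C, P-C) = B_1$ (arcs from $C$ to the puits) and $E(S, C) \geq \rho \cdot |\partial C|$, and $B_1$ is bounded by property~4 applied to arcs vers $p$. The main obstacle, and the only geometrically nontrivial step, is the isoperimetric bound $|\partial C| \geq \max(l,h)$: it is precisely the hypothesis $R \neq $ grille that prevents a pathological component whose contour is too small to absorb the incoming capacity from the source.
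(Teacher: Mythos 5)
Your proof is correct and follows essentially the same route as the paper's: the capacity added when moving $C$ across the cut comes only from the source (resp. goes only to the sink) and is bounded by $\rho\cdot\max(l,h)$ via the property~4 applied to a square containing $R$, while one escaping grid arc of capacity $\rho$ per row of $R$ leaves the cut (the paper does the transposed version, with the convention $h\le l$ and one arc per column), the hypothesis $R\neq$ grille guaranteeing a side of the grid not touched by $R$. The only cosmetic slip is your justification that the left neighbour of the leftmost vertex of $C$ in a row "lies outside $R$": in general it may lie inside $R$, but it is outside $C$ simply because that vertex is leftmost in its row, so the count of $h$ distinct boundary arcs stands.
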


\begin{proof}
Soit ${\cal C}=(S,P)$ une coupe s-p, et $C$ une composante connexe de $S$ (autre que le singleton $\{s\}$) telle que le rectangle $R$ de taille $l \times h$ associé à $C$ ne soit pas égal à la grille de taille $N\times N$. Nous supposons sans perte de généralité que $h \leq l$. Alors nous avons $h<N$ (si $h=N$, alors comme $l \geq h$ on aurait $l=N$ et donc $R$ égal à la grille). Comme $h<N$, au moins un des bords latéraux de $R$ n'est pas au bord de la grille ; nous notons ce bord latéral $b$. À chaque sommet du bord $b$, correspond un arc de $S$ vers $P$ de capacité $\rho$ (celui-ci se situe sur la même ligne que le sommet, voir Fig.~\ref{f:coupe-s-p-inférieure}). 

\begin{figure}[H]
	\center
	\includegraphics[scale=0.34]{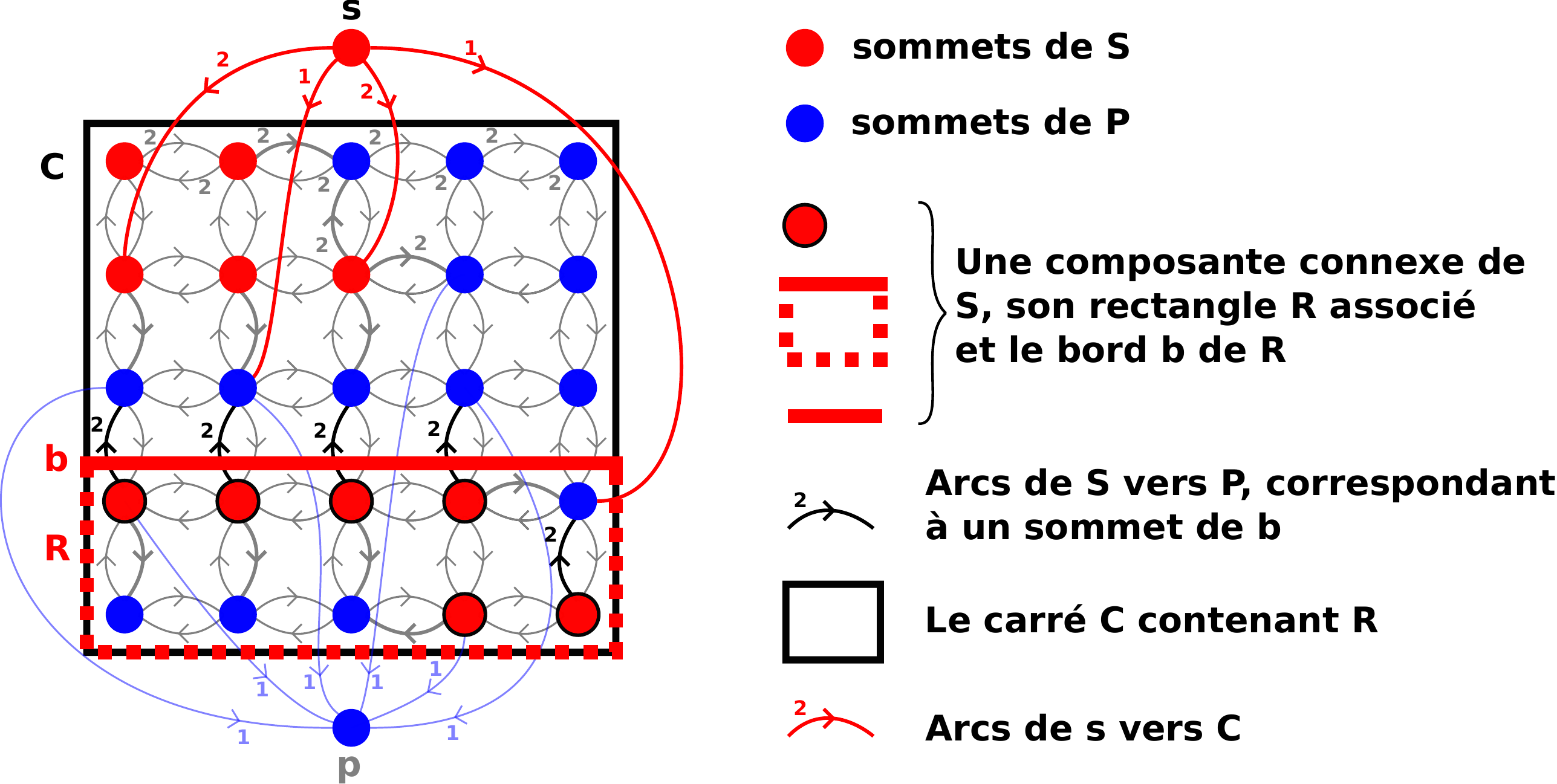}
	\caption[Une composante connexe de l'ensemble $S$ dont le rectangle associé n'est pas l'ensemble de la grille.]{Une composante connexe de $S$ a pour rectangle associé $R$ de taille $l \times h$, avec $l \geq h$. Il existe un bord horizontal $b$ (de taille $l$) de $R$ qui n'est pas situé au bord de la grille À chaque sommet de $b$ correspond un arc de $S$ vers $P$ de capacité $\rho$. En faisant passer les sommets de cette composante connexe du côté de $P$, la valeur de la coupe diminue donc d'au moins $l \cdot \rho$. De plus, considérons le carré $C$ de taille $l\times l$ contenant $R$. La somme des capacités des arcs de $s$ vers $C$ est au plus $l \cdot \rho$. Ainsi, la valeur de la coupe augmente d'au plus $l \cdot \rho$. Par conséquent, la valeur de la coupe va diminuer.}\label{f:coupe-s-p-inférieure}
\end{figure}

Lorsque nous déplaçons les sommets de $C$ (qui sont du côté de la source $s$) du côté du puits $p$, ces arcs ne sont plus comptabilisés dans la valeur de la nouvelle coupe ${\cal C'}$, alors qu'ils y contribuaient à hauteur de $l \cdot \rho$. Par conséquent, la somme totale des capacités des arcs qui reliaient $S$ à $P$ et qui ne sont plus comptabilisés est supérieure à $l \cdot \rho$.

Comme $C$ est une composante connexe de $S$, les seuls arcs de $S-C$ vers $C$ sont ceux provenant de $s$. La propriété (4) de la Définition~\ref{d:flot-super-tuile} des graphes de super-tuiles nous permet de garantir que la somme des capacités des arcs de $s$ vers un des carrés de taille $l\times l$ contenant $R$ n'est pas plus grande que $l \cdot \rho$. À fortiori, la somme des capacités des arcs de $s$ vers $R$, et donc celle des capacités des arcs de $s$ vers $C$, est également majorée par $l \cdot \rho$. Lorsque nous déplaçons les sommets de $C$ du côté de $p$, la somme des capacités des nouveaux arcs de $S$ vers $P$ n'est donc pas plus grande que $l \cdot \rho$.

En réunissant ces deux résultats, nous obtenons que la valeur de la nouvelle coupe est inférieure ou égale à l'ancienne : $|(S - C,P \cup C)| \leq |(S,P)|$.

Comme la propriété (4) de la Définition~\ref{d:flot-super-tuile} garantit également que la somme des capacités des arcs d'un carré de taille $l \times l$ vers $p$ n'est pas plus grande que $l \cdot \rho$, nous pouvons tenir le même raisonnement dans le cas d'une composante connexe $C$ de $P$ dont le rectangle associé n'est pas la grille, pour obtenir $|(S \cup C,P - C)| \leq |(S,P)|$.
\end{proof}

Nous pouvons maintenant utiliser les Faits~\ref{c:flot-rectangle} et \ref{c:flot-connexe} précédents pour prouver la Proposition~\ref{p:flot-coupe-minimale} suivante :

\begin{proposition}\label{p:flot-coupe-minimale}
Les coupes s-p avec d'un côté le singleton $\{s\}$ et de l'autre $G-\{s\}$, ou avec d'un côté $G-\{p\}$ et de l'autre le singleton $\{p\}$ sont toutes deux des coupes s-p minimales de valeur $F$. 
\end{proposition}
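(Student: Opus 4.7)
Le plan est de montrer d'une part que les deux coupes triviales sont bien de valeur $F$, et d'autre part que toute autre coupe s-p a une valeur au moins égale à $F$. Cela suffira à conclure.

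Commençons par l'évaluation des deux coupes triviales. La coupe $(\{s\}, G-\{s\})$ a pour valeur la somme des capacités des arcs sortant de $s$, qui est exactement $F$ par la propriété~(2) de la Définition~\ref{d:flot-super-tuile}. De manière duale, la coupe $(G-\{p\}, \{p\})$ a pour valeur la somme des capacités des arcs entrant dans $p$, soit $F$ arcs de capacité 1 chacun par la propriété~(3), donc également $F$. Les deux coupes triviales ont donc la même valeur $F$.

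Le point central est de montrer que toute coupe s-p $(S,P)$ vérifie $|(S,P)| \geq F$. Pour cela, nous considérons les composantes connexes de $S$ dans la grille (i.e.\ autres que le singleton $\{s\}$) et celles de $P$ dans la grille (autres que $\{p\}$). D'après le Fait~\ref{c:flot-rectangle}, il ne peut pas y avoir à la fois une composante de $S$ et une composante de $P$ dont le rectangle associé coïncide avec la grille entière. Supposons par exemple qu'aucune composante de $S$ n'ait son rectangle associé égal à la grille (le cas symétrique se traitera de la même manière). On peut alors appliquer itérativement le Fait~\ref{c:flot-connexe} : à chaque étape on choisit une composante $C$ de $S$ autre que $\{s\}$, et comme son rectangle associé n'est pas la grille entière, on peut la basculer du côté de $P$ sans augmenter la valeur de la coupe. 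Il faut vérifier au passage que cette propriété de ``non-recouvrement de la grille'' se conserve au cours de ces opérations pour les composantes de $S$ qui subsistent : c'est le seul point un peu délicat, mais il résulte directement du fait que basculer une composante de $S$ vers $P$ ne peut que réduire l'union des composantes de $S$, donc ne peut pas faire apparaître de nouvelle composante dont le rectangle associé serait plus grand. Après un nombre fini d'étapes, on aboutit à la coupe $(\{s\}, G-\{s\})$, de valeur $F$, et on a $|(S,P)| \geq F$. Dans le cas symétrique, on bascule de même les composantes de $P$ vers $S$ pour aboutir à $(G-\{p\}, \{p\})$, toujours de valeur $F$.

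L'obstacle principal attendu est la formalisation propre de cette procédure itérative, en particulier s'assurer que le raisonnement de ``basculement'' préserve bien la structure de coupe s-p et que la condition du Fait~\ref{c:flot-connexe} reste satisfaite à chaque itération. Une fois ceci acquis, la Proposition~\ref{p:flot-coupe-minimale} en découle immédiatement : les deux coupes triviales ont la valeur minimale possible parmi toutes les coupes s-p, à savoir $F$. On peut remarquer que ce résultat, combiné au Théorème~\ref{th:max-min}, donnera directement la Proposition~\ref{p:flot-max} ainsi que le Corollaire~\ref{cor:flot-chemin}.
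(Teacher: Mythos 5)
Votre démonstration est correcte et suit essentiellement la même démarche que celle du manuscrit : on utilise le Fait~\ref{c:flot-rectangle} pour se ramener au cas où aucune composante connexe de $S$ (ou, symétriquement, de $P$) n'a la grille entière pour rectangle associé, puis on applique itérativement le Fait~\ref{c:flot-connexe} pour basculer ces composantes une à une sans augmenter la valeur de la coupe, jusqu'à aboutir à l'une des deux coupes triviales de valeur $F$. Votre remarque sur la préservation de la condition au fil des basculements est juste (et même immédiate, puisque retirer une composante connexe entière de $S$ laisse les autres composantes inchangées).
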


\begin{proof}
Soit $G$ un graphe d'une super-tuile, ${\cal C}=(S,P)$ une coupe s-p de $G$. 

Si les rectangles associés aux composantes connexes de $S$ sont tous différents de la grille, nous appliquons le Fait~\ref{c:flot-connexe} sur chaque composante connexe de $S$ l'une après l'autre, pour obtenir une série de coupes s-p de valeurs (non strictement) décroissantes. On obtient finalement une coupe s-p avec le singleton $\{s\}$ d'un côté, et le reste des sommets de $G$ de l'autre : la coupe est de valeur $F$.

Sinon, le Fait~\ref{c:flot-rectangle} nous garantit que les rectangles associés aux composantes connexes de $P$ sont tous différents de la grille. Dans ce cas nous appliquons là aussi le Fait~\ref{c:flot-connexe} sur chaque composante connexe de $P$ l'une après l'autre ; nous obtenons également une suite de coupes s-p de valeurs (non strictement) décroissantes, et finalement une coupe s-p avec d'un côté $G-\{p\}$ et de l'autre le singleton $\{p\}$, de valeur $F$.

Dans les deux cas, nous avons montré que la valeur de la coupe ${\cal C}$ n'est pas plus petite que $F$. Donc $F$ est la valeur des coupes s-p minimales.
\end{proof}

Nous pouvons maintenant prouver la Proposition~\ref{p:flot-max} :
\begin{proof}
Il suffit d'appliquer le Théorème~\ref{th:max-min} au résultat de la Proposition~\ref{p:flot-coupe-minimale}.
\end{proof}

En conclusion nous présentons une reformulation alternative du résultat démontré. Puisque les capacités du graphe d'une super-tuile sont des entiers naturels, nous pouvons substituer chaque arc de capacité $\rho$ par $\rho$ arcs parallèles de capacité $1$. Alors notre résultat principal peut être reformulé de la manière suivante : il existe $F$ chemins élémentaires dont les arcs sont disjoints (deux chemins peuvent utiliser des arcs parallèles reliant la même paire de sommets, mais ils ne peuvent avoir en commun un même arc). En d'autres termes, nous pouvons acheminer $F$ ``commodités'' de valeur $1$ de $s$ à $p$ par des chemins disjoints, et ils arrivent à destination sans être entrés en collision l'un avec l'autre.
Ce résultat est essentiellement le théorème de Menger sur les multigraphes orientés appliqué au graphe d'une super-tuile, voir Théorème~7.3.1 in \cite{menger}.

Nous pouvons également remarquer qu'au lieu d'avoir un unique sommet source $s$ et un unique sommet puits $p$, nous pouvons avoir plusieurs sources et plusieurs puits dans la grille $N\times N$ et distribuer les $F$ unités ``produites'' entre les sommets sources et les $F$ unités ``consommées'' entre les sommets puits. Dans cette formulation, notre résultat principal garantit que nous pouvons trouver des chemins orientés transportant les $F$ unités ``produites'' jusqu'aux sommets où elles sont ``consommées''. Il est crucial que tous les sommets puits soient indiscernables : nous sommes autorisés à transporter n'importe quelle unité provenant de n'importe quel sommet source jusqu'à n'importe quel sommet puits disponible.

\section{Utilisation des différents outils pour prouver le Théorème~\ref{th:sparse1} et le Théorème~\ref{th:sparse2}}\label{s:preuve}

\subsection{Plan de la preuve du théorème}

Les preuves du Théorème~\ref{th:sparse1} et du Théorème~\ref{th:sparse2} que nous proposons dans cette partie sont très similaires, et sont basées sur la même construction. Nous les expliquerons en parallèle. Malheureusement, la construction est relativement complexe, du fait entre autre de l'utilisation de la technique des pavages à point-fixe. Le plan général est décrit ci-dessous. 

Pour tout shift $S_\varepsilon$ du Théorème~\ref{th:sparse1} ou $S_\varepsilon'$ du Théorème~\ref{th:sparse2} (l'alphabet de ces shifts est, par définition, constitué des deux lettres ``noire'' et ``blanche''), nous construisons un shift de type fini à deux dimensions $S$. Nous préférons décrire $S$ comme un jeu de tuiles $\tau$ ; voir page~\ref{p:shift-tuiles} pour l'équivalence entre shifts de type fini et jeux de tuiles. Le nombre de tuiles de $\tau$ (l'alphabet de $S$) est très important (bien plus grand que deux, la taille de l'alphabet de $S_\varepsilon$ et $S_\varepsilon'$). Toutes les tuiles de $\tau$ sont divisées en deux classes, les tuiles ``noires'' et les tuiles ``blanches'' (i.e., $\tau = \tau_{\text{noir}} \sqcup \tau_{\text{blanc}}$). Le but est de construire $\tau$ tel que la projection naturelle des pavages de $\tau$ (quand nous gardons pour chaque tuile uniquement sa ``couleur'', qui peut être noire ou blanche) donne respectivement exactement le shift $S_\varepsilon$ ou exactement le shift $S_\varepsilon'$. En d'autres termes, quand nous prenons l'ensemble des pavages de $\tau$, et que pour chaque tuile nous ne gardons que sa couleur (noire ou blanche), nous devons obtenir l'ensemble des configurations du shift $S_\varepsilon$ ou du shift $S_\varepsilon'$, et seulement celles-ci.

Pour cela, nous partons de la construction d'un jeu de tuiles $\rho$ auto-similaire à grossissement variable (voir Sous-Partie~\ref{ss:g-variable}), où chaque pavage du plan de $\rho$ a une structure hiérarchique : une configuration peut être subdivisée d'une manière unique en super-tuiles de taille $n_1\times n_1$, ces super-tuiles pouvant à leur tour être regroupées (de nouveau, de manière unique) en blocs de taille $n_2\times n_2$  (qui sont de taille $(n_1\cdot n_2 \times n_1\cdot n_2)$ si on les mesure en nombre de tuiles individuelles), et ainsi de suite. Au $k$-ème niveau de la hiérarchie nous avons des super-tuiles de rang $k$ qui consistent en $n_k\times n_k$ super-tuiles de rang $(k-1)$, qui sont des carrés de taille $(n_1\cdot \ldots \cdot n_k) \times (n_1 \cdot \ldots \cdot n_k)$ mesurés en tuiles individuelles. Nous supposons que $N_k= 2^{C^k}$ (le choix de $C$ sera explicité à la fin de la Sous-Partie~\ref{ss:construction-shift}).

De plus, chaque super-tuile de $\rho$ (de tout rang) possède une zone de calcul divisée horizontalement en trois zones ; dans la zone du bas une machine de Turing initialise les données, dans la zone intermédiaire une machine de Turing universelle $U$ garantit la structure hiérarchique du pavage, et dans la zone du haut s'exécute un automate cellulaire non-déterministe à une dimension dont le comportement sera explicité dans cette partie. 

Pour qu'une super-tuile $M$ de rang $k$ puisse connaître la liste des points noirs qu'elle contient, il est nécessaire que, d'une manière ou d'une autre, cette information connue par les $n_k^2$ super-tuiles filles de $M$ soit acheminée jusqu'aux données d'entrée de la zone de calcul de $M$. Cela ne peut pas être fait de manière naïve (toutes les super-tuiles filles de $M$ partageraient cette information) : le nombre de bits nécessaires pour encoder cette information est supérieur à la taille (en nombre de super-tuiles de taille $k-2$) des super-tuiles filles de $M$ de rang $k-1$. Chaque super-tuile fille transmet seulement quelques points noirs présents dans $M$ à ses voisines. Un résultat sur les flots, démontré dans la Partie~\ref{s:flots}, nous garantit alors que la totalité des points noirs présents dans les filles de $M$ sont bien acheminés jusqu'aux données d'entrée de la zone de calcul de $M$.   

\begin{figure}[H]
	\center
	\includegraphics[scale=0.18]{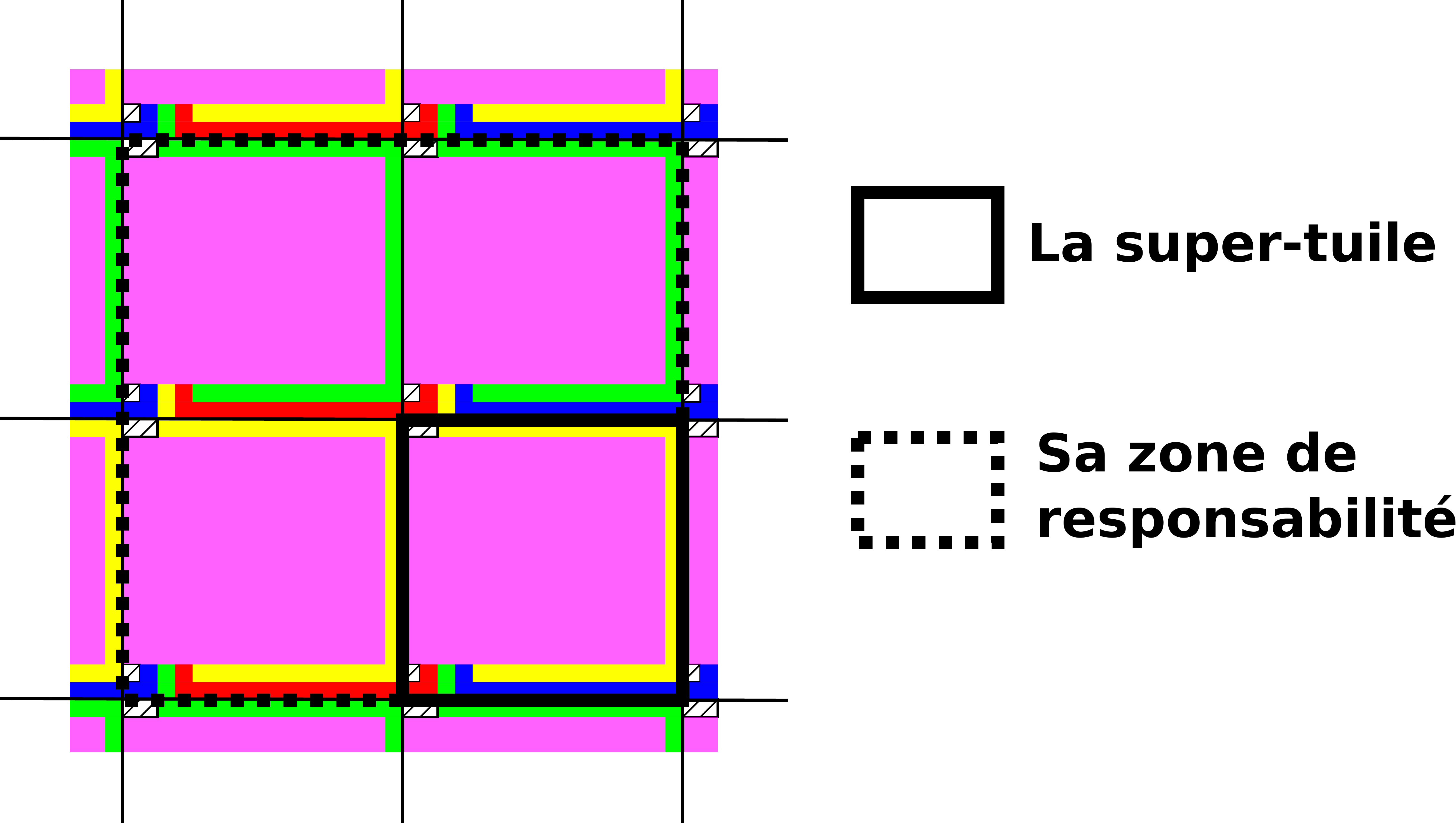}
	\caption{Une super-tuile et sa zone de responsabilité.}\label{f:tuile-zone-responsabilité}
\end{figure}
  
Nous disons que chaque super-tuile de rang  $k$ (qui est un carré de taille $N_k\times N_k$) possède sa ``zone de responsabilité'', qui est un carré de taille $(2N_k) \times (2N_k)$. Cette zone inclut la super-tuile elle-même, sa super-tuile voisine de même rang de gauche, sa super-tuile voisine de même rang du haut et une super-tuile de même niveau supplémentaire, qui est diagonalement en haut à gauche, voir  Fig.~\ref{f:tuile-zone-responsabilité}. Parmi d'autres vérifications effectuées dans la zone de calcul, nous allons y simuler un algorithme énumérant les motifs interdits (nous en énumérons un nombre fini, étant donné que le temps et l'espace disponibles pour effectuer ces calculs sont bornés), et vérifier que la ``zone de responsabilité'' ne contienne pas de motif interdit parmi l'ensemble des motifs interdits trouvés.
 
La taille de la zone de calcul augmente quand le rang $k$ de la super-tuile croît. Ainsi, chaque motif interdit est énuméré par les calculs effectués dans les super-tuiles de rangs assez grands. Comme chaque motif de la configuration est vérifié par des super-tuiles de rangs arbitrairement larges (i.e. le motif appartient à leur zone de responsabilité), nous pouvons garantir que toutes les configurations valides ne contiennent aucun motif interdit. Leur projection appartient donc respectivement au  shift $S_\varepsilon$ et $S_\varepsilon'$.

Réciproquement, à une configuration ${\cal C}$ de $S_\varepsilon$ ou de $S_\varepsilon'$, nous pouvons associer un pavage de $\tau$ qui, une fois projeté, nous donne cette configuration ${\cal C}$. 

\subsection{Construction d'un shift de type fini}\label{ss:construction-shift}

\paragraph{Hiérarchie des motifs interdits.} Dans le Théorème~\ref{th:sparse1} nous nous intéressons au shift $S_\varepsilon$ où les motifs interdits sont des carrés de taille $n\times n$ avec plus de $n^\epsilon$ points noirs. Comme $\epsilon$ est un nombre calculable par en haut, ces motifs interdits peuvent facilement être énumérés. Fixons un algorithme ${\cal A}$ qui énumère ces motifs un à la fois. Il doit calculer des approximations de $\epsilon$ qui convergent vers ce dernier, et, en accord avec celles-ci, énumérer des motifs carrés de tailles arbitrairement grandes qui contiennent trop de points noirs (le reste des points de ces carrés étant de couleur blanche). L'énumération des motifs interdits est infinie, l'algorithme ${\cal A}$ ne s'arrête jamais.

Dans notre preuve ci-dessous, nous aurons besoin d'une procédure lente et graduelle énumérant les motifs interdits. Pour cela, nous fixons une séquence de nombres $\ell_k$ qui croît extrêmement lentement. Dans un souci de précision, nous fixons $\ell_k = \log k$, bien que ce choix particulier de $\ell_k$ n'a que peu d'importance. Nous disons qu'un motif est de rang $k$, si l'algorithme fixé ${\cal A}$ produit ce motif en au plus $\ell_k$ étapes de son calcul infini (le motif fait donc partie des $\ell_k$ premiers motifs produits par $A$), s'il contient au plus $\ell_k$ points noirs, et que la taille de ce motif est au plus $\ell_k\times \ell_k$. Clairement, pour tout motif interdit $P$ produit par ${\cal A}$, il existe un entier $k_0$ tel que $P$ est un \emph{motif interdit de rang $k$} pour tout $k\geq k_0$. \label{ell-forbidden-patterns}

De manière similaire, pour les shifts $S_\varepsilon'$ du Théorème~\ref{th:sparse2} (qui sont des sous-shifts effectifs de $S_\varepsilon$), nous pouvons fixer un algorithme énumérant les motifs interdits de $S_\varepsilon'$ et imposer une hiérarchie de motifs interdits telle que chaque motif de rang $k$ est obtenu en au plus $\ell_k$ étapes de calcul, ne contient au plus que $\ell_k$ points noirs et est de taille au plus $\ell_k\times \ell_k$ (pour tout motif interdit $P$ produit par $A$, il existe un entier $k_0$ tel que ce motif $P$ est un motif interdit de rang $k$ pour tout $k\ge k_0$).

\paragraph{Calculs effectués dans les super-tuiles : rappel de la construction de la Partie~\ref{s:shift-auto-similaire}.}
Rappelons les caractéristiques principales garanties par la construction auto-référentielle décrite dans la Sous-Partie~\ref{ss:g-variable}.
Chaque pavage du plan valide pour notre jeu de tuiles a une structure hiérarchique : une configuration peut être divisée (de façon unique) en super-tuiles de taille $n_1\times n_1$, ces super-tuiles peuvent à leur tour être regroupées (de nouveau, de manière unique) en blocs de taille $n_2\times n_2$  (qui sont de taille $(n_1\cdot n_2 \times n_1\cdot n_2)$ s'ils sont mesurés en tuiles individuelles), et ainsi de suite. Au $k$-ème rang de cette hiérarchie, nous avons des super-tuiles de rang $k$ formées de  $n_k\times n_k$ super-tuiles de rang $(k-1)$, ce qui représente un carré de taille $(n_1\cdot \ldots \cdot n_k) \times (n_1 \cdot \ldots \cdot n_k)$ en tuiles individuelles. 

Chaque super-tuile de rang $k$ peut jouer plusieurs rôles dans sa super-tuile mère de rang $k+1$ à laquelle elle appartient : celui d'une cellule des ``câbles de communication'' (en contribuant éventuellement à une super-couleur), celui d'une cellule dans la ``zone de calcul'' ou le rôle de bloc de construction, voir Fig.~\ref{f:tuile-complète}.

Gardons également à l'esprit que la zone de calcul d'une super-tuile $T$ est découpée horizontalement en trois zones. La première zone, en bas, effectue un tri sur les données d'entrée, en regroupant les parties des données d'entrée garantissant la structure hiérarchique des pavages (à savoir le code d'un programme $\pi$, la représentation binaire du rang $k$ et l'information présente dans les 4 super-couleurs, indiquant pour chacune des coordonnées et des bits d'information supplémentaires) ; elle correspond au diagramme espace-temps d'une machine de Turing $I$ à deux têtes de lecture. La zone centrale exploite alors ces données, et garantit la structure hiérarchique ; elle correspond au diagramme espace-temps d'une machine de Turing universelle $U$, simulant le programme $\pi$. Enfin, la partie haute correspond aux vérifications garantissant que les pavages ont la propriété souhaitée : i.e., que leur projection appartient respectivement aux shifts $S_\varepsilon$ et $S_\varepsilon'$, et que toute configuration de ces shifts peut être obtenue par une telle projection.

Regardons plus en détail les données d'entrée d'une machine $\pi$ présente dans une super-tuile $T$ de rang $k$. D'après la Partie~\ref{s:shift-auto-similaire}, elles sont constituées de trois champs : (i) le code de $\pi$ pour la machine $U$, (ii) la représentation binaire du rang $k$ et (iii) les parties des 4 super-couleurs nécessaires pour garantir la structure hiérarchique. Pour chacune des 4 parties du champ (iii), nous avons un ``sous-champ'' encodant des coordonnées (soit $O(\log n_{k+1})$ bits) et un sous-champ encodant des bits supplémentaires (soit $O(1)$ bits). Nous avons vu que ces champs et sous-champs sont encodés de manière à ce qu'une machine de Turing soit capable de déterminer où commence et où finit chacun d'eux.
Il est naturel de vouloir considérer d'une part les sous-champs encodant des coordonnées, et d'autre part les sous-champs encodant les bits supplémentaires. Nous pouvons alors dire que la machine universelle $U$ reçoit en entrée les données suivantes, que nous appellerons Champ~1, $\ldots$, Champ~4 :

\begin{enumerate}
\item[1] le code de la machine de Turing $\pi$ pour $U$ ;
\item[2] la représentation binaire de l'entier $k$ (le rang de la super-tuile dans la hiérarchie)~;
\item[3] les coordonnées (la représentation binaire de ces coordonnées) de 4 super-tuiles filles de rang $k$ à l'intérieur de sa super-tuile mère $M$ de rang $k+1$, qui sont deux nombres entiers de l'intervalle $[0, n_{k+1}-1]$. Il s'agit des coordonnées de la super-tuile $T$ et de ses voisines du bas, de droite et du haut ;
\item[4] les $O(1)$ bits supplémentaires, présents sur chacune des 4 super-couleurs, nécessaires pour vérifier que la super-tuile $T$ joue correctement son rôle vis-à-vis de ses 4 voisines. Si $T$ joue le rôle d'un bloc de construction, les 4 sous-champs sont vides. Si la super-tuile joue le rôle d'une cellule des câbles de communication, alors deux de ses sous-champs sont constitués d'un bit, représentant l'information transmise par le câble, et les deux autres sont vides. Si la super-tuile joue le rôle d'une cellule de la zone de calcul, alors les 4 sous-champs contiennent $O(1)$ bits (le nombre exact de bits dépend de si $T$ simule une cellule d'un diagramme espace-temps de $I$, de $U$ ou de $A$).
\end{enumerate}

Les vérifications effectuées par $U$ au niveau de la zone centrale de la zone de calcul de $T$ sont les suivantes : 
\begin{itemize}
	\item[(A)] les coordonnées de $T$ sont cohérentes avec celles de ses voisines (Champ~3). En particulier, un calcul à partir de $k$ (Champ~2) nous permet de savoir si $T$ est située au niveau d'un bord de sa super-tuile mère (les coordonnées étant modulo $n_{k+1}$)~;
	\item[(B)] les bits supplémentaires garantissant la structure hiérarchique sont cohérents entre eux (Champ~4). Pour cela, nous déterminons le rôle joué par $T$ au sein de sa super-tuile mère $M$ via un calcul à partir de ses coordonnées (Champ~3). Si $T$ n'est pas située au niveau des Champs~1 ou 2 de $M$, la vérification est immédiate. Sinon, la tâche est plus subtile : si $T$ est située dans le Champ~1 de $M$, $U$ doit récupérer un bit sur son propre ruban, situé dans le Champ~1 de $T$. Précisément, si $T$ correspond à la $j$-ème cellule du Champ~1 de $M$, il s'agit du bit encodé à la position $j$ dans le Champ~1 de $T$. La machine $U$ vérifie alors que ce bit est bien encodé dans le Champ~4 de $T$. Si $T$ appartient au Champ~2 de sa super-tuile mère, $U$ doit récupérer l'entier $k$ (soit le Champ~2 de $T$, qui est écrit sur son ruban), incrémenter $k$ de 1, puis vérifier que le bit de la représentation binaire de $k+1$ correspondant à la position de $T$ dans le Champ~2 de $M$ est bien encodé dans le Champ~4 de $T$.
\end{itemize}

Il est bien connu que beaucoup d'algorithmes standards (i.e., les opérations arithmétiques avec des nombres entiers encodés par leur représentation binaire) peuvent être effectués par une machine de Turing en temps polynomial. De manière plus générale, la thèse de Church-Turing polynomiale affirme que tout calcul effectué par un ordinateur en un temps polynomial peut également être effectué en temps polynomial par une machine de Turing.  

En outre, nous avons choisi $U$ de telle manière qu'un calcul polynomial soit simulé par $U$ également en temps polynomial (le polynôme peut être différent). Par conséquent, il n'est pas nécessaire de décrire plus précisément les algorithmes effectuant les tâches $A$ et $B$ en détail. Il suffit en effet de remarquer que les entrées sont de tailles logarithmiques (en $O(\log n_{i+1})$). Les calculs effectués par $U$ se terminent donc après un nombre d'étapes en $poly(n_{i+1}) \ll q$ pour $N$ assez grand ($q$ étant une fraction de $N$).

\paragraph{Les données d'entrée de l'automate cellulaire.}

Maintenant que nous avons rappelé les caractéristiques principales de la construction de la Sous-Partie~\ref{s:shift-auto-similaire}, décrivons les modifications à y apporter pour obtenir la construction finale. Tout d'abord, nous allons ajouter des informations au niveau des super-couleurs des super-tuiles, en utilisant les bits inutilisés que nous avions réservés dans la Partie~\ref{s:shift-auto-similaire}. Cette information se retrouvera au niveau des données d'entrée  de la zone de calcul. Les machines $I$ et $\pi$ ne seront pas modifiées : les vérifications garantissant que la projection du shift appartient respectivement à $S_\varepsilon$ ou $S_\varepsilon'$ seront effectuées par l'automate, dont la description conclura la preuve.

Intuitivement, chaque super-tuile ``échange'' avec ses 4 super-tuiles voisines un certain nombre de champs. Concrètement, pour deux super-tuiles voisines $T$ et $T'$, dans la super-couleur qu'elles partagent sont encodés à la fois les champs de la super-tuile $T$ et ceux de la super-tuile $T'$. Nous choisissons un encodage permettant à une machine de Turing de déterminer où commence et finit chaque champ, et pour les champs composés d'une liste où commence et finit chaque élément de la liste. Détaillons ces différents champs pour une super-tuile $T$ de rang $k$ ; nous les appellerons Champ~5, Champ~6, Champ~7 et Champ~8 :

\begin{enumerate}
\item[5] la liste de \emph{tous} les ``points noirs'' présents dans la super-tuile de rang $k$. Nous pouvons noter que la taille de la liste n'est pas plus grande que $(N_k) ^{\epsilon} = (n_1\cdot\ldots \cdot n_k)^\epsilon$ (chaque point est représenté par ses coordonnées, en partant du coin en bas à gauche de la super-tuile, i.e. par deux nombres de l'intervalle $[0,N_{k}-1]$) ; 

\item[6] la liste de \emph{quelques} ``points noirs'' de sa super-tuile mère. Chaque point de la liste est représenté par ses coordonnées (en partant du coin en bas à gauche de la super-tuile mère, soit par deux entiers de l'intervalle $[0, N_{k+1}-1]$) et par un entier compris entre 1 et 20, représentant une ``flèche'' (voir ci-dessous).
Nous pouvons noter qu'il n'y a pas assez de place dans la zone des données d'entrée de la super-tuile de rang $k$ pour représenter \emph{toutes} les positions des points noirs présents dans sa super-tuile mère de rang $(k+1)$. C'est pourquoi nous ne pouvons garder dans cette liste que \emph{quelques} points noirs de la super-tuile mère.
La taille de la liste doit être en $O(N_k^\epsilon)$, soit ne pas être plus grande (à une constante multiplicative près) que la taille de la liste des points noirs présents dans cette super-tuile (ce qui peut être beaucoup moins que le nombre total de points noirs présents dans la super-tuile mère). 
En outre, à chaque point noir de cette liste est associé une flèche. Intuitivement, chacun de ces points noirs provient d'une super-tuile initiale (il est présent dans les Champs~5 et 6 de cette super-tuile), transite de proche en proche à travers des super-tuiles voisines (il est présent dans le Champ~6 de celles-ci), et enfin arrive à une super-tuile finale (il est présent dans le Champ~6 de celle-ci, et les bits représentant ses coordonnées sont dans le Champ~7 de cette dernière). Dans la super-tuile initiale on associe au point dans le Champ~6 une flèche sortante vers la gauche, le haut, la droite ou le bas (soit 4 possibilités). Pour une super-tuile de transit, on associe au point une flèche allant d'un des 4 bords (le bord entrant) à un des 3 autres (le bord sortant), soit 12 possibilités. Enfin, pour une super-tuile finale on associe au point une flèche entrante (soit de nouveau 4 possibilités). Ainsi, à chaque point noir de la liste constituant un Champ~6 est associé une des 20 flèches possibles (un entier entre 1 et 20), indiquant la direction de circulation du point noir (voir Fig.~\ref{f:parcours-point}) ;

\begin{figure}[H]
	\center
	\includegraphics[scale=0.8]{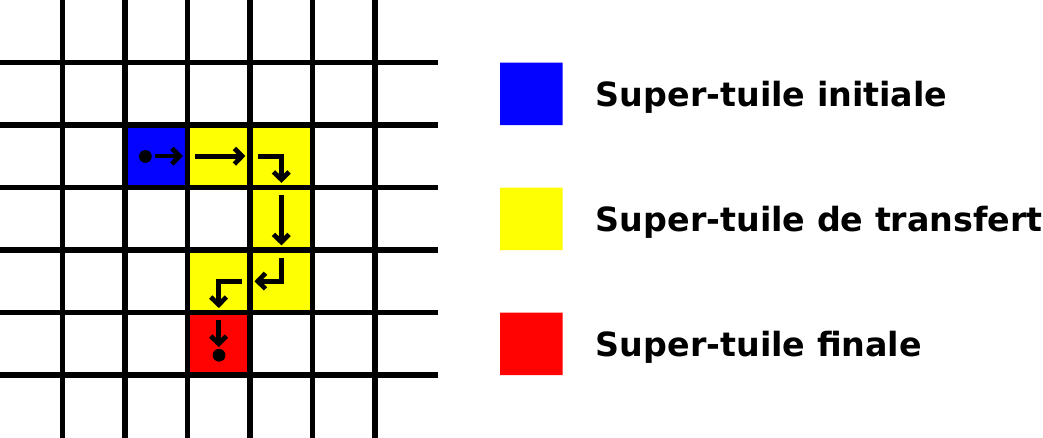}
	\caption[Chaîne de super-tuiles ayant un point $p$ présent dans leur Champs~6.]{Chaîne de super-tuiles ayant un point $p$ présent dans leur Champs~6. Les flèches associées au point $p$ dans chacun des Champs~6 sont indiqués.}\label{f:parcours-point}
\end{figure}

\item[7] éventuellement, une partie des bits encodant les coordonnées d'\emph{un} point noir de la super-tuile mère. En effet, une super-tuile $T$ de rang $k$ ne peut encoder (au niveau de son Champ~4) qu'un seul bit du Champ~5 de sa super-tuile mère. Ainsi, le premier bit des coordonnées d'un point noir d'une super-tuile finale sera encodé par $T$, et le reste des bits sera communiqué à la super-tuile voisine de droite (qui encodera le second bit des coordonnées du point noir dans son Champ~4 et transmettra les bits restants à sa propre voisine de droite), et ainsi de suite. Ce champ représente donc un nombre de bits compris dans l'intervalle $[0,2 \lceil \log(N_{k+1}-1) \rceil]$ ;

\item[8]  la liste de tous les points noirs de la \emph{zone de responsabilité} de la super-tuile, i.e. ceux présents dans la super-tuile, dans sa voisine de gauche, dans celle du dessus et dans celle au dessus de la voisine de gauche, soit un bloc de taille $(2N_k)\times (2N_k)$. Le nombre de points noirs dans la liste est donc au plus $(2N_k)^\epsilon$ . Ils sont représentés par leurs coordonnées dans ce bloc, en partant du coin en bas à gauche de la super-tuile voisine de gauche, soit par deux entiers de l'intervalle $[0, 2N_k -1]$.
\end{enumerate}

\begin{remark}
Nous pouvons observer que de même que les Champs~1 et 2 (correspondant au code de la machine de Turing $\pi$ pour $U$, et à $k$ le rang de la super-tuile), et contrairement aux autres Champs, le Champ~5 d'une super-tuile de rang $k$ n'est pas obtenu via des données échangées avec des super-tuiles voisines de même rang, mais est déterminé par ses super-tuiles filles de rang $k-1$.
\end{remark}

Outre ces 4 champs, l'automate cellulaire doit également connaître le rang $k$ de la super-tuile $T$ (représenté dans le Champ~2), les coordonnées de $T$ dans sa super-tuile mère (représentées dans le Champ~3 de ses super-couleurs de gauche et du bas) et les bits supplémentaires présents dans le Champ~4.
Les données d'entrée de l'automate cellulaire $A$ sont donc constituées du Champ~2 et des super-couleurs de gauche, du bas, de droite et du haut (chaque super-couleur est constituée des Champs~3 et 4, des Champs~5 à 8 de la super-tuile, et des Champs~5 à 8 de la super-tuile voisine correspondante). Le Champ~1 (le code du programme $\pi$) n'est pas utilisé par l'automate cellulaire.

Par ailleurs, la taille de tous ces champs est en $O(N_k^{\epsilon}) \cdot \poly (\log N_{k+1})$.

Nous commençons par expliquer dans les grandes lignes les calculs effectués par l'automate, puis nous détaillerons ceux-ci. Enfin, dans un troisième temps nous verrons comment implémenter ceux-ci de manière suffisamment efficace avec un automate cellulaire.

\paragraph{Présentation générale des calculs effectués par l'automate cellulaire.}
Au niveau de la zone haute, représentant le diagramme espace-temps de l'automate cellulaire $A$, les vérifications suivantes ont lieu séquentiellement :
\begin{itemize}
	\item[(C)] les données envoyées par la super-tuile à ses 4 super-tuiles voisines sont identiques ;
	\item[(D)] les flux d'information sont propagés correctement :
	\begin{itemize}
		\item[(D1)] les points noirs de la super-tuile (son Champ~5) sont ajoutés à son flux (son Champ~6),
		\item[(D2)] il y a conservation du flux : le flux de la super-tuile est cohérent avec les flux de ses voisines (leur Champ~6),
		\item[(D3)] un point noir peut arriver à destination : le flux de la super-tuile (son Champ~6) est cohérent avec d'une part le bit encodé dans le Champ~4 de $T$, et d'autre part avec les bits des coordonnées du point restant à encoder (son Champ~7). Cela est possible si $T$ appartient au Champ~5 de sa super-tuile mère $M$. De plus, $T$ doit soit se situer au début du Champ~5 de $M$, soit que sa voisine de gauche encode le dernier bit des coordonnées d'un point noir de $M$ (i.e, la taille du Champ~7 de sa voisine de gauche est égale à 1) ;
	\end{itemize}	
	\item[(E)] un point noir arrivé à destination est correctement représenté dans le Champ~5 de sa super-tuile mère ; 
	\item[(F)] le Champ~8 est égal à la liste des points de la zone de responsabilité : la liste des points noirs contenus dans $T$ (Champ~5) et celle de sa zone de responsabilité (Champ~8) sont cohérentes avec les listes des super-tuiles voisines de $T$ ;
	\item[(G)] la liste des points noirs de la zone de responsabilité de la super-tuile de rang $k$ est correcte, dans le sens où elle ne contient pas de motif interdit (parmi ceux énumérés par les super-tuiles de rang $k$) : cette vérification a besoin des Champs~2 (représentant l'entier $k$) et 8 en entrée.
\end{itemize}
Dans ce qui suit nous discutons de ces vérifications plus en détail.

\paragraph{Propriété~C : les données envoyées sont identiques.} Cette vérification est directe. Dans les données d'entrée se trouvent encodées les 4 super-couleurs. Comme la super-tuile envoie ses Champs~5 à 8 à ses 4 super-tuiles voisines, cette information doit se trouver en quadruple au niveau des données d'entrée (une fois par super-couleur) : nous vérifions qu'il s'agit bien 4 fois des mêmes données. 

\paragraph{Propriété~D : les flux d'information sont correctement propagés.} Commençons par la propriété $(D1)$. Pour chaque point noir $p$ du Champ~5 d'une super-tuile $T$, nous vérifions qu'il apparaît une et une seule fois dans son Champ~6 (après avoir ajouté les coordonnées de la super-tuile $T$ vis-à-vis de sa super-tuile mère $M$ à celles de $p$, pour obtenir les coordonnées de $p$ vis-à-vis de $M$). 

Dans ce cas, la super-tuile $T$ est appelée la super-tuile \emph{initiale} de ce point, et une flèche sortante (on note le bord sortant $s$) est associée au point dans ce Champ~6. Le point $p$ doit apparaître une et une seule fois dans le Champ~6 d'exactement une super-tuile $V$, sœur et voisine de $T$ ($T$ et $V$ partagent la même super-tuile mère). Les flèches doivent être ``cohérentes'' : si la flèche associée à $p$ dans le Champ~6 de $T$ ``sort'' par un bord $b$, alors $T$ et $V$ doivent être en contact au niveau de $b$, et la flèche associée au point $p$ du Champ~6 de $V$ doit ``entrer'' par $b$. 

Pour chaque autre point $p'$ dans la liste du Champ~6 de la super-tuile $T$, nous vérifions que : 
\begin{itemize}
\item soit ``il passe à travers lui''. Cela correspond à la propriété $(D2)$. Dans ce cas, le même point $p'$ doit apparaître dans les Champs~6 d'exactement deux super-tuiles sœurs et voisines de $T$ (une et une seule fois dans chaque champ). La super-tuile $T$ est dans ce cas une super-tuile de \emph{transit} pour ce point. 

Là aussi, les flèches doivent être cohérentes. La flèche associée à $p'$ dans le Champ~6 de $T$ entre par un des 4 bords (on note ce bord entrant $b_1$), et sort par un des 3 autres bords (on note ce bord sortant $b_2$). Une des deux super-tuiles voisines, appelée $V_1$, doit être en contact avec $T$ au niveau du bord $b_1$ de $T$ ; dans $V_1$, la flèche associée à $p'$ doit avoir $b_1$ comme bord sortant ($V_1$ peut donc être une super-tuile initiale ou une super-tuile de transit). La seconde de ces super-tuiles voisines de $T$, appelée $V_2$, doit être en contact avec $T$ au niveau du bord $b_2$ de $T$ ; dans $V_2$, la flèche associée au point $p'$ doit avoir $b_2$ comme bord entrant ($V_2$ peut donc être une super-tuile de transit ou une super-tuile ``finale'', voir ci-dessous). 

\item soit ``il est arrivé à destination'', ce qui correspond au cas $(D3)$. Dans ce cas, le même point doit apparaître une et une seule fois dans le Champ~6 d'exactement une super-tuile $V$ sœur et voisine de $T$ ; la super-tuile $T$ étant appelée la super-tuile finale de $p'$. La flèche associée à $p'$ dans la super-tuile $T$ est alors une flèche entrante depuis un des 4 bords $b$, et les flèches doivent être cohérentes ($V$ et $T$ sont en contact au niveau de $b$, et la flèche associée au point $p'$ dans le Champ~6 de $V$ doit sortir par $b$).
Ce cas est possible uniquement si cette super-tuile de rang $k$ joue le rôle d'une cellule du Champ~5 des données d'entrée de sa super-tuile mère $M$ de rang $k+1$. Nous devons être plus précis ici. En effet, le Champ~5 de $M$ est représenté 4 fois dans les données d'entrée (une fois par super-couleur), et il faut décider dans lequel de ceux-ci pourront ``arriver à destination'' les points noirs. Nous pouvons par exemple choisir le Champ~5 correspondant à la super-couleur de gauche. Nous ne mettons pas de contrainte supplémentaire sur les Champs~5 correspondant aux trois autres super-couleurs ; en effet, la Propriété~C nous garantit qu'ils seront égaux au Champ~5 de la super-couleur de gauche, contenant les points noirs des super-tuiles filles de $M$, et partant les points noirs de $M$.
\end{itemize}

\paragraph{Propriété~E : les coordonnées d'un point noir ``arrivé à destination'' au niveau d'une super-tuile de rang $k$ sont correctement représentées dans sa super-tuile mère.} 
Soit $p$ un point noir arrivé à destination au niveau d'une super-tuile $T$. En pratique, $p$ est représenté de manière standard par ses coordonnées vis-à-vis de $M$, la super-tuile mère de $T$. Cette représentation consiste ainsi en une paire de nombres entiers de l'intervalle $[0,N_{k+1}-1]$, autrement dit par $m=O(\log N_{k+1})$ bits (là encore, une machine de Turing parcourant l'élément doit pouvoir déterminer là où celui-ci commence et finit). Ce point noir $p$ doit donc être représenté, au niveau du Champ~5 de $M$ de rang $k+1$, par $m$ super-tuiles filles de $M$ de rang $k$ (chacune d'elles codant un bit). Celles-ci doivent être disposées les unes à la suite des autres, la première étant $T$, la super-tuile finale de $p$. En pratique, ces bits sont encodés dans les Champs~4 de ces $m$ super-tuiles.
Pour mener à bien cette représentation de $p$ dans la super-tuile mère $M$, il est nécessaire que les $m$ super-tuiles filles aient connaissance des bits représentant $p$ (ou du moins des bits de $p$ non encore encodés). Nous utilisons pour cela le Champ~7 des données d'entrée. Le Champ~7 de $T$, la super-tuile finale de $p$, est égal à la totalité des coordonnées du point $p$. Considérons $V$ la super-tuile voisine \emph{de droite} de $T$. Alors $V$ ``connaît'' le Champ~7 de $T$, via la super-couleur qu'elles partagent ; $V$ vérifie alors que son propre Champ~7 est égal à celui de $T$, une fois ôté le premier bit du champ. La super-tuile $V$ peut alors vérifier que le bit encodé dans son Champ~4 correspond au premier bit de son Champ~7. De manière générale, soit $T'$ une super-tuile située dans le Champ~5 de sa super-tuile mère qui n'est la super-tuile finale d'aucun point, et soit $V$ sa voisine de \emph{gauche}. Si le Champ~7 de $V$ est vide, alors celui de $T$ l'est aussi. Sinon le Champ~7 de $T'$ est égal au Champ~7 de $V$ une fois ôté le premier bit. Dans ce cas, si le Champ~7 de $T'$ est non vide, son Champ~4 encode le premier bit de son Champ~7.

Il nous reste à ajouter une contrainte pour qu'une super-tuile $T$ puisse être finale pour un point $p$. Nous avons vu que $T$ devait être située dans le Champ~5 de sa super-tuile mère $M$. Alors :
\begin{itemize}
	\item soit la super-tuile $T$ est située au début du Champ~5 de $M$ ;
	\item soit la super-tuile voisine de gauche $V$ de $T$ encode le dernier bit d'un autre point noir $p'$. Dans ce cas-là le Champ~7 de $V$ ne contient qu'un seul bit.
\end{itemize}
Cette contrainte supplémentaire nous permet de nous assurer que d'une part une super-tuile ne doit encoder qu'un seul bit du Champ~5 de sa super-tuile mère $M$, et que d'autre part dans le Champ~5 de $M$ les représentations des points noirs de $M$ sont les unes à la suite des autres (sans ``trou'', voir Fig.\ref{f:parcours-points}).

\begin{figure}[H]
	\center
	\includegraphics[scale=0.6]{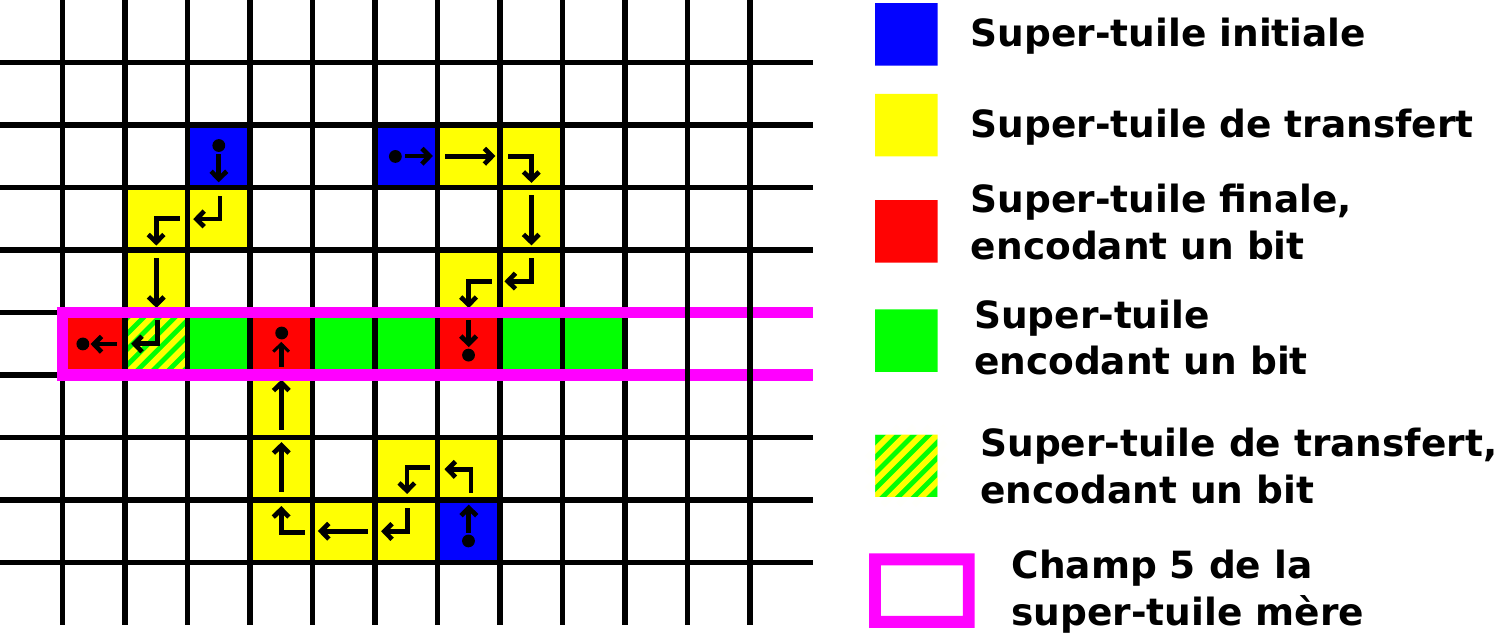}
	\caption[Encodage des coordonnées de points dans le Champ~5 de la super-tuile mère.]{Trois chaînes de super-tuiles ayant chacune un même point présent dans leur Champs~6. Chacun des trois bits des coordonnées de ces points est encodé au niveau du Champ~5 de la super-tuile mère (les premiers bits sont encodés au niveau des carrés rouges, les deuxièmes et troisièmes bits au niveau des carrés verts).}\label{f:parcours-points}
\end{figure}

Les contraintes expliquées ci-dessus garantissent que dans un pavage valide, il y ait un ``flux d'information'' qui passe par chaque super-tuile de rang $k$ (éventuellement, pour certaines super-tuiles ne contenant pas de point noir, ce flux peut être nul). Ce flux consiste en ``points noirs'' (leurs coordonnées par rapport à la super-tuile mère de rang $k+1$). Chaque point commence son trajet au niveau de la super-tuile de niveau $k$ à laquelle il appartient (celle-ci joue le rôle de super-tuile initiale pour ce point ; c'est là que l'information du Champ~5 est transférée au Champ~6). Ensuite, l'information est transférée à travers une suite de super-tuiles voisines (le cas où ``il passe à travers'', ces super-tuiles jouant le rôle de super-tuiles de transit pour ce point) jusqu'au moment où ``il est arrivé à destination'', qui est une des cellules des données d'entrée de la zone de calcul de sa super-tuile mère, jouant le rôle de super-tuile finale pour ce point (c'est là que les coordonnées du point passent du Champ~6 au Champ~7). Une chaîne de super-tuiles voisines, commençant au niveau de la super-tuile finale et allant vers la droite, encode alors ce point noir dans le Champ~5 de la super-tuile mère.

Ces propriétés sur ces ``flux d'information'' sont suffisantes pour avoir le résultat suivant :

\begin{claim}\label{claim:info-flow}
Dans un pavage valide, tous les points noirs du Champ~5 d'une super-tuile de niveau $k$ apparaissent également dans le Champ~5 de sa super-tuile mère de rang $(k+1)$. Réciproquement, tous les points noirs apparaissant dans le Champ~5 d'une super-tuile de rang $(k+1)$ doivent appartenir au Champ~5 d'une de ses super-tuiles filles de rang $k$.
\end{claim}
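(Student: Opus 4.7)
Le plan de la preuve consiste à tracer, pour chaque point noir concerné, sa "trajectoire" le long des flèches associées à ce point dans les Champs~6 des super-tuiles filles, en s'appuyant sur les propriétés $(D1)$, $(D2)$, $(D3)$ et $(E)$ décrites ci-dessus. L'intuition est qu'à chaque point noir est associée une unique trajectoire simple et finie, partant d'une super-tuile dite "initiale" (où le point apparaît dans son Champ~5) et aboutissant à une super-tuile "finale" (à partir de laquelle le point est encodé dans le Champ~5 de la super-tuile mère).

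Pour le sens direct, nous partirons d'un point noir $p$ dans le Champ~5 d'une super-tuile fille $T_0$ de rang $k$. La propriété $(D1)$ garantira que $p$ apparaît dans le Champ~6 de $T_0$ avec une flèche sortante par un certain bord. Nous construirons alors par récurrence une suite $T_0, T_1, T_2, \ldots$ de super-tuiles filles sœurs dans la super-tuile mère $M$, en prenant pour $T_{i+1}$ la voisine de $T_i$ indiquée par la flèche sortante associée à $p$ dans le Champ~6 de $T_i$. La propriété $(D2)$ garantira que tant que $T_{i+1}$ est une super-tuile de transit, une unique flèche sortante associée à $p$ est définie, ce qui permet de poursuivre la récurrence. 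La propriété $(D3)$ précisera les conditions d'arrêt~: la suite se termine en une super-tuile finale $T_n$. Il restera à invoquer la propriété $(E)$ pour conclure que $p$ est effectivement encodé dans le Champ~5 de la super-tuile mère $M$, à partir de $T_n$.

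Pour le sens réciproque, nous partirons d'un point noir $p$ apparaissant dans le Champ~5 de la super-tuile mère $M$ de rang $k+1$. La structure d'encodage précisée par $(E)$ impliquera l'existence d'une super-tuile fille $T_n$, finale pour $p$, dont le Champ~7 contient la représentation complète des coordonnées de $p$. La propriété $(D3)$ placera alors $p$ dans le Champ~6 de $T_n$ avec une flèche entrante. Nous remonterons ensuite la trajectoire à rebours via $(D2)$, construisant une suite $T_n, T_{n-1}, \ldots$ jusqu'à aboutir à une super-tuile initiale $T_0$, où $(D1)$ fournira l'appartenance de $p$ au Champ~5 de $T_0$, laquelle est bien une super-tuile fille de $M$.

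Le principal obstacle sera de justifier rigoureusement que les trajectoires construites dans les deux sens sont simples (sans cycle) et donc de longueur finie. L'argument central reposera sur les conditions d'unicité des propriétés $(D1)$--$(D3)$~: chaque super-tuile fille de transit pour $p$ possède exactement une flèche associée à $p$ dans son Champ~6, avec un bord entrant et un bord sortant uniquement déterminés. Si une trajectoire revisitait une super-tuile déjà visitée, cela forcerait soit à réutiliser une flèche déjà "consommée", soit à introduire une seconde occurrence du point $p$ dans le Champ~6 de cette super-tuile, contredisant l'unicité imposée par $(D1)$--$(D3)$. De même, la condition que les voisines impliquées soient des super-tuiles sœurs empêche la trajectoire de sortir de $M$. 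La trajectoire forme donc un chemin simple dans la grille finie des $n_{k+1}^2$ super-tuiles filles de $M$, ce qui garantit sa finitude et permet de conclure dans les deux sens.
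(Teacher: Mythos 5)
Votre plan suit pour l'essentiel la même démarche que la preuve du manuscrit : pour le sens direct, on suit la chaîne de super-tuiles sœurs dont le Champ~6 contient le point, on exclut les cycles grâce aux conditions d'unicité (exactement une voisine pour la super-tuile initiale, exactement deux pour une super-tuile de transit), on borne la longueur par le nombre $n_{k+1}^2$ de filles de $M$, et on conclut avec la Propriété~E ; le retour se fait en remontant la chaîne depuis une super-tuile finale. Sur ces points, votre argument de simplicité des trajectoires (une revisite créerait une troisième voisine contenant le point, contredisant le comptage « exactement une/exactement deux ») est bien celui du texte, même si la formulation « flèche consommée » gagnerait à être remplacée par ce comptage explicite.

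Le point où votre proposition reste en deçà est le début du sens réciproque : vous affirmez que « la structure d'encodage précisée par (E) » fournit, pour tout point $p$ apparaissant dans le Champ~5 de $M$, une super-tuile fille finale $T_n$ dont le Champ~7 contient les coordonnées complètes de $p$. C'est précisément ce que la preuve du manuscrit doit établir, et elle y consacre la moitié de l'argument : on parcourt le Champ~5 de $M$ depuis sa première cellule $T_0$ (dont la voisine de gauche, hors du Champ~5, a un Champ~7 vide), et la règle de propagation du Champ~7 (vide si le voisin de gauche est vide, sinon amputé de son premier bit, et plein uniquement pour une super-tuile finale) montre que soit tous les Champs~7 sont vides — et le Champ~5 de $M$ ne contient aucun point —, soit le Champ~5 de $M$ se découpe en blocs consécutifs de longueur $m$, chacun débutant par une super-tuile finale dont le Champ~7 porte les coordonnées du point encodé par ce bloc. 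Sans cet argument d'induction le long de la ligne, on n'a pas exclu des bits « orphelins » encodés dans le Champ~5 de $M$ sans super-tuile finale associée, et le retour vers une fille ne peut pas démarrer. Ce n'est pas une étape qui échoue — elle se démontre exactement ainsi — mais en l'état c'est une assertion non justifiée, alors que tout le contenu du sens réciproque y réside.
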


\begin{proof}
Notre construction garantit que chaque point noir $p$ présent dans le Champ~5 d'une super-tuile $T$ de rang $k$ est également représenté dans le Champ~6 de celle-ci ($T$ est la super-tuile initiale de $p$) ; par ailleurs, $p$ doit également être représenté dans le Champ~6 d'exactement une super-tuile voisine. On considère alors la chaîne des super-tuiles sœurs voisines de même rang, $T_0, T_1, T_2, \ldots T_l \ldots $, pour lesquelles $p$ est représenté dans leurs Champs~6.

Cette chaîne ne peut pas boucler sur elle-même : ni au niveau de $T_0$, puisque cette super-tuile vérifie que $p$ est contenu dans le Champ~6 d'exactement une super-tuile voisine (ici $T_1$) ; ni au niveau d'une super-tuile $T_l$ pour $1 \leq l$, car la super-tuile $T_l$ vérifie que $p$ est contenu dans les Champs~6 d'exactement deux super-tuiles voisines (ici $T_{l-1}$ et $T_{l+1}$). Par ailleurs, la taille de la chaîne est bornée par le nombre de super-tuiles de rang $k$ composant une super-tuile de rang $k+1$, soit $n_{k+1}^2$, puisque toutes les super-tuiles de la chaîne sont sœurs (elles partagent la même super-tuile mère). La chaîne se termine donc au niveau d'une super-tuile $T_f$ ; celle-ci n'a alors qu'une seule super-tuile voisine pour laquelle $p$ est représenté dans le Champ~6 (les super-tuiles $T_i$ de la chaîne, $1\leq i \leq f-1$, étant des super-tuiles de transit pour $p$). De plus, la flèche associée à $p$ dans $T_f$ est entrante, cette super-tuile est donc la super-tuile finale de $p$. 
Par conséquent, le point noir $p$ arrive à destination au niveau de la super-tuile $T_f$, qui ne peut être qu'un emplacement dans les données d'entrée de la zone de calcul de la super-tuile mère. Plus précisément, cet emplacement est soit situé au début du Champ~5 de la super-tuile mère, soit ailleurs dans le Champ~5 mais alors à droite d'une super-tuile voisine encodant le dernier bit d'un autre point noir $p'$. Dans les deux cas, le point $p$ est représenté par une chaîne de super-tuiles voisines commençant par $T_f$ et allant vers la droite. 

Il est clair que le premier point noir présent dans la super-tuile mère $T$ est situé au début du Champ~5 de $T$. En effet, si $T$ n'a pas de point noir le résultat est trivialement vrai. Sinon, on considère un point noir $p_1$ de $T$. Celui-ci est présent dans le Champ~5 de $T$. S'il n'est pas au début du Champ~5, alors il existe un autre point noir $p_2$ encodé à gauche de la représentation de $p_1$ dans le Champ~5 de $T$. On peut réitérer l'argument~; comme le nombre de points noirs de $T$ est fini, on finit par trouver un point $p_l$ situé au début du Champ~5 de $T$. Par ailleurs, nous avons vu que tous les points noirs du Champ~5 de $T$ étaient encodés les uns à la suite des autres, sans ``trou'' : le Champ~5 de $T$ est donc constitué d'une liste $L$ de points noirs, et tous les points noirs présents dans $T$ y appartiennent.

Réciproquement, soit $M$ une super-tuile de rang $k+1$. Nous pouvons noter que la taille des représentations de tous les points noirs de $M$ est identique, et nous notons $m$ cette valeur.
Considérons $T_0$ la super-tuile fille de $M$ de rang $k$, située au début de son Champ~5. Si le Champ~7 de $T_0$ est nul, alors le Champ~7 de $T_1$, sa super-tuile voisine de droite, est également nul : $T_1$ ne peut être la super-tuile finale d'aucun point noir puisque d'une part elle n'est pas située au début du Champ~5 de $M$, et d'autre part que la taille du Champ~7 de $T_0$, sa voisine de gauche, n'est pas égale à 1. Par ailleurs, le Champ~7 de $T_1$ ne peut être égal au Champ~7 de $T_0$ moins le premier bit, puisque celui-ci est nul. Nous pouvons réitérer l'argument de proche en proche pour montrer que les Champs~7 de toutes les super-tuiles du Champ~5 de $M$ sont nuls, et donc que leur Champ~4 n'encodent pas d'information. Le Champ~5 de $M$ ne contient dans ce cas aucun point noir (et trivialement tous les points noirs de $M$ appartiennent à un des Champs~5 des super-tuiles filles de $M$).

Si le Champ~7 de $T_0$ n'est pas nul, alors il est de taille $m$ : comme la super-tuile voisine de gauche de $T_0$ n'appartient pas au Champ~5 de la super-tuile mère $M$, son Champ~7 est nul ; nous sommes donc dans le cas où $T_0$ est une super-tuile finale pour un point noir $p_0$. Dans ce cas, nous avons vu que $p_0$ est encodé dans le Champ~5 de $M$, par $T_0$ et les $m-1$ super-tuiles à droite de $T_0$. Par ailleurs, comme $T_0$ est la super-tuile finale de $p_0$, le Champ~6 de $T_0$ contient également $p_0$ (et la flèche associée est entrante). Nous pouvons tenir le même raisonnement que ci-dessus pour montrer qu'il existe une chaîne finie de super-tuiles de rang $k$ pour lesquelles $p_0$ apparaît dans leur Champ~6 (la chaîne ne boucle pas et est circonscrite aux super-tuiles filles de $M$). Ces super-tuiles sont des super-tuiles de transit pour $p_0$, exceptée la dernière, $T_0'$, qui est la super-tuile initiale de $p_0$ (cela est garantie par les flèches associées à $p_0$). Le premier point noir du Champ~5 de $M$ apparaît donc dans le Champ~5 de $T_0'$.

Nous considérons alors $T_1$, la $(m+1)$-ème super-tuile du Champ~7 de $M$. Nous pouvons tenir essentiellement le même raisonnement que pour $T_0$ : soit le Champ~7 de $T_1$ est nul, et dans ce cas $p_0$ est le seul point noir de $M$ ; sinon le Champ~7 est de taille $m$, et $T_1$ et les $(m-1)$ super-tuiles à droite de $T_1$ représentent un point noir $p_1$, qui apparaît dans les Champs~5 de $M$ et de $T_1'$, où $T_1'$ est une super-tuile fille de $M$. Nous pouvons réitérer cette argumentation jusqu'à soit trouver une super-tuile de rang $k$, appartenant au Champ~5 de $M$, pour laquelle le Champ~7 est nul ; soit avoir parcouru la totalité des super-tuiles de rang $k$ appartenant au Champ~5 de $M$. 
\end{proof}

Nous pouvons alors prouver par induction que chaque super-tuile de niveau $k$ ``connaît consciemment'' tous les points noirs contenus en elle (ils sont contenus dans le Champ~5 des données d'entrée de sa zone de calcul), et seulement ceux-ci (les points noirs contenus dans le Champ~5 correspondent tous à des points noirs ``réels'' présents dans la super-tuile).

Nous pouvons observer que notre construction autorise également des boucles, constituées de super-tuiles de transit pour un point $p$. Ce point $p$ peut ne pas correspondre à un point réel ; en effet, il ne vient d'aucune source (d'aucune super-tuile initiale), voir Fig.~\ref{f:parcours-boucle}. 

\begin{figure}[H]
	\center
	\includegraphics[scale=0.6]{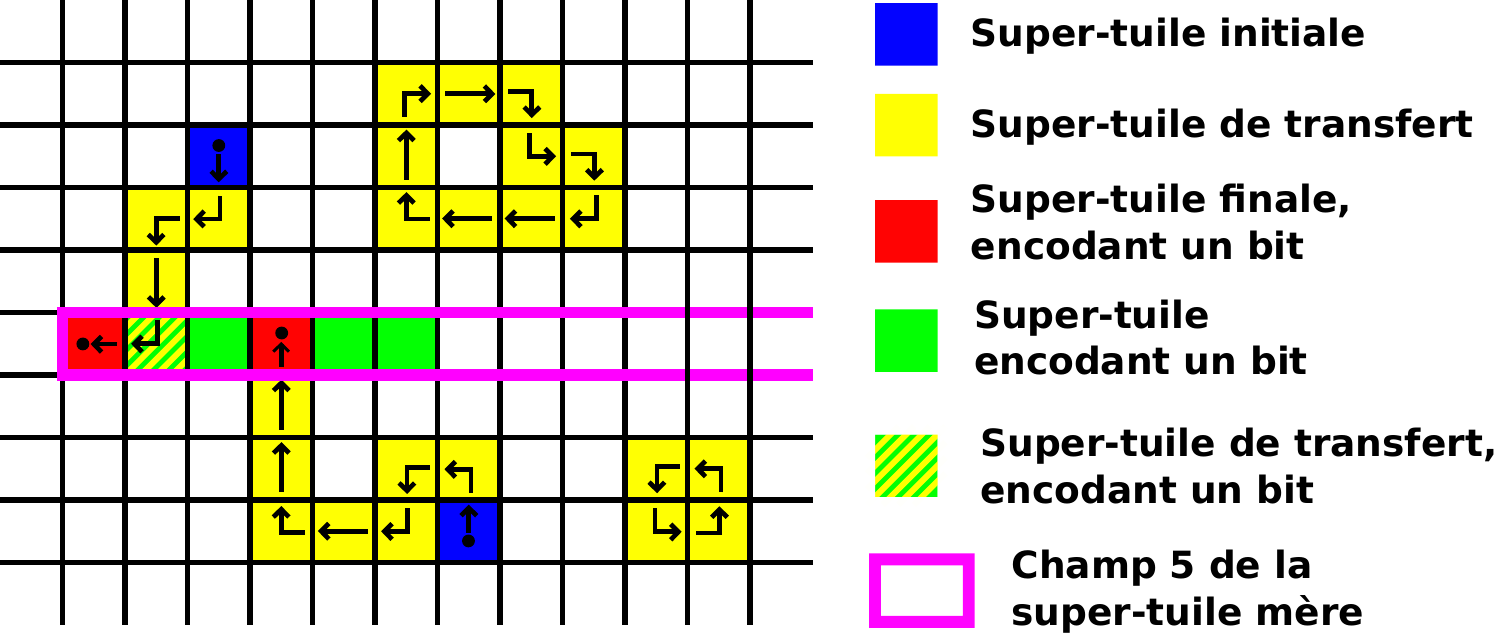}
	\caption[Des chaînes de super-tuiles de transit qui bouclent.]{Deux chaînes de super-tuiles de transit, ayant chacune un point présent dans leur Champs~6, qui bouclent. Ces deux points peuvent ne pas correspondre à des points réels (ils ne proviennent pas du Champ~5 d'une super-tuile de rang $k$), mais ne seront pas encodés au niveau du Champ~5 de la super-tuile mère de rang $(k+1)$.}\label{f:parcours-boucle}
\end{figure}

Cependant, il n'arrive à aucune destination (aucune super-tuile finale). En d'autres termes, les Champs~6 des super-tuiles de rang $k$ peuvent contenir des ``flux parasites'' de points, mais ces derniers n'apparaissent jamais dans les Champs~5 des super-tuiles de rang $k$ ni dans celui de leur super-tuile mère.

\paragraph{Propriété~F : la liste des points noirs de la zone de responsabilité.}
La zone de responsabilité de la super-tuile $T$ de rang $k$ est un bloc de super-tuiles de taille $2 \times 2$, ou de taille $(2N_k)\times(2N_k)$ en tuiles individuelles. Un point noir présent dans la zone sera donc représenté par deux entiers de l'intervalle $[0,2N_k-1]$, en partant du coin en bas à gauche du bloc (i.e. le coin en bas à gauche de la voisine de gauche de $T$). 
Pour les points noirs situés dans $T$, dans sa voisine de gauche et dans sa voisine du dessus, les vérifications sont assez directes : 
\begin{itemize}\label{propriety-f}
	\item les points noirs du Champ~8 de $T$ d'abscisses et d'ordonnées comprises dans l'intervalle $[0,N_k-1]$ (i.e. la zone du bloc en bas à gauche) doivent correspondre exactement au Champ~5 de la super-tuile voisine de gauche ;
	\item ceux d'abscisses comprises dans $[N_k, 2N_k-1]$ et d'ordonnées comprises dans $[0,N_k-1]$ (i.e. la zone du bloc en bas à droite) doivent correspondre exactement au Champ~5 de la super-tuile $T$ (une fois ajouté l'entier $N_k$ aux abscisses des points noirs du Champ~5);
	\item ceux d'abscisses et d'ordonnées comprises dans l'intervalle $[N_k, 2N_k-1]$ (i.e. la zone du bloc en haut à droite) doivent correspondre exactement au Champ~5 de la super-tuile voisine du haut (une fois ajouté l'entier $N_k$ aux abscisses et ordonnées des points noirs de ce Champ~5).   
\end{itemize}

Pour les points noirs de la zone de responsabilité de $T$ présents dans la super-tuile au dessus de la super-tuile voisine de gauche de $T$, la vérification est effectuée au niveau de $V$, la super-tuile voisine de gauche de $T$.

Pour cela, $V$ va vérifier que les points noirs du Champ~8 de sa super-tuile voisine de droite (i.e. du Champ~8 de $T$) d'abscisses dans $[0,N_k-1]$ et d'ordonnées dans $[N_k, 2N_k-1]$ (i.e. dans la zone du bloc en haut à gauche) correspondent exactement au Champ~5 de sa super-tuile voisine du dessus $V'$ (une fois ajouté l'entier $N_k$ aux ordonnées des points noirs du Champ~7 de $V'$). Cette super-tuile $V'$ correspond bien à la super-tuile en diagonale, en haut à gauche, de $T$. 

\paragraph{Propriété~G : vérification de l'absence de motif interdit.} 

Nous devons vérifier que la ``zone de responsabilité'' de la super-tuile de rang $k$ ne contient pas de motif interdit de rang $k$. Nous effectuons ce calcul en trois étapes.

\paragraph{Étape G.1 : énumération des motifs interdits de rang $k$.} 
Cette étape commence avec le calcul de la liste des motifs interdits de rang $k$, voir p.~\pageref{ell-forbidden-patterns}. Par définition, cette tâche prend un temps assez court $\ell_k$ (qui est négligeable par rapport au temps disponible dans la zone de calcul de la super-tuile de rang $k$). La liste de tous les motifs interdits de rang $k$ possède au plus $\ell_k$ éléments, chaque élément contenant au plus $\ell_k$ points noirs dont les coordonnées appartiennent à l'intervalle $[0,\ell_k-1]$. La taille de la représentation binaire de la liste (au plus $\ell_k^2 \times 2 \times \log \ell_k$) est donc négligeable par rapport à la largeur de la zone de calcul.

Le Champ~8 contient la liste des ``points noirs'' de la zone de responsabilité de la super-tuile, et nous désignons sa taille par $t_k$. Dans une configuration valide, la taille de cette liste (le nombre de points noirs dans un carré de taille $(2N_k) \times (2N_k)$) n'est pas plus grande que $(2N_k)^\epsilon$.  

\paragraph{Étape G.2 : recherche des motifs interdits.} C'est l'étape la plus coûteuse en temps, où nous vérifions que la zone de responsabilité de la super-tuile de rang $k$ ne contient pas de motif interdit de rang $k$. 

Il y a au plus $\ell_k$ motifs, chacun d'eux est de taille au plus $\ell_k \times \ell_k$, et ne contient au plus que $\ell_k$ points noirs. 

\begin{lemma}\label{un-point-noir}
Soient $\epsilon<1$ et $S$ un sous-shift du shift de densité $\epsilon$. Alors si $S$ est non-vide, chaque motif interdit de $S$ contient au moins un point noir.
\end{lemma}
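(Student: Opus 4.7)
The plan is to prove the contrapositive: every finite pattern consisting only of white cells is globally admissible in $S$, so it cannot be forbidden. Equivalently, I will show that for every integer $k \geq 1$, the all-white pattern with support $\{0,\dots,k-1\}^2$ appears in some configuration of $S$; any finite all-white pattern is then (up to restriction of support and translation) a sub-pattern of an all-white square, so is also globally admissible.

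Since $S$ is non-empty, I fix a configuration $c \in S$. Because $S$ is a sub-shift of the shift of density $\epsilon$, the restriction of $c$ to any window of size $N \times N$ contains at most $N^{\epsilon}$ black cells. The idea is a pigeonhole argument: for $N$ much larger than $k$, I partition a suitable $N\times N$ window of $c$ into disjoint $k \times k$ sub-squares and observe that too few black cells are available to intersect every one of them.

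More precisely, fix $k \geq 1$ and choose $N$ large enough that $\lfloor N/k\rfloor^{2} > N^{\epsilon}$; this is possible because $\epsilon < 1$ implies $1 - \epsilon/2 > 1/2 > 0$, so the required inequality $N^{1-\epsilon/2} > k$ holds for $N$ sufficiently large. Partition a $(\lfloor N/k\rfloor\cdot k) \times (\lfloor N/k\rfloor\cdot k)$ sub-window of the $N\times N$ restriction of $c$ into $\lfloor N/k\rfloor^{2}$ disjoint $k\times k$ sub-squares. Each black cell of this window lies in at most one of these sub-squares, so the number of sub-squares containing a black cell is at most the total number of black cells, which is bounded by $N^{\epsilon}$. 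Since $\lfloor N/k\rfloor^{2} > N^{\epsilon}$, at least one of the $k\times k$ sub-squares is entirely white.

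This shows that the all-white $k\times k$ pattern appears in $c$, hence is globally admissible in $S$. Consequently, any forbidden pattern (i.e.\ any finite pattern that is not globally admissible in $S$) must contain at least one black cell, which is exactly the statement of the lemma. There is no serious obstacle in this proof; the only thing to keep in mind is the arithmetic checking that $\epsilon < 1$ is enough to make the pigeonhole inequality hold for arbitrarily large $k$ (the argument would actually go through for any $\epsilon < 2$, but we only need $\epsilon < 1$).
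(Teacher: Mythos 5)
Your proof is correct and takes essentially the same approach as the paper: both arguments partition a large square of a configuration into disjoint $k\times k$ (resp.\ $c\times c$) blocks and compare the number of blocks with the $N^\epsilon$ bound on black cells. The only difference is presentational — the paper assumes an all-white forbidden pattern and derives a contradiction (every block would need a black cell, forcing more than $N^\epsilon$ of them), while you run the same count as a pigeonhole argument producing an all-white block, i.e.\ the contrapositive.
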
 

\begin{proof}
Soient $\epsilon<1$ et $S$ un sous-shift du shift de densité $\epsilon$ possédant un motif interdit $M$ entièrement blanc. Montrons que $S$ est vide. Soit $C$ un carré de taille $c \times c$ contenant le support du motif $M$. Supposons par l'absurde que $S$ soit non vide, et considérons une configuration ${\cal C}$ de $S$. Soit un entier $n$ tel que $n>c$. Considérons un carré $D$ de taille $(n \cdot c)\times (n \cdot c)$ de ${\cal C}$ ; il est constitué de $n^2$ carrés de taille $c \times c$. Chacun de ces carrés doit contenir au moins un point noir (sinon le motif interdit $M$ y apparaîtrait). Par conséquent, $D$ possède au moins $n^2 > n \cdot c > (n \cdot c)^\epsilon$ points noirs, et est donc un motif interdit du shift de densité $\epsilon$, ce qui est absurde. Par conséquent $S$ est vide.
\end{proof}
 
Chaque motif interdit contient au moins un point noir d'après le lemme ci-dessus. Ainsi, pour chaque motif interdit $M$ de rang $k$ et pour chaque point noir $p$ de la liste, il est suffisant de vérifier que chaque translation de $M$ \emph{couvrant} $p$ n'est pas un motif de la configuration donnée. En d'autres termes, pour chaque point noir $p=(x,y)$ du Champ~8, pour chaque motif interdit $M$ de taille $l\times h$ (soit une liste de points noirs $L_M$, dont les coordonnées des points sont dans $[0,l-1]\times[0,h-1]$), pour chaque point noir $p'=(x',y')$ de $L_M$, nous devons vérifier qu'il existe au moins une paire $(i,j)\in [0,l-1]\times[0,h-1]$, telle que :
\begin{enumerate}
	\item soit le point de coordonnées $(x-x'+i,y-y'+j)$ appartient au Champ~8, et le point de coordonnées $(i,j)$ n'appartient pas à $L_M$ (il y a au moins un point noir apparaissant dans la configuration et pas dans le motif interdit), 
	\item soit le point de coordonnées $(x-x'+i,y-y'+j)$ n'appartient pas au Champ~8 et le point de coordonnées $(i,j)$ appartient à $L_M$ (il y a au moins un point noir n'apparaissant pas dans la configuration et apparaissant dans le motif interdit). 
\end{enumerate}
 
Comme la taille de la liste $L_M$, le nombre de points noirs par élément de la liste, les nombres $l$ et $h$ valent au plus $\ell_k$, cela nécessite de tester l'appartenance d'un point au Champ~8 un nombre au plus $\ell_k^4$ de fois par point du Champ~8.

Comme précédemment, si la vérification échoue le calcul doit créer une  ``erreur'' : il ne faut pas qu'il puisse y avoir dans ce cas de pavage valide dans la zone de calcul. Cela se produit si pour au moins un point $p$ du Champ~8 et au moins un point $p'$ d'au moins un motif interdit $M$ de taille $l\times h$, les $l \cdot h$ paires de points vérifiés (respectivement dans le Champ~8 et dans le motif interdit $M$) concordent toutes. 

Maintenant que nous avons détaillé les calculs effectués par l'automate cellulaire, nous allons voir comment ceux-ci peuvent être implémentés de manière efficace.

\paragraph{L'algorithme implémenté par l'automate cellulaire}

Contrairement aux implémentations d'algorithmes par une machine de Turing qui sont standards, leur implémentations par un automate cellulaire sont plus inhabituelles. Nous décrivons avec plus de détails techniques comment les procédures décrites ci-dessus peuvent être réalisées de manière très rapide (en temps quasi-linéaire), bien que nous n'explicitons pas la programmation de bas niveau en détail.

Pour pouvoir initialiser et mener à bien des calculs massivement parallèles, nous avons besoin d'ajouter de l'information dans certaines des cellules de l'automate sous forme d'un grand nombre de marques spéciales différentes. Intuitivement, ceux-ci nous permettront de commencer un même calcul au niveau de toutes les cellules ayant la même marque spéciale (le nombre de ces cellules n'étant pas borné quand le rang de la super-tuile augmente). La première étape des calculs de l'automate consiste ainsi en une étape d'initialisation durant laquelle sont ajoutées ces marques. Plus précisément, la tête de lecture principale va parcourir l'ensemble des données d'entrée (en un nombre $5q$ d'étapes) et ajouter les marques spéciales au fur et à mesure (nous avons vu que les champs sont encodés de telle manière qu'une machine de Turing, et à fortiori un automate cellulaire, puisse savoir où commence et finit chaque champ et chaque élément d'un champ). Deux marques spéciales différentes sont ajoutées au début et à la fin de chaque champ des données d'entrée (i.e., nous ajoutons cette information à la première et à la dernière cellule du champ), et, pour les trois champs qui sont des listes d'éléments (une paire d'entiers pour les Champs~5 et 8, et un triplet d'entiers pour le Champ~6), nous ajoutons deux autres marques spéciales au début et à la fin de chaque élément (excepté pour le début du premier élément et pour la fin du dernier élément, ces cellules ayant déjà reçu leur marque spéciale). Les marques sont différentes pour les champs correspondant à des super-couleurs différentes. 

Enfin, parmi les Champs~5 de la super-tuile $T$, celui correspondant à la super-couleur de gauche (contenant les super-tuiles filles de $T$ qui sont finales pour les points noirs de $T$) joue un rôle particulier. Ainsi, outre les quatre marques spéciales associées au début et à la fin du champ, et au début et à la fin de chaque élément, nous associons une autre marque spéciale aux restes des cellules du champ.  

Nous obtenons :
\begin{enumerate}
	\item $2$ marques spéciales pour le Champ~2 représentant $k$ le rang de $T$, et situées dans les $q$ premières tuiles des données d'entrée ;
	\item $2\times 2 \times 4$ marques pour les Champs~3 et 4, représentant les coordonnées et les bits supplémentaires de $T$ ; il y a $2$ marques par champ et par super-couleur ;
	\item $4\times 4+1$ marques pour le Champ~5, représentant les points noirs de $T$. Il y a $4$ marques pour le début du champ, la fin du champ, le début d'un élément et la fin d'un élément par super-couleurs ; plus une marque supplémentaire pour le champ correspondant à la super-couleur de gauche ;
	\item $4\times 4$ marques pour le Champ~6, représentant le flot de points noirs passant par $T$. Il y a $4$ marques par super-couleurs ;
	\item $2\times 4$ marques pour le Champ~7, utilisées pour encoder un point noir dans le Champ~5 de la super-tuile mère de $T$. Il y a $2$ marques par super-couleurs car ce champ n'est pas constitué d'une liste d'élément ;
	\item $4\times 4$ marques pour le Champ~8, représentant les points noirs de la zone de responsabilité de $T$ ;
\end{enumerate}
soit $75=O(1)$ marques spéciales.

Détaillons maintenant l'implémentation des vérifications garantissant les Propriétés~C à G. Le nombre d'étapes de chaque procédure pourra dépendre du \emph{nombre} d'éléments présents dans les Champs~5, 6 et 8 ; puisque ces trois listes d'éléments sont entièrement représentées par au plus $q$ bits (le nombre de bits encodés dans une super-couleur), nous pouvons borner la taille de ces listes par $q$. Cette borne est très large, mais elle suffira pour prouver que l'automate cellulaire a suffisamment de temps et d'espace pour mener à bien ses calculs. Ce nombre d'étapes dépendra également de la taille des Champs~2, 3, 4 et 7, ou de la \emph{taille} des éléments des Champs~5, 6 et 8. Nous pouvons borner ces tailles par $\poly(\log q)$, un polynôme dépendant de $\log q$.

Les vérifications des Propriétés~C à G sont effectuées séquentiellement. Détaillons maintenant comment sont implémentées les vérifications des Propriétés~C à G. Nous appelons la zone située à droite des $5q$ premières cellules de la zone de calcul, de largeur $N-6q = \frac{10N}{16}$, la ``mémoire'', que nous utiliserons pour les calculs intermédiaires. En particulier, les informations encodées dans les $5q$ premières cellules ne seront pas modifiées par l'automate cellulaire $A$.

\begin{itemize}
	\item[(C)] Cela revient à vérifier que 4 suites de bits de taille en $O(N_k^{\epsilon}) \cdot \poly (\log N_{k+1})$, toutes les 4 délimitées par des marques spéciales différentes, sont identiques. Cela peut être vérifié en $O(N_k^{\epsilon}) \cdot \poly (\log N_{k+1})$ étapes de calcul par une machine de Turing multi-têtes ;
	\item[(D)] Dans une super-tuile $T$ de niveau $k$, l'automate commence à déterminer $d= \lceil \log N_{i+1} \rceil - \lceil \log N_i \rceil$, la différence de taille entre une coordonnée dans $T$ et dans sa super-tuile mère $M$. Il recopie ensuite le Champ~5 de $T$ dans la zone mémoire en ajoutant $d$ bits initialisés à 0 devant chaque coordonnée, puis ajoute les coordonnées de $T$ (le Champ~3 de $T$) à chaque élément du Champ~5 recopié pour obtenir un Champ~5' (ce sont les coordonnées des points noirs de $T$ vis-à-vis de $M$). Cela peut être réalisé en temps $O(N_k^{\epsilon}) \cdot \poly (\log N_{k+1})$ par une machine de Turing multi-têtes, sans utiliser le parallélisme. Il recopie également le Champ~6 de $T$ dans la zone mémoire, que nous appelons Champ~6'. 
	
Pour chaque élément $p=(x,y)$ du Champ~5', nous vérifions s'il existe un unique élément $(x',y',f)$ du Champ~6' qui lui correspond (i.e., $(x,y)=(x',y')$). $T$ est la super-tuile initiale du point $p$. Nous vérifions ensuite que parmi les 4 Champs~6 des super-tuiles voisines, il en existe un et un seul dans lequel ce point de coordonnées $(x,y)$ apparaît une et une seule fois ; si c'est le cas, nous vérifions alors que les flèches associées à $p$ dans le Champ~6' et dans le Champ~6 de la voisine correspondante sont cohérentes. De plus, nous indiquons via un bit spécifique associé à la première case de l'élément dans le Champ~6' que celui-ci a été traité. La seule opération nécessitant le parallélisme est celle consistant à vérifier qu'un élément donné du Champ~5' apparaît une et une seule fois dans le Champ~6', et de localiser cet élément dans le Champ~6'. Cela peut être effectué en temps $O(\log N_{k+1})$ (la taille d'un élément de ces champs) d'après le Lemme~\ref{l:automate}.
	
Pour chaque point $p$ non déjà traité du Champ~6', nous vérifions que soit $p$ apparaît une et une seule fois dans \emph{un} des Champs~6 des super-tuiles voisines ($T$ est la super-tuile finale de $p$), soit que $p$ apparaît une et une seule fois dans \emph{deux} Champs~6 distincts ($T$ est une super-tuile de transit pour $p$). Comme dans le cas précédent où $T$ était la super-tuile initiale du point $p$, nous vérifions que les flèches correspondantes sont cohérentes. Là encore, le Lemme~\ref{l:automate} nous garantit que nous pouvons vérifier qu'un élément du Champ~6' apparaît exactement une ou deux fois dans les Champs~6, et localiser ceux-ci, en temps $O(\log N_{k+1})$.

Si $T$ est la super-tuile finale de $p$, il faut également nous assurer que $T$ est une cellule du Champ~5 de sa super-tuile mère $M$, correspondant à la super-couleur gauche de $M$. Pour cela, nous récupérons la marque spéciale (si elle existe) correspondant à $T$ à partir des Champs~4 de $T$, et vérifions en temps constant que celui-ci correspond à une des 5 marques spéciales associées au Champ~5 d'une super-couleur gauche ;
	\item[(E)] Comme la taille du Champ~7 est en $O(\log N_k)$, nous pouvons utiliser des processus standards, en temps polynomiaux ;  
	\item[(F)] Pour implémenter les 3 opérations de l'étape $F$ (voir p.~\pageref{propriety-f}), il est nécessaire de pouvoir parcourir le Champ~8 et de recopier dans la mémoire uniquement les éléments dont les coordonnées $(x,y)$ sont comprises dans deux intervalles $X$ et $Y$, pour obtenir un Champ~8'. Cela nécessite $\poly (\log N_k)$ étapes de calcul par élément, soit au plus $(2N_k)^{\epsilon} \cdot \poly (\log N_k)$ étapes (il n'est pas nécessaire d'utiliser le parallélisme). Il est également nécessaire de recopier un Champ~5 (possédant au plus $N_i^{\epsilon}$ éléments) dans la zone de calcul (en ajoutant un nombre de zéros supplémentaires égale à $\lceil \log(2N_i) \rceil - \lceil \log N_i \rceil$ au début de chaque coordonnée), et éventuellement d'ajouter à chaque coordonnée l'entier $N_i$ pour obtenir un Champ~5' (soient $N_k \cdot \poly (\log N_k)$ étapes de calcul, toujours sans utiliser le parallélisme). Nous devons alors vérifier que les Champs~5' et 8' contiennent les mêmes éléments. Pour cela, nous parcourons les éléments du Champ~5', et vérifions pour chacun d'eux s'il apparaît dans le Champ~8'. Nous parcourons ensuite le Champ~8' en vérifiant que chaque élément apparaît dans le Champ~5'. Comme précédemment, le Lemme~\ref{l:automate} nous garantit que la recherche d'un élément se fait en $\poly (\log N_k)$ étapes de calcul, ce qui nous donne un total de $6 N_k^{\epsilon} \cdot \poly(\log N_k)$ étapes de calcul pour les 3 opérations ;
	\item[(G)]
	
	\begin{itemize}
		\item[(G1)] Cette première sous-étape ne nécessite pas le parallélisme, et par définition s'effectue en un petit nombre d'étapes (i.e. en $\ell_k=\log k$),
		\item[(G2)] La plupart des opérations effectuées dans cette sous-étape ne nécessitent pas le parallélisme. Le seul moment où celui-ci est nécessaire est pour vérifier si un point appartient ou non au Champ~8, ce qui peut être effectué en temps $O(\log N_k)$ (toujours d'après le Lemme~\ref{l:automate}). Cette opération est effectuée au plus $\ell_k^4 \cdot 2(2N_k)^{\epsilon}$ fois ; cette sous-étape s'effectue ainsi en temps $2(2N_k)^{\epsilon} \cdot \poly(\log N_k)$. 
	\end{itemize}
\end{itemize}

Par conséquent, la totalité des opérations de vérifications effectuées dans la zone haute nécessite un temps en $O(N_k^{\epsilon}) \cdot \poly (\log N_{k+1})$.
Nous pouvons maintenant discuter du choix de $C$, avec $N_k=2^{C^k}$. Nous avons donc $n_k = \frac{N_k}{N_{k-1}}=2^{C^k-C^{k-1}}$.

Il faut que d'une part il y ait assez de place dans les câbles de communication pour permettre l'échange des informations entre super-tuiles (soit $O(N_k^{\epsilon} \cdot \log N_{k+1})$) , et d'autre part qu'il y ait assez de place dans la zone de calcul centrale pour les calculs de $U$ (nécessitant un temps en $poly(\log N_{k+1})$) et dans la zone de calcul haute pour les calculs de $A$ (nécessitant un temps en $O(N_k^{\epsilon}) \cdot poly (\log N_{k+1})$).   

Nous obtenons donc les inégalités suivantes : 
\begin{itemize}
	\item $\frac{n_k}{16} \geq c_1 \cdot N_k^{\epsilon} \cdot \log N_{k+1}$,
	\item $\frac{n_k}{16} \geq P_1(\log N_{k+1})$,
	\item $\frac{n_k}{4} \geq c_2 \cdot N_k^{\epsilon} \cdot P_2(\log N_{k+1})$,
\end{itemize}
avec $c_1$ et $c_2$ des entiers et $P_1$ et $P_2$ des polynômes. Quitte à prendre $\epsilon < \epsilon' < 1$, nous obtenons l'inégalité suivante : $n_k \geq c_3 \cdot N_k^{\epsilon'}$ avec $c_3$ un entier. D'où :

\[
\begin{array}{l l l}
2^{C^{k-1}(C-1)} \geq c_3 \cdot (2^{C^k})^{\epsilon'} & \Longleftarrow & C^{k-1}(C-1) \geq \log c_3 \cdot \epsilon' \cdot C^k\\
& \Longleftarrow & C^{k-1} \cdot (C-\epsilon' \cdot C' -1) \geq \log c_3 \\
& \Longleftarrow & C-\epsilon' \cdot C -1\geq \log c_3\\
& \Longleftarrow & C(1-\epsilon') \geq \log c_3 +1\\
& \Longleftarrow & C \geq \frac{\log c_3 +1}{1-\epsilon'}
\end{array}
\]

Par conséquent, la construction est valide pour tout $C$ suffisamment grand, ce qui termine notre construction.

\subsection{La projection naturelle des shifts de type fini donne les shifts des Théorèmes~\ref{th:sparse1} et \ref{th:sparse2}}

La construction garantit que le lemme suivant est vrai.
\begin{lemma}\label{lemma:la1}
Dans la projection d'un pavage valide, tout bloc de taille $2\times 2$ de super-tuiles de chaque rang ne contient pas de motif interdit de rang $k$.
\end{lemma}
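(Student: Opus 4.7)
Le plan est le suivant. L'observation-clé à exploiter est que tout bloc $2\times 2$ de super-tuiles de rang $k$ coïncide exactement avec la \emph{zone de responsabilité} de l'unique super-tuile de rang $k$ située dans son coin inférieur droit. Il suffira donc de démontrer que, pour tout pavage valide et toute super-tuile $T$ de rang $k$, la projection de la zone de responsabilité de $T$ ne contient aucun motif interdit de rang $k$.

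La première étape consistera à établir, par récurrence sur le rang $j$, que le Champ~5 de toute super-tuile de rang $j$ contient exactement la liste des points noirs présents dans sa projection. Le cas de base ($j=1$) devrait résulter directement de la construction, puisque les tuiles de $\tau$ sont partitionnées en tuiles noires et blanches et que les vérifications locales de cohérence au sein d'une super-tuile de rang~$1$ forcent son Champ~5 à énumérer exactement ses tuiles noires. L'étape de récurrence s'appuiera sur le Fait~\ref{claim:info-flow}, qui affirme précisément que le Champ~5 d'une super-tuile de rang $(j+1)$ est égal à l'union des Champs~5 de ses super-tuiles filles de rang $j$. Il conviendra de vérifier qu'aucun ``flux parasite'' bouclé (voir Fig.~\ref{f:parcours-boucle}) ne peut contaminer les Champs~5 ; le Fait~\ref{claim:info-flow} étant construit pour écarter précisément ce risque, ce point devrait suivre.

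J'utiliserai ensuite la Propriété~F pour en déduire que le Champ~8 de $T$ contient exactement les points noirs de la projection de sa zone de responsabilité : la Propriété~F compare explicitement le Champ~8 de $T$ avec les Champs~5 de $T$, de sa voisine de gauche et de sa voisine du haut, tandis que le contenu de la super-tuile diagonalement en haut à gauche est traité via la Propriété~F appliquée à la voisine de gauche de $T$. Une fois la première étape acquise, cette déduction devrait être essentiellement routinière.

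Il restera enfin à invoquer la Propriété~G, qui simule dans la zone de calcul haute de $T$ l'énumération des motifs interdits de rang $k$ puis la recherche de chacun d'eux dans la configuration encodée par le Champ~8. Le choix de $C$ suffisamment grand, fixé à la fin de la Sous-Partie~\ref{ss:construction-shift}, garantit que le temps et l'espace disponibles, de l'ordre de $O(N_k^\epsilon) \cdot \mathrm{poly}(\log N_{k+1})$, suffiront à mener ces vérifications à terme. L'apparition d'un motif interdit de rang $k$ dans la zone de responsabilité de $T$ provoquerait alors une ``erreur'' rendant le pavage invalide, ce qui fournira la contradiction recherchée. Le principal obstacle me semble résider dans l'articulation de la récurrence : il s'agira de vérifier soigneusement que les mécanismes de flux d'information à tous les rangs inférieurs forcent effectivement chaque Champ~5 à refléter la projection réelle, sans qu'aucune erreur ne puisse s'y glisser au passage d'un rang à l'autre.
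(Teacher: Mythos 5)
Votre proposition est correcte et suit essentiellement la même démarche que la preuve du manuscrit : identification du bloc $2\times 2$ avec la zone de responsabilité de la super-tuile de rang $k$ située en bas à droite, correction du Champ~5 établie par récurrence via le Fait~\ref{claim:info-flow}, passage au Champ~8 par la Propriété~F, puis contradiction par la Propriété~G (Étape~G.2). La preuve du manuscrit est plus concise (elle se contente de citer ces faits déjà établis dans la construction) et observe en outre que le bloc appartient aux zones de responsabilité de tous les ancêtres, ce qui donne directement l'absence de motif interdit de tout rang $l\geq k$ ; votre version, limitée au rang $k$, démontre néanmoins l'énoncé tel qu'il est formulé et suffit, combinée au Lemme~\ref{lemma:la2}, pour la suite de l'argument.
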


\begin{figure}[H]
	\center
	\includegraphics[scale=0.3]{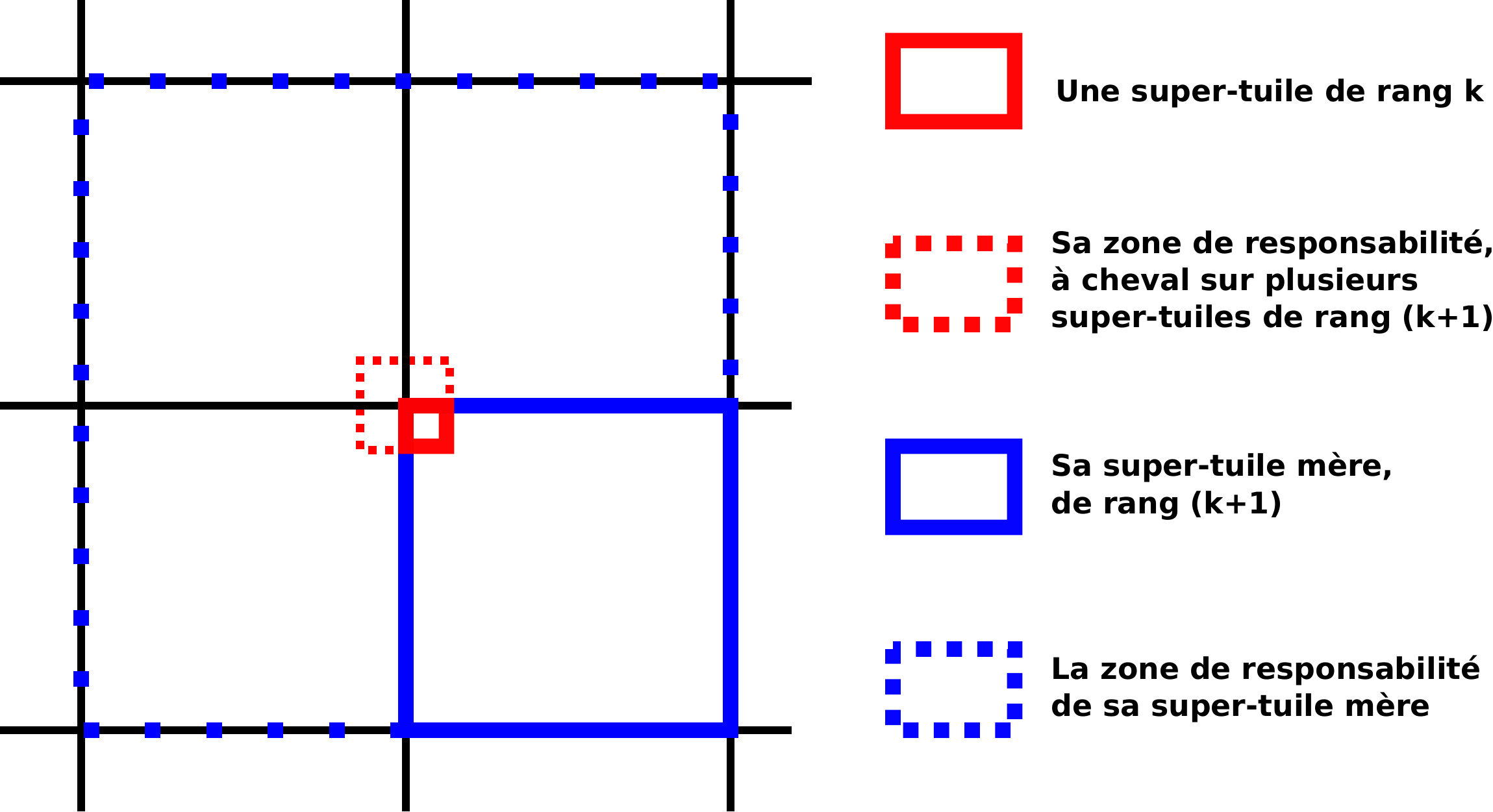}
	\caption[Un bloc de taille $2\times 2$ de super-tuiles appartient à une infinité de super-tuiles.]{Un bloc de taille $2\times 2$ de super-tuiles appartient à une infinité de super-tuiles de rang $k, (k+1), (k+2), \ldots$}\label{f:zones-responsabilité}
\end{figure}

\begin{proof}
Dans une configuration valide, un bloc de taille $2\times 2$ de super-tuiles de rang $k$ correspond à la zone de responsabilité d'une super-tuile $T$ de rang $k$ (précisément, la super-tuile en bas à droite de ce bloc). Ce bloc appartient également à la zone de responsabilité de sa super-tuile mère de rang $l=k+1$, et en réalité à la zone de responsabilité de tous les ``ancêtres'' de rangs $l\geq k+1$ de $T$, comme montré dans la Fig.\ref{f:zones-responsabilité}. La projection de ce bloc est constituée des points noirs qui appartiennent à ce motif de taille $(2N_k) \times (2N_k)$). Ceux-ci apparaissent dans les données d'entrée des zones de calcul de super-tuiles de rangs $l=k, (k+1), (k+2), \ldots$, voir Fait~\ref{claim:info-flow}. La construction garantit (voir Étape~G.2 ci-dessus) que le motif de points noirs ne contient pas de motif interdit de rang $l=k, (k+1), (k+2), \ldots$. Par conséquent, il ne contient pas de motif interdit du tout.
\end{proof}

\begin{lemma}\label{lemma:la2}
Dans un pavage valide, chaque motif fini de taille $L\times L$  est couvert par un bloc de super-tuiles de rang $k$ de taille  $2\times 2$, pour tout $k$ suffisamment grand.
\end{lemma}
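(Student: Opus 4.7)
Le plan est d'utiliser directement la structure hiérarchique garantie par la construction de la Sous-Partie~\ref{ss:g-variable} : chaque pavage valide se décompose de manière unique, pour chaque rang $k$, en une grille de super-tuiles de taille $N_k \times N_k$. Comme $N_k = 2^{C^k}$ tend vers l'infini, pour tout $L$ fixé il existe un rang $k_0 = k_0(L)$ tel que $N_k \geq L$ pour tout $k \geq k_0$ ; c'est ce seuil qui jouera le rôle de « $k$ suffisamment grand » dans l'énoncé.

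Soit ensuite un motif $P$ de taille $L \times L$ apparaissant dans le pavage, disons à la position $[x, x+L-1] \times [y, y+L-1]$, et soit $k \geq k_0$. La première étape consiste à localiser la super-tuile $T$ de rang $k$ contenant la case de coordonnées $(x,y)$ ; la décomposition étant unique, $T$ est bien définie, et si $(x_0, y_0)$ désigne son coin en bas à gauche, on a $0 \leq x - x_0 \leq N_k - 1$ et $0 \leq y - y_0 \leq N_k - 1$. La seconde étape est alors d'exhiber le bloc de $2 \times 2$ super-tuiles de rang $k$ qui contient $P$ : il suffit de prendre $T$, sa voisine de droite, sa voisine du haut et leur voisine commune en haut à droite. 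Ce bloc recouvre exactement le carré $[x_0, x_0 + 2N_k - 1] \times [y_0, y_0 + 2N_k - 1]$, et comme $L \leq N_k$ on a $x + L - 1 \leq x_0 + (N_k - 1) + (N_k - 1) \leq x_0 + 2N_k - 1$ (et de même pour l'ordonnée) ; donc $P$ est bien entièrement contenu dans ce bloc.

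Il n'y a pas ici de difficulté conceptuelle réelle : l'argument se réduit à une vérification directe sur les coordonnées, et s'appuie uniquement sur le fait que la suite $N_k$ croît sans borne et sur l'unicité de la décomposition en super-tuiles à chaque rang, propriété déjà acquise par la construction auto-similaire. Le seul point méritant une attention particulière est de bien fixer le seuil $k_0$ en fonction de $L$ seulement (et non de la position de $P$), ce qui assure que la conclusion vaut uniformément pour toutes les positions possibles du motif dans le pavage.
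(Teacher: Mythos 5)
Votre preuve est correcte et suit exactement la même idée que celle du manuscrit : la construction hiérarchique découpe tout pavage valide en une grille de super-tuiles de rang $k$ de taille $N_k\times N_k$, et dès que $N_k\ge L$ un motif de taille $L\times L$, même à cheval sur plusieurs super-tuiles, tient dans un bloc de $2\times 2$ super-tuiles adjacentes (le manuscrit se contente d'affirmer que c'est « trivialement vrai » ; vous ne faites qu'expliciter la vérification de coordonnées). Le choix du coin du bloc (en bas à gauche chez vous, en bas à droite dans la figure du manuscrit) est sans importance, puisque tout bloc $2\times 2$ de super-tuiles de rang $k$ est la zone de responsabilité de sa super-tuile en bas à droite.
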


\begin{figure}[H]
	\center
	\includegraphics[scale=0.3]{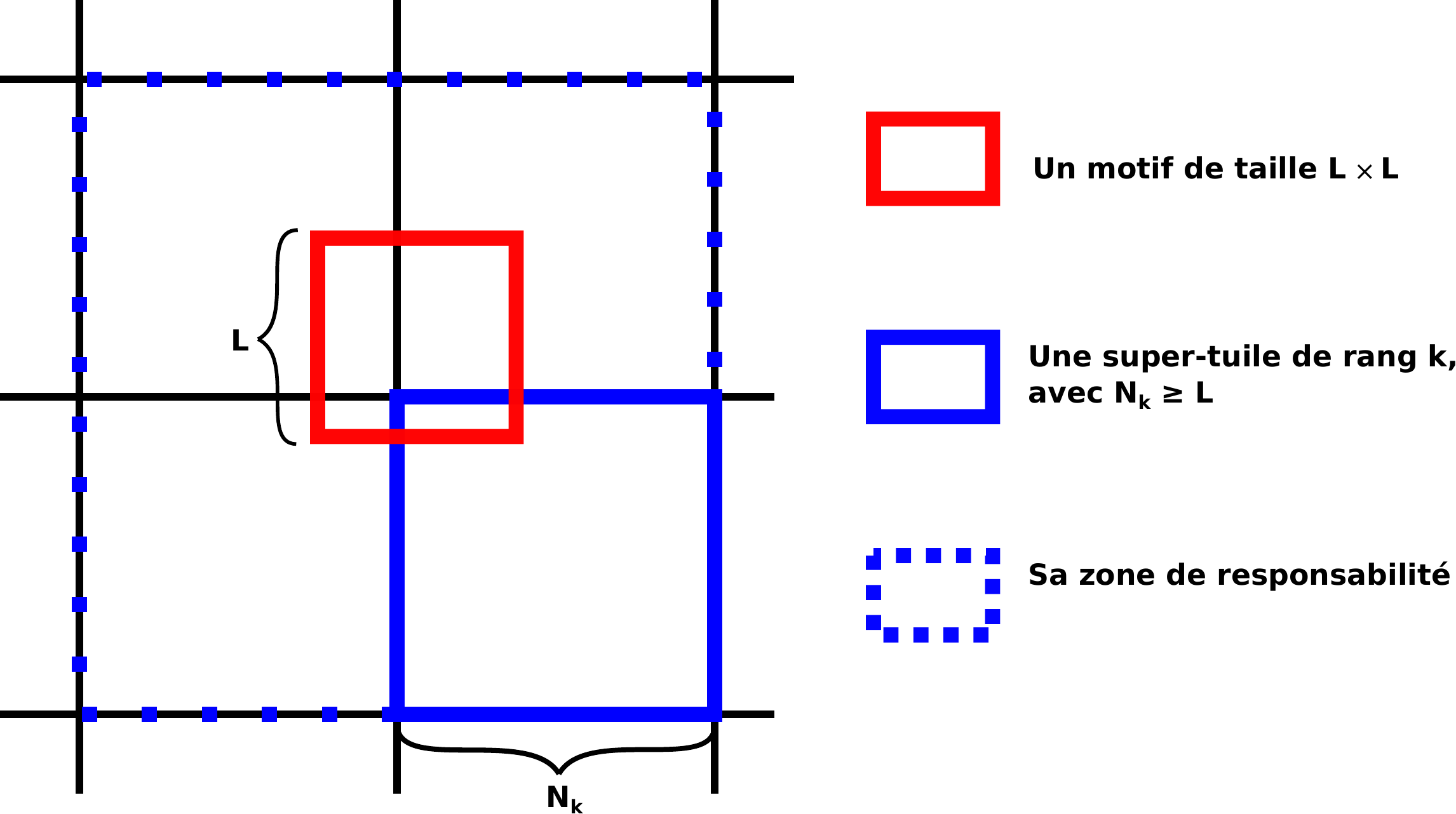}
	\caption[Un motif inclus dans la zone de responsabilité d'une super-tuile de rang $k$.]{Un motif de taille $L \times L$ inclus dans la zone de responsabilité d'une super-tuile de rang $k$.}\label{f:motif-zone-responsabilité}
\end{figure}

\begin{proof}
Nous ne pouvons pas garantir qu'un tel motif soit couvert par une unique super-tuile de rang $k$,
mais nous pouvons affirmer cela pour un bloc de super-tuiles de taille $2\times 2$. Cette affirmation est trivialement vraie pour tout $k$ tel que $N_k\ge L$, voir Fig.~\ref{f:motif-zone-responsabilité}.
\end{proof}

Des Lemmes~\ref{lemma:la1} et \ref{lemma:la2}, il s'ensuit que la projection de chaque pavage valide ne contient pas de motif interdit d'aucun rang $k$ et, par conséquent, aucun motif interdit du tout. Cela signifie que la projection appartient au shift $S_\epsilon$ (ou respectivement, $S_\epsilon'$).

Il reste à prouver que chaque configuration ${\cal C}$ du shift $S_\epsilon$ (ou, respectivement, du shift $S_\epsilon'$) peut être obtenue par la projection d'une configuration du shift de type fini. En effet, nous pouvons superposer la structure hiérarchique des super-tuiles (avec des positions horizontale et verticale de la grille arbitrairement choisies pour chaque étape $N_1, N_2, \ldots$), et imposer que les données d'entrée des zones de calcul de toutes les super-tuiles contiennent la position réelles des ``points noirs'' de ${\cal C}$. 

Nous montrons que cela est possible par induction sur le rang des super-tuiles. Pour les super-tuiles de rang $0$, il suffit de prendre les points noirs présents dans la zone de responsabilité des super-tuiles et d'encoder les représentations binaires des coordonnées de ces points noirs les unes à la suite des autres dans le Champ~5 de ces super-tuiles (le choix de $N_0$ nous garantit que nous avons assez de place pour cela). Supposons maintenant que les Champs~5 des données d'entrée des zones de calcul de toutes les super-tuiles de rang $k$ contiennent les positions réelles des ``points noirs'' de ${\cal C}$. En particulier, nous considérons, pour une super-tuile $M$ de rang $k+1$, toutes les super-tuiles de rang $k$ filles de $M$. Notons $F$ le nombre des points noirs présents dans $M$, et $l = O(\log N_{k+1})$ le nombre de bits nécessaires pour encoder les coordonnées d'un point noir vis-à-vis de $M$. Soit $E$ l'ensemble des super-tuiles de rang $k$, présentes dans le Champ~5 de $M$, constitué de la $1$-ère super-tuile du champ, de la $(1+l)$-ème, $\ldots $, de la $(1+(F-1) \cdot l)$-ème super-tuile du champ (le choix de $N_{k+1}$ nous garantit que nous avons assez de place). Alors nous pouvons voir qu'il est possible d'associer à chaque point $p$ apparaissant dans le Champ~5 d'une super-tuile fille de $M$ un chaîne de super-tuiles filles de $M$ pour lesquelles $p$ apparaît dans les Champs~6, se terminant par une super-tuile fille pour laquelle $p$ apparaît également dans le Champ~7 (et que le pavage obtenu soit valide) si et seulement s'il existe un flot (voir Définition~\ref{d:flot}) de valeur $F$, pouvant être représenté comme un ensemble de chemins élémentaires de cardinalité $F$ et un ensemble de circuits élémentaires, sur le graphe de flot $G$ (voir Définition~\ref{d:graphe-flot}) suivant :
\begin{itemize}
	\item $\rho$ correspond à $N_k$ ;
	\item à chaque super-tuile fille est associé un sommet. Entre les deux sommets correspondant à deux super-tuiles voisines et sœurs il existe deux arcs de capacité $\rho$ (de direction opposée) ;
	\item pour chaque super-tuile fille $T$ ayant $l$ points noirs en elle, il existe un arc de capacité $l$ de la source $s$ vers le sommet correspondant à $T$ (on peut noter que $l<N_k^\epsilon<N_k=\rho$) ;
	\item pour chaque super-tuile fille $T$ de $E$, il existe un arc de capacité $1$ du sommet correspondant à $T$ vers le puits $p$.
\end{itemize}

Cela est vrai d'après le Corollaire~\ref{cor:flot-chemin}. 

Puisque ${\cal C}$ ne contient pas de motif interdit (et, donc, pas de motif interdit d'aucun rang $i$), les calculs effectués dans toutes les super-tuiles se terminent avec succès, sans créer d'erreur.

Cela conclut la preuve des Théorèmes~\ref{th:sparse1} et \ref{th:sparse2}.

\chapter*{Conclusion}
\setcounter{chapter}{4}
\addcontentsline{toc}{chapter}{Conclusion}

Le principal but de ce travail était d'étudier la frontière, pour les shifts multidimensionnels, entre les shifts sofiques et les shifts non sofiques mais effectifs. Nous avons montré que la complexité algorithmique (avec différentes implémentations formelles) était un outil utile pour capturer la différence entre shifts sofiques et non sofiques. Le point de départ de cette recherche était une conjecture naïve :

\begin{itemize}
\item si un shift à deux dimensions est sofique, alors il est effectif et l'information ``essentielle'' dans chaque motif globalement admissible de taille $N\times N$ doit avoir une description algorithmique courte et efficace ;
\item si un shift à deux dimensions est effectif et que pour l'information ``essentielle'' dans chaque motif globalement admissible de taille $N\times N$ nous avons une description algorithmique suffisamment courte et efficace, alors ce shift est sofique.
\end{itemize}
Avec les techniques connues nous ne pouvons pas démontrer la conjecture. De plus, nous ne pouvons pas non plus suggérer une version plus formelle de celle-ci explicitant l'intuition d'\emph{information essentielle} et de \emph{description suffisamment courte et efficace}.
Cependant, nous avons fait quelques progrès en approchant par les deux directions la caractérisation voulue (dans la partie ``si'' et dans la partie ``seulement si''). 

Dans le Chapitre~2, la notion d'``information essentielle'' est formalisée dans nos résultats par différentes versions des ``résumés'', et celle de ``représentation suffisamment courte et efficace'' est formalisée par une complexité de Kolmogorov (avec ou sans ressources bornées) linéaire en $N$ pour un motif de taille $N\times N$. 

Dans le Chapitre~3, nous n'utilisons pas la complexité de Kolmogorov ; au contraire, nous utilisons une description naïve d'un motif ``clairsemé'', à savoir une liste des points noirs du motif avec leurs coordonnées, la quantité d'information dans le motif étant alors essentiellement égale à la longueur de la liste.

Nous pensons que ce travail peut être continué, et que l'écart entre les conditions nécessaires et les conditions suffisantes puisse être resserré. Nous pouvons mentionner deux questions ouvertes plus spécifiques :

\begin{question}
Dans le Chapitre~3 nous avons prouvé qu'un shift sur l'alphabet $\{\square, \blacksquare\}$ dont les configurations sont ``clairsemées''  (dans tout motif de taille $N \times N$ il y a au plus $N^{1-\epsilon}$ lettres noires, et les autres sont blanches) est sofique. Ce théorème est similaire dans un certain sens au résultat de L.B.~Westrick (voir \cite{westrick2017seas}), qui a prouvé que le shift $S$, dont les configurations sont constituées de carrés noirs sur une mer de cases blanches, carrés dont les tailles sont distinctes, est sofique. Dans notre argument et dans celui de Westrick, le point clé de la preuve est que chaque motif admissible de taille $N\times N$ a une petite description (la taille d'une telle description étant bien plus petite que $N$). Dans ces deux résultats, un certain langage \textit{ad hoc} pour ces descriptions a été utilisé. Il est instructif d'observer que chaque motif globalement admissible de taille $N\times N$ du shift suggéré par Westrick a une description naturelle de taille $N^{2/3} \cdot \poly(\log N)$. Cela est la conséquence d'une borne supérieure sur le nombre de carrés disjoints, de tailles deux à deux distinctes, qui peuvent appartenir à un motif de taille $N\times N$ (voir \cite{westrick2017seas} pour les détails).

Les preuves de ces théorèmes ne peuvent être reformulées directement en termes de complexité de Kolmogorov, simple ou à ressources bornées. Cependant, il est possible que nous ayons besoin ici d'une version de la complexité algorithmique plus adaptée. Une question demeure donc : ces deux théorèmes sont-ils des cas spéciaux d'une propriété plus générale, pour une mesure de la complexité algorithmique plus générale et flexible ? 
\end{question}

\begin{question}
Dans la preuve de nos résultats, dans le Chapitre~3, il était crucial que la taille de la description de tout motif globalement admissible de taille $N\times N$ fût inférieure à $N$. Notre technique ne s'applique pas, par exemple, pour le shift $S$ dont les motifs interdits sont les motifs de taille $N \times N$ contenant plus de $N^{1.5}$ lettres noires. Par ailleurs, il semble qu'il n'y ait pas de manière directe d'appliquer notre technique du Chapitre~2 pour prouver que ce shift n'est pas sofique. Une question se pose alors : ce shift est-il sofique ou non ? 
\end{question}

\newpage
\addcontentsline{toc}{chapter}{Bibliographie générale}

\renewcommand{\bibname}{Bibliographie générale}
\bibliography{bibliographie}

\newpage
\addcontentsline{toc}{chapter}{Bibliographie personnelle}

\bibliographyMe{bibliographie}

\end{document}